\numberwithin{equation}{section}
\newtheorem{thm}{Theorem}[section]
\newtheorem*{thm*}{Theorem}
\newtheorem{theorem}[thm]{Theorem}
\newtheorem{cor}[thm]{Corollary}
\newtheorem{prop}[thm]{Proposition}
\newtheorem{proposition}[thm]{Proposition}
\newtheorem{lem}[thm]{Lemma}
\newtheorem{lemma}[thm]{Lemma}
\theoremstyle{definition}
\newtheorem{defn}[thm]{Definition}
\newtheorem*{defn*}{Definition}
\newtheorem{definition}[thm]{Definition}
\newtheorem{ex}[thm]{Example}
\newtheorem{rem}[thm]{Remark}
\newtheorem{remark}[thm]{Remark}
\newtheorem{convention}[thm]{Convention}
\newtheorem{steps}[thm]{Steps}
\newcommand{\sT}{\mathscr{T}}
\newcommand{\sF}{\mathscr{F}}
\newcommand{\sE}{\mathscr{E}}
\newcommand{\ra}{\rightarrow}
\newcommand{\N}{\mathbb{N}}
\newcommand{\id}{\mathrm{id}}
\DeclareMathOperator{\Aut}{Aut}
\DeclareMathOperator*{\colim}{colim}
\DeclareMathOperator{\Hom}{Hom}
\DeclareMathOperator{\Map}{Map}
\newcommand{\category}[1]{\mathrm{#1}}
\newcommand{\Sp}{\category{Sp}}
\newcommand{\CGTop}{\category{cgTop}}
\newcommand{\CWTop}{\category{cwTop}}
\newcommand{\sSets}{\category{sSets}}
\newcommand{\fHilb}{\category{Hilb}_\textup{fin}}
\newcommand{\fHilbo}{\category{Hilb}_\textup{fin}^{\otimes}}
\newcommand{\bCGTop}{\category{cgTop}_*}
\newcommand{\Z}{\mathbb{Z}}
\newcommand{\C}{\mathbb{C}}
\newcommand{\R}{\mathbb{R}}
\newcommand{\cC}{\mathcal{C}}
\renewcommand{\to}{\longrightarrow}
\newcommand{\cL}{\mathcal{L}}
\newcommand{\fA}{\mathfrak{A}}
\newcommand{\fB}{\mathfrak{B}}
\newcommand{\fC}{\mathfrak{C}}
\newcommand{\cH}{\mathcal{H}}
\newcommand{\cP}{\mathcal{P}}
\newcommand{\cE}{\mathcal{E}}
\newcommand{\bP}{\mathbb{P}}
\newcommand{\cB}{\mathcal{B}}
\newcommand{\sL}{{\mathscr{L}_{\oplus}}}
\newcommand{\I}{\mathscr{I}}
\newcommand{\sJ}{\mathscr{I}^{\otimes}}
\newcommand{\sI}{\mathscr{I}^{\oplus}}
\newcommand{\sK}{{\mathscr{L}^{\otimes}}}
\newcommand{\bsK}{{\mathscr{L}^{\otimes}_{u}}}
\newcommand{\cT}{\mathcal{T}}
\newcommand{\GP}{\mathcal{GP}}
\newcommand{\QS}{\mathscr{Q}}
\newcommand{\GS}{\mathscr{G}}
\newcommand{\Hslab}[1]{H^{\mathrm{slab},a}}
\newcommand{\instack}{\ominus}
\newcommand{\pt}{\mathrm{pt}}
\renewcommand{\tr}{\mathrm{tr}}
\newcommand{\sP}{\mathscr{P}}
\newcommand{\Kzero}{K_0}
\newcommand{\cO}{\mathcal{O}}
\newcommand{\bY}{\boldsymbol{Y}}
\newcommand{\bGP}{\boldsymbol{\mathcal{GP}}}
\newcommand{\Sing}{\mathrm{Sing}}
\newcommand{\rep}{R}
\renewcommand{\dd}{d}
\newcommand{\pone}{+1}
\newcommand{\tn}[1]{\textnormal{#1}}
\newcommand{\sS}{\mathscr{S}}
\newcommand{\cK}{\mathcal{K}}
\newcommand{\defeq}{\vcentcolon=}
\renewcommand{\1}{\mathds{1}}
\DeclareMathOperator{\Ad}{\mathrm{Ad}}
\newcommand{\bbC}{\mathbb{C}}
\newcommand{\hilbone}{\mathcal{H}}
\newcommand{\hilbtwo}{\mathcal{K}}
\newcommand{\isoone}{f}
\newcommand{\isotwo}{g}
\newcommand{\cV}{\mathcal{V}}
\newcommand{\bbP}{\mathbb{P}}
\newcommand{\fN}{\mathfrak{N}}
\newcommand{\bbN}{\mathbb{N}}
\newcommand{\Unitary}{\mathsf{U}}
\newcommand{\interval}{I}
\newcommand{\denmat}[1]{DM_{#1}(\bbC)}
\newcommand{\htpy}[1]{\simeq_{#1}}
\newcommand{\pathhtpy}[1]{\simeq_{#1,\textup{rel}}}
\DeclareMathOperator{\diag}{\mathrm{diag}}
\newcommand{\cW}{\mathcal{W}}
\newcommand{\cc}{c}
\newcommand{\ff}{\Phi}
\newcommand{\QSC}{\psi^{-1}\QS_\Gamma^{\sqcup}}
\newcommand{\PGS}{\mathcal{P}\mathrm{ic}\mathscr{G}}
\newcommand{\etagP}{\eta^{\mathscr{P}_\Gamma}}
\newcommand{\etagQ}{\eta^{\mathscr{Q}_\Gamma}}
\newcommand{\cpCAlg}{C^*\text{-}\category{Alg}_{1,+}}
\newcommand{\cpCAlgo}{C^*\text{-}\category{Alg}_{1,+}^{\otimes}}
\newcommand{\closure}[1]{\overline{#1}}
\title[Homotopical foundations of quantum spin systems]{Homotopical Foundations of Parametrized Quantum Spin Systems}
\author[Beaudry]{Agn\`es Beaudry\textsuperscript{1}}
\author[Hermele]{Michael Hermele\textsuperscript{2,3}}
\author[Moreno]{Juan Moreno\textsuperscript{1}}
\author[Pflaum]{Markus J.\ Pflaum\textsuperscript{1,3}}
\author[Qi]{Marvin Qi\textsuperscript{2,3}}
\author[Spiegel]{Daniel D.\ Spiegel\textsuperscript{4,5}}
\thanks{\textsuperscript{1}Department of Mathematics, University of Colorado Boulder}
\thanks{\textsuperscript{2}Department of Physics, University of Colorado Boulder}
\thanks{\textsuperscript{3}Center for Theory of Quantum Matter, University of Colorado Boulder}
\thanks{\textsuperscript{4}Department of Mathematics, University of California, Davis}
\thanks{\textsuperscript{5}Center for Quantum Mathematics and Physics, University of California, Davis}
\subjclass[2024]{81Q35, 46N50, 46L30, 55P48}
\keywords{quantum spin system, quasi-local algebra, gapped invertible phases, Kitaev's conjecture, loop-spectra, topological group completion, operads, pure state spaces, quantum state type}
\begin{document}
\begin{abstract}
In this paper, we present a homotopical framework for studying invertible gapped phases of matter from the point of view of infinite spin lattice systems, using the framework of algebraic quantum mechanics. We define the notion of  \emph{quantum state types}. These are certain lax-monoidal functors from the category of finite dimensional Hilbert spaces to the category of topological spaces. The universal example takes a finite dimensional Hilbert space $\mathcal{H}$ to the pure state space of the quasi-local algebra of the quantum spin system with Hilbert space $\mathcal{H}$ at each site of a specified lattice. The lax-monoidal structure encodes the tensor product of states, which corresponds to stacking for quantum systems. We then explain how to formally extract parametrized phases of matter from quantum state types, and how they naturally give rise to $\mathscr{E}_\infty$-spaces for an operad we call the ``multiplicative'' linear isometry operad.  We define the notion of invertible quantum state types and explain how the passage to phases for these is related to group completion. We also explain how invertible quantum state types give rise to loop-spectra. Our motivation is to provide a framework for constructing Kitaev's loop-spectrum of bosonic invertible gapped phases of matter.  Finally, as a first step towards understanding the homotopy types of the loop-spectra associated to invertible quantum state types, we prove that the pure state space of any UHF algebra is simply connected. 
\end{abstract}
\maketitle

\tableofcontents


\section{Introduction}
Gapped invertible quantum phases of matter are a topic of major interest in condensed matter physics.  Selected key examples and partial classifications of such phases were introduced in the following works:  \cite{Laughlin1981,haldane83a, haldane83b, affleck87rigorous, Read2000, Kane2005, Moore2007, Roy2009, Fu2007, Kitaev2009, Ryu2009, gu09tensor,pollmann10,fidkowski11,turner11,chen11a,schuch11,Chen2013,levin12braiding}.

In recent years, there has been significant work done specifically towards better understanding Kitaev's conjecture that gapped invertible phases form a loop-spectrum in the sense of homotopy theory \cite{kitaevSimonsCenter,kitaevIPAM,kitaev,Gaiotto2019}. A loop-spectrum $\bY$ is a sequence of pointed topological spaces $Y_0, Y_1, Y_2, \ldots$ together with weak equivalences (see \cref{defn:weakequivalence})
\[ Y_\dd \xrightarrow{\simeq} \Omega Y_{\dd+1}\]
where $\Omega Y_{\dd+1}$ is the space of based loops in $Y_{\dd+1}$.
The space $Y_\dd$ is to be interpreted as the \emph{classifying space} for phases in spacetime dimension $\dd\pone$, so that its path components $\pi_0Y_\dd$ is the set of phases. Furthermore, phases in spacetime dimension $\dd\pone$ should be in one-to-one correspondence with homotopy classes of loops in the classifying space for phases one dimension higher.

\bigskip
One successful approach aimed at describing the loop-spectrum that classifies phases
has been to use topological quantum field theories (TQFTs); in particular, cobordism invariants play a key role \cite{Kapustin}.  Work of Freed--Hopkins followed by many others settles the question in the TQFT framework. They identify representing loop-spectra depending on the choice of bosonic versus fermionic phases and on the symmetry group of interest \cite{freed_hopkins,FreedHopkinsSym}. This has led to many computations and predictions, for example, in work of Wan--Wang \cite{wan_wang_2019, WAN2020115016}.

However, while it is widely accepted that invertible gapped phases are  classified by TQFTs, there are still many mysteries associated to this correspondence. For instance,  there is a need for a better understanding of Kitaev's insight directly from lattice spin systems.
Progress has been made in this direction in several recent papers, especially by Kapustin, Sopenko, Spodyneiko and Yang
\cite{KS1,KS2,KapustinSopenkoYang,Sopenko,KSo_local_Noether}. Work of Xiong \cite{Xiong} also gives a framework for Kitaev's conjecture, including many details on the connections between homotopy theory and condensed matter theory.  We also note recent work of Marcolli \cite{Marcolli}, which explores the connections between Segal's theory of Gamma-spaces and quantum information theory.\footnote{In this paper, we use $\sE_\infty$-spaces and operads to produce loop-spectra. Gamma-spaces are a different (but related) framework which can also be used to construct loop-spectra. Background on operads and $\sE_\infty$-spaces is given in \cref{sec:opandeinfty}.}

\bigskip
In this article, we do not settle the main question of identifying Kitaev's loop-spectrum. Our goal is to introduce a homotopical framework that we think is well-suited to the problem. This framework is not new to homotopy theory, but its application to lattice spin systems we think is new.

To explain the framework, it is helpful to think of the \emph{generalized cohomology theory} associated with the loop-spectrum of phases. We say what we mean here. 

Given a loop-spectrum $\bY=\{Y_0, Y_1, \cdots \}$ and a topological space $X$, one defines
\[\bY^\dd(X) := [X, Y_\dd]\]
where the right-hand side is the homotopy classes of maps.
The sets $\bY^\dd(X)$ are in fact abelian groups. This is a consequence of the fact that $Y_{\dd}$ is an infinite loop space 
\[Y_{\dd} \simeq \Omega Y_{\dd+1}\simeq \Omega^2 Y_{\dd+2} \simeq \Omega^3 Y_{\dd+3} \simeq \ldots\] 
and, already, $ \Omega^2 Y_{\dd+2}$ is a homotopy associative and commutative $H$-space, and so homotopy classes of maps into it form an abelian group.\footnote{An $H$-space is a based topological space $Y$ with a multiplication $Y \times Y \to Y$ and such that the base point of $Y$ is a unit up to homotopy. Properties of $H$-spaces such as associativity and commutativity are always only required to hold up to homotopy. Background on this is given in \cref{sec:topmonhspace}.}

In the context of invertible gapped phases, there should be a spectrum $\boldsymbol{\GP}$ of gapped phases with spaces $\GP_0$, $\GP_1$, \ldots, and then $\boldsymbol{\GP}^\dd(X)$ is to be thought of as the group of phases of \emph{quantum systems parametrized} by the space $X$. When $X=\pt$, a single point, we recover the non-parametrized gapped phases.

In practice, when trying to understand phases in spacetime dimension $\dd\pone$, one often starts with a space $\QS$ of  \emph{quantum systems} as opposed to phases. This space could be model dependent. For example, one might choose to work with Hamiltonians on certain lattices, or to already restrict to spaces of states with special properties. Furthermore, the space $\QS$ may depend on the symmetry group of interest, or whether or not one is studying fermions or bosons.

For a given dimension, and a space $\QS$ of systems, a quantum system parametrized by $X$ is a continuous function $f\colon X \ra \QS$. Given two such systems $f_1$ and $f_2$, one must be able to stack them to obtain a new system $f_1\instack f_2 \colon X \ra \QS$. For example, for systems modeled by states, the stacking operation takes two states $\omega_1$ and $\omega_2$ and forms their tensor product $\omega_1\otimes \omega_2$. For systems modeled by Hamiltonians $H_1$ and $H_2$, stacking corresponds to forming the operator $H_1\otimes \1 + \1 \otimes H_2 $.

Some quantum systems are considered trivial. 
For example, in a ``states" model, trivial systems will often be completely factorized states. 
In this paper, we will say a parametrized system $f \colon X \ra \QS$ is trivial if $f$ is constant with value a completely factorized state.

\begin{steps}\label{steps}
To pass from quantum systems to phases, one should perform the following identifications, where $f,g \colon X \ra \QS$ are quantum systems:
 \begin{enumerate}
 \setcounter{enumi}{-1}
  \item\label{iso} (Isomorphism) If $f$ and $g$ differ by a change of presentation, such as a change of basis, they are in the same phase;
 \item\label{def} (Deformation) If $f$ and $g$ are homotopic, they represent the same phase;
 \item\label{stab} (Stacking stabilization) If $f$ and $g$ become the equal after stacking with trivial systems, then they represent the same phase.
 \end{enumerate}
 \end{steps}

Often, the first type of equivalence, that of isomorphism, is subsumed by the second in the sense that isomorphisms of systems are realized via homotopies.
This will be assumed to be the case in our framework, so we will not discuss the notion of isomorphism further.


 \begin{ex}
Anyone who has worked with topological $K$-theory will feel a sense of \emph{d\'ej\`a-vu}. We review this connection in \cref{sec:Ktheory} and discuss the relationship of the framework with Kitaev's classification of free fermion phases.   
\end{ex}

 \begin{ex}\label{ex:0d}
The classification of bosonic systems without symmetry in spacetime dimension $0\pone$  is well understood. There are no non-trivial phases if one does not allow for varying parameters. That is, phases over a point are all trivial. The parametrized phases over more interesting topological spaces, however, are not all trivial. Indeed, over the sphere $S^2$, there is a non-trivial phase modeled by the quantum system
\[ H(w_1, w_2, w_3) = w_1\sigma^1+w_2\sigma^2+w_3 \sigma^3\]
where the $\sigma^i$ are the Pauli matrices and $w=(w_1,w_2,w_3) \in S^2$. The ground states of this system form a non-trivial line bundle over $S^2$ and this is the only phase invariant. It is the line bundle associated with the Berry curvature form. 

For any parameter space, the line bundle of ground states is always a complete  phase invariant in spacetime dimension $0\pone$. Since $BGL(1)$ classifies complex line bundles, there is an isomorphism
\begin{equation*}
 \bGP^{0}(X) \cong [X, BGL(1)]
\end{equation*}
and, for bosonic systems with no-symmetry, we have
$\GP_0 \simeq BGL(1)$.
We review this more in \cref{sec:0d}. 
\end{ex}
\begin{ex}[{\cite{qpump}}]\label{ex:1d}
Consider an infinite chain of spin-$1/2$ particles in spacetime dimension $1\pone$. This is modeled by taking  a lattice $\Gamma=\Z$ and, on each lattice site $v\in \Z$, placing a copy of $\mathfrak{B}(\C^2)$. 
Denote the Pauli operators on site $v$ by $\sigma^{1}_v,\sigma^{2}_v,\sigma^{3}_v$.
Model $S^3$ as pairs $(w, t)$ in the quotient of $S^2 \times [-1,1]$ which identifies all the points $(w,1)$ to one point, and similarly for the points $(w,-1)$.
Consider a Hamiltonian which is a sum of three terms
\begin{align*}
H(w,t) = \sum_{v \in \Z} H^1_v(w,t) + \sum_{v \in \Z | v \text{ even}} H^{2+}_{v,v+1} (t) +  \sum_{v \in \Z | v \text{ odd}} H^{2-}_{v,v+1} (t) \text{.}
\end{align*}
For the first term, 
\[H^1_v(w,t) = (-1)^v \sqrt{1-t^2} H(w)\] 
where $H(w)$ is the Hamiltonian for the $0+1$ dimensional family in \cref{ex:0d}.
For the other two terms, we
let $ g_+(t) = t-1/2$ for $t\in [1,1/2)$, and $g_-(t) = -t-1/2 $ for $ t\in (-1/2,-1]$.
Both functions are zero otherwise.
 Then 
 \[H^{2 \pm}_{v, v+1} = g_{\pm}(t)(\sigma_v^{1} \sigma^{1}_{v+1}+\sigma_v^{2} \sigma^{2}_{v+1}+ \sigma_v^{3} \sigma^{3}_{v+1}).\]
At each value of $(w,t)$ the system decomposes as a tensor product of decoupled $0+1$ dimensional quantum systems, which are either single or pairs of spins, making the ground state and spectrum of $H(w,t)$ easy to analyze.  In particular, $H(w,t)$ is always gapped with a unique ground state $\omega(w,t)$.  The pairing of the spins is not fixed and two different pairings  occur as one varies $t$, with a switch between them as we go from $t=1$ to $-1$.

Note that we get a function
\[\omega \colon S^3 \to \sP_{\Z}(\C^2)\]
where $\sP_{\Z}(\C^2)$ is the pure state space of the $C^*$-algebra 
$\bigotimes_{v\in \Z } \mathfrak{B}(\C^2)$ (as defined in  \cref{sec:Csetup}).
This is a parametrized family of states over $S^3$, and it is continuous if we give $ \sP_{\Z}(\C^2)$ the weak$^*$ topology (but not with the norm topology).
\end{ex}


In both Examples~\ref{ex:0d} and \ref{ex:1d}, the  specific Hamiltonian representing the system was not important. Modeling systems using Hamiltonians was a choice of presentation, and there are many such choices. In fact, Kitaev has argued that in the context of local gapped invertible phases with unique ground states on an infinite lattice, one should only study the states themselves, as the map which sends a Hamiltonian to its ground state will be a weak equivalence.  While the arguments, as described in \cite[\S 2.1.1]{Xiong}, are quite compelling,  to our knowledge, a mathematically rigorous treatment of local gapped Hamiltonians on an infinite lattice, and of this weak equivalence, still has not appeared in the literature.

\begin{ex}
In spacetime dimension $0\pone$, suppose we have a finite dimensional Hilbert space $\cH$.  Let $\bP(\cH)$ be its pure state space. Let $\mathscr{H}(\cH)$ be the space of Hermitian operators on $\cH$ with one-dimensional ground state. The map
\[ \mathscr{H}(\cH) \to \bP(\cH)\]
which sends $H $ to its ground state is a homotopy equivalence.
Therefore, there is no loss in only studying the pure state space instead of Hamiltonians.
\end{ex}

In this paper, we therefore focus on pure state spaces. In spacetime dimension $\dd\pone>0\pone$, the state spaces come from infinite dimensional objects. So there is a question of topology that needs to be resolved. In \cref{ex:1d}, we noted that the family of states parametrized by $S^3$ is not norm continuous, but rather weak$^*$ continuous. We expect that more general parametrized Hamiltonians should also give rise to weak* continuous families of ground states; for example this is true for the families of states considered in \cite{BachmannMichalakisNachtergaeleSimsAutomorphicEquivalence,KSo_local_Noether}. Furthermore, it is physically reasonable to require expectation values of local observables to be continuous. The weak* topology is the minimal topology with respect to this property. We therefore consider the weak* topology on our spaces of states $\QS$ and will henceforth restrict our attention to this topology.

\bigskip

We now get more specific about the content of our paper. Throughout, we work in the category of \emph{compactly generated topological spaces} $\CGTop$. We define this category in \cref{conv:top}.

We let 
$\fHilb$ 
be the category whose objects are the \emph{non-zero} finite-dimensional complex Hilbert spaces and morphisms are linear isometric embeddings. 
This is a symmetric monoidal category with respect to the tensor product. When considering the monoidal structure, we denote the category by $\fHilbo$. This is a topologically enriched category where the set of linear isometries, which we denote by $\I(\cH,\cK)$, is given the subspace topology with respect to all linear maps.

Now, let $\Gamma$ be a lattice. In spacetime dimension $\dd\pone$, we will usually use $\Gamma =\Z^d$ but this is not important for the general framework.
For each $\cH$ in $\fHilbo$, we consider the $C^*$-algebra $\fA_\Gamma(\cH)$, where the quantum degrees of freedom on each lattice site of $\Gamma$ are described by $\cH$.
So, 
\[\fA_\Gamma(\cH) = \bigotimes_{v\in \Gamma } \mathfrak{B}(\cH) \]
where the tensor product is understood as a UHF algebra. 
We let $\sP_\Gamma(\cH)$ be the pure state space of $\fA_\Gamma(\cH)$ with the weak$^*$ topology. Note that $\sP_\Gamma(\C)$ is a space with a single point, which we denote by $\pt$. This is all reviewed in detail in  \cref{sec:Csetup}. 

Given states $\omega_1 \in \sP_\Gamma(\cH_1)$ and  $\omega_2 \in \sP_\Gamma(\cH_2)$, we can define their tensor product $\omega_1\otimes \omega_2 \in \sP_\Gamma(\cH_1\otimes \cH_2)$. We note that this is the operation on ground states that corresponds to stacking gapped Hamiltonians with a unique ground state.

In  \cref{sec:Csetup}, we show the following result: 
\begin{thm*}[\ref{thm:uniquantstatetype}]
There is a topologically enriched functor
\[ \sP_\Gamma \colon \fHilbo \to \CGTop \]
which maps $\cH$ to $\sP_\Gamma(\cH)$ and with the property that for a linear isometric embedding $f\colon \hilbone \to \hilbtwo$,
\[\sP_\Gamma(f) \colon \sP_\Gamma(\cH) \to \sP_\Gamma(\hilbtwo) \] is a closed embedding. 
Furthermore, the map which takes states to their tensor product gives a natural transformation
\[ \etagP \colon \sP_\Gamma(\cH_1) \times \sP_\Gamma(\cH_2) \to  \sP_\Gamma(\cH_1 \otimes \cH_2) ,\]
making $\sP_\Gamma$ into a lax monoidal functor.
\end{thm*}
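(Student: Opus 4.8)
\emph{Definition on morphisms.} The plan is to realize a linear isometry $f\colon\cH\to\cK$ contravariantly at the algebra level by the unital completely positive map $\Phi_f\colon\fA_\Gamma(\cK)\to\fA_\Gamma(\cH)$ obtained by tensoring the site-wise conjugation $\Ad(f^*)\colon\mathfrak{B}(\cK)\to\mathfrak{B}(\cH)$, $b\mapsto f^*bf$, over all $v\in\Gamma$; since $f^*f=1_\cH$ each $\Ad(f^*)$ is unital, so the infinite tensor product is well defined on the UHF algebra. I would then set $\sP_\Gamma(f)(\omega)\defeq\omega\circ\Phi_f$. Precomposition with a fixed UCP map is weak$^*$ continuous and sends states to states, so the only substantive point is that it preserves \emph{purity}; functoriality is immediate from $(gf)^*=f^*g^*$, giving $\Phi_{gf}=\Phi_f\circ\Phi_g$, and from $\Phi_{\id}=\id$. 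I flag at the outset the structural subtlety that shapes the argument: the ``obvious'' covariant $*$-homomorphism $\bigotimes_v\Ad(f)$ is \emph{not} unital (it sends $1$ to the infinite product of the projections $p\defeq ff^*$) and hence does not define a map into $\fA_\Gamma(\cK)$. Every step must therefore be carried out locally, through the finitely supported projections $P_\Lambda\defeq\big(\bigotimes_{v\in\Lambda}p\big)\otimes 1$ for finite $\Lambda\subseteq\Gamma$.

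\emph{Purity: the main obstacle.} I would prove $\tilde\omega\defeq\omega\circ\Phi_f$ is pure using the criterion that a state is pure iff every positive functional $\psi\le\tilde\omega$ is a scalar multiple of it. The key computation is $\Phi_f(P_\Lambda)=1$ (as $\Ad(f^*)(p)=\Ad(f^*)(1)=1_\cH$), whence $\tilde\omega(P_\Lambda)=1$ for every finite $\Lambda$; any $0\le\psi\le\tilde\omega$ then satisfies $\psi(P_\Lambda)=\|\psi\|$, so by the standard fact that $\phi(P)=\phi(1)$ implies $\phi(a)=\phi(PaP)$, we get $\phi(a)=\phi(P_\Lambda aP_\Lambda)$ for $\phi\in\{\tilde\omega,\psi\}$ and all $a$. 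For $a$ supported in $\Lambda$ there is the identity $P_\Lambda aP_\Lambda=\Psi^{\loc}_f(\Phi_f(a))$, where $\Psi^{\loc}_f$ applies $\Ad(f)$ on $\Lambda$ and the unit elsewhere, using $p\mathfrak{B}(\cK)p=\Ad(f)(\mathfrak{B}(\cH))$. Precomposing $\psi\le\tilde\omega$ with the positive map $\Psi^{\loc}_f$ transports the inequality to $\fA_\Gamma(\cH)$, where $\tilde\omega\circ\Psi^{\loc}_f=\omega$; purity of $\omega$ forces $\psi\circ\Psi^{\loc}_f=t\,\omega$, and feeding this back through $P_\Lambda aP_\Lambda=\Psi^{\loc}_f(\Phi_f(a))$ yields $\psi=t\,\tilde\omega$ on the dense local subalgebra, hence everywhere. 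The delicate point is the $\Lambda$-independence of $\psi\circ\Psi^{\loc}_f$, which holds exactly because $\psi(P_\Lambda\,\cdot\,P_\Lambda)=\psi$.

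\emph{Closed embedding and continuity.} Injectivity and continuity of the inverse on the image both follow from the section relation $\Psi^{\loc}_f(\Phi_f(a))=P_\Lambda aP_\Lambda$ together with $\tilde\omega\circ\Psi^{\loc}_f=\omega$: the formula $\omega(c)=\tilde\omega(\Psi^{\loc}_f(c))$ recovers $\omega$ weak$^*$-continuously from $\tilde\omega$ on the dense local subalgebra, using uniform boundedness of states. It remains to see the image is weak$^*$-closed, and here the local projections pay off again: I would show the image equals $\{\tau\in\sP_\Gamma(\cK): \tau(P_\Lambda)=1\text{ for all finite }\Lambda\}$ (the inclusion ``$\subseteq$'' is the computation above; for ``$\supseteq$'' set $\omega\defeq\tau\circ\Psi^{\loc}_f$, check $\omega\circ\Phi_f=\tau$, and deduce $\omega$ is pure since $\tau$ is). As each condition $\tau(P_\Lambda)=1$ is weak$^*$-closed, the image is closed. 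Continuity of the enrichment $\I(\cH,\cK)\to\Map(\sP_\Gamma(\cH),\sP_\Gamma(\cK))$ I would verify via its adjoint $(f,\omega)\mapsto\omega\circ\Phi_f$, using that $f\mapsto\Phi_f(b)$ is norm-continuous for each local $b$ (it is polynomial in the entries of $f,f^*$ on the finite support of $b$) together with weak$^*$ convergence and uniform boundedness.

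\emph{Lax-monoidal structure.} I would define $\etagP(\omega_1,\omega_2)$ to be the product state $\omega_1\otimes\omega_2$ transported along the canonical shuffle isomorphism $\fA_\Gamma(\cH_1)\otimes\fA_\Gamma(\cH_2)\cong\fA_\Gamma(\cH_1\otimes\cH_2)$. Purity of $\omega_1\otimes\omega_2$ follows because a product of pure states on a minimal C$^*$-tensor product is pure: its GNS representation is $\pi_1\otimes\pi_2$, which is irreducible since the commutant of a tensor product of von Neumann algebras is the tensor product of the commutants; nuclearity of UHF algebras removes any tensor-norm ambiguity. Weak$^*$ continuity and naturality of $\etagP$ are checked on the dense local subalgebra, naturality reducing to $(f_1\otimes f_2)^*=f_1^*\otimes f_2^*$, i.e.\ $\Ad((f_1\otimes f_2)^*)=\Ad(f_1^*)\otimes\Ad(f_2^*)$; the associativity, unit ($\sP_\Gamma(\C)=\pt$), and symmetry coherences follow from the corresponding properties of product states, again verified on local elements. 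I expect steps three and four to be routine once the framework is fixed; the genuine obstacle is the purity/closed-embedding analysis above, whose entire difficulty is that the infinite lattice forbids the naive $*$-homomorphism $\bigotimes_v\Ad(f)$ and forces all arguments through the finite-support projections $P_\Lambda$ and the identity $P_\Lambda aP_\Lambda=\Psi^{\loc}_f(\Phi_f(a))$.
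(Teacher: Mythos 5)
Your proposal is correct and follows essentially the same route as the paper: the morphism $\sP_\Gamma(f)$ is precomposition with the UCP map built from $\Ad(f_\Lambda^*)$ on finite regions, purity and the closed-embedding property are established via the local projections $P_\Lambda$ and the image characterization $\{\tau : \tau(P_\Lambda)=1 \text{ for all } \Lambda \Subset \Gamma\}$, and the lax monoidal structure is the tensor product of pure states transported along the canonical isomorphism $\fA_\Gamma(\cH_1)\otimes\fA_\Gamma(\cH_2)\cong\fA_\Gamma(\cH_1\otimes\cH_2)$. The only minor divergence is in the topological enrichment, where you check joint continuity of $(f,\omega)\mapsto\omega\circ\Phi_f$ on local elements via the exponential law while the paper constructs an explicit metric and proves a Lipschitz estimate; both are valid.
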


We construct $ \sP_\Gamma$ by first constructing a strong monoidal contravariant functor $(\fA_\Gamma, \eta^{\fA_\Gamma}) \colon \fHilbo \to \cpCAlgo$ which takes $\cH$ to $\fA_\Gamma(\cH)$, viewed as an object in
the category $\cpCAlgo$ of nuclear unital $C^*$-algebras and unital completely positive linear maps (which are automatically bounded), equipped with their unique tensor product. This  is \cref{prop:strongmonoidalfunctoriality}.
By composing $\fA_\Gamma$ with the functor that assigns the state space, we obtain a pair $(\sS_\Gamma, \eta^{\sS_\Gamma})$ consisting of a functor  $\sS_\Gamma \colon \fHilbo \to \CGTop$ and a natural transformation $\eta^{\sS_\Gamma}$ just as above, but here $\sS_\Gamma(\cH)$ is the space of all states (not just pure ones) in the weak$^*$ topology. This is \cref{thm:statespacefuntopoenriched}.  Then $\sP_\Gamma$ is obtained by restricting $\sS_\Gamma$ to the pure state space. Of course, there are many details to check to ensure this works, and \cref{sec:Csetup} is dedicated to that.

As we will see in the construction, for a linear isometric embedding
$f \colon \cH \to\cK, \sP_\Gamma(f)(\omega)$ is obtained via pre-composition of $\omega$ with a completely
positive map $\fA_\Gamma(f)$. The map $\fA_\Gamma(f)$ is the colimit of maps
$\fA_\Lambda(f) = \Ad\big(\bigotimes_{v \in \Lambda} f^*\big)$ defined for finite subsets
$\Lambda \subset \Gamma$. That is, one might write
$\fA_\Gamma(f)$ suggestively as $\Ad\big( \bigotimes_{v \in \Gamma} f^* \big)$.

\begin{ex}
If $\dd=0$ so that $\Gamma=\Z^0$ is a single point, then 
\[\sP_{\Z^0}(\cH) \cong \bP(\cH)\] and
\[\sP_{\Z^0}(f) \colon\bP(\cH)\to \bP(\cK)\]
is simply the map which takes the one dimensional subspace $\psi$ of $\cH$ to the  subspace $f(\psi)$ of $\hilbtwo$. On the other hand $ \eta^{\sP_{\Z^0}}(\psi, \phi) = \psi\otimes \phi$, the tensor product of the one-dimensional subspaces.
\end{ex}

We construct and study $\sP_\Gamma$ in \cref{sec:Csetup}. Once we have constructed $\sP_\Gamma$, we can introduce the main structure we study in the rest of the paper, that of a \emph{quantum state type}. We make this definition with bosonic systems with no symmetry in mind, in the hope that it can be adapted to more general settings.
\begin{defn*}[\ref{defn:quantumstatetype}]
A \emph{quantum state type} on a lattice $\Gamma$ is a topologically enriched functor 
\[\QS_\Gamma \colon \fHilbo \to  \CGTop \]
such that the following hold:
\begin{enumerate}[(a)]
\item For each object $\cH$, $\QS_\Gamma(\cH)$ is a subspace of $\sP_\Gamma(\cH)$
containing all completely factorized states. 
\item  For each morphism $f\colon \cH \to \cL$, $\QS_\Gamma(f)$  is the restriction of $\sP_\Gamma(f)$ and is a closed embedding.
\item The natural transformation $\etagP $ restricts to a natural transformation
\[
\etagQ  \colon  \QS_\Gamma(\cH_1)\times \QS_\Gamma(\cH_2) \to  \QS_\Gamma(\cH_1\otimes \cH_2).
  \]
  \end{enumerate}
  The quantum state type $\sP_\Gamma$ is called the \emph{universal quantum state type}. 
\end{defn*}

This setup will allow us to deduce many formal results about quantum state types. In this paper, we will focus on the universal quantum state type $\sP_\Gamma$. While example of physical interest on infinite lattices will take $\QS_\Gamma(\cH)$ to be a proper subspace of $\sP_\Gamma(\cH)$, the universal quantum state type $\sP_\Gamma$ gives us a toy example for developing the theory and the results we prove about $\sP_\Gamma$ are of independent interest for non-commutative geometry. We discuss proposals for more physically relevant quantum state types on infinite lattices in \cref{ex:physicalexamples} below.


\bigskip

Assume we have fixed a quantum state type $\QS_\Gamma$.  We let $\cH$ be an object of $\fHilbo$ of dimension at least two. We choose a distinguished vacuum state $\psi$ of $\cH$. We let $f_i \colon \cH^{\otimes i} \to \cH^{\otimes (i+1)}$ be the map obtained by tensoring on the right with a fixed choice of unit vector in $\psi$. This is a linear isometry, so the maps $\QS_\Gamma(f_i)$ are closed embeddings. The completely factorized state $\bigotimes_{\Gamma} \psi$ is in $\QS_\Gamma(\cH)$ by assumption. By an abuse of notation, we simply call this state  $\psi$.  We then have
\[\QS_\Gamma(f_i) (\omega) = \etagQ(\omega, \psi ) = \omega\otimes \psi.\]

We introduce two spaces that play a key role in our study of phases.\footnote{At this point, we apply a functorial CW-replacement to our quantum state type. This preserves the weak homotopy type but puts us in a context better suited for homotopy theory. The theorems are to be understood up to this functorial replacement. This is explained carefully in \cref{sec:CWreplacements}. Since we are interested in the homotopy theory of quantum state types, we don't lose anything through this replacement.} First,  we let
\[\QS_{\Gamma}^{\sqcup}(\cH):= \coprod_{i\geq 0} \QS_\Gamma(\cH^{\otimes i}).\]
Also let  $\psi_i = \psi^{\otimes i}\in  \QS_\Gamma(\cH^{\otimes i})$. 
The natural transformations $\etagQ$ give $\QS_{\Gamma}^{\sqcup}(\cH)$ the structure of a strictly associative topological monoid with unit $\psi_0$. It is homotopy commutative, but not strictly commutative. See \cref{prop:QStopmonoid}.

The space $\QS_{\Gamma}^{\sqcup}(\cH)$ is our classifying space for quantum systems. Henceforth, a \emph{parameter space} will always be assumed to be a compact Hausdorff space  $X$ which is homotopy equivalent to a CW complex. Then,  a continuous function
\[ \omega \colon X \to \QS_\Gamma^{\sqcup}(\cH)\]
is a \emph{quantum system parametrized by $X$}. See \cref{defn:classifying}. We say it is \emph{trivial} if it is constant with value the completely factorized state $\psi_i$ for some $i\geq 0$.  We continue our abuse of notation and call this constant function $\psi_i$.

Now that we have defined quantum systems, we also want to define \emph{quantum phases}. To this end, define
\[  \QS_\Gamma(\cH^{\otimes \infty}) :=  \colim_{i, \otimes \psi} \QS_\Gamma(\cH^{\otimes i}).\]
We give $\QS_\Gamma(\cH^{\otimes \infty})$ the base point corresponding to the image of $\psi_0$, which, by design, is equal to the image of every $\psi_i$.  Physically, we can roughly think of elements of $\QS_\Gamma(\cH^{\otimes \infty})$ as states in $\QS_{\Gamma}^{\sqcup}(\cH)$, stacked with an infinite number of copies of the trivial state $\psi$.  Upon passing to $\QS_\Gamma(\cH^{\otimes \infty})$, we no longer need to explicitly consider stacking stabilization, and all information about phases is encoded in the homotopy type of $\QS_\Gamma(\cH^{\otimes \infty})$.  We emphasize that $\QS_\Gamma(\cH^{\otimes \infty})$ is not a subspace of pure states with $\cH^{\otimes \infty}$ as the single-site vector space.

With this space in hand, we can define our notion of phases. Namely, the \emph{quantum phases of state type $\QS_\Gamma$ parametrized by $X$} is the set of homotopy classes
\[[X,  \QS_\Gamma(\cH^{\otimes \infty})].\]
We will justify this definition, but first we discuss the structure on the space $\QS_\Gamma(\cH^{\otimes \infty})$. We prove it is a homotopy associative and commutative $H$-space. This means that  $\QS_\Gamma(\cH^{\otimes \infty})$ has a product that is associative and commutative up to homotopy, with an element that also satisfies the conditions of being a unit up to homotopy.   In fact, we prove more, we show that it is an $\sE_\infty$-space, a concept which is reviewed in \cref{sec:opandeinfty}.
In particular, $\pi_0\QS_\Gamma(\cH^{\otimes \infty})$, the space of path components, is a monoid. However, it may not be a group. The cases when it \emph{is} a group are particularly interesting. We will say that a quantum state type $\QS_\Gamma$ is \emph{invertible} if $\pi_0\QS_\Gamma(\cH^{\otimes \infty})$ is a group (see \cref{defn:QSinvert}). We summarize this in the following theorem which is proved in this paper.

\begin{thm*}[\ref{thm:jmathHspace}]
The space  $ \QS_\Gamma(\cH^{\otimes \infty})$ is an $\sE_\infty$-space. As such, it is a homotopy commutative and associative $H$-space. If $\QS_\Gamma$ is invertible, then $ \QS_\Gamma(\cH^{\otimes \infty})$ is an $H$-group.

The map 
\[\jmath =\coprod \jmath_i \colon \QS_\Gamma^{\sqcup}(\cH) \to \QS_\Gamma(\cH^{\otimes \infty})\]
 obtained from the inclusions $\jmath_i\colon \QS_\Gamma(\cH^{\otimes i}) \to  \QS_\Gamma(\cH^{\otimes \infty})$ is a morphism of $H$-spaces. 
\end{thm*}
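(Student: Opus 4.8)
The plan is to exhibit $\QS_\Gamma(\cH^{\otimes\infty})$ as an algebra over the multiplicative linear isometry operad $\sK$ and then read off the remaining statements formally. First I would recall from \cref{sec:opandeinfty} that $\sK$ is an $\sE_\infty$-operad: each space $\sK(n)$ of linear isometries between $n$-fold tensor powers of the infinite tensor product and $\cH^{\otimes\infty}$ is contractible (there is always ``room'' to absorb finitely many copies of the infinite tensor product into one, in the spirit of Kuiper's theorem), and $\Sigma_n$ acts freely on $\sK(n)$ by permuting the tensor factors. The core of the argument is to construct a continuous, associative, equivariant action
\[ \theta_n \colon \sK(n) \times \QS_\Gamma(\cH^{\otimes\infty})^{\times n} \to \QS_\Gamma(\cH^{\otimes\infty}). \]

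To build $\theta_n$ I would use the lax monoidal structure together with functoriality and the colimit presentation of $\QS_\Gamma(\cH^{\otimes\infty})$. Given classes $\bar\omega_1, \ldots, \bar\omega_n$ represented at finite levels by $\omega_k \in \QS_\Gamma(\cH^{\otimes i_k})$, the iterated natural transformation $\etagQ$ produces $\omega_1 \otimes \cdots \otimes \omega_n \in \QS_\Gamma(\cH^{\otimes(i_1+\cdots+i_n)})$; an isometry $g \in \sK(n)$ restricts, on the relevant finite tensor powers, to linear isometric embeddings into finite levels $\cH^{\otimes N}$, and $\QS_\Gamma$ of these embeddings carries $\omega_1 \otimes \cdots \otimes \omega_n$ into $\QS_\Gamma(\cH^{\otimes\infty})$. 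I would check that this is independent of the chosen finite representatives (using that the stabilization maps are exactly the tensor-with-$\psi$ maps $\QS_\Gamma(f_i)$, which intertwine with the restrictions of $g$), that it is continuous in all variables for the colimit/weak$^*$ topology, and that it satisfies the operad unit, associativity, and equivariance axioms. This makes $\QS_\Gamma(\cH^{\otimes\infty})$ an $\sK$-algebra, hence an $\sE_\infty$-space; choosing any point $g_2 \in \sK(2)$ gives the $H$-space multiplication, and contractibility of $\sK(2)$ and $\sK(3)$ together with the free $\Sigma_n$-actions force this product to be homotopy commutative, homotopy associative, and unital with unit the basepoint $\psi_0$.

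The $H$-group claim is then formal: an $\sE_\infty$-space is in particular a homotopy-associative $H$-space of the homotopy type of a CW complex (which holds after the functorial CW-replacement noted earlier), and such a space is grouplike precisely when the monoid $\pi_0$ is a group, which is the definition of invertibility; I would invoke the corresponding statement recalled in \cref{sec:topmonhspace}. For the final claim I would first identify a canonical concatenation isometry $c \in \sK(2)$ for which the action $\theta_2(c; -, -)$, restricted along $\jmath \times \jmath$, agrees with $\jmath$ applied to the strict product $\etagQ$ of $\QS_\Gamma^{\sqcup}(\cH)$; that is, $\jmath$ intertwines the monoid product with the $c$-product. Since the $H$-product on $\QS_\Gamma(\cH^{\otimes\infty})$ is the $g_2$-product and $c, g_2$ both lie in the contractible space $\sK(2)$, any path from $c$ to $g_2$ yields, via functoriality of $\QS_\Gamma$, the homotopy exhibiting $\jmath$ as an $H$-map; the unit is preserved on the nose since $\jmath(\psi_0)$ is the basepoint.

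I expect the main obstacle to be the construction of the action $\theta_n$ itself: making precise how a single isometry of the infinite tensor product acts on states defined as a colimit over finite levels, and verifying that the resulting map is well-defined independently of representatives, continuous for the weak$^*$/colimit topology, and \emph{strictly} compatible with operad composition rather than merely up to homotopy. The homotopy-commutativity, $H$-group, and $H$-map statements are comparatively formal once the honest operad action is in hand.
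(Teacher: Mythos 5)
Your first two paragraphs follow the paper's route essentially verbatim: the operad $\sK(\cH^{\otimes\infty})(j)=\I((\cH^{\otimes\infty})^{\otimes j},\cH^{\otimes\infty})$ (with the based variant of isometries fixing the vacuum vector, which you should not omit if you want $\psi_0$ to be the unit), contractibility and $\Sigma_j$-freeness, and the action $\theta_j(f;\omega_1,\dots,\omega_j)=\QS_\Gamma(f)(\omega_1\otimes\cdots\otimes\omega_j)$, with continuity supplied by the topological enrichment of the extension of $\QS_\Gamma$ to $\sJ$. The formal deductions ($H$-space, $H$-group when $\pi_0$ is a group) are as in the paper.

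The gap is in your last step. There is no single ``canonical concatenation isometry'' $c\in\I(\cH^{\otimes\infty}\otimes\cH^{\otimes\infty},\cH^{\otimes\infty})$ for which $\theta_2(c;-,-)\circ(\jmath\times\jmath)$ agrees strictly with $\jmath\circ\etagQ$ on all of $\QS_\Gamma^{\sqcup}(\cH)$. The obstruction is that the prescription
\[
\bigl(v_1\otimes\cdots\otimes v_m\otimes u\otimes\cdots\bigr)\otimes\bigl(w_1\otimes\cdots\otimes w_n\otimes u\otimes\cdots\bigr)\;\longmapsto\;v_1\otimes\cdots\otimes v_m\otimes w_1\otimes\cdots\otimes w_n\otimes u\otimes\cdots
\]
is not well defined as a linear map: the vector $a\otimes u\otimes u\otimes\cdots$ can be read with $m=1$ or with $m=2$ (taking $v_2=u$), and pairing it with $b\otimes u\otimes\cdots$ then yields $a\otimes b\otimes u\otimes\cdots$ in the first reading but $a\otimes u\otimes b\otimes u\otimes\cdots$ in the second. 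Consequently no element of $\sK(2)$ intertwines $\jmath$ with the monoid product on the nose, and your plan of choosing one path from $c$ to $g_2$ does not get off the ground. The repair, which is what the paper does, is to work one pair $(m,n)$ at a time: reduce by naturality of $\etagQ$ to comparing $\QS_\Gamma(\varphi)\circ\QS_\Gamma(i_m\otimes i_n)$ with $\jmath_{m+n}=\QS_\Gamma(i_{m+n})$, note that $\varphi\circ(i_m\otimes i_n)$ and $i_{m+n}$ differ by an isometry $\varphi_{m,n}$ lying in the same contractible space $\I(\cH^{\otimes\infty}\otimes\cH^{\otimes\infty},\cH^{\otimes\infty})$ as $\varphi$, choose a path $\alpha_{m,n}$ from $\varphi$ to $\varphi_{m,n}$ for each $(m,n)$, and use the continuity of $f\mapsto\QS_\Gamma(f)$ to turn each such path into a homotopy; these assemble over the disjoint union $\QS_\Gamma^{\sqcup}(\cH)=\coprod_i\QS_\Gamma(\cH^{\otimes i})$ to exhibit $\jmath$ as an $H$-map.
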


The reason $\QS_\Gamma(\cH^{\otimes \infty})$ is particularly interesting to us is the following result.
\begin{thm*}[\ref{thm:jquotientiso}]\label{thm:jquotientiso:intro}
For a parameter space $X$, the map $\jmath$ induces a map of monoids
\[\pi_0\Map(X,\jmath) \colon \Map(X,  \QS_\Gamma^{\sqcup}(\cH)) \to [X,  \QS_\Gamma(\cH^{\otimes \infty})], \]
which identifies $[X,  \QS_\Gamma(\cH^{\otimes \infty})]$ with the quotient of the space of quantum systems
\[ \Map(X,  \QS_\Gamma^{\sqcup}(\cH))\] 
by the relation $\omega_1 \approx \omega_2$ if and only if there are $i,j \geq 0$ such that
\[ \omega_1 \otimes \psi_i \simeq \omega_2 \otimes \psi_j.\]
\end{thm*}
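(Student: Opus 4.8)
The plan is to show that $[\omega]\mapsto[\jmath\circ\omega]$ is a well-defined morphism of monoids which descends to the quotient by $\approx$, and that the descended map is a bijection; surjectivity will come from compactness of $X$ and injectivity (modulo $\approx$) from compactness of $X\times[0,1]$. Throughout I work after the functorial CW-replacement of \cref{sec:CWreplacements}, so that each $\QS_\Gamma(f_i)$ is the inclusion of a subcomplex and $\QS_\Gamma(\cH^{\otimes\infty})=\colim_i\QS_\Gamma(\cH^{\otimes i})$ is a CW complex of which each $\QS_\Gamma(\cH^{\otimes i})$ is a subcomplex.

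First I would verify that $\omega\mapsto\jmath\circ\omega$ induces a morphism of monoids on $\pi_0$: since $\jmath$ is a morphism of $H$-spaces by \cref{thm:jmathHspace}, post-composition carries products to products up to homotopy, so the induced map on homotopy classes respects the monoid structures. Next I would record the compatibility $\jmath_{i+1}\circ\QS_\Gamma(f_i)=\jmath_i$ built into the colimit, which together with the identity $\QS_\Gamma(f_i)(\omega)=\omega\otimes\psi$ gives inductively $\jmath\circ(\omega\otimes\psi_c)=\jmath\circ\omega$ for every $c\ge 0$. Consequently, if $\omega_1\approx\omega_2$, say $\omega_1\otimes\psi_i\simeq\omega_2\otimes\psi_j$, then applying $\jmath$ yields $\jmath\circ\omega_1\simeq\jmath\circ\omega_2$, so the morphism descends to $\Map(X,\QS_\Gamma^{\sqcup}(\cH))/{\approx}$. (One also checks that $\approx$ is an equivalence relation, transitivity following by tensoring the witnessing homotopies with further copies of $\psi$.) For surjectivity, given $\alpha\colon X\to\QS_\Gamma(\cH^{\otimes\infty})$, its image $\alpha(X)$ is compact, hence lies in a finite subcomplex and therefore in some $\QS_\Gamma(\cH^{\otimes i})$; as that inclusion is a closed embedding, $\alpha$ corestricts to a continuous $\omega\colon X\to\QS_\Gamma(\cH^{\otimes i})\subseteq\QS_\Gamma^{\sqcup}(\cH)$ with $\jmath\circ\omega=\alpha$.

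The heart of the argument is injectivity modulo $\approx$. Suppose $\jmath\circ\omega_1\simeq\jmath\circ\omega_2$ via a homotopy $H\colon X\times[0,1]\to\QS_\Gamma(\cH^{\otimes\infty})$. Since $X\times[0,1]$ is compact, $H(X\times[0,1])$ lies in a finite subcomplex, so $H$ factors as $\jmath_N\circ\tilde H$ for a continuous $\tilde H\colon X\times[0,1]\to\QS_\Gamma(\cH^{\otimes N})$, using that $\jmath_N$ is a closed embedding (hence a monomorphism onto a subcomplex). Taking $X$ connected, each $\omega_r$ lands in a single summand $\QS_\Gamma(\cH^{\otimes k_r})$, and the compatibility above gives $\jmath\circ\omega_r=\jmath_N\circ(\omega_r\otimes\psi_{N-k_r})$. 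Because $\jmath_N$ is monic, this forces $\tilde H_0=\omega_1\otimes\psi_{N-k_1}$ and $\tilde H_1=\omega_2\otimes\psi_{N-k_2}$, so $\tilde H$ is a homotopy in $\QS_\Gamma^{\sqcup}(\cH)$ exhibiting $\omega_1\otimes\psi_{N-k_1}\simeq\omega_2\otimes\psi_{N-k_2}$, i.e.\ $\omega_1\approx\omega_2$. For disconnected $X$ one runs this path-component by path-component (the parameter spaces of interest are typically connected).

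The main obstacle is precisely the passage from compactness to a finite stage of the colimit together with the monomorphism property of the structure maps $\jmath_N$: both statements are false for colimits along arbitrary maps, and they are exactly what the closed-embedding axiom of \cref{defn:quantumstatetype} and the CW-replacement are arranged to supply. The single standard input to invoke is that a compact subset of a CW complex is contained in a finite subcomplex; everything else is bookkeeping with the monoid identities from \cref{thm:jmathHspace} and the colimit structure maps.
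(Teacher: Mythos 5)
Your argument is correct and is essentially the paper's own proof: both reduce to connected $X$, use compactness of $X$ and $X\times I$ to factor maps and homotopies through a finite stage $\QS_\Gamma(\cH^{\otimes N})$ of the colimit, and use that the structure maps are closed embeddings to identify the endpoints of the factored homotopy with $\omega_r\otimes\psi_{N-k_r}$. The only cosmetic difference is that the paper packages the compactness step as the statement that $[X,-]$ commutes with this directed colimit (via \cref{lem:CGTOPFACTS}), whereas you unfold that lemma by hand using the CW structure.
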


Therefore, $ [X,  \QS_\Gamma(\cH^{\otimes \infty})]$ is precisely the result of passing from quantum systems to quantum phases by imposing the equivalences of deformation \eqref{def}  and of  stacking stabilization  \eqref{stab}. As noted before, we expect that equivalences corresponding to isomorphisms \eqref{iso} will be realized by deformations.\footnote{For example, because $\QS_\Gamma$ is a topologically enriched functor and the unitary group of $\cH$ is path connected, any isomorphism between systems obtained by conjugating each site in $\fA_\Gamma(\cH)$ by a fixed a unitary $U\in \fB(\cH)$ (the same on each site) is realized through a homotopy.} 

For invertible quantum state types, we can do even better and describe the passage from quantum systems to phases as a group completion process.
\begin{thm*}[\ref{thm:locisgroup}]
If $\QS_\Gamma$ is an invertible quantum state type,  then the monoid of phases $[X,  \QS_\Gamma(\cH^{\otimes \infty})]$ is an abelian group, and there is an isomorphism
\[ K_0([X,  \QS_\Gamma^{\sqcup}(\cH)]) \cong [X, \QS_\Gamma(\cH^{\otimes \infty}) ]\times [X,\Z]\]
where $K_0$ is the Grothendieck group completion functor.
\end{thm*}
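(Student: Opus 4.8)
The plan is to leverage the earlier theorems, especially \cref{thm:jquotientiso} and the $\sE_\infty$-structure from \cref{thm:jmathHspace}, and to reduce the statement to a clean algebraic fact about group completions of commutative monoids. First I would observe that, by \cref{thm:jmathHspace}, since $\QS_\Gamma$ is invertible, the space $\QS_\Gamma(\cH^{\otimes \infty})$ is an $H$-group, so that $[X, \QS_\Gamma(\cH^{\otimes \infty})]$ is an abelian group for every parameter space $X$; this handles the first assertion. The core of the theorem is then the isomorphism relating the Grothendieck completion of the monoid $M := [X, \QS_\Gamma^{\sqcup}(\cH)]$ to $[X, \QS_\Gamma(\cH^{\otimes \infty})] \times [X, \Z]$.

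The key algebraic input I would isolate is the following. By \cref{thm:jquotientiso}, the monoid map
\[
\pi_0\Map(X, \jmath) \colon M \to [X, \QS_\Gamma(\cH^{\otimes \infty})]
\]
is surjective and identifies the target with the quotient of $M$ by the relation $\omega_1 \approx \omega_2$ iff $\omega_1 \otimes \psi_i \simeq \omega_2 \otimes \psi_j$ for some $i,j \geq 0$. The essential structural fact is that this stabilizing relation is exactly quotienting $M$ by the submonoid generated by the classes of the trivial systems $\psi_i$. Now each component of a parametrized system carries a well-defined tensor-exponent, giving a monoid homomorphism $\deg \colon M \to [X,\Z]$ which records, on each path component of $X$, the power $i$ such that the value lies in $\QS_\Gamma(\cH^{\otimes i})$; this is additive under stacking since $\cH^{\otimes i} \otimes \cH^{\otimes j} \cong \cH^{\otimes (i+j)}$. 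The trivial systems $\psi_i$ map under $\deg$ to the constant functions $i \in [X,\Z]$, and I would check that $\deg$ together with $\pi_0\Map(X,\jmath)$ assembles into a homomorphism $M \to [X, \QS_\Gamma(\cH^{\otimes \infty})] \times [X,\Z]$.

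I would then pass to Grothendieck completions. Since $K_0$ is left adjoint to the inclusion of abelian groups into commutative monoids, it preserves the relevant quotient and sends the pair above to a homomorphism
\[
K_0(M) \to [X, \QS_\Gamma(\cH^{\otimes \infty})] \times [X,\Z].
\]
To prove this is an isomorphism, I would construct the inverse. The map $\deg$ admits a splitting on completions because the trivial systems $\psi_i$ already realize every constant value in $[X,\Z]$ and, after completion, every element of $[X,\Z]$; stacking with $\psi$ raises the degree by the constant function $1$, whose inverse exists in $K_0(M)$. Thus I would exhibit $K_0(M)$ as the internal product of the subgroup of degree-zero classes with a free summand detected by $\deg$, identify the degree-zero part with $[X, \QS_\Gamma(\cH^{\otimes \infty})]$ via $\pi_0\Map(X,\jmath)$ using the stabilization description, and conclude.

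The main obstacle I anticipate is verifying that, after group completion, $\pi_0\Map(X,\jmath)$ restricts to an \emph{isomorphism} on the degree-zero subgroup rather than merely a surjection. The subtlety is that the relation $\approx$ from \cref{thm:jquotientiso} only asserts stabilization after tensoring with \emph{some} trivial states, so two systems of the same degree could a priori become equal only after raising both degrees; I would need to show that such coincidences are already witnessed in the completion $K_0(M)$, i.e.\ that stabilizing by $\psi$ becomes invertible and hence injective on $K_0$. This is where invertibility of $\QS_\Gamma$ is genuinely used: it guarantees $[X, \QS_\Gamma(\cH^{\otimes \infty})]$ is a group, so the surjection from the degree-zero part has no room for a nontrivial kernel once one checks that a degree-zero class mapping to the basepoint is stably trivial, which follows by re-running the $\approx$ relation and cancelling the now-invertible $\psi$.
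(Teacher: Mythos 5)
Your proof is correct, but it takes a genuinely different route from the paper's. The paper's own argument never works directly with the relation of \cref{thm:jquotientiso}: instead it introduces the telescope $\QSC(\cH)$ (\cref{defn:QSC}), proves a \emph{space-level} homeomorphism $\QSC(\cH)\cong \QS_\Gamma(\cH^{\otimes\infty})\times\Z$ (\cref{lem:isoprod}, \cref{prop:K0exact}), identifies $[X,\QSC(\cH)]$ with the monoid localization $\psi^{-1}[X,\QS_\Gamma^{\sqcup}(\cH)]$, and then concludes by a pure universal-property argument: group completion inverts $\psi$ and hence factors through the localization, and since the localization is already a group (this is where invertibility enters, via \cref{lem:pi1abelian}), initiality of the group completion forces the two to coincide. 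You instead build the map $K_0(M)\to [X,\QS_\Gamma(\cH^{\otimes\infty})]\times[X,\Z]$ by hand from $\pi_0\Map(X,\jmath)$ and the degree map and verify bijectivity using the explicit stabilization relation $\approx$; your injectivity step (equality of degrees forces $i=j$ in $\omega_1\otimes\psi_i\simeq\omega_2\otimes\psi_j$, after which cancellation holds in $K_0$) is exactly the honest content of the theorem and correctly isolates where invertibility is needed, namely so that $[X,\QS_\Gamma(\cH^{\otimes\infty})]$ is a group and $\pi_0\Map(X,\jmath)$ factors through $K_0$. What the paper's route buys is reusability: $\QSC(\cH)$ and its localization description also feed into \cref{thm:quillengroupcompletion}. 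What your route buys is elementarity and transparency.

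One imprecision worth fixing: you claim the trivial systems $\psi_i$ realize, after completion, every element of $[X,\Z]$. When $X$ is disconnected, the constant systems only generate the diagonal copy of $\Z$ inside $[X,\Z]\cong\Z^{\pi_0 X}$. To split $\deg$ you need the piecewise trivial systems $x\mapsto\psi_{d(x)}$ for an arbitrary locally constant $d\colon X\to\N_0$; these are continuous, lie in $M$, map to the identity of $[X,\QS_\Gamma(\cH^{\otimes\infty})]$, and have degree $d$, so your splitting and the surjectivity of your pair of maps do go through. (The paper sidesteps this entirely by producing the $\Z$ factor at the space level, before applying $[X,-]$.)
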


The factor of $[X,\Z]$ appears because, on any path component $X_0$ of $X$, a quantum systems $f$ on $X$ must satisfy $f(X_0) \subset \QS_\Gamma(\cH^{\otimes i})$ for some specific $i \geq 0$, and the group completion of the non-negative integers is $\Z$. 

\begin{rem}
The definition of invertibility only depended on $\pi_0  \QS_\Gamma(\cH^{\otimes \infty})$. Since this formally implies that $[X,  \QS_\Gamma(\cH^{\otimes \infty})]$ is an abelian group, this means that a quantum system parametrized by $X$ is invertible if and only if it is invertible at each point of $X$. See \cref{rem:invertiblepointwisediscussion} for more discussion on this point.
\end{rem}

\begin{rem}
We can actually do more than this and relate the spaces $\QS_\Gamma^{\sqcup}(\cH)$ and $ \QS_\Gamma(\cH^{\otimes \infty})$ to Quillen's group completion $\Omega B \QS_\Gamma^{\sqcup}(\cH)$. This is a step-up in theoretical technicality so we do not discuss it further in the introduction, but the results are in \cref{thm:quillengroupcompletion} and further discussion in \cref{rem:quillengroupcompletion}.
\end{rem}

At this point, we want to return to the story with loop-spectra. As a direct consequence of the fact that $\QS_\Gamma(\cH^{\otimes \infty})$ is an $\sE_\infty$-space, we have the following result, which is discussed in \cref{sec:loopspectra}.
\begin{thm*}[\ref{thm:QSinfiniteloop}]
If $\QS_\Gamma$ is an invertible quantum state type, then $\QS_\Gamma(\cH^{\otimes \infty})$ is an infinite loop space, i.e., the zero space of a loop-spectrum.
\end{thm*}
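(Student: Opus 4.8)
The plan is to reduce the statement directly to the classical recognition principle for infinite loop spaces. By \cref{thm:jmathHspace}, the space $\QS_\Gamma(\cH^{\otimes\infty})$ carries the structure of an $\sE_\infty$-space for the multiplicative linear isometry operad, and the definition of invertibility (\cref{defn:QSinvert}) says exactly that the monoid $\pi_0\QS_\Gamma(\cH^{\otimes\infty})$ is a group. Thus $\QS_\Gamma(\cH^{\otimes\infty})$ is a \emph{group-like} $\sE_\infty$-space. The recognition theorem of May --- equivalently the infinite loop space machines of Segal or Boardman--Vogt --- asserts that any group-like algebra over an $\sE_\infty$-operad is weakly equivalent, as an $H$-space, to the zeroth space $\Omega^\infty\bY$ of a connective $\Omega$-spectrum $\bY$. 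Granting this, $\QS_\Gamma(\cH^{\otimes\infty})\simeq \Omega^\infty\bY$ is an infinite loop space, which is the assertion.

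Concretely, I would first record that the multiplicative linear isometry operad is an $\sE_\infty$-operad in the precise sense the machine requires, namely that each of its constituent spaces is contractible and carries a free action of the relevant symmetric group; this should already be available from \cref{sec:opandeinfty}, so the step amounts to citing the appropriate lemma. I would then feed the $\sE_\infty$-structure into the infinite loop space machine to obtain a connective spectrum $\bY$ together with a natural comparison map $\QS_\Gamma(\cH^{\otimes\infty})\to\Omega^\infty\bY$, and invoke the group-completion form of the recognition statement, which guarantees that this map is a weak equivalence precisely because the source is group-like. Since we have already replaced our quantum state type by a CW-model, a weak equivalence is enough to pin down the homotopy type.

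The one point demanding genuine care --- and the step I expect to be the main obstacle --- is aligning the operad we actually use with the hypotheses of whatever version of the recognition principle we cite. The standard references (for example May's monograph) are formulated for a fixed $\sE_\infty$-operad such as the linear isometries operad, whereas our action is for the multiplicative variant. Because any two $\sE_\infty$-operads are linked by a zig-zag of operad maps that are equivalences at every level, an algebra over one can be rectified to an algebra over the other without altering its weak homotopy type, so the recognition principle applies either way. I would make this comparison explicit --- either by verifying the hypotheses of a sufficiently general form of the machine, or by transporting the $\sE_\infty$-structure along such a zig-zag --- checking along the way that group-likeness is preserved, which it is, since that condition depends only on the monoid $\pi_0\QS_\Gamma(\cH^{\otimes\infty})$.
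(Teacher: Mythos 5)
Your proposal is correct and follows essentially the same route as the paper: the $\sE_\infty$-structure from \cref{thm:jmathHspace} (via the multiplicative linear isometry operad, shown to be a reduced $\sE_\infty$-operad in \cref{lem:sKop}) plus group-likeness from invertibility, fed into May's Recognition Principle (\cref{thm:recognition}). The operad-comparison issue you flag as the main obstacle is already absorbed into the paper's setup, since the recognition principle is stated for algebras over an arbitrary $\sE_\infty$-operad and the zig-zag comparison with the Barratt--Eccles operad is recorded in \cref{rem:alleinftyopequal}.
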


Now that we have established our framework, we reconnect with the questions of computing parametrized phases. For an invertible quantum state type $\QS_\Gamma$, the group of phases  always corresponds to  $\pi_0\QS_\Gamma(\cH^{\otimes \infty})$. Furthermore, we have shown that $\QS_\Gamma(\cH^{\otimes \infty})$ is an $\sE_\infty$-space. 

May's Recognition Principle \cite{MayGILS} gives a correspondence between infinite loop spaces whose monoid of path components is a group and \emph{connective} loop-spectra.  These are loop-spectra $\bY$ whose homotopy groups vanish in all negative degrees. That is, $\pi_{\dd}\bY=\pi_0 Y_{-\dd}=0$ for all $\dd<0$. This feels very promising. Homotopy groups of spectra and other terminology about loop-spectra are reviewed in \cref{sec:loopspectra}.

\bigskip
We return to the physical context, fixing a space-time dimension $\dd\pone$ and letting $\Gamma$ be $\Z^\dd$. The hope is that with clever choices of invertible quantum state types $\QS_{\Z^\dd}$, the space $\QS_{\Z^d}(\cH^{\otimes \infty})$ will give a first approximation for the classification of gapped invertible phases in spacetime dimension $\dd\pone$ and that the $\sE_\infty$-space structure will be related to Kitaev's loop-spectrum. 
However, even with a reasonable choice of $\QS_{\Z^d}$, there will still be important open questions to resolve:
\begin{enumerate}
\item We have not established a connection between $\QS_{\Z^\dd}$ and $\QS_{\Z^{\dd+1}}$. So, all we have is a collection of loop-spectra that have not been shown to interact with one another. 
Establishing this connection will necessarily be non-formal input depending on the choice of invertible quantum state type $\QS_{\Z^d}$.

\item We have not imposed any spatial homogeneity on quantum systems, as would be required for instance by translation symmetry. It is possible that pathological behavior may arise from highly inhomogeneous infinite systems, which could result in a classification that does not model physical expectations.

\item   The spectra $\bGP$ believed to classify invertible gapped phases (of various flavors) will not be connective! Indeed, $\pi_{0}\GP_\dd$ is meant to classify gapped invertible phases in space-time dimension $\dd\pone$ and these are known to be non-trivial. However, the loop-spectrum corresponding to $\QS_{\Z^\dd}$ will be 
a shift of a ``Whitehead cover'' of $\bGP$. 
In particular, in the range $0\leq i\leq \dd$, we do recover the phases since
\[ \pi_0 \GP_{i} \cong    \pi_{i} \QS_{\Z^\dd}(\cH^{ \otimes \infty}) \]
and, in fact, if one is careful, one can also recover any Postnikov $k$-invariant of $\bGP$ from $\QS_{\Z^{\dd+N}}( \otimes \cH^{\infty})$ for $N$ large enough with respect to the degree of the Postnikov $k$-invariant.
\end{enumerate}

\begin{ex}
For gapped invertible bosonic phases with no symmetry, the loop-spectrum $\bGP$ is believed to be
\[\Sigma^2 \boldsymbol{  I_\Z MSO}.\]
This spectrum is almost the \emph{opposite} of connective since its homotopy groups are trivial for $\dd>2$ and non-trivial for infinitely many values of $\dd\leq 2$.  

In spacetime $0+1$, the universal quantum state type is physically relevant, and we can consider $\sP_{\Z^0}((\C^2)^{\otimes \infty})$. It is not hard to show that the loop-spectrum we obtained from this $\sE_\infty$-space is 
\[\Sigma^2\boldsymbol{H\Z} \simeq  \Sigma^2 \boldsymbol{I_\Z MSO}_{\tau {\geq 0}} ,\]
where $\boldsymbol{H\Z}$ is (up to weak equivalence) a spectrum 
\[(K(\Z,0),K(\Z,1),K(\Z,2), \ldots)\]
with choices of weak equivalences $K(\Z,n) \xrightarrow{\simeq}\Omega K(\Z,n+1)$ for each $n$.
\end{ex}

\begin{ex}\label{ex:physicalexamples}
To capture physically meaningful states, we can start by considering the quantum state type $\GS_{\Z^\dd}$, which assigns to $\cH$ the subspace  $\GS_{\Z^\dd}(\cH) \subset \sP_{\Z^\dd}(\cH)$ consisting of ground states of gapped local Hamiltonians. Not all such states are invertible, so we must restrict this space further. Note that all factorized states will be in the same path component of $\GS_{\Z^\dd}(\cH)$, and we define
\[ \PGS_{\Z^\dd}(\cH)\subset \GS_{\Z^\dd}(\cH)\]
to be the subspace of those states $\omega\in  \GS_{\Z^\dd}(\cH)$ such that there exists $\tau \in  \GS_{\Z^\dd}(\cK)$ with $\omega\otimes \tau$ in the path component of the factorized states. 

Another concrete proposal for modeling a subset of invertible local gapped quantum systems via a quantum state type would be to take Kapustin--Sopenko--Yang \cite{KapustinSopenkoYang} subspaces 
\[  \QS_{\Z^\dd}^{LGA}(\cH)  \subset  \sP_{\Z^\dd}(\cH)\] 
of states that are LGA-equivalent to a factorized state.  Physically, these can be thought of as those ground states of gapped local Hamiltonians for which the Hamiltonian is joined to that for a factorized ground state through a path of gapped Hamiltonians.  Some invertible states do not have this property (\emph{e.g.} the $E_8$ state in $d=2$; see \cref{rem:notPA}), so $\QS_{\Z^\dd}^{LGA}$ would not model all invertible states.  
\end{ex}

Given a quantum state type $\QS_\Gamma$, the first step to understanding the cohomology theory it represents is to compute the values of the reduced theory on spheres,  called the \emph{coefficients} of the cohomology theory.
This is equivalent to the problem of computing the homotopy groups of $\QS_\Gamma(\cH^{\otimes \infty})$.  Clearly, the answer depends on your choice of $\QS_\Gamma$ (although not on the choice of $(\cH, \psi)$).  The last section of this paper is dedicated to studying this question for the universal quantum state type $\sP_\Gamma$. We do not compute all of the homotopy groups, but we do make a step in this direction by computing the fundamental group.
In fact, \cref{sec:fundamentalgroup} is dedicated to proving the following result.

\begin{thm*}[\ref{thm:fundaUHFfinal}]
The pure state space of a UHF algebra in the weak$^*$ topology is simply connected. In particular, for any lattice $\Gamma$ and any state type $\cH$, $\sP_\Gamma(\cH)$ and $\sP_\Gamma(\cH^{\otimes \infty})$ are simply connected spaces.
\end{thm*}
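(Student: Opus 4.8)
The plan is to prove the core assertion---that the pure state space $\bP(\fA)$ of a UHF algebra $\fA$, equipped with the weak$^*$ topology, is simply connected---and then to deduce the two ``in particular'' statements formally. For the deduction, note first that for an infinite lattice $\Gamma$ and $\dim\cH\geq 2$ the algebra $\fA_\Gamma(\cH)=\bigotimes_{v\in\Gamma}\fB(\cH)$ is UHF, so $\sP_\Gamma(\cH)=\bP(\fA_\Gamma(\cH))$ is covered by the core assertion (the degenerate cases $\dim\cH=1$ and $\Gamma$ finite give, respectively, a point and the simply connected space $\bP(\cH)\cong\mathbb{CP}^{\dim\cH-1}$). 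For $\sP_\Gamma(\cH^{\otimes\infty})=\colim_i\sP_\Gamma(\cH^{\otimes i})$, I would use that (after the functorial CW-replacement of the footnote) this is a sequential colimit along the closed embeddings $\sP_\Gamma(f_i)$, hence a colimit along cofibrations; any map of $S^1$ or $D^2$ into it factors through a finite stage by compactness, and each stage is simply connected by the core assertion, so the colimit is simply connected as well.

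For the core assertion I would exploit that the weak$^*$ topology on $\bP(\fA)$ is the initial topology for the restriction maps $r_k\colon\bP(\fA)\to S(\fA_k)$ to a fixed increasing sequence of full matrix subalgebras $\fA_k\cong M_{n_k}(\C)$ with $\overline{\bigcup_k\fA_k}=\fA$; concretely, a weak$^*$ neighborhood basis is obtained by controlling $\omega$ on a single $\fA_k$, since $\bigcup_k\fA_k$ is norm-dense. The two finite-dimensional inputs are that the pure states of $M_{n}(\C)$ form $\mathbb{CP}^{n-1}$, which is simply connected for $n\geq2$, and that any density matrix $\rho$ on $\C^{n}$ admits a canonical purification $\Psi_\rho=(\sqrt{\rho}\otimes 1)\sum_i e_i\otimes e_i\in\C^{n}\otimes\C^{n}$ depending continuously---indeed real-analytically---on $\rho$. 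Purity itself is invisible to any single $r_k$ (the restriction of a pure state is typically mixed), which is the source of the difficulty and the reason the argument cannot be reduced to the inverse system $S(\fA)=\varprojlim_k S(\fA_k)$ alone.

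With these inputs, first I would establish path-connectedness. Fixing the completely factorized base state $\omega_\bullet=\bigotimes_v\psi$, any pure state $\omega$ is carried to $\omega_\bullet$ by an asymptotically inner automorphism (the homogeneity of the pure state space of a separable simple $C^*$-algebra, in the spirit of Kishimoto--Ozawa--Sakai), say $\omega=\omega_\bullet\circ\alpha$ with $\alpha=\lim_t\Ad(u_t)$ for a norm-continuous path $u_t\in U(\fA)$, $u_0=1$; reparametrizing $t\in[0,\infty)$ to $s\in[0,1)$ and setting the endpoint to $\omega$ produces a weak$^*$-continuous path in $\bP(\fA)$ from $\omega_\bullet$ to $\omega$, using only that $u_t a u_t^{*}\to\alpha(a)$ in norm and that weak$^*$-continuity is tested one observable at a time. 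For $\pi_1=0$, given a based loop $\gamma\colon S^1\to\bP(\fA)$, I would produce for each large $k$ a ``finite-stage'' loop $\gamma_k$ defined by purifying the continuously varying restriction: $\gamma_k(\theta)=\omega_{\Psi_{\rho_k(\theta)}}\otimes\omega_\bullet^{(k')}$, where $\rho_k(\theta)=\gamma(\theta)|_{\fA_k}$ and the purification is taken inside a matrix subalgebra $\fA_{k'}\supset\fA_k$ with $n_{k'}\geq n_k^2$. By construction $\gamma_k$ takes values in a single finite stage $\bP(\fA_{k'})\cong\mathbb{CP}^{n_{k'}-1}$, is weak$^*$-continuous, agrees with $\gamma$ on $\fA_k$, and hence converges to $\gamma$ weak$^*$ uniformly in $\theta$ as $k\to\infty$; being a loop in the simply connected $\mathbb{CP}^{n_{k'}-1}$, it is null-homotopic there and therefore in $\bP(\fA)$.

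The remaining---and main---obstacle is to show that $\gamma$ is itself homotopic to $\gamma_k$ for $k$ large, i.e.\ to interpolate, continuously in $\theta$, between the pure state $\gamma(\theta)$ and the pure state $\gamma_k(\theta)$, which agree on the finite region $\fA_k$ but differ arbitrarily on the tail $\fA^{(k)}=\bigotimes_{v\notin\Lambda_k}\fB(\cH)$. Pointwise in $\theta$ such an interpolation exists by the same homogeneity used above, but the real work is to make the connecting automorphisms---equivalently, the connecting paths of unitaries in $\fA$---depend continuously on the loop parameter. I expect this parametrized homogeneity to be the crux. I would attack it by covering $S^1$ by finitely many arcs on which $\gamma$ is weak$^*$-nearly constant on $\fA_k$ (possible by compactness), constructing on each arc a continuously varying path of unitaries in $\fA$ that disentangles the tail, and then gluing these local homotopies, the overlaps being controlled because the two constructions there differ only by a path in the connected group $U(\fA)$. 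Carrying out this gluing rigorously, while staying inside the pure states and respecting the weak$^*$ topology, is where the bulk of the technical effort in the section will lie.
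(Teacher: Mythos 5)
Your reduction of the two ``in particular'' claims to the core assertion is fine, and your finite-stage loops $\gamma_k$ (purify $\gamma(\theta)|_{\fA_k}$ continuously via $\sqrt{\rho}$ inside a larger matrix block, tensor with the product state on the tail) are well defined, weak$^*$ continuous, and null-homotopic in $\bP(\fA_{k'})\cong\mathbb{CP}^{n_{k'}-1}$. But the proposal stops exactly at the step that carries the entire weight of the theorem: producing a homotopy, continuous in the loop parameter, from $\gamma$ to some $\gamma_k$. You name this as ``the crux'' and sketch a plan (cover $S^1$ by arcs where $\gamma$ is nearly constant on $\fA_k$, build local disentangling unitary paths, glue), but you do not carry it out, and the sketch as stated has real obstructions. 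The Kishimoto--Ozawa--Sakai homogeneity you invoke produces, for a \emph{pair} of pure states, an automorphism that is only a strong limit $\lim_t\Ad(u_t)$; it is not a continuous or even canonical function of the pair, so there is no a priori way to choose these data continuously in $\theta$. On overlaps, two local choices carrying $\gamma(\theta)$ to $\gamma_k(\theta)$ differ by something in the stabilizer of a pure state, which is not ``a path in the connected group $U(\fA)$'' in any usable sense, and interpolating between two asymptotic unitary paths while keeping every intermediate state well defined (the acting elements must stay outside the Gelfand ideals) is precisely the hard analytic content. Note also that uniform weak$^*$ convergence $\gamma_k\to\gamma$ buys you nothing here, since $\bP(\fA)$ has no known local contractibility that would convert closeness into homotopy.

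For comparison, the paper fills this gap by a different mechanism that avoids homogeneity entirely. It introduces $\fA$-homotopies $(t,s)\mapsto A_{t,s}\cdot\omega_t$ with $A_{t,s}\in\fA$, proves by induction on the rank of the projections $P^n_k$ that any loop in $\sS(M_n(\C))$ based at a pure state can be rectified to a constant loop by explicit unitaries and projections (the parametrized choice of unitaries along the loop is made by a Lebesgue-number/path-lifting argument, Lemmas~\ref{lem:state_neighborhood} and \ref{lem:unitary_avoid_0} --- this is the rigorous version of your ``cover $S^1$ by arcs and glue''), and then applies this one tensor factor at a time: once the restriction of the deformed loop to site $i$ is a constant pure state, Lemma~\ref{lem:restricted_state_factorized} forces that site to factor off, so no global disentangling of the tail is ever needed. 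The countably many site-by-site homotopies are compressed into $[0,1)$ and the extension to $s=1$ is weak$^*$ continuous because local expectation values stabilize. If you want to salvage your route, you would essentially have to reprove this parametrized rectification, at which point you would be reconstructing the paper's argument; as it stands, your proposal identifies the right difficulty but does not resolve it.
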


The idea of the proof is to perform a homotopy of an arbitrary loop in $\sP_\Gamma(\cH)$ such that as the homotopy progresses we disentangle more and more sites from the rest of the lattice. We are inspired by the work of \cite{kishimoto_ozawa_sakai_2003}, in which it is shown that for any two pure states $\psi, \omega \in \sP(\fA)$ of a unital separable simple $C^*$-algebra $\fA$, there exists an automorphism $\alpha \colon \fA \to \fA$ and a continuous family of unitaries $U \colon [0,\infty) \to \Unitary(\fA)$ such that $\omega = \psi \circ \alpha$ and 
\[\alpha = \lim_{t \ra \infty} \Ad(U_t),\] where the limit is taken in the strong topology of $\Aut(\fA)$. In particular, there is a weak$^*$ continuous path from $\psi$ to $\omega$ implemented by a norm-continuous family of unitaries at every point except the last endpoint. Our proof that $\sP_\Gamma(\cH)$ is simply connected takes on a similar flavor.

\begin{remark}\label{rem:notPA}
Although the universal quantum state type $\sP_\Gamma$ has simply connected spaces $\sP_\Gamma(\cH^{\otimes \infty})$, we do not expect this to be true in general for the quantum state types that arise in condensed matter theory. 

For $\PGS_{\Z^\dd}$ as above, motivated by results in the physics literature \cite{Thouless1983,kitaevSimonsCenter, kitaevIPAM, Xiong,KS2,Shiozaki2022,qpump}, we expect that the fundamental groups $\pi_1(\PGS_{\Z^\dd}(\cH^{\otimes \infty}))$ are not generally trivial.  For example, gapped invertible phases of bosonic systems in $2\pone$ spacetime dimensions with no symmetry are believed to be classified by $\pi_0(\PGS_{\Z^2}(\cH^{\otimes \infty}))\cong \Z$, with the generator of the $\Z$ referred to as the $E_8$ state \cite{KitaevE8}.  In $3+1$ dimensions, we thus have an ``$E_8$ pump,'' which is a non-trivial system over $S^1$ that can be obtained from the $E_8$ state via a suspension construction \cite{qpump}. Therefore, we expect that $ \pi_1(\PGS_{\Z^3}(\cH^{\otimes \infty})) \cong \Z$ for $\PGS_{\Z^3}$ the quantum state type for gapped invertible ground states in $3\pone$ dimensions. 
In other words, we should have
\[\pi_0(\PGS_{\Z^2}(\cH^{\otimes \infty})) \cong\pi_0\Omega(\PGS_{\Z^3}(\cH^{\otimes \infty}))  \cong \pi_1(\PGS_{\Z^3}(\cH^{\otimes \infty})) \cong \Z\] 
However,  the fundamental groups are generally expected to be abelian. In fact, this expectation is already present when we assume that invertible phases are classified by an infinite loop space since all homotopy groups of an infinite loop space are abelian groups. For invertible quantum state types in the sense of this paper, we deduce this formally in \cref{lem:pi1abelian}.
\end{remark}


\bigskip
\begin{convention}\label{conv:top}
  In this paper, we work in the category $\CGTop$ of \emph{compactly generated} topological spaces in the sense of \cite[Appendix A]{LewisCGTOP}. These are also called \emph{weak-Hausdorff $k$-spaces}. There is a $k$-ification functor from weak-Hausdorff spaces to $k$-spaces. Products and mapping space objects in $\CGTop$ are defined to be the $k$-ification of the usual constructions. In particular, if $\CGTop$ is given the symmetric monoidal structure given by the $k$-ification of the cartesian product and mapping spaces are understood as the $k$-ification of the usual mapping space with the compact-open topology, then $\CGTop$ is a closed symmetric category in the sense that
  \[ \Map(X\times Y, Z) \cong \Map(X, \Map(Y,Z))\]
  in $\CGTop$.
  
   We let $\bCGTop$ be the category of based compactly generated topological spaces and based maps, with the convention that the base point of a space is non-degenerate (i.e., the inclusion of the base point is a closed embedding which is a neighborhood deformation retract).  In this case, we get an adjunction 
   \[ \Map_*(X\wedge Y, Z) \cong \Map_*(X, \Map_*(Y,Z))\]
   in $\bCGTop$, where $\Map_*$ denotes the $k$-ification of the space of based map topologized as a subspace of the space of all maps with the compact-open topology.  
\end{convention}


\subsection*{Acknowledgements}
This paper was the result of many conversations spanning over the last few years. We would like to particularly thank the following people for useful such conversations: Mike Hill, Mike Hopkins, Tyler Lawson, Thomas Nikolaus, Victor Nistor, Ang\'elica Osorno, Dylan Wilson, Inna Zakharevich.  

We would like to particularly thank J. Peter May for listening to our various proposals for the framework and helping us determine a good set-up for our quantum state types. His help has been invaluable for this project.

A special thanks also goes to one of our collaborators on closely related physics projects, Xueda Wen, who has spent many hours explaining various concepts to the mathematician co-authors of this paper.

The material in this paper is based upon work supported by the National Science Foundation (NSF) under Award No.\ DMS 2055501. 
D.~D.~Spiegel also acknowledges support in part by the NSF under Award No.\ DMS 2303063.
Finally, we thank the anonymous referee for useful comments.


\section{Functorial properties of quasi-local algebras and their states}\label{sec:Csetup}

The goal of this section is to construct the universal quantum state type functor $\sP_\Gamma$ of \cref{thm:uniquantstatetype}
and to define our notion of \emph{quantum state types}. 

\subsection{Quasi-local algebras}
In this section, we construct a strong monoidal functor
\[\fA_\Gamma \colon \fHilbo \to \cpCAlgo \]
from the category $\fHilbo$ of finite dimensional non-zero complex inner product spaces with linear isometries as morphisms
to the category $\cpCAlgo$ of nuclear unital $C^*$-algebras and unital completely positive linear maps (which are automatically bounded). 
Both categories are made symmetric monoidal using the tensor product, as the notation indicates, and we explain in detail how that works
for the codomain. In case we consider these categories without their monoidal structures we denote them by 
$\fHilb$ and $\cpCAlg$, respectively. Let us emphasize that the morphisms in $\fHilb$ are always embeddings, a
simple observation which will be crucial for the following.

\bigskip
We begin by reviewing definitions about quasi-local algebras and their states. 

In $C^*$-algebraic language, a \emph{quantum lattice system} consists of a
countable set $\Gamma$ called the \emph{lattice} and, for each
\emph{lattice site} $v\in \Gamma$, a finite-dimensional Hilbert space
$\hilbone_v$ which encodes the quantum mechanical states of a particle
or collection of particles at the site $v$. The space $\hilbone_v$ is often referred to as the \emph{single-site Hilbert space} associated with site $v$. The lattice $\Gamma$ is
endowed with the discrete topology.
For simplicity, we assume that all sites are described by the same Hilbert space
$\hilbone$. 

We first define the \emph{quasi-local algebra} of a quantum lattice system. 
This is a $C^*$-algebra whose self-adjoint elements are regarded as the
observables of the system. Later we will consider states on the quasi-local algebra. 

Let us write $\Lambda \Subset \Gamma$ to indicate that $\Lambda$ is a nonempty finite subset of $\Gamma$. Then for any $\Lambda \Subset \Gamma$, we define
\[
\hilbone_\Lambda = \bigotimes_{v \in \Lambda} \hilbone \qqtext{and} \fA_\Lambda(\hilbone) = \bigotimes_{v \in \Lambda} \cB(\hilbone) \cong \cB(\hilbone_\Lambda).
\]
The (self-adjoint) elements of the  $C^*$-algebra $\fA_\Lambda(\hilbone)$ can be understood as the observables of the quantum lattice system localized at $\Lambda$. 

When $\Lambda_1 \subset \Lambda_2 \Subset \Gamma$, there exists a canonical map
\[ \iota_{\Lambda_2\Lambda_1}^\hilbone \colon \fA_{\Lambda_1}(\hilbone) \rightarrow \fA_{\Lambda_2}(\hilbone)\]
defined by tensoring identity operators $\mathds{1}$  on lattice sites in $\Lambda_2 \setminus \Lambda_1$. The $C^*$-algebras $\fA_\Lambda(\hilbone)$ together with these inclusions then form a directed system in the category of $C^*$-algebras and $*$-homomorphisms. We define the \emph{quasi-local algebra}
of the quantum lattice system as the colimit 
\[
\fA_\Gamma (\hilbone) = \colim_{\Lambda \Subset \Gamma} \fA_\Lambda(\hilbone)
\]
together with the canonical maps $\iota^\hilbone_\Lambda \colon \fA_\Lambda(\hilbone) \rightarrow \fA_\Gamma(\hilbone)$.
Note that the inclusions $\iota_{\Lambda_2\Lambda_1}^\hilbone$ and $\iota_\Lambda^\hilbone$ are all unital and injective. The quasi-local algebra has a dense $*$-subalgebra
\[
\fA_{\Gamma,\tn{loc}}(\hilbone) = \bigcup_\Lambda \iota_\Lambda^\hilbone\qty(\fA_\Lambda(\hilbone))
\]
called the local algebra.

\bigskip
Given finite-dimensional Hilbert spaces $\hilbone_1$ and $\hilbone_2$, one can form their (completed) tensor product
$\hilbone_1 \otimes \hilbone_2$. Physically, the tensor product corresponds to the composite of the
quantum mechanical systems described by $\hilbone_1$ and $\hilbone_2$ at a given lattice site.
The $C^*$-algebra $\fA_\Gamma(\hilbone_1 \otimes \hilbone_2)$ is then interpreted as the  quasi-local algebra of
the quantum lattice system 
obtained by \emph{internally stacking} the systems $\fA_\Lambda(\hilbone_1)$ and $\fA_\Lambda(\hilbone_2)$.
But one can also form the (completed) tensor product $\fA_\Gamma (\hilbone_1) \otimes \fA_\Gamma (\hilbone_2)$
and regard it as the  quasi-local algebra obtained by
\emph{externally stacking} the systems $\fA_\Lambda(\hilbone_1)$ and $\fA_\Lambda(\hilbone_2)$.
Note that by nuclearity of the $C^*$-algebras $\fA_\Gamma (\hilbone_i)$, $i=1,2$, the completed tensor product
$\fA_\Gamma (\hilbone_1) \otimes \fA_\Gamma (\hilbone_2)$ is uniquely determined. 
That the two quasi-local algebras obtained by  internally and externally stacking
are  isomorphic is the first result we state. Although
we could not find a reference in the literature, this is most certainly folklore.

\begin{proposition}\label{prop:canisoalg}
  For any two finite dimensional Hilbert spaces  $\hilbone_1$ and $\hilbone_2$, there is a
  canonical isomorphism
\[
  \eta^{\fA_\Gamma}_{\hilbone_1,\hilbone_2} \colon \fA_\Gamma (\hilbone_1) \otimes \fA_\Gamma (\hilbone_2) \xrightarrow{\cong} \fA_\Gamma (\hilbone_1 \otimes \hilbone_2) \ .
\]
\end{proposition}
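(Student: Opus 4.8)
The plan is to build $\eta^{\fA_\Gamma}_{\hilbone_1,\hilbone_2}$ as a colimit of canonical isomorphisms defined at each finite $\Lambda\Subset\Gamma$, and then to commute the tensor product past the inductive limit. First I would work at the finite level. Fix $\Lambda\Subset\Gamma$. Since $\Lambda$ is finite, the spaces $(\hilbone_i)_\Lambda = \bigotimes_{v\in\Lambda}\hilbone_i$ are finite dimensional, and the symmetry isomorphisms of the symmetric monoidal category of Hilbert spaces furnish a canonical ``regrouping'' unitary
\[
s_\Lambda\colon (\hilbone_1)_\Lambda \otimes (\hilbone_2)_\Lambda \xrightarrow{\cong} \bigotimes_{v\in\Lambda}(\hilbone_1\otimes\hilbone_2) = (\hilbone_1\otimes\hilbone_2)_\Lambda,
\]
which shuffles the tensor factors so that the two copies sitting over each site $v$ are brought together; by Mac Lane coherence this unitary is independent of any chosen ordering of $\Lambda$. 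Conjugation by $s_\Lambda$, together with the canonical identification $\cB(V)\otimes\cB(W)\cong\cB(V\otimes W)$ valid for finite-dimensional $V,W$, yields a $*$-isomorphism
\[
\eta^{\fA_\Lambda}_{\hilbone_1,\hilbone_2}\colon \fA_\Lambda(\hilbone_1)\otimes\fA_\Lambda(\hilbone_2) \xrightarrow{\cong}\fA_\Lambda(\hilbone_1\otimes\hilbone_2).
\]

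Next I would check that these finite-level isomorphisms are compatible with the structure maps of the directed systems. For $\Lambda_1\subset\Lambda_2$, the inclusions $\iota^{\hilbone_i}_{\Lambda_2\Lambda_1}$ act by tensoring with $\mathds{1}$ on the sites of $\Lambda_2\setminus\Lambda_1$; since inserting identity operators on the new sites commutes with the regrouping $s_\Lambda$, the square
\[
\begin{tikzcd}
\fA_{\Lambda_1}(\hilbone_1)\otimes\fA_{\Lambda_1}(\hilbone_2) \ar[r,"\eta^{\fA_{\Lambda_1}}"] \ar[d, "\iota\otimes\iota"'] & \fA_{\Lambda_1}(\hilbone_1\otimes\hilbone_2) \ar[d,"\iota"]\\
\fA_{\Lambda_2}(\hilbone_1)\otimes\fA_{\Lambda_2}(\hilbone_2) \ar[r,"\eta^{\fA_{\Lambda_2}}"'] & \fA_{\Lambda_2}(\hilbone_1\otimes\hilbone_2)
\end{tikzcd}
\]
commutes. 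Hence $\big(\eta^{\fA_\Lambda}_{\hilbone_1,\hilbone_2}\big)_{\Lambda\Subset\Gamma}$ is an isomorphism of directed systems indexed by the finite subsets of $\Gamma$.

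Finally I would pass to the colimit. Taking the inductive limit over $\Lambda$ of the right-hand terms gives $\fA_\Gamma(\hilbone_1\otimes\hilbone_2)$ by definition. For the left-hand side, the point is to identify $\colim_\Lambda \big(\fA_\Lambda(\hilbone_1)\otimes\fA_\Lambda(\hilbone_2)\big)$ with $\fA_\Gamma(\hilbone_1)\otimes\fA_\Gamma(\hilbone_2)$. This is where the real work lies: I would invoke the fact that the spatial $C^*$-tensor product commutes with inductive limits --- equivalently, that $\otimes_{\min}$ is exact, so that for fixed $B$ the functor $-\otimes B$ preserves the injective connecting maps $\iota$ (which the excerpt records are injective) and their colimit --- to obtain
\[
\fA_\Gamma(\hilbone_1)\otimes\fA_\Gamma(\hilbone_2)\cong\colim_{\Lambda_1,\Lambda_2}\big(\fA_{\Lambda_1}(\hilbone_1)\otimes\fA_{\Lambda_2}(\hilbone_2)\big),
\]
and then use that the diagonal $\{\Lambda_1=\Lambda_2\}$ is cofinal in the product directed set (given $\Lambda_1,\Lambda_2$, take $\Lambda=\Lambda_1\cup\Lambda_2$) to reduce this to $\colim_\Lambda\big(\fA_\Lambda(\hilbone_1)\otimes\fA_\Lambda(\hilbone_2)\big)$. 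Because every $\fA_\Gamma(\hilbone_i)$ is nuclear, there is a unique $C^*$-norm on the tensor product, so no ambiguity of completion arises and the spatial tensor product is forced. Composing these identifications with $\colim_\Lambda\eta^{\fA_\Lambda}_{\hilbone_1,\hilbone_2}$ produces the desired isomorphism $\eta^{\fA_\Gamma}_{\hilbone_1,\hilbone_2}$, and canonicity is inherited from the canonicity of each $s_\Lambda$. The main obstacle I anticipate is the careful justification of interchanging $\otimes$ with the colimit together with the cofinality step; everything preceding it is formal once the coherence of $s_\Lambda$ is recorded.
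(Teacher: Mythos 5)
Your proposal is correct and follows essentially the same route as the paper: construct the sitewise regrouping isomorphisms $\eta^{\Lambda}_{\hilbone_1,\hilbone_2}$ at each finite $\Lambda$, verify compatibility with the connecting inclusions, and pass to the colimit (you in fact spell out the interchange of $\otimes$ with the inductive limit and the cofinality of the diagonal more explicitly than the paper, which compresses this into an appeal to nuclearity). One small terminological caveat: the property you need for the interchange is that $\otimes_{\min}$ preserves injective $*$-homomorphisms (injectivity/continuity with respect to inductive limits), which is not the same as exactness of $\otimes_{\min}$; since all the algebras here are nuclear this distinction is harmless, but the two notions should not be conflated.
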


\begin{proof}
  Note first  that by the universal property of the tensor product there is a canonical embedding
  $\cB(\hilbone_1) \otimes \cB(\hilbone_2) \hookrightarrow \cB(\hilbone_1 \otimes \hilbone_2)$ which
   is an isomorphism since $\hilbone_1$ and $\hilbone_2$ are finite dimensional.
  For any $\Lambda \Subset \Gamma$, one therefore obtains a natural $*$-isomorphism as the composite
  \begin{align*}
    \eta^{\Lambda}_{\hilbone_1,\hilbone_2}  \colon & \fA_\Lambda(\hilbone_1) \otimes \fA_\Lambda(\hilbone_2)  =  
    \qty(\bigotimes_{v \in \Lambda} \cB(\hilbone_1)) \otimes \qty(\bigotimes_{v \in \Lambda} \cB(\hilbone_2)) \\
    & \xrightarrow{ \cong} \bigotimes_{v \in \Lambda} \qty(\cB(\hilbone_1) \otimes \cB(\hilbone_2))
    \xrightarrow{\cong} \bigotimes_{v \in \Lambda} \cB(\hilbone_1 \otimes \hilbone_2)  = \fA_\Lambda(\hilbone_1 \otimes \hilbone_2) 
  \end{align*}
  where the first isomorphism is the obvious one.
  For $\Lambda_1 \subset \Lambda_2 \Subset \Gamma$, these isomorphisms form a commutative square
\[
\begin{tikzcd}[column sep = large, row sep = large]
  \fA_{\Lambda_2}(\hilbone_1) \otimes \fA_{\Lambda_2}(\hilbone_2)\rar["\eta^{\Lambda_2}_{\hilbone_1,\hilbone_2}"]& \fA_{\Lambda_2}(\hilbone_1 \otimes \hilbone_2)  \\
  \fA_{\Lambda_1}(\hilbone_1) \otimes \fA_{\Lambda_1}(\hilbone_2)\arrow[u,hook]\rar["\eta^{\Lambda_1}_{\hilbone_1,\hilbone_2}"]&
\fA_{\Lambda_1}(\hilbone_1 \otimes \hilbone_2) \arrow[u,hook]
\end{tikzcd}
\]
By nuclearity of the $C^*$-algebras $\fA_\Gamma (\hilbone_i)$ and since the colimit of a countable strict
inductive system of nuclear $C^*$-algebras is nuclear \cite[Thm.~5]{Takesaki1964}, forming the colimits
on both sides gives the desired isomorphism $\eta^{\fA_\Gamma}_{\hilbone_1,\hilbone_2}$.
\end{proof}

Next we want to show that associating to every finite dimensional Hilbert
space $\hilbone$ the quasi-local algebra $\fA_\Gamma (\hilbone)$ over the lattice
$\Gamma$ can be turned into a functor. The domain of that functor is the
category $\fHilb$ of non-zero finite dimensional Hilbert spaces and linear isometries. 
We therefore need to define what we mean by $\fA_\Gamma (f)$ where
$f:\hilbone \to\hilbtwo$ is a linear isometry between Hilbert spaces.
To this end we put for $\Lambda \Subset \Gamma$ and $P \in \mathfrak{B}(\hilbone)$
\[
\hilbone_\Lambda = \bigotimes_{v \in \Lambda} \hilbone, \qquad \hilbtwo_\Lambda = \bigotimes_{v \in \Lambda} \hilbtwo, \qquad \isoone_\Lambda = \bigotimes_{v \in \Lambda} \isoone, \quad \tn{and} \quad P_\Lambda = \bigotimes_{v \in \Lambda} P \ .
\] 
We implicitly use the identification $\fA_\Lambda(\hilbone) \cong \cB(\hilbone_\Lambda)$. Furthermore, we let $\Ad (A)$ denote the adjoint action 
\[\Ad(A)(B) = ABA^*\] whenever $A$ is a bounded linear operator between Hilbert spaces or when $A$ is an element of a $C^*$-algebra. 

\begin{rem}
Note that $\isoone^*\isoone = \1$ and $\isoone\isoone^* = P$, where $P \in \cB(\hilbtwo)$ is the orthogonal projection onto the image of $\isoone$. 
Since $\isoone_\Lambda^*\isoone_\Lambda = \1$ and $\isoone_\Lambda \isoone_\Lambda^* = P_\Lambda$,  we have 
\[\Ad(\isoone_\Lambda^*) \circ \Ad(\isoone_\Lambda) = \id_{\fA_\Lambda(\hilbone)} \quad \quad \text{and}\quad \quad \Ad(\isoone_\Lambda) \circ \Ad(\isoone_\Lambda^*) = \Ad(P_\Lambda).\]
\end{rem}

The following results establish the functoriality of $\fA_\Gamma$ on $\fHilb$.
\begin{lemma}\label{lem:isometry_adjoint_colimit}
  There exists a unique bounded linear map
  \[ \fA_\Gamma (f) \colon \fA_\Gamma(\hilbtwo) \rightarrow \fA_\Gamma(\hilbone) \]
  such that, for all $\Lambda \Subset \Gamma$, the diagram 
\begin{equation}\label{eq:isometry_adjoint_colimit}
\begin{tikzcd}
\fA_\Gamma(\hilbtwo) \arrow[r,"\fA_\Gamma(f) "] & \fA_\Gamma(\hilbone)\\
\fA_{\Lambda}(\hilbtwo) \arrow[u,hook,"\iota_\Lambda^\hilbtwo"]\arrow[r,"\Ad(\isoone_{\Lambda}^*)"']& \fA_{\Lambda}(\hilbone)\arrow[u,hook,"\iota_\Lambda^\hilbone"']
\end{tikzcd}
\end{equation}
commutes.
Furthermore, $\fA_\Gamma(f)$ is completely positive and unital.
\end{lemma}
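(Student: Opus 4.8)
The plan is to build $\fA_\Gamma(f)$ sitewise and then pass to the colimit, the essential point being that the contravariant pullback maps $\Ad(f_\Lambda^*)$ are compatible with the structure inclusions \emph{precisely because $f$ is an isometry}.

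First I would record the sitewise maps. For each $\Lambda \Subset \Gamma$ the operator $f_\Lambda = \bigotimes_{v\in\Lambda} f \colon \hilbone_\Lambda \to \hilbtwo_\Lambda$ is again a linear isometry, since $f_\Lambda^* f_\Lambda = (f^*f)^{\otimes|\Lambda|} = \1$ as noted in the preceding remark. Under the identifications $\fA_\Lambda(\hilbone)\cong\cB(\hilbone_\Lambda)$ and $\fA_\Lambda(\hilbtwo)\cong\cB(\hilbtwo_\Lambda)$, the candidate map is
\[ \Ad(f_\Lambda^*)\colon \fA_\Lambda(\hilbtwo)\to\fA_\Lambda(\hilbone),\qquad B\mapsto f_\Lambda^* B f_\Lambda. \]
It is completely positive, being of the Kraus form $B\mapsto A^*BA$, and unital, since $\Ad(f_\Lambda^*)(\1)=f_\Lambda^* f_\Lambda=\1$; in particular it is contractive.

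The crucial step is the compatibility of these maps with the inclusions, which is the main content of the lemma. For $\Lambda_1\subset\Lambda_2\Subset\Gamma$ I would split off the complementary sites, writing $\hilbtwo_{\Lambda_2}\cong\hilbtwo_{\Lambda_1}\otimes\hilbtwo_{\Lambda_2\setminus\Lambda_1}$, $\hilbone_{\Lambda_2}\cong\hilbone_{\Lambda_1}\otimes\hilbone_{\Lambda_2\setminus\Lambda_1}$, and correspondingly $f_{\Lambda_2}=f_{\Lambda_1}\otimes f_{\Lambda_2\setminus\Lambda_1}$. Since $\iota^\hilbtwo_{\Lambda_2\Lambda_1}$ is tensoring with the identity on the complementary sites, for $B\in\fA_{\Lambda_1}(\hilbtwo)$ one computes
\[
\Ad(f_{\Lambda_2}^*)\bigl(\iota^\hilbtwo_{\Lambda_2\Lambda_1}(B)\bigr)
= (f_{\Lambda_1}^*\otimes f_{\Lambda_2\setminus\Lambda_1}^*)(B\otimes\1)(f_{\Lambda_1}\otimes f_{\Lambda_2\setminus\Lambda_1})
= (f_{\Lambda_1}^* B f_{\Lambda_1})\otimes\1,
\]
where the final equality uses $f_{\Lambda_2\setminus\Lambda_1}^* f_{\Lambda_2\setminus\Lambda_1}=\1$, i.e.\ exactly the isometry property of $f$. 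The right-hand side equals $\iota^\hilbone_{\Lambda_2\Lambda_1}\bigl(\Ad(f_{\Lambda_1}^*)(B)\bigr)$, so the family $\{\Ad(f_\Lambda^*)\}_\Lambda$ is a morphism of directed systems. This is the one place where the hypothesis that $f$ is an isometry, rather than an arbitrary linear map, is genuinely used, and I expect it to be the heart of the argument.

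Finally I would pass to the colimit, being careful that the $\Ad(f_\Lambda^*)$ are merely completely positive and not $*$-homomorphisms, so the universal property of the colimit in $C^*$-algebras does not apply directly. Instead, the compatible family defines, via the universal property of the algebraic directed colimit, a unique linear map on the dense local subalgebra $\fA_{\Gamma,\tn{loc}}(\hilbtwo)$ making \eqref{eq:isometry_adjoint_colimit} commute; since each $\Ad(f_\Lambda^*)$ is contractive, this map is contractive and extends uniquely to a bounded linear map $\fA_\Gamma(f)\colon\fA_\Gamma(\hilbtwo)\to\fA_\Gamma(\hilbone)$, with overall uniqueness following from continuity and the density of the local algebra. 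Unitality is immediate, since $\1\in\fA_\Lambda$ for every $\Lambda$. For complete positivity I would argue by approximation: each composite $\fA_\Gamma(f)\circ\iota_\Lambda^\hilbtwo=\iota_\Lambda^\hilbone\circ\Ad(f_\Lambda^*)$ is completely positive, and any positive element of $M_n(\fA_\Gamma(\hilbtwo))$ can be written as $y^*y$ and norm-approximated by $y_k^*y_k$ with $y_k\in M_n(\fA_{\Gamma,\tn{loc}}(\hilbtwo))$; applying $\fA_\Gamma(f)\otimes\id_n$ yields positive elements, and continuity of $\fA_\Gamma(f)$ forces the limit to be positive.
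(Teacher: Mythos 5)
Your proposal is correct and follows essentially the same route as the paper: define the sitewise maps $\Ad(f_\Lambda^*)$, check compatibility with the inclusions using $f^*f=\1$, extend by density and contractivity from the local algebra, and deduce complete positivity on the full algebra by norm-approximating $B$ in $A=B^*B$ by elements of the local matrix algebras. You in fact spell out the compatibility computation in more detail than the paper, which merely asserts it.
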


\begin{proof}
Observe that $\Ad(\isoone_\Lambda^*)$ is linear and positive, as well as unital by the fact that $\isoone_\Lambda^*\isoone_\Lambda = \1$. Again from the fact that $\isoone^*\isoone = \1$, it follows that the diagram
\[
\begin{tikzcd}
\fA_{\Lambda_2}(\hilbtwo) \arrow[r,"\Ad(\isoone_{\Lambda_2}^*)"] & \fA_{\Lambda_2}(\hilbone)\\
\fA_{\Lambda_1}(\hilbtwo) \arrow[u,hook,"\iota_{\Lambda_2\Lambda_1}^\hilbtwo"]\arrow[r,"\Ad(\isoone_{\Lambda_1}^*)"']& \fA_{\Lambda_1}(\hilbone)\arrow[u,hook,"\iota_{\Lambda_2\Lambda_1}^\hilbone"']
\end{tikzcd}
\]
commutes for all finite subsets $\Lambda_1 \subset \Lambda_2 \Subset \Gamma$.
Therefore, by construction of $\fA_{\Gamma,\tn{loc}}(\hilbtwo)$, there exists a unique function 
\[ \fA_\Gamma(f)\colon \fA_{\Gamma,\tn{loc}}(\hilbtwo) \to \fA_\Gamma(\hilbone)\] 
such that \eqref{eq:isometry_adjoint_colimit} commutes when $\fA_\Gamma(\hilbtwo)$ is replaced by $\fA_{\Gamma,\tn{loc}}(\hilbtwo)$. 

Linearity and boundedness of $ \fA_\Gamma(f)$ follow  from linearity and boundedness of the maps $\Ad(\isoone_{\Lambda}^*)$ and from the structure of $\fA_\tn{loc}(\hilbtwo)$. Therefore $ \fA_\Gamma(f)$ has a unique extension to a bounded linear map 
\[ \fA_\Gamma(f)\colon \fA_\Gamma (\hilbtwo) \to \fA_\Gamma (\hilbone).\] 
From the fact that any $\Ad(\isoone_\Lambda^*)$ is unital we see that $ \fA_\Gamma(f)$ is unital. From the fact that each $\Ad(\isoone_\Lambda^*)$ is completeley positive by \cite[Lem.~1.2.2]{StormerBook} we see that $ \fA_\Gamma(f)$ is completely positive on $\fA_\tn{loc}(\hilbtwo)$.
One concludes that $ \fA_\Gamma(f)$ is completely positive on $\fA_\Gamma (\hilbtwo)$ as follows.

If $A$ is a positive element of the matrix algebra
$M_n(\fA_\Gamma (\hilbtwo)) \cong \fA_\Gamma (\hilbtwo) \otimes M_n (\C)$, then there exist
a $B \in M_n(\fA_\Gamma (\hilbtwo))$ such that $A =B^*B$ and a sequence $(B_k)_{k\in \N}$ in 
$M_n(\fA_\tn{loc} (\hilbtwo))$ converging to $B$. By positivity of $ \fA_\Gamma(f) \otimes \id_{M_n(\C)}$
we know that each $(\fA_\Gamma(f) \otimes \id_{M_n(\C)})(B_k^*B_k)$ is positive.
On the other hand, the sequence $\big((\fA_\Gamma(f) \otimes \id_{M_n(\C)})(B_k^*B_k)\big)_{k\in\N}$ converges
to $\big(\fA_\Gamma(f) \otimes \id_{M_n(\C)}(A)\big)_{k\in\N}$. Since the limit of
a sequence of positive elements is again positive, $(\fA_\Gamma(f) \otimes \id_{M_n(\C)})(A)$
is a positive element of $M_n(\fA_\Gamma (\hilbone))$ and the claim is proved.
\end{proof}

\begin{prop} 
There is a contravariant functor
\[
  \fA_\Gamma\colon \fHilb \to C^*\text{-}\category{Alg}_{1,+}.
\]
On objects and morphisms it is defined by 
 \begin{align*}
  \hilbone &\mapsto \fA_\Gamma (\hilbone)  \\
  \big( f\colon \hilbone \to \hilbtwo \big)& \mapsto
  \big( \fA_\Gamma (f) \colon \fA_\Gamma (\hilbtwo) \rightarrow \fA_\Gamma (\hilbone) \big).
  \end{align*}
Furthermore, there is a canonical isomorphism 
$\fA_\Gamma (\bbC) \cong \bbC$. 
\end{prop}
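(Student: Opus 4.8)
The plan is to build directly on \cref{lem:isometry_adjoint_colimit}, which already supplies the morphism assignment $f \mapsto \fA_\Gamma(f)$ and—crucially—its characterizing \emph{uniqueness} property: for each isometry $f$, the map $\fA_\Gamma(f)$ is the unique bounded linear map making diagram \eqref{eq:isometry_adjoint_colimit} commute for every $\Lambda \Subset \Gamma$. The lemma also records that each $\fA_\Gamma(f)$ is unital and completely positive, so every morphism indeed lands in $C^*\text{-}\category{Alg}_{1,+}$. What remains is to check the two functoriality axioms and to identify $\fA_\Gamma(\bbC)$. For the identity axiom, I would observe that when $f = \id_\cH$ we have $f_\Lambda = \id_{\cH_\Lambda}$, hence $\Ad(f_\Lambda^*) = \id_{\fA_\Lambda(\cH)}$ for all $\Lambda$; thus $\id_{\fA_\Gamma(\cH)}$ makes \eqref{eq:isometry_adjoint_colimit} commute, and uniqueness forces $\fA_\Gamma(\id_\cH) = \id_{\fA_\Gamma(\cH)}$.

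The heart of the argument is contravariant composition. Given isometries $\cH \xrightarrow{f} \cK \xrightarrow{g} \cL$, I would prove $\fA_\Gamma(g \circ f) = \fA_\Gamma(f) \circ \fA_\Gamma(g)$ by reducing everything to a single finite-level identity. Since the $\Lambda$-fold tensor power is functorial, $(g \circ f)_\Lambda = g_\Lambda \circ f_\Lambda$, so $(g \circ f)_\Lambda^* = f_\Lambda^* g_\Lambda^*$, and therefore
\[ \Ad\big((g\circ f)_\Lambda^*\big) = \Ad(f_\Lambda^*) \circ \Ad(g_\Lambda^*) \]
as maps $\fA_\Lambda(\cL) \to \fA_\Lambda(\cH)$. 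Feeding this into the defining diagrams for $\fA_\Gamma(f)$ and $\fA_\Gamma(g)$, a short chase gives
\[ \big(\fA_\Gamma(f)\circ\fA_\Gamma(g)\big)\circ \iota^\cL_\Lambda = \iota^\cH_\Lambda \circ \Ad\big((g\circ f)_\Lambda^*\big) \]
for every $\Lambda$. As $\fA_\Gamma(f)\circ\fA_\Gamma(g)$ is bounded and linear (being a composite of such maps), the uniqueness clause of \cref{lem:isometry_adjoint_colimit} identifies it with $\fA_\Gamma(g\circ f)$, establishing contravariant functoriality.

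Finally, for $\cH = \bbC$ we have $\cB(\bbC) \cong \bbC$, so each $\fA_\Lambda(\bbC) = \bigotimes_{v\in\Lambda}\bbC \cong \bbC$, and every connecting map $\iota_{\Lambda_2\Lambda_1}^\bbC$ is the identity (tensoring with $\mathds{1} = 1$). The colimit of this constant system is canonically $\bbC$, which yields $\fA_\Gamma(\bbC) \cong \bbC$. I do not anticipate any serious obstacle: once \cref{lem:isometry_adjoint_colimit} is available, the entire proof reduces to the algebraic identity $(g\circ f)_\Lambda^* = f_\Lambda^* g_\Lambda^*$ together with the uniqueness clause. The only point requiring minor care is confirming that the relevant composites are bounded and linear so that uniqueness is applicable—which is immediate—while unitality and complete positivity come for free from the lemma.
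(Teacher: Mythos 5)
Your proposal is correct and follows essentially the same route as the paper: both arguments rest entirely on the uniqueness clause of \cref{lem:isometry_adjoint_colimit}, applied to the finite-level identities $\Ad(\id_{\cH_\Lambda}^*)=\id$ and $\Ad\big((g\circ f)_\Lambda^*\big)=\Ad(f_\Lambda^*)\circ\Ad(g_\Lambda^*)$; you merely spell out the diagram chase that the paper declares ``clear.'' Your identification of $\fA_\Gamma(\bbC)\cong\bbC$ via the constant colimit system is a slightly more explicit version of the paper's one-line remark, but amounts to the same thing.
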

\begin{proof}
If $\hilbone=\hilbtwo$ and $\isoone=\id_\hilbone$, then it is clear from uniqueness in Lemma \ref{lem:isometry_adjoint_colimit} that $ \fA_\Gamma (f) = \id_{\fA(\hilbone)}$. 
If we have two isometries $f$ and $g$ such that $g\circ f$ is defined, 
 then it is again clear from uniqueness in Lemma \ref{lem:isometry_adjoint_colimit} that 
\[ \fA_\Gamma (\isotwo \circ \isoone) =\fA_\Gamma ( \isoone) \circ \fA_\Gamma ( \isotwo ).\]
The isomorphism $\fA_\Gamma (\bbC) \cong \bbC$ is the unique unital $*$-homomorphism between these
$C^*$-algebras.
\end{proof}

\begin{rem}
  If we equip the category $\cpCAlg$ of nuclear unital $C^*$-algebras and completely positive maps with the $C^*$-tensor product it in fact
  becomes a symmetric monoidal category. 
  
  To verify this
  recall first that both the minimal and the maximal $C^*$-tensor products are associative
  by \cite[11.1 \& 11.3]{KadisonRingroseII} and the universal property of the maximal $C^*$-tensor product.
  This implies in particular  that the $C^*$-tensor product of two nuclear $C^*$-algebras is again nuclear;
  see also \cite[IV.3.1.1.]{BlackadarOperatorAlgebras}.
  Moreover, the algebraic tensor product of two completely positive linear maps between nuclear $C^*$-algebras extends
  uniquely to a completely positive map between the $C^*$-tensor product algebras by e.g.\
  \cite[Prop.~3.5.3]{BrownOzawa}.  
  Therefore, the $C^*$-tensor product $\otimes$  is a bifunctor on the category
  $\cpCAlg$.
  By associativity of the minimal (or maximal) $C^*$-tensor product one has a natural associator map
  $\alpha_{\fA,\fB,\fC} : \fA\otimes (\fB \otimes \fC) \cong (\fA\otimes \fB) \otimes \fC$
  for nuclear $C^*$-algebras $\fA,\fB,\fC$.
  The monoidal unit is obviously given by $\C$. The coherence conditions for the algebraic tensor product naturally
  extend to the $C^*$-tensor product for nuclear $C^*$-algebras, and  $\cpCAlg$ endowed with $\otimes$
  becomes a monoidal category as claimed. 
  
  It is in fact symmetric since the swap map
  on the algebraic tensor product of $\fA$ and $\fB$ is continuous in the maximal (or minimal)
  $C^*$-tensor product norm, hence extends to an isomorphism $\fA \otimes \fB \cong \fB \otimes \fA$. 
  
  We denote the symmetric monoidal category of nuclear unital $C^*$-algebras and completely positive maps by $\cpCAlgo$. 
\end{rem}

Recall that $\fHilbo$ denotes the category $\fHilb$ of non-zero finite dimensional
  Hilbert spaces and linear isometric embeddings, equipped with the $\C$-linear tensor product as its symmetric monoidal structure.
  The unit is again $\C$.
\begin{proposition}\label{prop:strongmonoidalfunctoriality}
  The functor
  \[
   \fA_\Gamma\colon \fHilbo \to C^*\text{-}\category{Alg}_{1,+}^\otimes
 \]
 together with the natural isomorphism
 \[
  \eta^{\fA_\Gamma} \colon \fA_\Gamma (-) \otimes \fA_\Gamma (-) \xrightarrow{ \ \cong \ } \fA_\Gamma (- \otimes -) \ .
\]
of \cref{prop:canisoalg} is a strong monoidal functor.
\end{proposition}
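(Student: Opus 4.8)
The plan is to verify the three defining coherence conditions for a strong monoidal functor: naturality of the structure isomorphism $\eta^{\fA_\Gamma}$ in both variables, compatibility with the associators, and compatibility with the unitors, the unit isomorphism being the identification $\fA_\Gamma(\C) \cong \C$ already recorded above. Since $\eta^{\fA_\Gamma}$ is an isomorphism by \cref{prop:canisoalg}, establishing these conditions is exactly what is required for the \emph{strong} (as opposed to merely lax) statement. The one structural feature to keep in mind throughout is contravariance: in the naturality square the vertical arrows induced by isometries point in the direction opposite to the isometries themselves, so $\fA_\Gamma(-)\otimes\fA_\Gamma(-)$ and $\fA_\Gamma(-\otimes-)$ are contravariant bifunctors $\fHilbo\times\fHilbo\to\cpCAlgo$.

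The overarching strategy is to reduce every coherence diagram to the finite level $\Lambda \Subset \Gamma$ and check it there. This is legitimate because each algebra $\fA_\Gamma(\hilbone)$ is the colimit of the $\fA_\Lambda(\hilbone) \cong \cB(\hilbone_\Lambda)$ along the unital inclusions $\iota^\hilbone_\Lambda$, with dense image $\fA_{\Gamma,\tn{loc}}(\hilbone)$, and all the maps in sight ($\fA_\Gamma(f)$, $\eta^{\fA_\Gamma}$, the associators, the unitors) are bounded and by construction compatible with the $\iota_\Lambda$. Hence two such maps agree on all of $\fA_\Gamma(\hilbone)$ as soon as they agree after precomposition with every $\iota^\hilbone_\Lambda$. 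So for each diagram I would precompose with the colimit inclusions, obtain a diagram of genuine algebraic finite tensor products of matrix algebras, check commutativity there, and then invoke density and continuity to transport commutativity back to the original diagram.

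For naturality, at the finite level one has $\fA_\Lambda(f) = \Ad(\isoone_\Lambda^*)$ with $\isoone_\Lambda = \bigotimes_{v\in\Lambda} \isoone$, while $\eta^\Lambda$ is the canonical reshuffling isomorphism built from $\cB(\hilbone)\otimes\cB(\hilbtwo) \cong \cB(\hilbone\otimes\hilbtwo)$. The square then commutes because of two elementary identities: first, for isometries $f_1, f_2$ one has $(f_1 \otimes f_2)^* = f_1^* \otimes f_2^*$, so that $(f_1 \otimes f_2)_\Lambda^*$ and $(f_1)_\Lambda^* \otimes (f_2)_\Lambda^*$ correspond under the canonical identification; and second, the adjoint action is multiplicative across tensor factors, $\Ad(A\otimes B)=\Ad(A)\otimes\Ad(B)$. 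Combining these gives $\Ad((f_1 \otimes f_2)_\Lambda^*) \circ \eta^\Lambda = \eta^\Lambda \circ \big(\Ad((f_1)_\Lambda^*) \otimes \Ad((f_2)_\Lambda^*)\big)$ on the nose, which is precisely the finite-level naturality square; passing to the colimit yields naturality of $\eta^{\fA_\Gamma}$.

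For the associativity coherence, at the finite level the associator of $\cpCAlgo$ restricts to the standard associator of finite tensor products of the matrix algebras $\cB(\hilbone_{i,\Lambda})$, and $\eta^\Lambda$ is assembled from the coherent family of canonical isomorphisms $\cB(\hilbone)\otimes\cB(\hilbtwo)\cong\cB(\hilbone\otimes\hilbtwo)$; the functor's associativity diagram then reduces to the coherence (Mac Lane) of the canonical associativity isomorphisms for the underlying Hilbert-space tensor products, so it commutes at the finite level, and one again passes to the colimit. The unit triangles are immediate at the finite level since $\fA_\Lambda(\C) = \bigotimes_{v\in\Lambda}\cB(\C) \cong \C$ identifies the relevant unitors with the canonical ones. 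I expect the only real care to be bookkeeping: tracking the contravariant direction in the naturality square and confirming that each structure map is genuinely the colimit of its finite-level counterpart, so that the reduction to finite $\Lambda$ is valid. The genuinely substantive input is the pair of identities $(f_1\otimes f_2)^* = f_1^*\otimes f_2^*$ and $\Ad(A\otimes B)=\Ad(A)\otimes\Ad(B)$, which make the naturality transparent; everything else is coherence of canonical isomorphisms propagated through a filtered colimit.
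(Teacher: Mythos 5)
Your proposal is correct and follows essentially the same route as the paper: reduce the naturality square to the finite level $\Lambda \Subset \Gamma$, where it commutes because $(f_1\otimes f_2)_\Lambda^* = f_{1,\Lambda}^*\otimes f_{2,\Lambda}^*$ and $\Ad$ is multiplicative across tensor factors, and then pass to the colimit using the universal property (the paper packages this as a commuting cube over $\Lambda_1\subset\Lambda_2$). You in fact go slightly further than the paper's written proof, which only verifies naturality explicitly and leaves the associativity and unit coherences tacit; your sketch of those via Mac Lane coherence at the finite level is a harmless and correct addition.
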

\begin{proof}
  We have to verify that the isomorphisms $\eta^{\fA_\Gamma}_{\hilbone_1,\hilbone_2}$ 
  given in \cref{prop:canisoalg} for finite dimensional Hilbert
  spaces $\hilbone_i$, $i=1,2$, are  actually the components of a natural transformation.
  In other words we have to prove that for any pair of linear isometries 
  \[ f_i\colon \hilbone_i\to\hilbtwo_i,\quad i=1,2 \ , \]
  between non-zero finite dimensional Hilbert spaces the diagram
  \begin{equation}
    \label{dia:nattransstackingquantumlatticefunc}
  \begin{tikzcd}[row sep=large, column sep=large]
    \fA_{\Gamma}(\hilbone_1) \otimes \fA_{\Gamma}(\hilbone_2)\ar[r,"\eta^{\fA_\Gamma}_{\hilbone_1,\hilbone_2}"]&
    \fA_{\Gamma}(\hilbone_1 \otimes \hilbone_2)  \\
    \fA_{\Gamma}(\hilbtwo_1) \otimes \fA_{\Gamma}(\hilbtwo_2)\arrow[u,"f_1^*\otimes f_2^*"]\ar[r,"\eta^{\fA_\Gamma}_{\hilbtwo_1,\hilbtwo_2}"]&
\fA_{\Gamma}(\hilbtwo_1 \otimes \hilbtwo_2) \arrow[u,"(f_1\otimes f_2)^*",swap] 
  \end{tikzcd}
  \end{equation}
 commutes, where for brevity of notation we have written $f^*$ instead of $\fA_\Gamma(f)$ for any linear
 isometry $f:\hilbone \to \hilbtwo$. To this end we further abbreviate and write $f^*$ for the maps
 $\Ad (f_\Lambda^*) :\fA_\Lambda (\hilbtwo) \to \fA_\Lambda (\hilbone) $ with $\Lambda \Subset \Gamma$ 
 which were introduced in  \cref{lem:isometry_adjoint_colimit}. 
 By construction of the completely positive maps $f_i^* = \Ad (f_{i,\Lambda}^*)$, $i=1,2,$ and
 $(f_1\otimes f_2)^* = \Ad \big( (f_1\otimes f_2)_\Lambda^*\big)$,
 the cube below commutes for  $\Lambda_1 \subset \Lambda_2 \Subset \Gamma$.
\[
  \begin{tikzcd}[row sep={60,between origins}, column sep={80,between origins}]
    & \fA_{\Lambda_2}(\hilbone_1) \otimes \fA_{\Lambda_2}(\hilbone_2) \ar[rr,"\eta^{\Lambda_2}_{\hilbone_1,\hilbone_2}"] \ar[from=dd,hook] &
    & \fA_{\Lambda_2}(\hilbone_1 \otimes \hilbone_2)   \\
    \fA_{\Lambda_2}(\hilbtwo_1) \otimes \fA_{\Lambda_2}(\hilbtwo_2)\ar[ru,"f_1^*\otimes f_2^*"]\ar[crossing over,rr,"\qquad\eta^{\Lambda_2}_{\hilbtwo_1,\hilbtwo_2}"]& & 
    \fA_{\Lambda_2}(\hilbtwo_1 \otimes \hilbtwo_2)\ar[ru,"(f_1\otimes f_2)^*",swap]  \\
    & \fA_{\Lambda_1}(\hilbone_1) \otimes \fA_{\Lambda_1}(\hilbone_2)\ar[rr,"\qquad\eta^{\Lambda_1}_{\hilbone_1,\hilbone_2}"] &
    & \fA_{\Lambda_1}(\hilbone_1 \otimes \hilbone_2)\ar[uu,hook]  \\
    \fA_{\Lambda_1}(\hilbtwo_1) \otimes \fA_{\Lambda_1}(\hilbtwo_2)\ar[uu,hook]\ar[rr,"\eta^{\Lambda_1}_{\hilbtwo_1,\hilbtwo_2}"]\ar[ru,"f_1^*\otimes f_2^*"]&
    & \fA_{\Lambda_1}(\hilbtwo_1 \otimes \hilbtwo_2)\arrow[crossing over,uu,hook] \ar[ru,"(f_1\otimes f_2)^*",swap]
\end{tikzcd}
\]
Passing to the four vertical colimits results in the four corners of Diagram
(\ref{dia:nattransstackingquantumlatticefunc}). By the universal property of the colimit,
commutativity of all cubes with $\Lambda_1 \subset \Lambda_2 \Subset \Gamma$ and by constrution
of the $\eta^{\fA_\Gamma}_{\hilbone_1,\hilbone_2}$ in \cref{lem:isometry_adjoint_colimit},
the corners may be filled with the maps between them making (\ref{dia:nattransstackingquantumlatticefunc})
commute.  Naturality of $\eta^{\fA_\Gamma}$ follows. 
\end{proof}

\subsection{States and pure states}

Mathematically, the states of a quantum mechanical system described by a $C^*$-algebra $\fA$
consist of the positive normalized linear functionals $\omega \colon\fA\to\C$.  
Given a quantum lattice system over the lattice $\Gamma$ and with site Hilbert space $\hilbone$,
we consider the sets 
\[\sS_\Gamma (\hilbone) := \sS(\fA_\Gamma(\hilbone)) \quad \quad \text{and} \quad \quad
\sP_\Gamma (\hilbone) := \sP(\fA_\Gamma(\hilbone))\] 
of states and pure states on
$\fA_\Gamma(\hilbone)$, respectively, equipped with the
weak$^*$ topology. 

There is an abundance of literature
on state spaces of $C^*$-algebras, see e.g.\ \cite{KadSR,Dixmier,AlfsenHancheOlsenShultz,BratteliRobinsonOAQSMI,EilCCCA,AlfsenShultzStateSpacesOpAlgs,AlfsenShultzGeometryStateSpacesOpAlgs,kishimoto_ozawa_sakai_2003,ContinuousKadison} and further references therein. 
We discussed in \cite[Thms.\ 1.20 \& 1.21]{ContinuousKadison} properties of $\sP(\fA_\Gamma(\hilbone))$.
For instance, it is a path connected and locally path connected topological space \cite{EilCCCA}. It is also a Polish space, meaning that it is separable and completely metrizable \cite{PedersenCAlgAutomorphisms}.
Since a first countable Hausdorff space is compactly generated, 
the pure state space $\sP(\fA_\Gamma(\hilbone))$ is a compactly generated topological space as in \cref{conv:top}. It is thus an object in our category $\CGTop$. 

\begin{rem}[{\cite{TakedaInductiveLimits} and \cite[Thm.~5.2]{ContinuousKadison}}]
  We will use the following useful fact about states on directed colimits of $C^*$-algebras.
  
 Consider a directed system of unital $C^*$-algebras $(\fA_i)_{i \in I}$ and injective unital $*$-homomorphisms $\iota_{ji}\colon\fA_i \rightarrow \fA_j$ defined whenever $i \leq j$. Let $\fA$ and the $*$-homomorphisms $\iota_i \colon \fA_i \rightarrow \fA$ form a directed colimit of this system. If $\omega \in \sS(\fA)$, then $\omega_i \defeq \omega \circ \iota_i \in \sS(\fA_i)$ for all $i \in I$, and $\omega_j \circ \iota_{ji} = \omega_i$ whenever $i \leq j$. Conversely, given states $\omega_i \in \sS(\fA_i)$ for all $i$ satisfying $\omega_j \circ \iota_{ji} = \omega_i$ whenever $i \leq j$, there exists a unique state $\omega \in \sS(\fA)$ such that $\omega_i = \omega \circ \iota_i$ for all $i$. Furthermore, if all of the states $\omega_i$ are pure, then so is $\omega$. 
  
 We caution, however, that purity of $\omega$ does not imply purity of the states $\omega_i$.
 By the Banach-Alaoglu theorem,  the state space
 of a unital $C^*$-algebra is compact in the weak$^*$ topology.
 Hence  the state space  $\sS(\fA)$  of the colimit $\fA$ of a directed system of unital $C^*$-algebras $(\fA_i)_{i \in I}$ and injective unital $*$-homomorphisms $\iota_{ji}\colon\fA_i \rightarrow \fA_j$ can be identified with the  limit of the inverse system of
  topological spaces   $(\sS(\fA_i))_{i \in I}$ and continuous maps $\iota_{ji}^*\colon\sS(\fA_j) \rightarrow \sS(\fA_i)$.
  Note that the maps $\iota_{ji}^*$ are surjective by e.g.\ \cite[Prop.\ 3.1.6]{PedersenCAlgAutomorphisms}.
\end{rem}

As a consequence of this remark we obtain the next auxiliary result.
\begin{proposition}
Let $\hilbone$ be a finite-dimensional Hilbert space $\hilbone$ and   
\begin{equation}
\label{eq:deffuncssobjects}
  \sS_\Lambda (\hilbone) :=\sS (\fA_\Lambda (\hilbone))
\end{equation}
for all $\Lambda \Subset \Gamma$. The state space $\sS_\Gamma (\hilbone) $  of the quasi-local algebra $\fA_\Gamma(\hilbone)$ is the directed colimit of the
  state spaces $(\sS_\Lambda (\hilbone))_{\Lambda \Subset \Gamma}$ and the natural continuous surjections
  \[ \big(\iota_{\Lambda_2\Lambda_1}^\hilbone\big)^*\colon\sS_{\Lambda_2} (\hilbone) \rightarrow \sS_{\Lambda_1}(\hilbone) \quad \quad \Lambda_1 \subset \Lambda_2 \Subset \Gamma \  . \]
\end{proposition}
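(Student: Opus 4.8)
The plan is to reduce the statement to the remark above on states of directed colimits of unital $C^*$-algebras. First I would check its hypotheses for the system at hand. The nonempty finite subsets $\Lambda \Subset \Gamma$, ordered by inclusion, form a directed poset, since any two of them lie in their (still finite) union; the algebras $\fA_\Lambda(\hilbone)$ with the inclusions $\iota^\hilbone_{\Lambda_2\Lambda_1}$ form a directed system of unital $C^*$-algebras whose structure maps are injective unital $*$-homomorphisms; and by construction $\fA_\Gamma(\hilbone) = \colim_{\Lambda \Subset \Gamma} \fA_\Lambda(\hilbone)$ with colimit maps $\iota^\hilbone_\Lambda$. Hence the remark applies directly: the assignment $\omega \mapsto (\omega \circ \iota^\hilbone_\Lambda)_{\Lambda}$ is a bijection from $\sS_\Gamma(\hilbone)$ onto the set of compatible families $(\omega_\Lambda)_\Lambda$ satisfying $(\iota^\hilbone_{\Lambda_2\Lambda_1})^*(\omega_{\Lambda_2}) = \omega_{\Lambda_1}$, i.e. onto the limit $\varprojlim_\Lambda \sS_\Lambda(\hilbone)$ of the inverse system $\big(\sS_\Lambda(\hilbone), (\iota^\hilbone_{\Lambda_2\Lambda_1})^*\big)$. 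Surjectivity of each restriction map $(\iota^\hilbone_{\Lambda_2\Lambda_1})^*$ is the standard extension property of states, \cite[Prop.\ 3.1.6]{PedersenCAlgAutomorphisms}.

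What remains is to upgrade this set-theoretic bijection to a homeomorphism of weak$^*$ spaces, and for this I would invoke compactness rather than a direct estimate. Let $\Phi \colon \sS_\Gamma(\hilbone) \to \prod_{\Lambda \Subset \Gamma} \sS_\Lambda(\hilbone)$ be $\omega \mapsto (\omega \circ \iota^\hilbone_\Lambda)_\Lambda$. Each component is precomposition with the fixed bounded map $\iota^\hilbone_\Lambda$, hence weak$^*$-to-weak$^*$ continuous, so $\Phi$ is continuous and, by the remark, a continuous bijection of $\sS_\Gamma(\hilbone)$ onto the closed subspace $\varprojlim_\Lambda \sS_\Lambda(\hilbone)$ (compatibility is a closed condition, being an equality of continuous maps into Hausdorff spaces). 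Now $\fA_\Gamma(\hilbone)$ is unital, so $\sS_\Gamma(\hilbone)$ is weak$^*$-compact by Banach--Alaoglu; each $\fA_\Lambda(\hilbone)$ is finite-dimensional, so each $\sS_\Lambda(\hilbone)$ is compact Hausdorff, whence the Tychonoff product and its closed subspace $\varprojlim_\Lambda \sS_\Lambda(\hilbone)$ are compact Hausdorff as well. A continuous bijection from a compact space onto a Hausdorff space is a homeomorphism, so $\Phi$ identifies $\sS_\Gamma(\hilbone)$ with the inverse limit. Since all spaces in sight are compact Hausdorff they already belong to $\CGTop$ and their limits there agree with those in $\Top$, so no $k$-ification subtlety intervenes.

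The one genuine point, which I would flag as the heart of the matter, is the coincidence of topologies: the weak$^*$ topology on $\sS_\Gamma(\hilbone)$ is pointwise convergence against all of $\fA_\Gamma(\hilbone)$, whereas the inverse-limit topology only tests against the dense local subalgebra $\fA_{\Gamma,\tn{loc}}(\hilbone)$. The compactness argument dispatches this cleanly; alternatively, one checks it by hand, noting that if a net of states converges against every element of $\fA_{\Gamma,\tn{loc}}(\hilbone)$ then density of the local algebra together with the uniform bound $\|\omega\| = \omega(\1) = 1$ lets an $\varepsilon/3$ estimate promote this to convergence against every element of $\fA_\Gamma(\hilbone)$.
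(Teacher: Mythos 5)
Your proposal is correct and follows essentially the same route as the paper, which derives the proposition directly from the preceding remark on states of directed colimits (citing Takeda and \cite[Thm.~5.2]{ContinuousKadison} for the bijection with compatible families, Banach--Alaoglu for the topological identification, and \cite[Prop.\ 3.1.6]{PedersenCAlgAutomorphisms} for surjectivity of the restriction maps). Your write-up merely makes explicit the continuity of $\Phi$, the closedness of the compatibility locus, and the compact-to-Hausdorff argument that the paper leaves implicit.
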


We want to extend $\sS_\Gamma$ to a functor from the category $\fHilb$ to compact topological spaces.
So suppose $\hilbone$ and $\hilbtwo$ are non-zero finite-dimensional Hilbert spaces and $\isoone\colon \hilbone \to \hilbtwo$ is a linear isometry  (which need not be surjective).
We now define
\begin{equation}
\label{eq:deffuncssmorphisms}
  \sS_\Gamma  (f) \colon\sS_\Gamma (\hilbone) \rightarrow \sS_\Gamma (\hilbtwo ), \quad  \sS_\Gamma  (f) (\psi) = \psi  \circ  \fA_\Gamma (f). 
\end{equation}
Since $ \fA_\Gamma (f)$ is positive and unital by Lemma \ref{lem:isometry_adjoint_colimit}, this is well-defined. From the characteristic mapping property of the weak$^*$ topology we see that $ \sS_\Gamma  (f)$ is continuous when both the domain and codomain are given the weak$^*$ topology.

\begin{prop}\label{lem:alpha*_all_states}
The map $ \sS_\Gamma  (f)$ is injective and its image is given by
\begin{equation}\label{eq:alpha*_image_all_states}
 \Im \sS_\Gamma  (f)  = \qty{\omega \in \sS(\fA_\Gamma(\hilbtwo)) : \omega(\iota_\Lambda^\hilbtwo(P_\Lambda)) = 1 \tn{ for all $\Lambda \Subset \Gamma$}} \ ,
\end{equation}
where as before $P_\Lambda =\isoone_\Lambda \isoone_\Lambda^*$ is projection onto the image of $\isoone_\Lambda$.
Thus, $ \sS_\Gamma  (f)$ is a closed embedding.
\end{prop}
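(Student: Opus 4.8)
The plan is to establish injectivity, the image formula, and the closed-embedding property in turn, leaning throughout on the two identities $\Ad(f_\Lambda^*)\circ\Ad(f_\Lambda)=\id$ and $\Ad(f_\Lambda)\circ\Ad(f_\Lambda^*)=\Ad(P_\Lambda)$ recorded in the remark before \cref{lem:isometry_adjoint_colimit}, together with the colimit description of state spaces. For injectivity, I would first observe that $\Im\fA_\Gamma(f)$ contains the dense local algebra $\fA_{\Gamma,\tn{loc}}(\hilbone)$: for $A\in\fA_\Lambda(\hilbone)$ the commuting square \eqref{eq:isometry_adjoint_colimit} gives
\[\fA_\Gamma(f)\bigl(\iota_\Lambda^{\hilbtwo}(\Ad(f_\Lambda)(A))\bigr)=\iota_\Lambda^{\hilbone}\bigl(\Ad(f_\Lambda^*)\Ad(f_\Lambda)(A)\bigr)=\iota_\Lambda^{\hilbone}(A).\]
Hence two states with the same image under $\sS_\Gamma(f)$ agree on $\fA_{\Gamma,\tn{loc}}(\hilbone)$, and therefore everywhere by continuity and density, which is exactly injectivity of $\sS_\Gamma(f)$.

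The inclusion $\Im\sS_\Gamma(f)\subseteq\{\omega:\omega(\iota_\Lambda^{\hilbtwo}(P_\Lambda))=1\ \forall\Lambda\}$ is a one-line computation: for $\omega=\psi\circ\fA_\Gamma(f)$, the square \eqref{eq:isometry_adjoint_colimit} gives $\omega(\iota_\Lambda^{\hilbtwo}(P_\Lambda))=\psi(\iota_\Lambda^{\hilbone}(\Ad(f_\Lambda^*)(P_\Lambda)))$, and since $\Ad(f_\Lambda^*)(P_\Lambda)=f_\Lambda^* f_\Lambda f_\Lambda^* f_\Lambda=\1$, this value is $\psi(\1)=1$.

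The substantive direction is the reverse inclusion. Given $\omega$ with $\omega(\iota_\Lambda^{\hilbtwo}(P_\Lambda))=1$ for all $\Lambda$, write $\omega_\Lambda:=\omega\circ\iota_\Lambda^{\hilbtwo}$ and define candidate local states $\psi_\Lambda:=\omega_\Lambda\circ\Ad(f_\Lambda)$ on $\fA_\Lambda(\hilbone)$; each is positive and satisfies $\psi_\Lambda(\1)=\omega_\Lambda(P_\Lambda)=1$, hence is a state. The engine here is the standard support-projection fact: if a state $\omega$ satisfies $\omega(Q)=1$ for a projection $Q$, then $\omega(YQ)=\omega(QY)=\omega(Y)$ for all $Y$ (Cauchy–Schwarz applied to $\1-Q$). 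Combining the computation $\Ad(f_{\Lambda_2})(\iota_{\Lambda_2\Lambda_1}^{\hilbone}(A))=\iota_{\Lambda_2\Lambda_1}^{\hilbtwo}(\Ad(f_{\Lambda_1})(A))\cdot Q$ with the projection $Q=\iota^{\hilbtwo}_{\Lambda_2\setminus\Lambda_1}(P_{\Lambda_2\setminus\Lambda_1})$, for which $\omega_{\Lambda_2}(Q)=1$ by hypothesis, yields the compatibility $\psi_{\Lambda_2}\circ\iota_{\Lambda_2\Lambda_1}^{\hilbone}=\psi_{\Lambda_1}$. The colimit description of states then produces a unique $\psi\in\sS_\Gamma(\hilbone)$ with $\psi\circ\iota_\Lambda^{\hilbone}=\psi_\Lambda$. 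Finally I would verify $\psi\circ\fA_\Gamma(f)=\omega$ on the dense local algebra: for $B\in\fA_\Lambda(\hilbtwo)$,
\[\psi\bigl(\fA_\Gamma(f)(\iota_\Lambda^{\hilbtwo}(B))\bigr)=\omega_\Lambda\bigl(\Ad(f_\Lambda)\Ad(f_\Lambda^*)(B)\bigr)=\omega_\Lambda(P_\Lambda B P_\Lambda)=\omega_\Lambda(B),\]
the last equality again by the support-projection fact with $Q=P_\Lambda$; continuity and density then give $\omega=\sS_\Gamma(f)(\psi)$.

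For the closed-embedding claim I would argue softly. The map $\sS_\Gamma(f)$ is weak$^*$ continuous (established just before the statement) and injective, its domain $\sS_\Gamma(\hilbone)$ is weak$^*$ compact by Banach–Alaoglu, and its codomain $\sS_\Gamma(\hilbtwo)$ is Hausdorff; a continuous injection from a compact space into a Hausdorff space is a homeomorphism onto its image, and the image is automatically closed, being compact inside a Hausdorff space, or directly the intersection over $\Lambda$ of the weak$^*$ closed sets $\{\omega:\omega(\iota_\Lambda^{\hilbtwo}(P_\Lambda))=1\}$. I expect the main obstacle to be the reverse inclusion, and within it the bookkeeping with support projections: choosing the correct finite regions $\Lambda_2\setminus\Lambda_1$ and $\Lambda$ so that the hypothesis $\omega(\iota_\Lambda^{\hilbtwo}(P_\Lambda))=1$ is precisely what drives both the compatibility check and the final identity $\psi\circ\fA_\Gamma(f)=\omega$.
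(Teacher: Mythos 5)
Your proposal is correct and follows essentially the same route as the paper: injectivity via recovering $\psi\circ\iota_\Lambda^{\hilbone}$ by composing with $\Ad(f_\Lambda)$, the forward inclusion via $\Ad(f_\Lambda^*)(P_\Lambda)=\1$, the reverse inclusion by assembling the compatible local states $\omega_\Lambda\circ\Ad(f_\Lambda)$ through the colimit description, and the closed-embedding claim from compactness plus the manifest weak$^*$ closedness of the right-hand side of \eqref{eq:alpha*_image_all_states}. The only difference is cosmetic: you verify compatibility using the support projection on $\Lambda_2\setminus\Lambda_1$, whereas the paper inserts $\Ad(f_{\Lambda_1}^*)\circ\Ad(f_{\Lambda_1})$ and uses $\omega_{\Lambda_2}\circ\Ad(P_{\Lambda_2})=\omega_{\Lambda_2}$ — the same underlying fact.
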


\begin{proof}
  If $\psi_1, \psi_2 \in \sS_\Gamma (\hilbone)$ and $\psi_1 \circ  \fA_\Gamma (f) = \psi_2 \circ  \fA_\Gamma (f)$,
  then restricting to any $\Lambda \Subset \Gamma$ yields 
\[
\psi_1 \circ \iota_\Lambda^\hilbone \circ \Ad(\isoone_\Lambda^*) = \psi_1 \circ  \fA_\Gamma (f) \circ \iota_\Lambda^\hilbtwo = \psi_2 \circ  \fA_\Gamma (f) \circ \iota_\Lambda^\hilbtwo = \psi_2 \circ \iota_\Lambda^\hilbone \circ \Ad(\isoone_\Lambda^*) \ .
\]
Composing with $\Ad(\isoone_\Lambda)$ on both sides yields $\psi_1 \circ \iota_\Lambda^\hilbone = \psi_2 \circ \iota_\Lambda^\hilbone$. Since this holds for all $\Lambda$ we conclude that $\psi_1 = \psi_2$, so $ \sS_\Gamma  (f)$ is injective. Since $\sS_\Gamma (\hilbone)$ and $\sS_\Gamma (\hilbtwo)$ are compact Hausdorff, we know that $ \sS_\Gamma  (f)$ is an embedding. If we can show \eqref{eq:alpha*_image_all_states}, then it will follow that $ \sS_\Gamma  (f)$ is a closed embedding, since the right-hand side of \eqref{eq:alpha*_image_all_states} is manifestly weak$^*$ closed.

We now prove \eqref{eq:alpha*_image_all_states}. Since $\isoone_\Lambda \isoone_\Lambda^* = P_\Lambda$ and $\isoone_\Lambda^*\isoone_\Lambda = \1$, we have
\[
 \fA_\Gamma  (f)(\iota_\Lambda^\hilbtwo(P_\Lambda)) = \iota_\Lambda^\hilbone(\Ad(\isoone_\Lambda^*)(\isoone_\Lambda \isoone_\Lambda^*)) = \1.
\]
Thus, for any $\psi \in \sS_\Gamma (\hilbone)$, we have $\sS_\Gamma  (f)(\psi)(\iota_\Lambda^\hilbtwo(P_\Lambda)) = \psi(\1) = 1$.

Now suppose $\omega \in \sS_\Gamma (\hilbtwo)$ and $\omega(\iota_\Lambda^\hilbtwo(P_\Lambda)) = 1$ for all $\Lambda \Subset \Gamma$. Let $\omega_\Lambda = \omega \circ \iota^\hilbtwo_\Lambda$. The fact that $\omega_\Lambda(P_\Lambda) = 1$ implies both that $\omega_\Lambda = \omega_\Lambda \circ \Ad(P_\Lambda)$ and that $\omega_\Lambda \circ \Ad(\isoone_\Lambda) \in \sS(\fA_\Lambda(\hilbone))$.  Given $\Lambda_1 \subset \Lambda_2 \Subset \Gamma$, we observe that
\begin{align*}
\omega_{\Lambda_2} \circ \Ad(\isoone_{\Lambda_2}) \circ \iota_{\Lambda_2 \Lambda_1}^\hilbone &= \omega_{\Lambda_2} \circ \Ad(\isoone_{\Lambda_2}) \circ \iota_{\Lambda_2 \Lambda_1}^\hilbone \circ \Ad(\isoone_{\Lambda_1}^*) \circ \Ad(\isoone_{\Lambda_1})\\
&= \omega_{\Lambda_2} \circ \Ad(\isoone_{\Lambda_2}) \circ \Ad(\isoone_{\Lambda_2}^*) \circ \iota_{\Lambda_2\Lambda_1}^\hilbtwo \circ \Ad(\isoone_{\Lambda_1})\\
&= \omega_{\Lambda_2} \circ \iota_{\Lambda_2\Lambda_1}^\hilbtwo \circ \Ad(\isoone_{\Lambda_1}) \\
&= \omega_{\Lambda_1} \circ \Ad(\isoone_{\Lambda_1}).
\end{align*}
Thus, there exists a unique $\psi \in \sS_\Gamma (\hilbone)$ such that $\psi \circ \iota^\hilbone_\Lambda = \omega_\Lambda \circ \Ad(\isoone_\Lambda)$ for all $\Lambda \Subset \Gamma$. 

We claim that $ \sS_\Gamma  (f)(\psi) = \omega$. Indeed, observe that for any $\Lambda \Subset \Gamma$, 
\[
\psi \circ  \fA_\Gamma (f) \circ \iota_\Lambda^\hilbtwo = \omega_\Lambda \circ \Ad(\isoone_\Lambda) \circ \Ad(\isoone_\Lambda^*) =\omega_\Lambda = \omega \circ \iota^\hilbtwo_\Lambda
\]
This proves the claim, and completes the proof of \eqref{eq:alpha*_image_all_states}.
\end{proof}

Since $\fA_\Gamma$ is a contravariant functor, the construction of $\sS_\Gamma$ by Equations
\eqref{eq:deffuncssobjects} and \eqref{eq:deffuncssmorphisms} entails that $\sS_\Gamma$ can be understood
as a functor from the category  $\fHilb$ to the category $\CGTop$ of compactly generated topological spaces.
But there is more structure. The categories $\fHilb$ and $\CGTop$ are both \emph{topologically enriched}.
Such an enrichment is a choice of compactly generated topology on the morphism sets which makes the composition
pairings continuous. 
The topological enrichment on $\fHilb$ is obtained by noting that the set of linear isometric maps, which we
denote by $\I(\cH,\cK)$, is naturally a subspace of the space of bounded linear maps.
The set $\Map (X,Y)$ of continuous maps between two compactly generated spaces $X$ and $Y$ is endowed with
the $k$-ification of the compact-open topology as explained in \cref{conv:top}.
Recall that for every topological space $Z$ the $k$-\emph{ification} $kZ$ can be understood as the coarsest
topology on the set $Z$ such that $kZ$ is compactly generated and such that the identitiy map
$kZ\to Z$ is continuous.  Note that one has $kZ = Z$ for metrizable $Z$.
This implies that if $X$ is compact and $Y$ is metrizable, then the compact-open topology on  $\Map (X,Y)$  is metrizable
hence already compactly generated. So $k$-ification does not change the topology in this case. 
We will silently make use of these observations. 
By endowing morphism sets $\Map (X,Y)$ with the $k$-ification of the compact-open topology,
$\CGTop$ becomes a topologically enriched category as well. 

Together, these observations imply the following.
\begin{thm}\label{thm:statespacefuntopoenriched}
The assignment which maps a finite dimensional Hilbert space  $\hilbone$ to $\sS_\Gamma(\cH)$ and a linear isometry $f$ to  $\sS_\Gamma(f)$ is a topologically enriched covariant functor
\[ \sS_\Gamma \colon \fHilb \to \CGTop \]
with the property that, for each $\hilbone$, $\sS_\Gamma(\hilbone)$ is a compact space and for each $f$, $\sS_\Gamma(f)$ is a closed embedding.

Furthermore, there is a natural transformation
\[\eta^{\sS_\Gamma}
  \colon   \sS_\Gamma(-) \times \sS_\Gamma(-) \to \sS_\Gamma(- \otimes -)
\]
which sends $(\omega_1,\omega_2)\in  \sS_\Gamma(\hilbone_1) \times \sS_\Gamma(\hilbone_2)$ to
$(\omega_1\otimes \omega_2) \circ (\eta_{\hilbone_1, \hilbone_2}^{\fA_\Gamma})^{-1}$.
\end{thm}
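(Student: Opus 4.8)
The plan is to read off most of the statement from what has already been proved, isolating the topological enrichment and the continuity of $\eta^{\sS_\Gamma}$ as the only genuinely new work. First, functoriality is formal. Since $\fA_\Gamma$ is contravariant and $\sS_\Gamma(f) = (-)\circ \fA_\Gamma(f)$, we have $\sS_\Gamma(\id)=\id$ from $\fA_\Gamma(\id)=\id$, and for composable isometries $f,g$ the relation $\fA_\Gamma(g\circ f)=\fA_\Gamma(f)\circ\fA_\Gamma(g)$ gives $\sS_\Gamma(g\circ f)(\psi)=\psi\circ\fA_\Gamma(f)\circ\fA_\Gamma(g)=\sS_\Gamma(g)\big(\sS_\Gamma(f)(\psi)\big)$, so $\sS_\Gamma$ is a covariant functor. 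Compactness of each $\sS_\Gamma(\hilbone)$ is Banach--Alaoglu applied to the unital $C^*$-algebra $\fA_\Gamma(\hilbone)$, and the claim that each $\sS_\Gamma(f)$ is a closed embedding is exactly \cref{lem:alpha*_all_states}.

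The heart of the matter is the enrichment: one must show $f\mapsto \sS_\Gamma(f)$ is continuous as a map $\I(\hilbone,\hilbtwo)\to\Map(\sS_\Gamma(\hilbone),\sS_\Gamma(\hilbtwo))$. By the exponential adjunction in $\CGTop$ (\cref{conv:top}), this is equivalent to joint continuity of the evaluation $(f,\psi)\mapsto \psi\circ\fA_\Gamma(f)$ into the weak$^*$ topology on $\sS_\Gamma(\hilbtwo)$. Because the target carries the weak$^*$ topology, it suffices to prove that $(f,\psi)\mapsto \big(\psi\circ\fA_\Gamma(f)\big)(a)$ is jointly continuous for every $a\in\fA_\Gamma(\hilbtwo)$. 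For a local element $a=\iota_\Lambda^\hilbtwo(b)$, \cref{lem:isometry_adjoint_colimit} gives $\big(\psi\circ\fA_\Gamma(f)\big)(a)=\psi\big(\iota_\Lambda^\hilbone(f_\Lambda^* b f_\Lambda)\big)$ with $f_\Lambda=\bigotimes_{v\in\Lambda}f$; here $f\mapsto f_\Lambda^* b f_\Lambda$ is norm-continuous, while $\psi$ ranges over the uniformly bounded state space, so a triangle-inequality estimate of the form $|\psi(c')-\psi_0(c)|\le \|c'-c\|+|\psi(c)-\psi_0(c)|$ yields joint continuity. To pass to an arbitrary $a$, approximate it in norm by local elements; since every $\fA_\Gamma(f)$ is unital completely positive (hence contractive) and every state is a contraction, this approximation is uniform over all pairs $(f,\psi)$, so joint continuity survives the uniform limit. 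As $\I(\hilbone,\hilbtwo)$ and $\sS_\Gamma(\hilbone)$ are both metrizable, no subtlety about $k$-ification of the product intervenes.

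For the monoidal transformation, I would set $\eta^{\sS_\Gamma}_{\hilbone_1,\hilbone_2}(\omega_1,\omega_2)=(\omega_1\otimes\omega_2)\circ(\eta^{\fA_\Gamma}_{\hilbone_1,\hilbone_2})^{-1}$, where $\omega_1\otimes\omega_2$ is the product state on $\fA_\Gamma(\hilbone_1)\otimes\fA_\Gamma(\hilbone_2)$ (well-defined by nuclearity). This lands in $\sS_\Gamma(\hilbone_1\otimes\hilbone_2)$ since $\eta^{\fA_\Gamma}_{\hilbone_1,\hilbone_2}$ is a $*$-isomorphism (\cref{prop:canisoalg}). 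Continuity follows from weak$^*$ continuity of $(\omega_1,\omega_2)\mapsto\omega_1\otimes\omega_2$ — checked on elementary tensors via $(\omega_1\otimes\omega_2)(a\otimes b)=\omega_1(a)\omega_2(b)$ and extended by the same density-plus-contractivity argument — followed by pullback along the fixed isomorphism $(\eta^{\fA_\Gamma})^{-1}$. Naturality is obtained by dualizing the commuting square \eqref{dia:nattransstackingquantumlatticefunc} of \cref{prop:strongmonoidalfunctoriality}: using $(\omega_1\circ f_1^*)\otimes(\omega_2\circ f_2^*)=(\omega_1\otimes\omega_2)\circ(f_1^*\otimes f_2^*)$ together with the identity $(f_1^*\otimes f_2^*)\circ(\eta^{\fA_\Gamma}_{\hilbtwo_1,\hilbtwo_2})^{-1}=(\eta^{\fA_\Gamma}_{\hilbone_1,\hilbone_2})^{-1}\circ(f_1\otimes f_2)^*$ read off from that square, both composites around the naturality square carry $(\omega_1,\omega_2)$ to $(\omega_1\otimes\omega_2)\circ(\eta^{\fA_\Gamma}_{\hilbone_1,\hilbone_2})^{-1}\circ(f_1\otimes f_2)^*$.

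The only non-formal obstacle is the joint-continuity argument for the enrichment, and its analogue for $\eta^{\sS_\Gamma}$: one must leverage the dense local subalgebra, where continuity is transparent, and then transfer it to the full quasi-local algebra. The key that makes this transfer work is precisely the uniform boundedness of states and the contractivity of the completely positive maps $\fA_\Gamma(f)$, which convert pointwise norm-approximation into approximation uniform in $(f,\psi)$.
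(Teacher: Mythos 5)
Your proposal is correct, and the parts concerning functoriality, compactness, the closed-embedding property, and the definition, continuity, and naturality of $\eta^{\sS_\Gamma}$ match the paper's proof essentially step for step (the paper is terser about the density-plus-contractivity extension for $\eta^{\sS_\Gamma}$, but that is exactly the intended argument). Where you genuinely diverge is in the proof of the enrichment. The paper works directly with the mapping space: it observes that, since $\sS_\Gamma(\hilbone)$ is compact and $\sS_\Gamma(\hilbtwo)$ is metrizable, the compact-open topology on $\Map(\sS_\Gamma(\hilbone),\sS_\Gamma(\hilbtwo))$ is the topology of uniform convergence; it then builds a specific weak$^*$-compatible metric $d$ on $\sS_\Gamma(\hilbtwo)$ from a countable dense family $(A_k)$ chosen in the local algebra with weights $\lambda_k=|\Lambda_k|$, and proves the quantitative Lipschitz bound $d_{\sS_\Gamma(\hilbone)}\big(\sS_\Gamma(f),\sS_\Gamma(g)\big)\le 2\,\|f-g\|$. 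You instead pass through the exponential law to joint continuity of $(f,\psi)\mapsto\psi\circ\fA_\Gamma(f)$, check it evaluation-by-evaluation on local elements, and transfer to the whole quasi-local algebra by a norm-density argument that is uniform in $(f,\psi)$ thanks to contractivity of states and of the unital completely positive maps $\fA_\Gamma(f)$. Both routes are valid; your reduction to joint continuity is legitimate because $\I(\hilbone,\hilbtwo)\times\sS_\Gamma(\hilbone)$ is a product of metrizable (hence compactly generated) spaces, so the $k$-ified product is the honest product, and the weak$^*$ topology is initial with respect to the evaluations. What the paper's approach buys is an explicit Lipschitz estimate, which it reuses verbatim in the proof of \cref{thm:uniquantstatetype} to get uniform convergence of $\sP_\Gamma(f_n)$ on the (non-compact, non-metrizable-a-priori) pure state space; your softer argument would need to be rerun or supplemented at that later point, since restricting a jointly continuous map to $\sP_\Gamma(\hilbone)$ is fine for the enrichment of $\sP_\Gamma$ but does not hand you the uniform-convergence statement the paper quotes there.
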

\begin{proof}
  The only parts that requires further justification are the topological enrichments and the construction and
  naturality of $\eta^{\sS_\Gamma}$. Let us start with the construction of $\eta^{\sS_\Gamma}$. Using
  the natural isomorphism $\eta^{\fA_\Gamma}_{\hilbone_1,\hilbone_2}$ from Propositions \ref{prop:canisoalg} and
  \ref{prop:strongmonoidalfunctoriality} we put
  \[
    \eta_{\hilbone_1,\hilbone_2}^{\sS_\Gamma}(\omega_1, \omega_2) =
    (\omega_1 \otimes \omega_2) \circ (\eta_{\hilbone_1, \hilbone_2}^{\fA_\Gamma})^{-1}\ .
  \]
  For $A_i \in \fA_\Gamma(\hilbone_i)$, $i=1,2$, we then obtain 
  \[
    \eta^{\sS_\Gamma}_{\hilbone_1,\hilbone_2} (\omega_1,\omega_2)
    \left( \eta_{\hilbone_1, \hilbone_2}^{\fA_\Gamma} (A_1\otimes A_2) \right) = 
    (\omega_1 \otimes \omega_2)(A_1\otimes A_2) = \omega_1 (A_1) \cdot \omega_2(A_2) \ .
  \]
  This equation entails continuity of $\eta^{\sS_\Gamma}_{\hilbone_1,\hilbone_2} (\omega_1,\omega_2)$
  because the span of elements of the form $\eta^{\fA_\Gamma}_{\hilbone_1, \hilbone_2}(A_1 \otimes A_2)$ is dense in
  $\fA_\Gamma(\hilbone_1 \otimes \hilbone_2)$ and the equation shows
  that evaluating on these elements is jointly weak$^*$ continuous in $\omega_1$ and $\omega_2$.
  Naturality of $\eta^{\sS_\Gamma}$ follows from the naturality of $\eta^{\fA_\Gamma}$. 

It remains to prove that the functor $\sS_\Gamma$ is enriched. We need to show that the map
\begin{equation}\label{eq:statespacefuncmorphisms}
  \I(\hilbone,\hilbtwo) \to \Map (\sS_\Gamma(\hilbone) , \sS_\Gamma(\hilbtwo)), \quad f \mapsto \sS_\Gamma(f)
\end{equation}
is continuous. Since $\sS_\Gamma(\hilbone)$ is compact and
$\sS_\Gamma(\hilbtwo)$ is metrizable by separability of $\fA_\Gamma (\hilbtwo)$, the compact-open topology on
the set $\Map (\sS_\Gamma(\hilbone) , \sS_\Gamma(\hilbtwo))$ is metrizable and
coincides with the topology of uniform convergence. The space $\I(\hilbone,\hilbtwo)$ is a closed
subspace of the Banach space $\cB(\hilbone,\hilbtwo)$ of all linear maps from $\hilbone$ to $\hilbtwo$,
hence $\I(\hilbone,\hilbtwo)$ is a complete metric space. The strategy now is to construct an
appropriate metric $d$ on $\sS_\Gamma(\hilbtwo)$ inducing the weak$^*$ topology and such that
the map \eqref{eq:statespacefuncmorphisms} becomes Lipschitz continuous when
$\Map (\sS_\Gamma(\hilbone) , \sS_\Gamma(\hilbtwo))$ is endowed with the supremum metric
\[
  d_{\sS_\Gamma(\hilbone)}\colon \Map (\sS_\Gamma(\hilbone) , \sS_\Gamma(\hilbtwo)) \to \R, \:
  (F,G) \mapsto \sup_{\omega \in \sS_\Gamma(\hilbone)} d\big( F(\omega) ,  G (\omega)  \big) \ .
\]
By \cite[V.5.1]{DunfordSchwartz1}, a metric $d$ recovering the weak$^*$ topology can be obtained by choosing a
countable dense family $(A_k)_{k\in \N}$ in the unit ball of $\fA_\Gamma (\hilbtwo)$, a sequence of positive numbers
$\lambda_k \geq 1$ and defining
\[
  d (\psi,\omega ) = \sum_{k\in \N} \frac{1}{2^{k+1}\lambda_k}  \left| (\omega-\psi)(A_k) \right|  \quad
  \text{for all } \psi,\omega \in \sS_\Gamma(\hilbtwo) \ .
\]
We can assume without loss of generality that the elements $A_k$ are all contained in the local algebra $\fA_{\Gamma,\textup{loc}}(\hilbtwo)$.
For each $k$ then choose a finite $\Lambda_k \Subset \Gamma$ such that $A_k \in \fA_{\Lambda_k} (\hilbtwo)$
and put $\lambda_k := |\Lambda_k|$. 
Now compute for $f,g \in  \I(\hilbone,\hilbtwo) $ and  $\omega\in \sS_\Gamma(\hilbone)$:
\begin{equation}\label{eq:uniformconvergence}
\begin{split}
  d & (\sS_\Gamma (f) (\omega) , \sS_\Gamma (g) (\omega) )  =
  \sum_{k\in \N} \frac{1}{2^{k+1}|\Lambda_k|}  \left|  \big(\sS_\Gamma (f) - \sS_\Gamma (g)\big) (\omega) (A_k)   \right| \\
  & = \sum_{k\in \N} \frac{1}{2^{k+1}|\Lambda_k|}  \left|  \omega \circ \big(\fA_\Gamma (f) - \fA_\Gamma (g)\big) (A_k)   \right|   \\
  & = \sum_{k\in \N} \frac{1}{2^{k+1}|\Lambda_k|}  \left|  \omega \circ \big(\Ad(f_{\Lambda_k}^*) - \Ad ( g_{\Lambda_k}^* \big) (A_k)   \right|  \\
  & = \sum_{k\in \N} \frac{1}{2^{k+1}|\Lambda_k|}  \left|  \omega \big( f_{\Lambda_k}^*A_kf_{\Lambda_k} - g_{\Lambda_k}^*A_kg_{\Lambda_k} \big) \right|  \\
  & \leq   \sum_{k\in \N} \frac{1}{2^{k+1}|\Lambda_k|}\Big(
    \left\| \big( f_{\Lambda_k}^* -  g_{\Lambda_k}^* \big) A_kf_{\Lambda_k} \right\|
    + \left\| g_{\Lambda_k}^* A_k \big( f_{\Lambda_k} - g_{\Lambda_k} \big) \right\| \Big) \\
    & \leq   \sum_{k\in \N} \frac{1}{2^k |\Lambda_k|}
    \left\| f_{\Lambda_k} -  g_{\Lambda_k} \right\| \leq \sum_{k\in \N} \frac{1}{2^{k}} \| f - g \| \leq 2 \, \| f - g \| \ . 
\end{split}
\end{equation}
Hence  $ d_{\sS_\Gamma(\hilbone)} \big( \sS_\Gamma (f) , \sS_\Gamma (g) \big) \leq 2 \, \| f - g \|$
and the map sending $f\in \I(\hilbone,\hilbtwo)$ to $\sS_\Gamma(f) $ is Lipschitz continuous, so in particular continuous. 
\end{proof}

\begin{remark}
  The natural isomorphism
  $\eta^\fA\colon \fA_\Gamma ( - ) \otimes \fA_\Gamma ( - ) \to \fA_\Gamma ( - \otimes - )$
  provides a canonical identification of the internally and externally
  stacked quantum lattice systems, so from a physical and categorical standpoint they are the same.
  Henceforth, we suppress from now on to distinguish them notationally. Under this agreement, the natural
  transformation $\eta^{\sS_\Gamma}$ maps $(\omega_1,\omega_2)$ to $\omega_1\otimes \omega_2$. 
\end{remark}

We now restrict our attention to the pure states.

\begin{prop}\label{thm:alpha*_image_pure_states}
  For every linear isometry $f\colon \hilbone \to \hilbtwo$ between finite dimensional Hilbert spaces the map
  $ \sS_\Gamma  (f)$ maps pure states to pure states. The restriction
  $ \sP_\Gamma (f) := \sS_\Gamma  (f)|_{\sP_\Gamma (\hilbone)}:\sP_\Gamma (\hilbone) \rightarrow \sP_\Gamma (\hilbtwo)$ has image
\begin{equation}\label{eq:alpha*_image_pure_states}
  \Im \sP_\Gamma  (f) =
  \qty{\omega \in \sP_\Gamma (\hilbtwo) :
  \omega(\iota_\Lambda^\hilbtwo(P_\Lambda)) = 1 \tn{ for all $\Lambda \Subset \Gamma$}}
\end{equation}
and is a closed embedding. 
\end{prop}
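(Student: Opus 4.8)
The plan is to deduce everything from \cref{lem:alpha*_all_states} together with elementary convexity, avoiding any direct manipulation of purity through the colimit. Write $F := \Im\sS_\Gamma(f) = \qty{\omega \in \sS_\Gamma(\hilbtwo) : \omega(\iota_\Lambda^\hilbtwo(P_\Lambda)) = 1 \tn{ for all } \Lambda\Subset\Gamma}$ for the image described in \eqref{eq:alpha*_image_all_states}. The map $\sS_\Gamma(f)$ is affine, being precomposition $\psi \mapsto \psi\circ\fA_\Gamma(f)$ with the fixed linear map $\fA_\Gamma(f)$; by \cref{lem:alpha*_all_states} it is moreover a closed embedding with image $F$, hence an affine homeomorphism of the compact convex set $\sS_\Gamma(\hilbone)$ onto $F$. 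I will use throughout that pure states are exactly the extreme points, so $\sP_\Gamma(\hilbone) = \mathrm{ext}\,\sS_\Gamma(\hilbone)$ and $\sP_\Gamma(\hilbtwo) = \mathrm{ext}\,\sS_\Gamma(\hilbtwo)$.

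The key step is to observe that $F$ is a \emph{face} of $\sS_\Gamma(\hilbtwo)$. Indeed, each defining condition reads $\omega(\iota_\Lambda^\hilbtwo(\1 - P_\Lambda)) = 0$ with $\1 - P_\Lambda \ge 0$; if $\omega = t\omega_1 + (1-t)\omega_2$ with $t\in(0,1)$ and $\omega \in F$, then $t\,\omega_1(\iota_\Lambda^\hilbtwo(\1-P_\Lambda)) + (1-t)\,\omega_2(\iota_\Lambda^\hilbtwo(\1-P_\Lambda)) = 0$ forces both nonnegative summands to vanish, so $\omega_1,\omega_2 \in F$. Two standard facts now finish the identification: an affine bijection between compact convex sets carries extreme points bijectively to extreme points (the inverse is again affine), and the extreme points of a face $F$ are precisely $F \cap \mathrm{ext}\,\sS_\Gamma(\hilbtwo)$. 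Combining these, $\sS_\Gamma(f)$ restricts to a bijection
\[ \sP_\Gamma(\hilbone) = \mathrm{ext}\,\sS_\Gamma(\hilbone) \;\xrightarrow{\ \cong\ }\; \mathrm{ext}\,F = F \cap \sP_\Gamma(\hilbtwo). \]
In particular $\sS_\Gamma(f)$ sends pure states to pure states, and the image of $\sP_\Gamma(f)$ is exactly $F \cap \sP_\Gamma(\hilbtwo)$, which is the right-hand side of \eqref{eq:alpha*_image_pure_states}.

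It remains to prove $\sP_\Gamma(f)$ is a closed embedding. Since $\sS_\Gamma(f)$ is a homeomorphism onto $F$, its restriction to the subspace $\sP_\Gamma(\hilbone)$ is automatically an embedding onto $F \cap \sP_\Gamma(\hilbtwo)$, whose subspace topology from $\sP_\Gamma(\hilbtwo)$ agrees with that from $\sS_\Gamma(\hilbtwo)$. For closedness I would note that $F$ is weak$^*$ closed in $\sS_\Gamma(\hilbtwo)$, being an intersection of the closed conditions $\omega(\iota_\Lambda^\hilbtwo(P_\Lambda)) = 1$ exactly as in \cref{lem:alpha*_all_states}, so $F \cap \sP_\Gamma(\hilbtwo)$ is closed in the subspace $\sP_\Gamma(\hilbtwo)$. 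Hence $\sP_\Gamma(f)$ is a closed embedding.

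The part needing the most care is the passage to pure states, i.e.\ the face argument: one must resist trying to establish purity of $\omega = \psi\circ\fA_\Gamma(f)$ ``site by site'' through the colimit, since, as the earlier remark warns, the restrictions $\omega_\Lambda$ need not be pure even when $\omega$ is. The convexity route sidesteps this entirely. The only genuinely non-formal point to keep in mind is that $\sP_\Gamma(\hilbtwo)$ is generally \emph{not} weak$^*$ closed in $\sS_\Gamma(\hilbtwo)$, so closedness of the image must be verified inside $\sP_\Gamma(\hilbtwo)$ — which is precisely what the weak$^*$-closedness of the face $F$ supplies.
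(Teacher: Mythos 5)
Your proof is correct, and it takes a genuinely different route from the paper's. The paper argues directly from the order-theoretic definition of purity: given a positive functional $\phi \leq \psi\circ\fA_\Gamma(f)$, it normalizes $\phi$, checks that $\widehat\phi$ satisfies the conditions of \cref{lem:alpha*_all_states} so that $\widehat\phi = \sS_\Gamma(f)(\chi)$ for some state $\chi$, and then runs explicit computations with the local maps $\Ad(\isoone_\Lambda)$, $\Ad(\isoone_\Lambda^*)$ through the colimit to conclude $\norm{\phi}\cdot\chi \leq \psi$ and hence $\chi=\psi$; a symmetric argument handles the converse inclusion in \eqref{eq:alpha*_image_pure_states}. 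You instead identify pure states with extreme points of the state space (standard for unital $C^*$-algebras, and $\fA_\Gamma(\hilbtwo)$ is unital), observe that the image $F$ from \eqref{eq:alpha*_image_all_states} is a face because each condition is the vanishing of $\omega$ on the positive element $\iota_\Lambda^\hilbtwo(\1-P_\Lambda)$, and then invoke two standard convexity facts: affine bijections between convex sets carry extreme points bijectively to extreme points, and $\mathrm{ext}\,F = F\cap\mathrm{ext}\,\sS_\Gamma(\hilbtwo)$ for a face $F$. This handles both directions of \eqref{eq:alpha*_image_pure_states} at once and entirely sidesteps the local computations; it also makes transparent that the only inputs are injectivity of $\sS_\Gamma(f)$, affineness, and the description of the image as a face cut out by positivity conditions, so the argument would apply verbatim to any unital completely positive map with those properties. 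What the paper's approach buys in exchange is self-containedness: it reuses the identities $\isoone_\Lambda^*\isoone_\Lambda=\1$ and $\isoone_\Lambda\isoone_\Lambda^*=P_\Lambda$ already in play and never needs the equivalence between purity and extremality. Your closing observations — that one should not attempt purity ``site by site'' through the colimit, and that closedness of the image must be read inside $\sP_\Gamma(\hilbtwo)$ rather than in $\sS_\Gamma(\hilbtwo)$ — are both correct and match the paper's handling of the closed-embedding claim.
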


\begin{proof}
  Suppose $\psi \in \sP_\Gamma (\hilbone)$ and suppose $\phi$ is a positive linear functional on $\fA_\Gamma(\hilbtwo)$ such that $\phi \leq \psi \circ  \fA_\Gamma  (f)$. Assuming $\phi \neq 0$, let $\widehat \phi = \phi/\norm{\phi}$, so that $\widehat \phi$ is a state. Since $\psi( \fA_\Gamma  (f)(\iota^\hilbtwo_\Lambda(\1 - P_\Lambda))) = 0$ for all $\Lambda \Subset \Gamma$, we know $\phi(\iota_\Lambda^\hilbtwo(\1 - P_\Lambda)) = 0$ for all $\Lambda \Subset \Gamma$, hence $\widehat\phi(\iota^\hilbtwo_\Lambda(P_\Lambda)) = 1$ for all $\Lambda$. By Lemma \ref{lem:alpha*_all_states} there exists
  $\chi \in \sS_\Gamma (\hilbone)$ such that $\widehat \phi =  \sS_\Gamma  (f)(\chi)$. 

We now observe that for any $\Lambda \Subset \Gamma$, 
\begin{align*}
\norm{\phi} \cdot \chi \circ \iota_\Lambda^\hilbone &=\norm{\phi} \cdot \chi \circ \iota^\hilbone_\Lambda \circ \Ad(\isoone_\Lambda^*) \circ \Ad(\isoone_\Lambda)\\
&= \norm{\phi} \cdot \chi \circ  \fA_\Gamma  (f) \circ \iota^\hilbtwo_\Lambda \circ \Ad(\isoone_\Lambda)\\
&= \phi \circ \iota^\hilbtwo_\Lambda \circ \Ad(\isoone_\Lambda)\\
&\leq \psi \circ  \fA_\Gamma  (f) \circ \iota^\hilbtwo_\Lambda \circ \Ad(\isoone_\Lambda) = \psi \circ \iota_\Lambda^\hilbone.
\end{align*}
It follows that $\norm{\phi} \cdot \chi \leq \psi$. Since $\psi$ is pure, we know there exists $t \in [0,1]$ such that $\norm{\phi}\cdot\chi = t \cdot \psi$. Applying $\1$ yields $t = \norm{\phi}$. Thus we obtain $\chi = \psi$, hence $\phi = \norm{\phi}  \sS_\Gamma  (f) (\chi) = \norm{\phi}  \sS_\Gamma  (f) (\psi)$. This proves that $ \sS_\Gamma  (f) (\psi)$ is pure.

Now suppose $\omega \in \sP_\Gamma (\hilbtwo)$ and $\omega(\iota^\hilbtwo_\Lambda(P_\Lambda)) = 1$ for all $\Lambda \Subset \Gamma$. There exists $\psi \in \sS_\Gamma (\hilbone)$ such that $\omega =  \sS_\Gamma  (f) (\psi)$. Suppose $\chi$ is a positive linear functional on $\fA(\hilbone)$ and $\chi \leq \psi$. Then $\chi \circ  \fA_\Gamma  (f) \leq \psi \circ  \fA_\Gamma  (f) = \omega$, so there exists $t \in [0,1]$ such that $\chi \circ  \fA_\Gamma  (f) = t \omega$. Applying this functional to $\1$ yields $t = \norm{\chi}$. Assuming $\norm{\chi} \neq 0$, let $\widehat{\chi} = \chi/\norm{\chi}$. Then on any $\Lambda \Subset \Gamma$ we have
\[
\psi \circ \iota_\Lambda^\hilbone \circ \Ad(\isoone_\Lambda^*) = \omega \circ \iota^\hilbtwo_\Lambda = \widehat \chi \circ  \fA_\Gamma  (f) \circ \iota^\hilbtwo_\Lambda = \widehat \chi \circ \iota^\hilbone_\Lambda \circ \Ad(\isoone_\Lambda^*).
\]
Applying $\Ad(\isoone_\Lambda)$ to both sides yields $\psi \circ \iota^\hilbone_\Lambda = \widehat \chi \circ \iota^\hilbone_\Lambda$. Since $\Lambda$ was arbitrary, we conclude that $\psi = \widehat \chi$, hence $\chi = \norm{\chi} \psi$. This proves that $\psi \in \sP_\Gamma (\hilbone)$, hence $\omega \in  \sS_\Gamma  (f)(\sP_\Gamma (\hilbone))$. This proves \eqref{eq:alpha*_image_pure_states}.

Since $ \sS_\Gamma  (f)$ is an embedding on the full state spaces, its restriction to the pure state spaces is an embedding. Its image is manifestly weak$^*$ closed in $\sP_\Gamma (\hilbtwo)$, so $ \sP_\Gamma  (f)$ is a closed embedding.
\end{proof}

We finally put these results together and state the main result of this section, the construction of the \emph{universal quantum state type}.
\begin{theorem}\label{thm:uniquantstatetype}
  There is a topologically enriched covariant functor from the category of finite dimensional Hilbert spaces and
  isometric linear embeddings to the category of compactly generated topological spaces
\begin{align*}
  \sP_\Gamma\colon \fHilb \to \CGTop
  \end{align*}
  which, on objects and morphisms is given by
\begin{align*}
 \hilbone &\mapsto \sP_\Gamma (\hilbone) , \\
   \big( f\colon \hilbone \to \hilbtwo \big) &\mapsto
  \big( \sP_\Gamma (f) \colon \sP_\Gamma (\hilbtwo) \rightarrow \sP_\Gamma (\hilbone) \big).
\end{align*}
It has the property that, for any  isometric linear embedding $f$, $\sP_\Gamma(f)$ is a closed embedding.

Furthermore, there is a natural transformation
\[\eta^{\sP_\Gamma} \colon   \sP_\Gamma( - ) \times \sP_\Gamma( - ) \to \sP_\Gamma( - \otimes - )  \]
which sends $(\omega_1,\omega_2)\in \sP_\Gamma(\hilbone_1) \times \sP_\Gamma(\hilbone_2)$ to
$\omega_1\otimes \omega_2$, so that $\sP_\Gamma$ is a lax monoidal functor
\[  \sP_\Gamma\colon \fHilbo \to \CGTop \]
where the monoidal product on $\CGTop$ is the cartesian product.
\end{theorem}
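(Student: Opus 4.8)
The plan is to obtain $\sP_\Gamma$ entirely by restricting the state-space functor $\sS_\Gamma$ of \cref{thm:statespacefuntopoenriched} to pure states, so that nearly every clause of the theorem is inherited from structure already established. First I would invoke \cref{thm:alpha*_image_pure_states}: for each linear isometry $f$, the map $\sS_\Gamma(f)$ carries pure states to pure states, so its restriction $\sP_\Gamma(f)$ is a well-defined map of pure state spaces and is moreover a closed embedding. Functoriality is then immediate: since $\sP_\Gamma(f)$ is literally the restriction of $\sS_\Gamma(f)$ and $\sS_\Gamma$ is a functor, identities and composites of restrictions are the restrictions of identities and composites, so $\sP_\Gamma(\id) = \id$ and $\sP_\Gamma(g \circ f) = \sP_\Gamma(g) \circ \sP_\Gamma(f)$ hold automatically.

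For the topological enrichment, I would note that $f \mapsto \sS_\Gamma(f)$ is continuous as a map $\I(\hilbone,\hilbtwo) \to \Map(\sS_\Gamma(\hilbone), \sS_\Gamma(\hilbtwo))$ by \cref{thm:statespacefuntopoenriched}. Precomposition with the subspace inclusion $\sP_\Gamma(\hilbone) \hookrightarrow \sS_\Gamma(\hilbone)$ is a continuous operation on mapping spaces in $\CGTop$, and, because $\sP_\Gamma(\hilbtwo)$ carries the subspace topology, any map into $\sS_\Gamma(\hilbtwo)$ whose image lies in $\sP_\Gamma(\hilbtwo)$ corestricts continuously to $\sP_\Gamma(\hilbtwo)$. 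Composing these with $f \mapsto \sS_\Gamma(f)$ exhibits $f \mapsto \sP_\Gamma(f)$ as a continuous map $\I(\hilbone,\hilbtwo) \to \Map(\sP_\Gamma(\hilbone), \sP_\Gamma(\hilbtwo))$, so $\sP_\Gamma$ is topologically enriched.

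The one genuinely new input is the monoidal structure. I would first check that $\eta^{\sS_\Gamma}$ of \cref{thm:statespacefuntopoenriched} restricts to pure states, i.e.\ that if $\omega_1, \omega_2$ are pure then so is $(\omega_1 \otimes \omega_2) \circ (\eta^{\fA_\Gamma}_{\hilbone_1,\hilbone_2})^{-1}$. Since $\eta^{\fA_\Gamma}$ is a $*$-isomorphism, it suffices to know that $\omega_1 \otimes \omega_2$ is pure on $\fA_\Gamma(\hilbone_1) \otimes \fA_\Gamma(\hilbone_2)$; here I appeal to the classical fact that the spatial tensor product of two irreducible GNS representations is irreducible, so the tensor product of two pure states is pure on the minimal tensor product. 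As the algebras $\fA_\Gamma(\hilbone_i)$ are nuclear, their tensor product is unique (minimal equals maximal), and this identifies $\eta^{\sP_\Gamma}$ as the corestriction of $\eta^{\sS_\Gamma}$ to pure states. Naturality of $\eta^{\sP_\Gamma}$ is inherited verbatim from that of $\eta^{\sS_\Gamma}$, and the lax-monoidal coherence (associator and unitor compatibility, with monoidal unit $\sP_\Gamma(\C) = \pt$) holds because each coherence diagram is the restriction to pure states of the corresponding diagram for $\eta^{\sS_\Gamma}$, which in turn is obtained by applying the state functor to the strong-monoidal coherence diagrams for $\fA_\Gamma$ from \cref{prop:canisoalg} and \cref{prop:strongmonoidalfunctoriality}.

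The main obstacle — indeed the only non-formal step — is the purity of the tensor product of pure states; every other assertion is a mechanical restriction of the already-constructed pair $(\sS_\Gamma, \eta^{\sS_\Gamma})$. Once that single $C^*$-algebraic input is in place, the theorem follows by assembling the pieces above.
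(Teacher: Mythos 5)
Your proposal is correct and follows the same overall route as the paper: define $\sP_\Gamma$ and $\eta^{\sP_\Gamma}$ as the restrictions of $\sS_\Gamma$ and $\eta^{\sS_\Gamma}$ to pure states, cite \cref{thm:alpha*_image_pure_states} for preservation of purity and the closed-embedding property, inherit functoriality and naturality, and supply purity of $\omega_1\otimes\omega_2$ as the one non-formal input (where your appeal to irreducibility of the tensor product of irreducible GNS representations, together with nuclearity to identify the tensor norm, is exactly the right justification for the step the paper leaves implicit).

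The one place you genuinely diverge is the topological enrichment. The paper argues analytically: it observes that $\Map(\sP_\Gamma(\hilbone),\sP_\Gamma(\hilbtwo))$ carries the topology of uniform convergence on compacta, and then reuses the Lipschitz estimate \eqref{eq:uniformconvergence} from the proof of \cref{thm:statespacefuntopoenriched} to show that $f_n\to f$ forces $\sP_\Gamma(f_n)\to\sP_\Gamma(f)$ uniformly. You instead derive enrichment formally from that of $\sS_\Gamma$, by precomposing with the inclusion $\sP_\Gamma(\hilbone)\hookrightarrow\sS_\Gamma(\hilbone)$ and corestricting to the subspace $\sP_\Gamma(\hilbtwo)$. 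This is valid: for the compact-open topology, $\Map(X,A)$ with $A\subset Y$ a subspace is precisely the subspace $\{g:\mathrm{im}(g)\subset A\}$ of $\Map(X,Y)$, and the $k$-ification causes no trouble here since all these mapping spaces are metrizable (compact domain, metrizable codomain). Your version is cleaner in that it requires no second pass through the metric estimate; the paper's version has the mild advantage of not needing the compatibility of corestriction with mapping-space topologies, and of making the uniform-convergence statement explicit for later reuse. Either argument suffices.
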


\begin{proof}
  The functoriality of $\sP_\Gamma$ follows immediately by definition of $\sP_\Gamma$ as the restriction of $\sS_\Gamma$
  to pure state spaces and by functoriality of $\sS_\Gamma$.
  The claim on closed embeddings is \cref{thm:alpha*_image_pure_states}.

  Analogously one constructs $\eta^{\sP_\Gamma}$ as the restriction of the natural transformation $\eta^{\sS_\Gamma}$ to pure
  state spaces. Since
  \begin{equation}
    \label{eq:defnaturaltransfprodtensorprod}
    \eta^{\sS_\Gamma} (\omega_1,\omega_2) = \omega_1\otimes \omega_2 \quad
    \text{for all } (\omega_1,\omega_2)\in \sP_\Gamma(\hilbone_1) \times \sP_\Gamma(\hilbone_2) 
  \end{equation}
  and since the tensor product of two pure   states is pure,   $\eta^{\sP_\Gamma}$  is well-defined and
  a natural transformation indeed. Moreover, Eqn.~\eqref{eq:defnaturaltransfprodtensorprod} entails that
  $\sP_\Gamma$ is a lax monoidal functor. 
 
  It remains to discuss the topological enrichment. Since the weak$^*$ topology on $\sP_\Gamma(\hilbtwo)$ is
  induced by a uniformity, more precisely by the weak$^*$ uniformity,
  the compact-open topology on $\Map ( \sP_\Gamma(\hilbone),\sP_\Gamma(\hilbtwo))$ coincides with the topology of compact 
  convergence or in other words with the topology of uniform convergence on compact subsets.
  The topology of uniform convergence is metrizable, so in particular compactly generated, and finer then the
  topology of uniform convergence on compacta.
  By the universal property of the $k$-ification, the topology of uniform convergence 
  therefore is finer than the topology we assume  $\Map ( \sP_\Gamma(\hilbone),\sP_\Gamma(\hilbtwo))$ to be endowed with,
  namely the $k$-ification of the compact-open topology. 
  Now if $(f_n)_{n\in \N}$ is a sequence in $\I(\hilbone,\hilbtwo)$ converging to $f : \hilbone\to\hilbtwo $, 
  then Eqn.~\eqref{eq:uniformconvergence} in the proof of \cref{thm:statespacefuntopoenriched}
  shows that the image sequence $\big(\sP_\Gamma (f_n)\big)_{n\in \N}$ converges uniformly on $\sP_\Gamma (\hilbone)$
  to $\sP_\Gamma (f)$. Therefore, the map
  \[ \I(\hilbone,\hilbtwo) \to \Map ( \sP_\Gamma(\hilbone),\sP_\Gamma(\hilbtwo)), \quad f \mapsto \sP_\Gamma (f) \]
  is continuous, and $\sP_\Gamma$ is a topologically enriched functor.  
\end{proof}

\subsection{Quantum state types}
With the results of the preceding sections in hand, we are ready to define the more general notion of a quantum state type.

The following definition will play an important role in the rest of the paper. It provides minimal assumptions necessary to deduce some formal homotopical properties of spaces of quantum states.

\begin{defn}\label{defn:quantumstatetype}
A \emph{quantum state type} on a lattice $\Gamma$ is a functor 
\[\QS_\Gamma \colon \fHilbo \to  \CGTop \]
such that the following hold:
\begin{enumerate}[(a)]
\item For each object $\cH$, $\QS_\Gamma(\cH)$ is a subspace of $\sP_\Gamma(\cH)$ containing all completely factorized states.
\item  For each morphism $f\colon \hilbone \to \hilbtwo$, $\QS_\Gamma(f)$  is the restriction of $\sP_\Gamma(f)$ and is a closed embedding.
\item The natural transformation $\eta^{\sP_\Gamma} $ restricts to a natural transformation
\[
\eta^{\QS_\Gamma}  \colon  \QS_\Gamma(\cH_1)\times \QS_\Gamma(\cH_2) \to  \QS_\Gamma(\cH_1\otimes \cH_2).
  \]
  \end{enumerate}
  The quantum state type $\sP_\Gamma$ is called the \emph{universal quantum state type}.
\end{defn}

We will show that a quantum state type
 $\QS_\Gamma$ naturally extends to a larger class of complex inner product spaces. 
We start with the following definition, which is the source for our extension of $\QS_\Gamma$.
\begin{defn}\label{defn:sJ}
  Let $\sJ$ be the category whose objects are non-zero complex vector spaces of countable algebraic dimension over $\C$,
  equipped with a Hermitian inner product. The infinite dimensional objects  are topologized as the topological union of
  their finite dimensional subspaces. 
  The morphisms are linear isometries $\I(\cV,\cW)$ topologized as subsets of the mapping space in $\CGTop$.
\end{defn}

In the following, we will examine the topological vector space structure of the objects of $\sJ$ in some
more detail. To this end we need a few additional notions from functional analysis and topological tensor
products which we briefly recall. For further information on locally convex vector spaces we
refer to \cite{KoetheTopVecSp1,Treves} and for details on topological tensor products we recommend the seminal work by Grothendieck \cite{GrothendieckTensorProducts} or the books \cite{Pietsch,Treves}. 

Recall first that by a \emph{Fr\'echet space} one understands a
complete and metrizable locally convex topological vector space. A \emph{Limit of Fr\'echet spaces} or
briefly an \emph{LF space}  is a locally convex topological vector space
$E$ which can be written as the colimit in the category of locally convex topological vector spaces
of a countable inductive system $(E_n , i_{n,m})$ of Fr\'echet spaces. 
If each of the maps $i_{n,m} :E_n\to E_m$ is an embedding of topological vector spaces, then
the colimit $E =\colim_{n\in \N} E_n$  is called a \emph{strict LF space}.
Note that in the older literature the terminology ``direct limit'' instead of ``colimit'' is used.
In case all the topological vector spaces in the inductive system $(E_n , i_{n,m})$ are Banach spaces,
one often calls the  colimit $E =\colim_{n\in \N} E_n$ an \emph{LB space}. 
See \cite[\S 19.~5.]{KoetheTopVecSp1} for more details on LB and LF spaces. 

To describe the monoidal structure of the category $\sJ$ we need an appropriate choice of a topological
tensor product. Unlike in the finite dimensional case or the case of Hilbert spaces  there are in general
several admissible locally convex topologies on the algebraic tensor product $E \otimes F$
of two locally convex vector spaces $E$ and $F$. 
The finest locally convex topology $\iota$ on $E \otimes F$  such that the map $E \times F \to E \otimes F $
is separately continuous has been called the \emph{inductive tensor product topology}
by Grothendieck \cite[Def.\ 3 in I \S 3.I]{GrothendieckTensorProducts}. One writes $E\otimes_\iota F$
for the tensor product endowed with the topology $\iota$. 
The finest locally convex topology $\pi$ on $E \otimes F$  such that the map $E \times F \to E \otimes F $
is (jointly) continuous is the \emph{projective topology}. The corresponding projective topological
tensor product is denoted by $E\otimes_\pi F$. 
The last topological tensor product which is of relevance here is the injective tensor product.
It is somewhat more complicated to be defined. 
Given two convex zero neighborhoods $U \subset E$ and $V\subset F$ one obtains a semi-norm
$\varepsilon_{U,V}$ on $E \otimes F$  by defining it for $t = \sum_{i=0}^n v_i \otimes w_i \in E\otimes F$
with  $v_i,\ldots,v_n \in E$, $w_0,\ldots,n \in F$ by 
\[
  \varepsilon_{U,V} (t) := 
  \sup \left\{ \left|\sum_{i=0}^n \lambda (v_i) \times \mu (w_i) \right| \, \Big| \: \lambda \in U^\circ  \: \& \: \mu \in V^\circ \right\} \ ,
\]
where 
$U^\circ$ and  $V^\circ$ denote the polar sets of $U$ and $V$, respectively.
Note that $ \varepsilon_{U,V} (t)$ is independent of the particular representation of $t$ as a sum of simple tensors.  
The semi-norms $\varepsilon_{U,V}$ generate a locally convex topology  $\varepsilon$ on $E\otimes F$
which is compatible with the tensor product in the sense that the canonical map
$E \times F \to E \otimes_\varepsilon F$ is (jointly) continuous. In particular,
$\varepsilon$ is weaker than the projective tensor product topology.
The topology $\varepsilon$ is called the \emph{injective tensor product topology},
the resulting topological tensor product $E \otimes_\varepsilon F$ the  \emph{injective tensor product} of
$E$ and $F$. Note that the three thus defined tensor product topologies can be completed to give the
locally convex topological vector spaces $E\hat{\otimes}_\iota F$,  $E\hat{\otimes}_\pi F$, and $E \hat{\otimes}_\varepsilon F$
called the \emph{completed inductive, projective and injective tensor product}, respectively.

By definition according to Grothendieck \cite[Def.\ 1]{GrothendieckTensorProductsSMF}, a \emph{nuclear space}  is a locally convex topological vector space $E$ such that for each
locally convex space $F$ the natural continuous linear map $E\otimes_\pi F \to E\otimes_\varepsilon F$ is
even a topological isomorphism. Note that if $E$ and $F$ are Fr\'echet spaces, then the natural map
 $E\otimes_\iota F \to E\otimes_\pi F$ is an isomorphism of topological vector spaces. 

Now we can formulate and prove our result on the  monoidal structure of the category $\sJ$ and the
topological structure of its objects.

\begin{proposition}
  The objects of $\sJ$ are nuclear strict LB spaces, so in particular locally convex. For two objects $\cV,\cW$ of $\sJ$,
  the colimit topology on the tensor product $\cV \otimes \cW$ with respect to the inductive system of
  finite dimensional subspaces coincides with Grothendieck's inductive tensor product topology $\iota$.
  Moreover, if  $(\cV)_{i\in I}$ and $(\cW_j)_{j\in J}$ are two countable strict inductive systems in  $\sJ$
  with colimits  $\cV =\colim_i\cV_i$ and $\cW = \colim_j \cW_j$, respectively, then 
  the canonical maps induce isomorphisms 
  \begin{equation}
    \label{eq:tensorproductscolimitscommute}
    \cV \otimes \cW \cong \colim_j  \cV \otimes \cW_j \cong \colim_i \cV_i \otimes \cW \cong  \colim_{(i,j)} \cV_i \otimes \cW_j.
  \end{equation}
That is, the tensor product in $\sJ$ commutes with colimits of strict inductive systems. 
  Finally, the category $\sJ$ is a symmetric monoidal category with product given by the tensor product and unit the vector
  space $\C$.
\end{proposition}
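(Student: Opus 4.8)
The plan is to reduce every topological question to the finite-dimensional level, where the tensor product of Hilbert spaces is unambiguous, and then transport the answers along colimits. First I would fix an object $\cV$ of $\sJ$. Since $\cV$ has countable algebraic dimension and carries a Hermitian inner product, Gram--Schmidt produces an orthonormal basis $(e_n)_{n\in\N}$, so that the finite-dimensional subspaces $\cV_N = \mathrm{span}(e_1,\dots,e_N)$ form a countable chain, cofinal in the directed set of all finite-dimensional subspaces, with isometric inclusions $\cV_N \hookrightarrow \cV_{N+1}$. Thus $\cV = \colim_N \cV_N$ is a strict LB space (each $\cV_N$ is a finite-dimensional Hilbert space, hence Banach, and the bonding maps are embeddings), and as a locally convex space it is just the countable direct sum $\bigoplus_n \C e_n$, since the colimit topology over finite-dimensional subspaces coincides with the locally convex direct sum topology. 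Nuclearity is then immediate from the standard fact that a countable locally convex direct sum of nuclear spaces is nuclear, each $\C e_n$ being finite-dimensional hence nuclear; see \cite{Treves}. Local convexity follows a fortiori.

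Next I would identify the colimit topology $\tau$ on the algebraic tensor product $\cV \otimes \cW = \colim_{A,B} A \otimes B$ (the colimit over finite-dimensional $A \subset \cV$, $B \subset \cW$, each $A \otimes B$ carrying its unique Hausdorff locally convex topology) with Grothendieck's inductive topology $\iota$. Both inequalities are soft. On one hand, each finite-dimensional inclusion $A \otimes B \hookrightarrow (\cV \otimes \cW, \iota)$ is automatically continuous, as any linear map out of a finite-dimensional space is continuous; since $\tau$ is the finest locally convex topology making all these inclusions continuous, $\iota \le \tau$. On the other hand, fixing $v \in \cV$ with $v \in A_0$, the map $w \mapsto v \otimes w$ restricts on each $B$ to the continuous map $B \to A_0 \otimes B \hookrightarrow (\cV \otimes \cW, \tau)$, so by the colimit topology on $\cW$ it is continuous, and symmetrically in the first variable; hence the canonical bilinear map is separately continuous into $\tau$. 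As $\iota$ is the finest locally convex topology with separate continuity, $\tau \le \iota$, giving $\tau = \iota$.

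The commutation with strict inductive limits then becomes a cofinality-plus-Fubini argument resting on the previous paragraph. For a strict inductive system the underlying space of $\cV = \colim_i \cV_i$ is the union $\bigcup_i \cV_i$, and any finite-dimensional subspace of $\cV$ lies in some $\cV_i$; hence the finite-dimensional subspaces of the $\cV_i$ are cofinal among those of $\cV$, and likewise for $\cW$. Applying the identification $\otimes = \colim_{A,B} A \otimes B$ both inside and outside, and interchanging iterated colimits (the category of locally convex spaces is cocomplete), I would compute each of $\colim_j \cV \otimes \cW_j$, $\colim_i \cV_i \otimes \cW$, and $\colim_{(i,j)} \cV_i \otimes \cW_j$ to equal $\colim_{A \subset \cV, B \subset \cW} A \otimes B = \cV \otimes \cW$, which yields the displayed chain of isomorphisms.

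Finally, for the symmetric monoidal structure I would first observe that $\cV \otimes \cW$ is again an object of $\sJ$: it is non-zero, of countable algebraic dimension, carries the tensor-product Hermitian inner product, and by the second step its colimit topology is exactly $\iota$. The tensor product of two isometries is an isometry for the tensor inner products and is continuous, being so on each finite-dimensional piece, and functoriality follows from the colimit description; this makes $\otimes$ a bifunctor on $\sJ$. The associator, unitors, and braiding are the usual linear isomorphisms, which are isometric homeomorphisms on each finite-dimensional piece and therefore assemble, via the commutation with colimits of the third step, to isomorphisms in $\sJ$ with unit $\C$. All coherence diagrams commute because they already commute on the underlying complex vector spaces and on each finite-dimensional subsystem, where one is inside the symmetric monoidal category of finite-dimensional Hilbert spaces; since the forgetful functor to vector spaces is faithful, commutativity propagates. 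I expect the main obstacle to be the second step, pinning the colimit topology to Grothendieck's $\iota$, since once the tensor product is known to be computed as a colimit of finite-dimensional tensor products, every remaining claim reduces to the finite-dimensional case together with a routine interchange of colimits.
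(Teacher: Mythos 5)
Your proof is correct, and it reaches the conclusion by a genuinely more self-contained route than the paper's. The paper obtains the commutation of the inductive tensor product with colimits by citing \cite[Prop.~14]{GrothendieckTensorProducts}, and, more significantly, it proves associativity of the monoidal structure indirectly: it invokes the coincidence of the inductive and projective tensor product topologies on objects of $\sJ$ (the remark immediately following, whose proof in turn rests on a theorem of Hirai et al.\ on products of inductive limits) and then borrows associativity from $\otimes_\pi$. You instead exploit throughout that every object of $\sJ$ is a filtered colimit of its finite-dimensional subspaces and that, inside $\cV\otimes\cW$, the subspaces $A\otimes B$ with $A\subset\cV$ and $B\subset\cW$ finite dimensional are cofinal among all finite-dimensional subspaces; this lets you derive the identification $\tau=\iota$ by a direct two-sided comparison, the colimit commutation by cofinality plus interchange of iterated colimits, and the associator, braiding, and coherence directly from the finite-dimensional case---bypassing entirely the joint-continuity question that the paper singles out as the ``tricky part.'' What each approach buys: the paper's establishes the stronger coincidence $\iota=\pi=\varepsilon$ on $\sJ$, which it uses again later, while yours is more elementary and keeps this proposition self-contained. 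Two points in your argument deserve to be made explicit rather than asserted. First, the fact that the set-level (topological) colimit of an increasing union of finite-dimensional subspaces is locally convex is the load-bearing step behind both your reading of $\tau$ as ``the finest \emph{locally convex} topology making the inclusions continuous'' and the inequality $\tau\leq\iota$; the paper proves it from local compactness of the finite-dimensional pieces, and you should record that argument. Second, in the interchange-of-colimits step you should say in which category the colimits $\colim_j \cV\otimes\cW_j$, etc., are formed; the clean fix is to observe that by cofinality each of them is the topological colimit of the same doubly-indexed system of finite-dimensional spaces, so the topological and locally convex colimits agree and the displayed isomorphisms are unambiguous.
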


\begin{proof}
  Let $\cV$ be an object of  $\sJ$ and $(\cV_i)_{i\in \N}$ a sequence of finite dimensional subspaces $\cV_i \subset \cV$ such that
  $\cV= \bigcup_{i\in \N} \cV_i$. A subset $U\subset \cV$ is then open if and only if the intersection of
  $U$ with any $\cV_i$ is open in $\cV_i$. By local compactness of the $\cV_i$ one concludes that
  for any open  $U\subset \cV$ and any point $v \in U$ there is an open convex $V\subset U$ such that $v \in V$. 
  Hence $\cV$ is the locally convex colimit of  $(\cV_i)_{i\in \N}$, so in particular an LB space. By nuclearity of the $\cV_i$,
  the colimit $\cV$ then is nuclear as well.
  
  By \cite[Prop 14]{GrothendieckTensorProducts}, \eqref{eq:tensorproductscolimitscommute} holds true when the tensor product
  $\cV \otimes \cW$ of two colimits $\cV =\colim_i\cV_i$ and $\cW = \colim \cW_j$ is endowed with the inductive tensor product topology.
  Moreover, \eqref{eq:tensorproductscolimitscommute} also shows that the inductive tensor product topology on 
  $\cV \otimes \cW$ coincides with the locally convex colimit topology of the inductive system  $(\cV_i \otimes \cW_j)_{(i,j)\in I\times J}$ when
  $(\cV_i)_{i\in I} \subset \cV$ and $(\cW_j)_{j\in J}\subset\cW$ are countable families of
  finite dimensional subspaces whose union span $\cV$ and $\cW$, respectively.
  In other words, the inductive tensor product topology on $\cV \otimes \cW$ coincides with topology constructed
  according to \cref{defn:sJ}.
  Since the projective tensor product is associative and since the inductive and the projective tensor product topologies coincide on
  any object $\cV$ of  $\sJ$ by \cref{strongnuclearity} below,  the category $\sJ$ is in fact symmetric monoidal. Obviously, $\C$ is the unit
  in this monoidal category.
\end{proof}

\begin{rem}\label{strongnuclearity}
  Let $\cV$ and $\cW$ be two complex vector spaces of countable infinite dimension endowed with the
  colimit topologies according to \cref{defn:sJ}. Then the inductive, injective and projective tensor product
  topologies on the algebraic tensor product $\cV \otimes \cW$ all coincide. 
  
  Equality of the injective and
  projective tensor product topologies follows from nuclearity of  $\cV$ and $\cW$ 
  \cite[Def.\ 1]{GrothendieckTensorProductsSMF}. The tricky part is equality of the inductive and projective
  tensor product topologies which in general need not hold true for non-metrizable LF spaces.
  More precisely, one has to show that the canonical map $\cV \times \cW \to \cV \otimes_\iota \cW$
  from the cartesian product of  $\cV$ and $\cW$ to the inductive tensor product of  $\cV$ and $\cW$
  is not only separately but even jointly continuous. 
  
  Because  $\cV$ and $\cW$ are both colimits
  of countable strict inductive systems of finite dimensional vector spaces  $(\cV_i)_{i\in I}$ and
  $(\cW_j)_{j\in J}$, respectively, one can apply \cite[Thm.\ 4.1]{HiraiEtAl} to conclude that
  the product topology on  $\cV \times \cW $ coincides with the colimit topology of the
  strict inductive system $(\cV_i\times \cW_j)_{(i,j)\in I\times J}$. Moreover, by local compactness of the
  $\cV_i$ and $\cW_j$, the colimit topologies on $\cV$, $\cW$ and  $\cV \times \cW $ are all locally convex. 
  Hence  $\cV \times \cW \to \cV \otimes_\iota \cW$ is jointly continuous since the restrictions to the finite
  dimensional vector spaces $\cV_i\times \cW_j$ are. 
\end{rem}

\begin{rem}
  The compact-open topology on $\I(\cV,\cW) $ coincides with the topology of uniform convergence on compacta
  since $\cW$ inherits a uniform structure from its locally convex topology.
  To check that a sequence in $\I(\cV,\cW) $ converges, it suffices to
  verify that it converges uniformly on the closed unit  balls
  $\closure{B}_i \subset \cH_i$, where $(\cH_i)_{i\in \N}$
  is a strict inductive system of finite dimensional subspaces $\cH_i\subset \cV$ whose union is $\cV$.
  Moreover, if $\cW =\colim_j \hilbtwo_j$, where
  $(\hilbtwo_j)_{j\in \N}$ is a sequence of finite dimensional subspaces whose union is $\cW$,
  then
  \begin{equation}
    \label{eq:colimcommutesembeddings}
    \colim_j \I(\cH_i, \cK_j)  \xrightarrow{\cong}\I(\cH_i, \colim_j \cK_j) =   \I(\cH_i,\cW) \ .
  \end{equation}
  This follows from the observation that $\I(\cH_i,\cW)$ is a subspace of the space $\cL(\cH_i,\cW)$ of linear maps
  $\cH_i\to\cW$ endowed with the compact-open topology and the fact that
  the inductive tensor product commutes with colimits:
  \[
    \colim_j \cL (\cH_i,\cK_j) \cong \colim_j \cH_i' \otimes \cK_j \cong \cH_i' \otimes \colim_j\cK_j \cong \cL (\cH_i,\cW)\ , \]
  where $\cH_i'$ is the dual vector space of $\cH_i$.
\end{rem}
 
\begin{rem}
The category  $\fHilbo$ embeds in $\sJ$ as the full-subcategory whose objects are finite dimensional. Any object $\cV$ of $\sJ$ is a colimit
\[ \cV = \colim_{i} \hilbone_i\]
for $ (\hilbone_i)_{i\in I}$ a countable family of finite dimensional subspaces whose union is  $\cV$ . 
\end{rem}

Before extending quantum state types to functors defined on the category $\sJ$, we state
the following auxiliary lemma. These kinds of result are typically left implicit or stated without proof by homotopy theorists working exclusively in $\CGTop$, but we have decided to break with tradition and include a proof for clarity, especially due to the subtle nature of pure state spaces in the weak$^*$ topology.

\begin{lem}\label{lem:CGTOPFACTS}
Let $X$ be a compact Hausdorff space and
let
\[\xymatrix{Y_0 \ar[r]^-{f_0} & Y_1  \ar[r]^-{f_1}   &  Y_2  \ar[r]^-{f_2}  & \cdots  }\]
be a diagram in the category of topological spaces, where the $f_i$ are closed embeddings and the  $Y_i$ are weak Hausdorff spaces. Let 
\[\jmath_i \colon Y_i \to \colim_{i,f_i} Y_i =:Y\]
be the canonical maps to the colimit $Y$ of the diagram in topological spaces. Suppose that the $Y_i$ are weak Hausdorff. Then
\begin{enumerate}[(a)]
\item\label{firstclaim} the colimit $Y$ is weak Hausdorff.
\item\label{secondclaim} if
$ g\colon X \to  Y$
is continuous, then there exists $i$ such that $g(X) \subset \jmath_i(Y_{i})$.
\item\label{thirdclaim} if the $Y_i$ are compactly generated, then so is $Y$.
\item\label{fourthclaim} if the $Y_i$ are compactly generated, then
  the canonical map
  \[  \Phi \colon \colim_{i,\Map(X,f_i)} \Map(X, Y_i) \to \Map(X, \colim_{i,f_i} Y_i) \]
  is a homeomorphism.
\end{enumerate}
\end{lem}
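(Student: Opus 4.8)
The plan is to prove the four claims in the order \ref{secondclaim}, \ref{firstclaim}, \ref{thirdclaim}, \ref{fourthclaim}, since the compactness statement \ref{secondclaim} is the technical heart and the rest follow from it by routine point-set bookkeeping. Throughout I identify each $Y_i$ with its image $\jmath_i(Y_i)$, so that $Y=\bigcup_i Y_i$ is an increasing union. First I would record a preliminary fact used everywhere: because the $f_i$ are closed embeddings, every finite composite $f_{ij}\colon Y_i\to Y_j$ (for $i\le j$) is a closed embedding, each $\jmath_i$ is injective, and in fact each $\jmath_i$ is a closed embedding. The last point is checked levelwise against the colimit (final) topology: for closed $C\subseteq Y_i$, the set $\jmath_m^{-1}(\jmath_i(C))$ is an image or preimage of $C$ under a closed embedding among the $Y$'s, hence closed in $Y_m$ for every $m$; so $\jmath_i(C)$ is closed in $Y$, and taking $C=Y_i$ shows the image is closed.

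For \ref{secondclaim}, I would argue by contradiction. If $g(X)\not\subseteq Y_i$ for every $i$, then, writing $\ell(p)=\min\{n: p\in Y_n\}$ for the level of a point and choosing points of $g(X)$ of arbitrarily high level, I extract distinct points $z_1,z_2,\dots\in g(X)$ with strictly increasing levels, and set $S=\{z_k\}$. Since each $Y_m$ is weak Hausdorff, it is $T_1$, so the finite set $S\cap Y_m$ is closed in $Y_m$, and the same holds for every subset of $S$; by the levelwise description of the colimit topology, every subset of $S$ is thus closed in $Y$, i.e.\ $S$ is a closed discrete subspace. Picking $x_k\in X$ with $g(x_k)=z_k$ yields an infinite subset of the compact space $X$, which has a limit point $x_\infty$. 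Then $g(x_\infty)\in\overline{S}=S$ equals some $z_m$; choosing an open $U\subseteq Y$ with $U\cap S=\{z_m\}$, the open neighborhood $g^{-1}(U)$ of $x_\infty$ meets $\{x_k\}$ in infinitely many points, forcing $z_k=z_m$ for infinitely many $k$, a contradiction. Hence $g(X)\subseteq Y_i$ for some $i$.

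Claims \ref{firstclaim} and \ref{thirdclaim} are then short. For \ref{firstclaim}, given continuous $u\colon K\to Y$ with $K$ compact Hausdorff, \ref{secondclaim} gives a factorization $u=\jmath_i\circ\tilde u$ with $\tilde u\colon K\to Y_i$ continuous (using that $\jmath_i$ is an embedding); weak Hausdorffness of $Y_i$ makes $\tilde u(K)$ closed in $Y_i$, and since $\jmath_i$ is a closed embedding, $u(K)=\jmath_i(\tilde u(K))$ is closed in $Y$. For \ref{thirdclaim}, weak Hausdorffness is \ref{firstclaim}, and the $k$-space property is a diagram chase: if $u^{-1}(C)$ is closed for all $u\colon K\to Y$, then for each $i$ and each $v\colon K\to Y_i$ the map $\jmath_i\circ v$ shows $v^{-1}(\jmath_i^{-1}(C))$ is closed, so $\jmath_i^{-1}(C)$ is closed in the $k$-space $Y_i$; hence $C$ is closed in $Y$, and the Top-colimit already agrees with the $\CGTop$-colimit.

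Finally \ref{fourthclaim}. The maps $\Map(X,f_i)$ are again closed embeddings of compactly generated weak Hausdorff spaces, so \ref{firstclaim}–\ref{thirdclaim} apply to the source of $\Phi$, and $\Phi$ is a continuous bijection: surjectivity is exactly the factorization of \ref{secondclaim}, and injectivity follows from injectivity of $\jmath_j$ (if $\jmath_i\tilde g_i=\jmath_j\tilde g_j$ with $i\le j$, then $f_{ij}\tilde g_i=\tilde g_j$, so both represent the same colimit element). To see $\Phi^{-1}$ is continuous it suffices, since $\Map(X,Y)$ is a $k$-space, to check continuity of $\Phi^{-1}\circ h$ for each $h\colon K\to\Map(X,Y)$ with $K$ compact Hausdorff. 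Transposing via the exponential law of \cref{conv:top} gives $\hat h\colon K\times X\to Y$ with $K\times X$ compact Hausdorff; by \ref{secondclaim} it factors through some $Y_i$, and transposing back produces $h'\colon K\to\Map(X,Y_i)$ with $\Map(X,\jmath_i)\circ h'=h$, so $\Phi^{-1}\circ h$ is the continuous composite of $h'$ with the canonical map into the colimit. I expect the main obstacle to be claim \ref{secondclaim}: it is the only place where the weak Hausdorff hypothesis (through $T_1$-ness and levelwise closedness of finite sets) and the compactness of $X$ (through limit points of infinite sets) genuinely enter, and \ref{firstclaim}, \ref{thirdclaim}, and the harder half of \ref{fourthclaim} all rest on it.
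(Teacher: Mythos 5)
Your proof is correct, and its overall architecture matches the paper's (reduce to an increasing union of closed subspaces, prove the compact-image factorization, deduce the rest), but the two most substantive steps are argued differently. For (b), the paper stays entirely inside $Y$: it forms the nested tails $K_m=\{g(x_m),g(x_{m+1}),\dots\}$, shows each is closed because its intersection with every $Y_i$ is the image of a finite (hence compact) subset of $X$, and derives a contradiction from the finite intersection property inside the compact set $g(X)$. You instead build a closed discrete subspace $S$ of $Y$ out of points of strictly increasing level (using that weak Hausdorff implies $T_1$, so finite sets are closed levelwise) and pull the contradiction back to a limit point of the preimages in the compact Hausdorff space $X$. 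Both are sound; the paper's version never needs to choose a limit point or invoke $T_1$-separation of $S$ from a single point, while yours isolates more explicitly where discreteness and compactness each enter. You also invert the logical order by deducing (a) from (b) via the closed-embedding property of $\jmath_i$, where the paper proves (a) directly from $g(g^{-1}(Y_i))=g(X)\cap Y_i$; your route requires the preliminary observation that each $\jmath_i$ is a closed embedding, which you correctly verify levelwise. Finally, for (d) the paper establishes that $\Phi$ is a continuous closed bijection, whereas you show directly that $\Phi^{-1}$ is continuous by testing against maps from compact Hausdorff spaces and transposing through the exponential law; these are equivalent, and both hinge on applying (b) to the compact space $X\times K$. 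No gaps.
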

\begin{proof}
Let $\jmath_i(Y_i) \subset Y$. Then 
\[ Y \cong \bigcup_{i\geq 0} \jmath_i(Y_i) \]
with the union topology, defined so that a subset $A\subset Y$ is closed if and only if $\jmath_i(Y_i)  \cap A$ is closed in $\jmath_i(Y_i) $ for all $i\geq 0$. Furthermore, $\jmath_i \colon Y_i \to \jmath_i(Y_i) $ is a homeomorphism onto its image. So, we suppress the $\jmath_i$ and simply prove the claims for 
\[ Y \cong \bigcup_{i\geq 0} Y_i \]
with the union topology, under the assumption that 
\[Y_0 \subset Y_1 \subset Y_2 \subset \cdots\]
are the inclusions of closed subspaces.

We first show \eqref{firstclaim}, that if the $Y_i$ are weak Hausdorff, then $Y$ is weak Hausdorff. Let $X$ be a compact Hausdorff space and consider $g(X)$. We want to show that this is closed. This holds if and only if $g(X) \cap Y_i$ is closed for every $i$. But $Y_i$ is closed in $Y$ since the union is along closed embeddings, hence $g^{-1}(Y_i)$ is closed in $X$. But since $X$ is compact, $g^{-1}(Y_i)$ is compact. So,
\[g( g^{-1}(Y_i) ) =g(X) \cap Y_i\]
is closed since $Y_i$ is weak Hausdorff. Hence, $g(X)$ is closed since its intersection with each $Y_i$ is closed.

Now, we turn to \eqref{secondclaim}.  Since $X$ is compact Hausdorff, by the definition of a weak Hausdorff space, $g(X)$ is closed in $Y$ since it is the continuous image of a compact Hausdorff space.
Suppose there is no $i$ such that $g(X)\subset Y_i$. Then for every $i \geq0$, there exists $x_i \in X$ such that $g(x_i) \not\in Y_i$. 
Consider the sets
\[K_m = \{g(x_m), g(x_{m+1}), \ldots\}.\]
Note that 
\[K_0 \supset K_1 \supset K_2 \supset K_3 \supset \cdots\]
so this is a nested family. Furthermore, for any $i$
\[K_m \cap Y_i  = g(X_m)\]
for a finite subset $X_m\subset X$.
Since $X$ is Hausdorff, $X_m$ is compact, so it's image under $g$ must be closed in the weak Hausdorff space $Y_i$. Therefore, $K_m$ is closed for all $m\geq 0$. We thus have a nested sequence 
\[g(X)\supset K_0 \supset K_1 \supset K_2 \supset K_3 \supset \cdots\]
of closed subsets of a compact set.
The sequence has the finite intersection property, but the intersection $\bigcap_{m\geq 0}K_m$ is empty. This is a contradiction.  So, $g(X)\subset Y_i$ for some $i$.

Next we show \eqref{thirdclaim}. Let $A \subset Y$ be a compactly closed subset of $Y$. That is, $A$ has the property that $g^{-1}(A)$ is closed in $X$ for any continuous map $g \colon X \to Y$ for $X$ a compact Hausdorff space. We need to show that $A \cap Y_i$ is closed for any $i$. 
Let $ g\colon X \to Y_i$ be any continuous map from a compact Hausdorff space $X$. Then we can compose $g$ with the inclusion to get a continuous map $g \colon X \to Y$.
But,
\[g^{-1}(A \cap Y_i) = g^{-1}(A) \cap g^{-1}(Y_i).\]
Both these sets are closed, hence $g^{-1}(A \cap Y_i) $ is closed in $X$. Therefore, $A \cap Y_i$ is compactly closed. Since $Y_i$ is compactly generated, $A \cap Y_i$ is closed. Hence, $A$ is closed.

It remains to prove \eqref{fourthclaim}. Since the $f_i$ are closed embeddings, so are the maps $ {f_i}_* =\Map(X,f_i)$. It follows that
  $ \colim_i \Map(X, Y_i) $ is in $\CGTop$. The canonical map $\Phi$ is induced by the map of diagrams
\[\xymatrix{
\cdots \ar[r] &  \Map(X, Y_i) \ar[d]^-{\Map(X,\iota_i)} \ar[r] &  \Map(X, Y_{i+1}) \ar[r] \ar[d]^-{\Map(X,\iota_{i+1})} & \cdots  \\
\cdots \ar[r] & \Map(X,\colim Y_i) \ar[r]^-{=} &   \Map(X,\colim Y_i) \ar[r]^-{=} &  \cdots 
}\]
and so is continuous. We show that it is also a closed map. Note that a subset $A\subset Z$ of a compactly generated space $Z$ is closed if and only if $G^{-1}(A)$ is closed for every continuous map $G\colon K \to Z$ where $K$ is compact Hausdorff. Let $A$ be closed in $ \colim_{i,\Map(X,f_i)} \Map(X, Y_i) $ and consider
$\Phi (A) \subset \Map(X, \colim_{i,f_i} Y_i)$. Let $G \colon K \to  \Map(X, \colim_{i,f_i} Y_i)$ be a map from a
compact Hausdorff space $K$. Since
\[ \Map(K, \Map(X, \colim_{i,f_i} Y_i) )\cong \Map(X\times K, \colim_{i,f_i} Y_i) \]
and $X\times K$ is compact, the adjoint $g\colon X\times K\to \colim_{i,f_i} Y_i$ of $G$ has image in $Y_{i_0}$
for some $i_0$ by \eqref{secondclaim}. That means  
\[g(x,k) = G(k)(x) \in Y_{i_0} \quad \text{for all } (x,k) \in X \times K \ .\]
Therefore, we get a factorization
\[ \xymatrix{
K \ar[r]^-{G} \ar[d]_{G_0} & \Map(X, \colim_{i,f_i} Y_i) \\
\Map(X, Y_{i_0}) \ar[ur]_-{\Map(X,\iota_{i_0})} & 
}\]
Hence, 
\[G^{-1}(\Phi(A)) =G^{-1}_0 \Map(X,\iota_{i_0})^{-1}(A) \] 
Since $A$ is closed in the colimit if and only if each $\Map(X,\iota_{i})^{-1}(A) \subset \Map(X, Y_{i})$ is closed, then $\Map(X,\iota_{i_0})^{-1}(A)) \subset \Map(X, Y_{i_0})$ is closed. Since  $ \Map(X, Y_{i_0})$
 is in $\CGTop$, this implies that $G^{-1}_0(\Map(X,\iota_{i_0})^{-1}(A))$ is closed in $K$. The claim follows.
\end{proof}

We extend the quantum state type to $\sJ$ as follows. 
First, we extend the functor $\QS_\Gamma$ to $\sJ$ by defining it on an infinite dimensional 
 $\cV \in \sJ $ as
\[\QS_\Gamma(\cV) := \colim_i \QS_\Gamma( \hilbone_i) \ ,\]
where the colimit is taken over all finite dimensional subspaces
$\cH_i\subset \cV$. 
Denote by
\[ \jmath_i^\cV \colon \QS_\Gamma( \hilbone_i) \to \QS_\Gamma(\cV) \]
the inclusions in the colimit.
For $f \in \I(\cV,\cW)$ with $\cV = \colim_i  \hilbone_i$, $\cW=\colim_j  \hilbtwo_j$ and  $\hilbone_i \subset \cV$, $\hilbtwo_j \subset \cW$ finite dimensional, we can choose for every
$i$ an index $j(i)$ so that $f (\hilbone_i)\subset \hilbtwo_{j(i)}$. Then $f$ restricts to a linear isometry 
\[
  f_i \colon \hilbone_i \to \hilbtwo_{j(i)} \ .
\]
We now let $\QS_\Gamma(f) \colon \QS_\Gamma (\cV) \to \QS_\Gamma (\cW)$
be the unique continuous map such that
\[
  \QS_\Gamma(f) \circ \jmath^\cV_i = \jmath^\cW_{j(i)} \circ \QS_\Gamma(f_i)
  \quad \text{for all } i\ .
\]  
Next, we extend
\[\eta^{\QS_\Gamma} \colon \QS_\Gamma(\cV) \times \QS_\Gamma(\cW) \to \QS_\Gamma(\cV\otimes \cW) \] 
as follows. Note that
\[\cV\otimes \cW \cong \colim_{i,j} \hilbone_i\otimes  \hilbtwo_j.\]
Therefore, up to a canonical homeomorphism,
\[\colim_{i,j}\QS_\Gamma( \hilbone_i \otimes  \hilbtwo_j) \xrightarrow{\cong} \QS_\Gamma( \cV\otimes \cW)  \]
Since $\eta^{\QS_\Gamma}$ is a natural transformation, we also have a map
\[\xymatrix{ \colim_{i,j} \QS_\Gamma( \hilbone_i) \times  \QS_\Gamma(\hilbtwo_j) \ar[rr]^-{\colim_{i,j}\eta^{\QS_\Gamma}} & &  \colim_{i,j} \QS_\Gamma( \hilbone_i \otimes \hilbtwo_j) }\]
Since the natural map 
\[ \colim_{i,j} \QS_\Gamma( \hilbone_i) \times  \QS_\Gamma(\hilbtwo_j)  \xrightarrow{ \ \cong \ } \colim_{i} \QS_\Gamma( \hilbone_i) \times  \colim_{j}   \QS_\Gamma(\hilbtwo_j) \]
is a homeomorphism, composing its inverse with the previous maps gives the extension of $\eta^{\QS_\Gamma}$.

\begin{rem}
Here, it is important that we are working in compactly generated topological spaces. For any compactly generated $X$, the functor $-\times X$ is left adjoint to $\Map(X,-)$ and so commutes with colimits. These unions are directed colimits along maps with closed images. In the category of compactly generated spaces, such a colimit is again a compactly generated space as we saw in \cref{lem:CGTOPFACTS}. We can then use use 
\begin{align*}\QS_\Gamma(\cV) \times \QS_\Gamma(\cW) &\cong \colim_i( \QS_\Gamma(\hilbone_i) \times \QS_\Gamma(\cW)) \\
&\cong \colim_i(\colim_j(\QS_\Gamma( \hilbone_i)\times \QS_\Gamma(\hilbtwo_j)).
\end{align*}
\end{rem}

We claim that the extended functor $\QS_\Gamma \colon \sJ \to \CGTop$ we just constructed
is topologically enriched. To this end, we need to show that for all objects $\cV, \cW$ in $\sJ$ the map
  \begin{equation}
    \label{eq:QSfunctopenriched}
    \Phi_{\cV,\cW} \colon \I (\cV,\cW) \times \QS_\Gamma (\cV) \to \QS_\Gamma (\cW) ,
    \quad (f,\omega)  \mapsto \QS_\Gamma (f) (\omega)
  \end{equation}
  is continuous. We show this by a number of reductions.

\begin{lem}
Write $\cV = \colim_i \cH_i$ with $\cH_i$ finite dimensional. 
The diagram 
\[\xymatrix@C=4pc{ 
\I(\cV,\cW) \times \QS_\Gamma(\cH_i)  \ar[d]_{\id\times\QS_\Gamma(\jmath^{\cV}_i)}   \ar[r]^-{( - \circ \jmath_i^{\cV}) \times \id} & \I(\cH_i,\cW) \times\QS_\Gamma(\cH_i)  \ar[d]^-{\Phi_{\cH_i,\cW} } \\
 \I(\cV,\cW) \times\QS_\Gamma(\cV) \ar[r]_-{\Phi_{\cV,\cW}} &\QS_\Gamma(\cW)
}\]
then commutes for every $i$.
Furthermore, $\Phi_{\cV,\cW}$ is continuous if $\Phi_{\cH_i,\cW}$ is so for all $i$.
\end{lem}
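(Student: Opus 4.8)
The plan is to dispatch the commutativity of the square as a formal consequence of functoriality, and then to obtain the continuity reduction from the cartesian closedness of $\CGTop$. Throughout I read $\jmath_i^\cV\colon\cH_i\hookrightarrow\cV$ as the inclusion morphism in $\sJ$, so that $\QS_\Gamma(\jmath_i^\cV)\colon\QS_\Gamma(\cH_i)\to\QS_\Gamma(\cV)$ is the corresponding cocone map into the colimit and $-\circ\jmath_i^\cV$ is precomposition of isometries.

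For the square, I would chase an element $(f,\omega)\in\I(\cV,\cW)\times\QS_\Gamma(\cH_i)$. Along the top-then-right route it is sent to $\Phi_{\cH_i,\cW}(f\circ\jmath_i^\cV,\omega)=\QS_\Gamma(f\circ\jmath_i^\cV)(\omega)$, where $f\circ\jmath_i^\cV\colon\cH_i\to\cW$ is a bona fide morphism of $\sJ$. Along the left-then-bottom route it is sent to $\QS_\Gamma(f)\bigl(\QS_\Gamma(\jmath_i^\cV)(\omega)\bigr)$. These coincide because the extension of $\QS_\Gamma$ to $\sJ$ is functorial: writing $f\circ\jmath_i^\cV=\jmath_{j(i)}^\cW\circ f_i$ with $f_i\colon\cH_i\to\cK_{j(i)}$ the finite-dimensional restriction of $f$, the defining relation of $\QS_\Gamma(f)$ gives $\QS_\Gamma(f)\circ\QS_\Gamma(\jmath_i^\cV)=\QS_\Gamma(\jmath_{j(i)}^\cW)\circ\QS_\Gamma(f_i)=\QS_\Gamma(f\circ\jmath_i^\cV)$, the last equality using functoriality of $\QS_\Gamma$ on the finite-dimensional composite. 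Hence the square commutes with no further computation.

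For the continuity statement, I would invoke the closed symmetric structure of $\CGTop$ (\cref{conv:top}): the functor $\I(\cV,\cW)\times(-)$ is left adjoint to $\Map(\I(\cV,\cW),-)$ and therefore preserves colimits. Applying it to the presentation $\QS_\Gamma(\cV)=\colim_i\QS_\Gamma(\cH_i)$ (a colimit taken in $\CGTop$ along closed embeddings, hence genuinely a $\CGTop$-colimit by \cref{lem:CGTOPFACTS}) yields a homeomorphism
\[\I(\cV,\cW)\times\QS_\Gamma(\cV)\;\cong\;\colim_i\bigl(\I(\cV,\cW)\times\QS_\Gamma(\cH_i)\bigr),\]
whose structure maps are precisely the left vertical arrows $\id\times\QS_\Gamma(\jmath_i^\cV)$ of the square. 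By the universal property of the colimit, $\Phi_{\cV,\cW}$ is continuous if and only if each composite $\Phi_{\cV,\cW}\circ\bigl(\id\times\QS_\Gamma(\jmath_i^\cV)\bigr)$ is; and by the commutativity just established this composite equals $\Phi_{\cH_i,\cW}\circ\bigl((-\circ\jmath_i^\cV)\times\id\bigr)$.

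It then suffices to check that this last composite is continuous. The precomposition map $-\circ\jmath_i^\cV\colon\I(\cV,\cW)\to\I(\cH_i,\cW)$ is continuous, being the restriction to spaces of isometries of the continuous precomposition $\Map(\cV,\cW)\to\Map(\cH_i,\cW)$ along the fixed morphism $\jmath_i^\cV$ (continuity of composition is part of the closed symmetric structure of $\CGTop$). Composing with $\Phi_{\cH_i,\cW}$, which is continuous by hypothesis, shows each composite is continuous, and hence so is $\Phi_{\cV,\cW}$. I expect no serious obstacle: the only load-bearing facts, which merely need to be stated with care, are that the colimit defining $\QS_\Gamma(\cV)$ runs along closed embeddings (so that \cref{lem:CGTOPFACTS} and the left-adjoint property legitimately apply and the product carries the colimit topology) and that the induced cocone on the product is literally $\id\times\QS_\Gamma(\jmath_i^\cV)$.
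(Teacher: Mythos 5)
Your proof is correct and follows essentially the same route as the paper: the square commutes by functoriality of the extended $\QS_\Gamma$ (the paper simply calls this ``obvious''), and the continuity reduction comes from the homeomorphism $\colim_i\bigl(\I(\cV,\cW)\times\QS_\Gamma(\cH_i)\bigr)\cong\I(\cV,\cW)\times\QS_\Gamma(\cV)$ supplied by the left-adjoint property of $-\times X$ in $\CGTop$, together with the universal property of the colimit and continuity of precomposition along $\jmath_i^\cV$. Your element chase for the commutativity and your explicit care about the colimit running along closed embeddings are welcome elaborations but do not change the argument.
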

\begin{proof}
  That the diagram commutes is obvious.  First we show that  $\Phi_{\cV,\cW}$ is continuous if and only if
  $\Phi_{\cV,\cW} \circ \big(\id\times\QS_\Gamma(\jmath^\cV_i)\big)$ is continuous for all $i$.
  If $\Phi_{\cV,\cW}$ is continuous, then since $\QS_\Gamma(\jmath^\cV_i)$ is continuous, so is
  $\Phi_{\cV,\cW} \circ \big( \id\times\QS_\Gamma(\jmath^\cV_i)\big)$. Furthermore,
\begin{align*}
\colim_i \left( \I(\cV,\cW) \times \QS_\Gamma(\cH_i)\right) &\xrightarrow{\cong}  \I(\cV,\cW) \times \colim_i\QS_\Gamma(\cH_i) \\
&=  \I(\cV,\cW) \times\QS_\Gamma(\cV)   .
\end{align*}
But a map out of $\colim_i \left( \I(\cV,\cW) \times \QS_\Gamma(\cH_i)\right) $ is continuous if and only if it is continuous when restricted to each of the $\I(\cV,\cW) \times\QS_\Gamma(\cH_i) $.
Therefore, $\Phi_{\cV,\cW}$ is continuous if and only if
$\Phi_{\cV,\cW} \circ \big( \id\times\QS_\Gamma(\jmath^\cV_i)\big)$ is continuous.
But 
\[ ( - \circ \jmath_i^\cV)  \colon \I(\cV,\cW)  \to \I(\cH_i,\cW)   \]
is continuous, so provided that $\Phi_{\cH_i,\cW}$  is continuous, then so is
\[ \Phi_{\cH_i,\cW}\circ \big( (- \circ \jmath_i^\cV ) \times \id\big) =
  \Phi_{\cV,\cW} \circ \big( \id\times\QS_\Gamma(\jmath^\cV_i)\big) \ .  \]
This proves the claims.
\end{proof}

So, we have reduced the problem to proving that 
\[\Phi_{\cH_i,\cW} \colon \I(\cH_i,\cW) \times \QS_\Gamma(\cH_i)  \to \QS_\Gamma(\cW) \]
is continuous.

  \begin{lem}
  Write $\cW = \colim_j \cK_j$ with $\cK_j$ finite dimensional. 
  Then for every finite dimensional $\cH$ the map $\Phi_{\cH,\cW} $ is continuous provided that,
  for each $j$, $\Phi_{\cH,\cK_j}$ is continuous. 
\end{lem}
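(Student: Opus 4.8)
The plan is to reduce the statement to the finite-dimensional targets $\cK_j$ by exploiting the colimit decomposition of the isometry space together with the fact that products in $\CGTop$ commute with colimits. Since $\cH$ is finite dimensional, equation~\eqref{eq:colimcommutesembeddings} provides a homeomorphism $\colim_j \I(\cH,\cK_j) \xrightarrow{\cong} \I(\cH,\cW)$, where the colimit runs along the closed embeddings induced by the inclusions $\cK_j \hookrightarrow \cK_{j+1}$. Because $\QS_\Gamma(\cH)$ is an object of $\CGTop$, the functor $-\times \QS_\Gamma(\cH)$ is left adjoint to $\Map(\QS_\Gamma(\cH),-)$ and hence preserves colimits, so that
\[\I(\cH,\cW) \times \QS_\Gamma(\cH) \cong \colim_j\bigl(\I(\cH,\cK_j)\times \QS_\Gamma(\cH)\bigr).\]

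First I would record that, by the universal property of this colimit, the map $\Phi_{\cH,\cW}$ is continuous if and only if each of its restrictions along the canonical maps $\I(\cH,\cK_j)\times\QS_\Gamma(\cH) \to \I(\cH,\cW)\times \QS_\Gamma(\cH)$ is continuous. Next I would identify these restrictions explicitly. Writing $\iota_j \colon \cK_j \hookrightarrow \cW$ for the inclusion, a point $f$ of $\I(\cH,\cK_j)$ maps to $\iota_j\circ f \in \I(\cH,\cW)$, and functoriality of $\QS_\Gamma$ gives $\QS_\Gamma(\iota_j \circ f) = \QS_\Gamma(\iota_j)\circ \QS_\Gamma(f) = \jmath_j^\cW \circ \QS_\Gamma(f)$, since $\QS_\Gamma(\iota_j)$ is by construction the colimit inclusion $\jmath_j^\cW$. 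Evaluating on $\omega$ then shows that the $j$-th restriction equals the composite $\jmath_j^\cW \circ \Phi_{\cH,\cK_j}$.

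This composite is continuous: $\Phi_{\cH,\cK_j}$ is continuous by hypothesis and $\jmath_j^\cW \colon \QS_\Gamma(\cK_j)\to \QS_\Gamma(\cW)$ is continuous as a structure map into the colimit. Since every restriction is continuous, $\Phi_{\cH,\cW}$ is continuous, which proves the lemma. The only genuinely delicate point is that all the colimit manipulations must be performed inside $\CGTop$; in particular, the identification of $\I(\cH,\cW)\times\QS_\Gamma(\cH)$ with the colimit of the spaces $\I(\cH,\cK_j)\times \QS_\Gamma(\cH)$ rests on $\QS_\Gamma(\cH)$ being compactly generated so that $-\times\QS_\Gamma(\cH)$ is cocontinuous, and on the maps $\I(\cH,\cK_j)\to\I(\cH,\cK_{j+1})$ being closed embeddings (which follows from $\cK_j$ being closed in $\cK_{j+1}$), so that \cref{lem:CGTOPFACTS} guarantees the topological colimit agrees with the one formed in $\CGTop$. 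Everything else is a formal consequence of functoriality and the universal property of colimits.
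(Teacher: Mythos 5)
Your proposal is correct and follows essentially the same route as the paper's own proof: both use the identification $\colim_j \I(\cH,\cK_j) \cong \I(\cH,\cW)$ from \eqref{eq:colimcommutesembeddings}, commute the product with the colimit in $\CGTop$, and observe that each restriction is the continuous composite $\QS_\Gamma(\jmath_j^\cW)\circ\Phi_{\cH,\cK_j}$. Your added remarks about why the colimit manipulations are legitimate in $\CGTop$ are accurate but match what the paper leaves implicit.
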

\begin{proof}
  By \eqref{eq:colimcommutesembeddings}, we have a natural isomorphism 
  \[ \colim_j \I(\cH, \cK_j)  \xrightarrow{\cong}\I(\cH, \colim_j \cK_j) = \I(\cH ,\cW).\]
  Therefore, we have isomorphisms
  \[\colim_j (\I(\cH , \cK_j)  \times \QS_\Gamma (\cH) ) \xrightarrow{\cong}
    (\colim_j \I(\cH, \cK_j) ) \times \QS_\Gamma (\cH) \xrightarrow{\cong} \I(\cH, \cW) \times  \QS_\Gamma (\cH)\ . \]
  But a map out of the colimit is continuous if and only if each of its restrictions is continuous,
  so we need to know that the maps
  \[\I(\cH, \cK_j)  \times \QS_\Gamma(\cH) \to \QS_\Gamma (\cW)\]
  given by
  \[ (f, \omega) \mapsto \QS_\Gamma (\jmath^\cW_j \circ f)(\omega) =
     \QS_\Gamma (\jmath^\cW_j)\circ \QS_\Gamma (f) (\omega) =\QS_\Gamma(\jmath^\cW_j) \circ \Phi_{\cH,\cK_j} (f,\omega)    \]
   are continuous. Since the maps $\QS_\Gamma(\jmath^\cW_j)$ and $\Phi_{\cH,\cK_j}$
   are continuous, the claim follows.
\end{proof}

Altogether, we have thus proved the following main result of this section. 

\begin{prop}\label{prop:extend}
  Given any quantum state type $\QS_\Gamma$, the above construction gives a
  topologically enriched extension
   \[\QS_\Gamma \colon \sJ \to \CGTop\] 
  and an extended natural transformation $\eta^{\QS_\Gamma}$ 
\[\eta^{\QS_\Gamma} \colon \times \circ \QS_\Gamma   \to \QS_\Gamma\circ \otimes \]
between functors
\[ \times \circ \QS_\Gamma, \QS_\Gamma\circ \otimes  \colon (\sJ)^2 \to \CGTop.\]
\end{prop}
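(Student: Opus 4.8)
The construction preceding this proposition already prescribes $\QS_\Gamma$ on objects and morphisms of $\sJ$ and defines the extended transformation $\eta^{\QS_\Gamma}$; what remains is to check that these assignments are well defined independent of the auxiliary choices, that they assemble into a functor, that this functor is topologically enriched, and that $\eta^{\QS_\Gamma}$ is natural. The plan is to reduce every assertion to the finite-dimensional case, where it holds by \cref{defn:quantumstatetype} and \cref{thm:uniquantstatetype}, invoking throughout the universal property of directed colimits and the fact, established in \cref{lem:CGTOPFACTS}, that in $\CGTop$ these colimits are formed along closed embeddings and commute with products.

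First I would verify that $\QS_\Gamma(f)$ does not depend on the chosen indices $j(i)$ with $f(\cH_i) \subset \cK_{j(i)}$. If $j'(i)$ is a second admissible choice, pick $l$ with $\cK_{j(i)}, \cK_{j'(i)} \subset \cK_l$; then functoriality of $\QS_\Gamma$ on $\fHilbo$ together with the compatibility of the inclusions $\jmath^\cW$ with the colimit structure shows that the two candidate composites both equal $\jmath^\cW_l \circ \QS_\Gamma(\tilde f_i)$, where $\tilde f_i \colon \cH_i \to \cK_l$ is the corestriction of $f$. Hence the induced map out of $\QS_\Gamma(\cV) = \colim_i \QS_\Gamma(\cH_i)$ is unambiguous. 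Functoriality then follows by uniqueness of maps out of a colimit: $\QS_\Gamma(\id)$ and $\id_{\QS_\Gamma(\cV)}$ satisfy the same defining equations on each $\jmath^\cV_i$, and likewise $\QS_\Gamma(g \circ f)$ agrees with $\QS_\Gamma(g) \circ \QS_\Gamma(f)$ on each $\jmath^\cV_i$ because the corresponding identity already holds at the finite-dimensional level.

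For the topological enrichment I would simply assemble the two reduction lemmas proved above: together they show that $\Phi_{\cV,\cW}$ is continuous once $\Phi_{\cH,\cK}$ is continuous for all finite-dimensional $\cH, \cK$. But for finite-dimensional arguments $\QS_\Gamma(\cH) \subset \sP_\Gamma(\cH)$ and $\QS_\Gamma(\cK) \subset \sP_\Gamma(\cK)$ are subspaces and $\QS_\Gamma(f)$ is, by \cref{defn:quantumstatetype}, the restriction of $\sP_\Gamma(f)$; thus $\Phi_{\cH,\cK}$ is the restriction of the analogous evaluation map for $\sP_\Gamma$ to $\I(\cH,\cK) \times \QS_\Gamma(\cH)$ with codomain $\QS_\Gamma(\cK)$. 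The latter is continuous: \cref{thm:uniquantstatetype} shows $f \mapsto \sP_\Gamma(f)$ is continuous into $\Map(\sP_\Gamma(\cH), \sP_\Gamma(\cK))$, which by the exponential adjunction of \cref{conv:top} is equivalent to continuity of the evaluation map. Since restricting a continuous map to subspaces of source and target preserves continuity, $\Phi_{\cH,\cK}$ is continuous and the enrichment follows.

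It remains to treat $\eta^{\QS_\Gamma}$, again by descent. The extended transformation is obtained from $\colim_{i,j}\eta^{\QS_\Gamma}$ after the identifications $\QS_\Gamma(\cV) \times \QS_\Gamma(\cW) \cong \colim_{i,j}(\QS_\Gamma(\cH_i) \times \QS_\Gamma(\cK_j))$ and $\colim_{i,j}\QS_\Gamma(\cH_i \otimes \cK_j) \cong \QS_\Gamma(\cV \otimes \cW)$, both of which rest on the preceding remark, on the symmetric monoidal structure of $\sJ$, and on the fact that $-\times X$ preserves these colimits in $\CGTop$. Independence of the chosen exhaustions, and the naturality square for a pair $(f,g)$ of isometries, then reduce after passage to colimits to the naturality of $\eta^{\QS_\Gamma}$ on $\fHilbo$ recorded in \cref{defn:quantumstatetype}. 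I expect the main obstacle to be precisely this final bookkeeping: one must confirm that all of the colimit identifications are mutually compatible and compatible with the transition maps, so that the separate universal properties combine into a single commuting diagram. The genuine topological input — that products commute with directed colimits along closed embeddings — has already been isolated in \cref{lem:CGTOPFACTS}, so no further point-set difficulty should arise.
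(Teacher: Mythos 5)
Your proposal follows essentially the same route as the paper: the paper's proof of this proposition is simply the assembly of the preceding construction, the remark that products commute with the relevant colimits, and the two reduction lemmas, exactly as you describe. You additionally make explicit two points the paper leaves implicit (independence of the choice of indices $j(i)$, and the finite-dimensional base case $\Phi_{\cH,\cK}$ obtained by restricting the enriched structure of $\sP_\Gamma$ from \cref{thm:uniquantstatetype}), which is correct and harmless.
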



\section{Formal properties of quantum state types}\label{sec:homotopical}

In this section, we deduce various formal properties of a quantum state type $(\QS_\Gamma, \etagQ)$. First, we define our space of ``probes'' as follows:
\begin{defn}\label{defn:parameterspace}
A \emph{parameter space} is a compact Hausdorff space  $X$ which is homotopy equivalent to a CW complex. A pair consisting of a parameter space $X$ and a closed embedding $x_0 \colon \pt \to X$ whose image is a neighborhood deformation retract is called a \emph{based parameter space}. 
\end{defn}
Our goal will be to relate
\[[X,\QS_\Gamma(\cV)]\]
for certain infinite dimensional objects $\cV \in \sJ$ with phases of quantum systems of type $\QS_\Gamma$ parametrized by $X$.

\subsection{Cellularizing quantum state types}\label{sec:CWreplacements}
We want to study quantum state types from a homotopical perspectives. To this end, it useful to put ourselves in a slightly more friendly setting for homotopy theory by performing a CW-replacement. The reader who is happy to just accept that we can implicitly replace a quantum state type so that $\QS_\Gamma(\cH)$ are CW-complexes and $\QS_\Gamma(f)$ are inclusions of subcomplexes can skip this section.

\bigskip

Let $\CWTop$ be the full subcategory of $\CGTop$ whose objects are compactly generated spaces that are homotopy equivalent to CW-complexes. 
We consider the functor
\[\rep\colon \CGTop \to \CWTop \subset \CGTop \]
obtained as the composite
\[\rep(X) = |\Sing_\bullet(X)|\]
where 
\[\xymatrix{   |-| : \sSets   \ar@<1ex>[r] & \CGTop:  \Sing_\bullet  \ar@<1ex>[l] }\]
for $\sSets$ the category of simplicial sets, $\Sing_\bullet$ the singular simplicial complex functor, and $|-|$ the geometric realization functor. See, for example, \cite{Maysset} for a thorough treatment on simplicial sets. We have also provided some minimal background in \cref{sec:simplicial}.

The pair of functors $(|-|, \Sing_\bullet)$ form an adjunction with $|-|$ the left adjoint, so that
\[ \Hom_{\CGTop}( |X_\bullet|, Y) \cong \Hom_{\sSets}(X_\bullet, \Sing_\bullet(Y)) .\]
Furthermore, the co-unit of the adjunction
\[\varepsilon_X \colon \rep X \to X\]
is a \emph{weak equivalence} in the following sense.

\begin{defn}\label{defn:weakequivalence}
A map $f\colon Y \to Z$ is called a \emph{weak equivalence} if it induces a bijection on path components, and an isomorphism on all homotopy groups with respect to any choice of base point in $Y$. We say that two spaces are weakly equivalent if there is a zig-zag of weak equivalences connecting them. 
\end{defn}

\begin{rem}\label{rem:facts}
By the Whitehead Theorem, if $X$ has the homotopy type of a CW-complex, and $Y \to Z$ is a weak equivalence, then $f$ induces an isomorphism
\[[X, f] \colon [X,Y ] \xrightarrow{ \ \cong \ } [X,Z].\]
\end{rem}

The functor $\Sing_\bullet$ preserves all limits since it is a right adjoint and $|-|$ preserves finite limits (this is a non-formal property of that functor).  Therefore, for any $Y$ and $Z$, there is a natural isomorphism
\[\gamma_{X,Y} \colon \rep (Y\times Z)   \xrightarrow{\cong}   \rep Y \times \rep Z .\]

On the other hand, the functor $|-|$ preserves all colimits (since it is a left adjoint).
 The functor $\Sing_\bullet$ is defined using $\Map(\Delta_n, -)$ where $\Delta_n$ is the standard $n$-simplex. Since this is a compact Hausdorff and connected space, it follows that $\Sing_\bullet$ preserves coproducts, as well as colimits over diagrams in $\CGTop$ which have a countable cofinal subset, and whose maps are closed embeddings. See \cref{lem:CGTOPFACTS}. Again, these are not formal properties, but rather follow from the definition of $\Sing_\bullet$. 
 
In particular, for any collection of compactly generated spaces $Y_i$, there is a natural isomorphism
\[ \coprod_{i} \rep Y_i   \xrightarrow{\cong}  \rep\left(\coprod_{i} Y_i \right) ,\]
and for any diagram $\colim_{i,f_i} Y_i$  in $\CGTop$  indexed on the non-negative integers for which the maps $f_i$ are closed embeddings, a natural isomorphism
\[ \colim_{i,\rep f_i}  \rep Y_i  \xrightarrow{\cong}   \rep \left(\colim_{i,f_i} Y_i\right) . \]

Therefore, we have the following comparison result.
\begin{lem}
If $X$ is a parameter space, and $\colim_{i,f_i} Y_i$ in $\CGTop$ is a colimit over a diagram indexed on the non-negative integers for which the maps $f_i$ are closed embeddings,
 then there are isomorphisms
\begin{align*} 
\colim_{i,\rep f_i} [X,    \rep Y_i]  &\xrightarrow{\cong}  [X,   \colim_{i, \rep f_i}  \rep Y_i]  \\
&\xrightarrow{\cong}  [X, \rep \colim_{i, f_i} Y_i]  \\
& \xrightarrow{\cong} [X, \colim_{i,f_i} Y_i]  \xleftarrow{\cong} \colim_{i,f_i} [X,  Y_i] 
\end{align*}
which are natural in $X$. 
\end{lem}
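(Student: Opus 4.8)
The plan is to verify each of the four displayed arrows in turn, writing throughout $[X,Z]=\pi_0\Map(X,Z)$ so that every bracket is the set of path components of a mapping space. The two outer arrows (the first, and the last one pointing into $[X,\colim_{i,f_i}Y_i]$) are both instances of the single assertion that $[X,-]$ commutes with a sequential colimit taken along closed embeddings. The key input is \cref{lem:CGTOPFACTS}, part (d), which supplies a homeomorphism $\colim_i\Map(X,Y_i)\xrightarrow{\cong}\Map(X,\colim_{i,f_i}Y_i)$ whenever the $f_i$ are closed embeddings of compactly generated spaces, together with the fact, recorded inside the proof of that lemma, that the transition maps $\Map(X,f_i)$ are then themselves closed embeddings. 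After applying $\Map(X,-)$ the problem reduces to the one genuinely non-formal point: that $\pi_0$ commutes with a sequential colimit along closed embeddings. I would prove this directly. Surjectivity of $\colim_i\pi_0 Z_i\to\pi_0\colim_i Z_i$ is immediate because $\colim_i Z_i=\bigcup_i Z_i$ as a set; for injectivity one uses that any path $[0,1]\to\colim_i Z_i$ has compact Hausdorff domain, so by \cref{lem:CGTOPFACTS}, part (b), its image lies in a single stage $Z_k$, whence two points identified in the colimit are already connected at a finite stage. Applying this with $Z_i=\Map(X,Y_i)$ yields the fourth arrow $\colim_{i,f_i}[X,Y_i]\xrightarrow{\cong}[X,\colim_{i,f_i}Y_i]$.

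For the first arrow I would run the identical argument with the system $(\rep Y_i,\rep f_i)$ in place of $(Y_i,f_i)$. The only extra thing to check is that each $\rep f_i=|\Sing_\bullet(f_i)|$ is again a closed embedding. Since a closed embedding $f_i$ is in particular injective, $\Sing_\bullet(f_i)$ is a monomorphism of simplicial sets, exhibiting $\Sing_\bullet(Y_i)$ as a simplicial subset of $\Sing_\bullet(Y_{i+1})$; geometric realization then carries this inclusion to the inclusion of a subcomplex, which is a closed embedding. With the $\rep f_i$ confirmed to be closed embeddings between compactly generated spaces (the $\rep Y_i$ are CW complexes, hence lie in $\CGTop$), \cref{lem:CGTOPFACTS}, part (d), and the $\pi_0$-argument above give $\colim_{i,\rep f_i}[X,\rep Y_i]\xrightarrow{\cong}[X,\colim_{i,\rep f_i}\rep Y_i]$.

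The middle two arrows are then formal consequences of facts already established before the lemma. The second arrow is simply $[X,-]$ applied to the natural isomorphism $\colim_{i,\rep f_i}\rep Y_i\cong\rep\colim_{i,f_i}Y_i$, which is precisely the commutation of $\rep$ with sequential colimits along closed embeddings noted in the discussion preceding the statement (it follows from $|-|$ preserving all colimits and $\Sing_\bullet$ preserving these particular ones). The third arrow is induced by the counit $\varepsilon\colon\rep\colim_{i,f_i}Y_i\to\colim_{i,f_i}Y_i$, which is a weak equivalence; because $X$ is a parameter space, and therefore has the homotopy type of a CW complex, \cref{rem:facts} (Whitehead's theorem) shows that $[X,\varepsilon]$ is a bijection. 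Naturality in $X$ of all four maps is automatic, since each is assembled from the natural constructions $\Map(X,-)$, $\pi_0$, the counit $\varepsilon$, and the canonical comparison maps into colimits. I expect the main (albeit modest) obstacle to be the bookkeeping around the $\pi_0$-commutes-with-colimit step and the verification that $\rep$ preserves closed embeddings; everything else is a direct invocation of \cref{lem:CGTOPFACTS}, \cref{rem:facts}, and the preservation properties of $\rep$.
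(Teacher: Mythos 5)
Your proof is correct and follows essentially the same route the paper intends: the lemma is stated there as a direct consequence of the preceding observations ($\rep$ commutes with sequential colimits along closed embeddings, the counit $\varepsilon$ is a weak equivalence combined with \cref{rem:facts}, and \cref{lem:CGTOPFACTS}), and you assemble exactly these ingredients. The two points you flesh out that the paper leaves implicit --- that $\rep f_i$ is again a closed embedding via realization of a simplicial subset inclusion, and that $\pi_0$ commutes with these colimits because $\pt$ and $\interval$ are compact Hausdorff (an argument the paper only records later, in the proof of \cref{thm:jquotientiso}) --- are both handled correctly.
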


We can use the functor $\rep$ to replace any quantum state type $\QS_\Gamma$ with a CW-valued topologically enriched functor
\[\rep \circ \QS_\Gamma \colon \sJ \to \CWTop\]
and natural transformations
\[\rep(\eta^{\QS_\Gamma}_{\cH, \cK}) \circ \gamma_{ \QS_\Gamma(\cH),  \QS_\Gamma(\cK)}^{-1}  \colon  \rep\QS_\Gamma(\cH) \times \rep \QS_\Gamma(\cK) \to \rep\QS_\Gamma(\cH\otimes \cK)  .\]
The replacement comes equipped with weak equivalences
\[\rep\QS_\Gamma(\cV) \xrightarrow{\ \simeq \ } \QS_\Gamma(\cV)\]
for all $\cV \in \sJ$, and satisfies the property that, if 
\[\cV = \colim_{i,f_i} \cH_i \in \sJ\]
where the $\cH_i$ are finite dimensional and the $f_i$ are linear isometries, then
\begin{align*} 
\colim_{i,\rep\QS_\Gamma( f_i)} [X,    \rep \QS_\Gamma(\cH_i)]  &\xrightarrow{\cong}  [X,  \colim_{i,\rep\QS_\Gamma( f_i)}  \rep \QS_\Gamma(\cH_i)]  \\
&\xrightarrow{\cong}  [X, \rep \QS_\Gamma(\cV)]  \\
& \xrightarrow{\cong} [X, \QS_\Gamma(\cV)]  \xleftarrow{\cong} \colim_{i,\QS_\Gamma(f_i)} [X,  \QS_\Gamma(\cH_i)] .
\end{align*}

\begin{rem}
If $\QS_\Gamma$ is such that $\QS_\Gamma(\cH)$ is a CW-complex for all $\cH\in \fHilb$ and $\QS_\Gamma(f) \colon \QS_\Gamma(\cH) \to \QS_\Gamma(\cK)$ is the inclusion of a sub-complex for all morphism $f \colon \cH \to \cK$ in $\fHilb$, the replacement is not necessary and in fact, in that case,
\[\varepsilon_{ \QS_\Gamma(\cH)} \colon \rep \QS_\Gamma(\cH) \to  \QS_\Gamma(\cH) \]
is a homotopy equivalence by the Whitehead Theorem. 
We do not know if this holds for the universal quantum state type $\sP_\Gamma$ in the case when $\Gamma$ is infinite.
\end{rem}

\begin{convention}
For the remainder of \cref{sec:homotopical}, we implicitly replace all quantum state types with $\rep \QS_\Gamma$, but we suppress $\rep$ from the notation.
\end{convention}

\subsection{The classifying spaces for quantum systems and phases}In this section, we introduce our classifying spaces.

We choose an object $\cH$ of $\fHilbo$ of dimension at least two and a pure state $\psi$ of $\cH$. 
We (temporarily) choose a unit vector $u \in \psi$. We let 
\[\ff_i \colon \cH^{\otimes i} \to \cH^{\otimes (i+1)}\] 
be defined by
\[\ff_i(x) =x \otimes u   .\]
This is a linear isometric embedding, and so we get maps
\[\QS_{\Gamma}(\ff_i)\colon  \QS_{\Gamma} (\cH^{\otimes i}) \to \QS_{\Gamma} (\cH^{\otimes (i+1)}).\]
 If we also denote by $\psi$ the factorized state uniquely determined by 
\[ \psi(A) = \prod_{v\in \Lambda} \langle \psi, A_v\psi\rangle \quad \text{for} \quad A=\bigotimes_{v\in \Lambda}A_v \in \fA_\Lambda(\cH)\]
then 
\[\QS_{\Gamma}(\ff_i)(\omega) = \omega\otimes \psi\]
and so does not depend on the choice of vector $u$ we used to define $\ff_i$.

\begin{defn}\label{defn:classifyingspaces}
For $\cH^{\otimes i}$, $\ff_i$ as above and $\QS_\Gamma$ a quantum state type,
 let
\[\QS_{\Gamma}^{\sqcup}(\cH):= \coprod_{i\geq 0} \QS_\Gamma(\cH^{\otimes i})\]
and 
\[\mu_i \colon \QS_\Gamma(\cH^{\otimes i})  \to \QS_{\Gamma}^{\sqcup}(\cH)\] be the inclusions.
We base the space $\QS_{\Gamma}^{\sqcup}(\cH)$ at $\mu_0$.\footnote{Since we implicitly replaced $\QS_\Gamma$ by $\rep \QS_\Gamma$, this base point   is the inclusion of a zero-cell and so is non-degenerate. This holds for all our choices of based points in this section.}
Let
\begin{align*}  \QS_\Gamma(\cH^{\otimes \infty}) &:=  \QS_\Gamma(\colim_{i,\ff_i} \cH^{\otimes i})  \cong \colim_{i,\QS_\Gamma( \ff_i)}   \QS_\Gamma(\cH^{\otimes i}) 
\end{align*}
and
\[\jmath_i \colon \QS_\Gamma(\cH^{\otimes i})  \to \QS_\Gamma(\cH^{\otimes \infty})  \]
be the inclusion of $\QS_\Gamma(\cH^{\otimes i})  $ in the colimit. We base $\QS_{\Gamma}(\cH^{\otimes \infty})$ at $\jmath_0$.
Finally,  let
\[\jmath = \coprod \jmath_i \colon \QS_\Gamma^{\sqcup}(\cH)  \to \QS_\Gamma(\cH^{\otimes \infty}).\]
\end{defn}

\begin{rem}
In the colimit defining $  \QS_\Gamma(\cH^{\otimes \infty})$, the base point is identified with each
\[\psi_i := \bigotimes_{v\in \Gamma}\psi^{\otimes i}  \in  \QS_\Gamma(\cH^{\otimes i}).\]
The spaces $\QS_\Gamma^{\sqcup}(\cH) $ and $\QS_\Gamma(\cH^{\otimes \infty}) $ depend on the choice of vacuum $\psi$ but we have decided not to include that dependence in our notation.
\end{rem}

Both spaces $\QS_\Gamma^\sqcup(\cH) $ and $\QS_\Gamma(\cH^{\otimes \infty}) $  are going to have additional structure. That is, we will show that they are homotopy commutative and associative $H$-spaces. That is, they have a product which is unital, associative and commutative up to homotopy.
Furthermore, the map $\jmath$ will be a map of $H$-spaces in the sense that it will preserve the product and units up to homotopy. Background on these concepts is given in \cref{sec:topmonhspace}. 

Before jumping into the work we need to show these claims, we make the definition that motivates our study of these spaces.

\begin{defn}\label{defn:classifying}
For a parameter space $X$, we define:
\begin{enumerate}[(a)]
\item A \emph{parametrized system} of quantum state type $\QS_\Gamma$ modeled on $(\cH, \psi)$ is a continuous function
\[ \omega \colon X \to \QS_\Gamma^{\sqcup}(\cH).\]
The space of parametrized systems is $\Map(X, \QS_\Gamma^{\sqcup}(\cH))$ equipped with the $k$-ification of the compact-open topology.
\item The parametrized system is called \emph{trivial} if it is constant with value the completely factorized state $\psi_i$ for some $i\geq 0$.  We abuse notation and call this constant function $\psi_i$.
\item A \emph{quantum phase parametrized by $X$} of state type $\QS_\Gamma$ modeled on $(\cH, \psi)$ is the homotopy class of a map
\[ \varphi \colon X \to \QS_\Gamma(\cH^{\otimes \infty}).\]
The set of parametrized quantum phases is $[X,  \QS_\Gamma(\cH^{\otimes \infty})]$, the set of homotopy classes maps.
\end{enumerate}
\end{defn}

\begin{rem}
In \cref{defn:classifying}, $X$ is not assumed to be based and the maps and homotopies are not based.
\end{rem}

\subsection{Stacking structure}

The goal for this section is to define the $H$-space structures for  $ \QS_\Gamma^{\sqcup}(\cH)$ and $\QS_\Gamma(\cH^{\otimes \infty})$ coming from stacking (or tensor product in this case) and to discuss the relationship between the two spaces.   Background on topological monoids and $H$-spaces is given in \cref{sec:topmonhspace}.

We start with  $ \QS_\Gamma^{\sqcup}(\cH)$.
\begin{prop}\label{prop:QStopmonoid}
The natural transformation $\etagQ$ endows $\QS_{\Gamma}^{\sqcup}(\cH)$ with the structure of a strictly associative topological monoid through the pairing
\[ \QS_{\Gamma}^{\sqcup}(\cH)\times \QS_{\Gamma}^{\sqcup}(\cH) \cong  \coprod_{i,j\geq 0} \QS_\Gamma(\cH^{\otimes i})\times \QS_\Gamma(\cH^{\otimes j}) \to \QS_{\Gamma}^{\sqcup}(\cH)\]
 given by
\[ (\omega_i,\omega_j)\mapsto \etagQ(\omega_i,\omega_j) \in  \QS_{\Gamma}(\cH^{\otimes (i+j)}).\]
The unit is the state
\[\psi_0 \in \QS_\Gamma(\C).\] 
The topological monoid $\QS_{\Gamma}^{\sqcup}(\cH)$ is homotopy commutative, but not strictly commutative. 
\end{prop}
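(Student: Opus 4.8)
The plan is to derive every assertion from the coherence data of the lax monoidal natural transformation $\etagQ$, supplemented by a single geometric fact: the path-connectedness of the unitary groups $\Unitary(\cH^{\otimes n})$. First I would fix once and for all the strict model of the tensor powers, $\cH^{\otimes 0}=\C$ and $\cH^{\otimes(n+1)}=\cH^{\otimes n}\otimes\cH$, together with the canonical unitary identifications $c_{i,j}\colon\cH^{\otimes i}\otimes\cH^{\otimes j}\to\cH^{\otimes(i+j)}$ given by concatenating tensor legs in order. The product on $\QS_{\Gamma}^{\sqcup}(\cH)$ is then the coproduct over $(i,j)$ of the composites $\QS_\Gamma(c_{i,j})\circ\etagQ_{\cH^{\otimes i},\cH^{\otimes j}}$, followed by the reindexing $(i,j)\mapsto i+j$. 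Continuity is immediate: in $\CGTop$ the functor $-\times Y$ is a left adjoint, so it commutes with coproducts, giving a homeomorphism $\QS_{\Gamma}^{\sqcup}(\cH)\times\QS_{\Gamma}^{\sqcup}(\cH)\cong\coprod_{i,j}\QS_\Gamma(\cH^{\otimes i})\times\QS_\Gamma(\cH^{\otimes j})$; each $\etagQ$ is a morphism of $\CGTop$, each $\QS_\Gamma(c_{i,j})$ is continuous, and reindexing is continuous on a coproduct.

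Strict associativity I would read off from the associativity coherence of the lax monoidal functor $(\QS_\Gamma,\etagQ)$, which makes $\etagQ\circ(\etagQ\times\id)$ and $\etagQ\circ(\id\times\etagQ)$ agree up to the associators of $(\fHilbo,\otimes)$ and of $(\CGTop,\times)$. The key observation is that the concatenation identifications $c_{i,j}$ are designed to \emph{absorb} the associator of $\fHilbo$: the two bracketings both present $\cH^{\otimes i}\otimes\cH^{\otimes j}\otimes\cH^{\otimes k}$ via the \emph{same} order-preserving identification with $\cH^{\otimes(i+j+k)}$, since $c_{i+j,k}\circ(c_{i,j}\otimes\id)=c_{i,j+k}\circ(\id\otimes c_{j,k})\circ\alpha$. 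Thus both bracketings of $(\omega_i,\omega_j,\omega_k)$ yield the single state $\omega_i\otimes\omega_j\otimes\omega_k$ transported along that identification. For the unit, I note that $\fA_\Gamma(\C)\cong\C$ forces $\QS_\Gamma(\C)=\sP_\Gamma(\C)$ to be the one-point space $\{\psi_0\}$, and the unit coherence for $\etagQ$ together with $c_{0,i}=c_{i,0}=\id$ gives $\etagQ(\psi_0,\omega)=\omega=\etagQ(\omega,\psi_0)$, so $\psi_0$ is a strict two-sided unit. I expect this bookkeeping to be the main formal obstacle: the adjective \emph{strictly} associative is only honest once one has committed to a strict model of $\cH^{\otimes n}$ in which the monoidal associator of $\fHilbo$ literally becomes the identity; without that, $\etagQ$ yields associativity merely up to coherent isomorphism.

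For homotopy commutativity, fix $i,j$ and let $\tau_{i,j}\colon\cH^{\otimes i}\otimes\cH^{\otimes j}\to\cH^{\otimes j}\otimes\cH^{\otimes i}$ be the symmetry of $\fHilbo$; after the identifications $c_{i,j}$ and $c_{j,i}$ it becomes a block-permutation unitary $P_{i,j}\in\Unitary(\cH^{\otimes(i+j)})$. The one computation to carry out carefully is that $\QS_\Gamma(P_{i,j})(\omega_i\otimes\omega_j)=\omega_j\otimes\omega_i$; this follows by unwinding $\QS_\Gamma(f)(\omega)=\omega\circ\fA_\Gamma(f)$ and the definition of $\etagQ$ through $\eta^{\fA_\Gamma}$, and is the only place where the \emph{symmetric} (not merely monoidal) structure of $\eta^{\fA_\Gamma}$ is used. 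Since $\Unitary(\cH^{\otimes(i+j)})$ is path-connected, I choose a path $U\colon[0,1]\to\Unitary(\cH^{\otimes(i+j)})$ from $\1$ to $P_{i,j}$; because $\QS_\Gamma$ is topologically enriched, $t\mapsto\QS_\Gamma(U_t)$ is continuous into the mapping space, so $(\omega_i,\omega_j,t)\mapsto\QS_\Gamma(U_t)(\omega_i\otimes\omega_j)$ is jointly continuous and interpolates between the product ($t=0$) and the opposite product ($t=1$). Assembling these over all $(i,j)$ — again using that $[0,1]\times-$ commutes with coproducts in $\CGTop$ — produces a homotopy between the multiplication and its composite with the swap, so $\QS_{\Gamma}^{\sqcup}(\cH)$ is homotopy commutative. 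This connectedness input is the genuinely non-formal ingredient of the argument.

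Finally, to see that the product is \emph{not} strictly commutative it suffices to exhibit one failure. Taking $\cH=\C^2$ and $i=j=1$, I pick two orthonormal vectors $u_0,u_1\in\cH$ and let $\omega,\omega'\in\QS_\Gamma(\cH)$ be the completely factorized pure states built from $u_0$ and $u_1$ respectively, which lie in $\QS_\Gamma(\cH)$ by axiom (a) of \cref{defn:quantumstatetype}. Then $\etagQ(\omega,\omega')$ and $\etagQ(\omega',\omega)$ are the two distinct completely factorized states on $\cH^{\otimes 2}$ whose per-site states are $u_0\otimes u_1$ and $u_1\otimes u_0$; these are different points of $\QS_\Gamma(\cH^{\otimes 2})$, so the multiplication is not strictly commutative.
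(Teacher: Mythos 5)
Your proof is correct and follows essentially the same route as the paper: the only non-formal step, homotopy commutativity, is obtained exactly as in the paper's proof by choosing a path in the path-connected unitary group of $\cH^{\otimes(i+j)}$ from the identity to the block-swap and using the topological enrichment of $\QS_\Gamma$ to turn it into a continuous homotopy assembled over the coproduct. The remaining points (strict associativity via a strict model of the tensor powers, the unit $\psi_0$, continuity via $-\times Y$ commuting with coproducts in $\CGTop$, and the explicit pair of distinct factorized states witnessing failure of strict commutativity) are exactly the parts the paper dismisses as straightforward, and your treatment of them is sound.
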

\begin{proof}
Everything except the last statement is straightforward. For homotopy commutativity, we note that for any $\cK \in \fHilbo$, the unitary group on $\cK$ is a path connected space. So, we can choose a path 
\[ \alpha_{i,j} \colon \interval \to \I(\cH^{\otimes (i+j)}, \cH^{\otimes (i+j)}) \] 
from the identity to the map $\sigma_{i,j}$ which switches the factors in $\cH^{\otimes (i+j)} \cong \cH^{\otimes i}\otimes \cH^{\otimes j}$. Then 
\begin{align*} 
 h_{i,j}\colon \interval \times \QS_\Gamma(\cH^{\otimes i}) \times   \QS_\Gamma(\cH^{\otimes j}) &\to \QS_\Gamma(\cH^{\otimes (i+j)})   \\
 (t,x,y)& \mapsto  \QS_\Gamma(\alpha_{i,j}(t))(x\otimes y).      
 \end{align*}
gives a path from the two compositions in the diagram
\[\xymatrix{ \QS_\Gamma(\cH^{\otimes i}) \times   \QS_\Gamma(\cH^{\otimes j}) \ar[rr]^-\sigma \ar[dr]_-\otimes& & \QS_\Gamma(\cH^{\otimes j}) \times   \QS_\Gamma(\cH^{\otimes i}) \ar[dl]^-\otimes \\
&  \QS_\Gamma(\cH^{\otimes (i+j)}). &
}\]
Defining the top horizontal arrow in
\[\xymatrix{
\interval \times \QS_\Gamma^{\sqcup}(\cH) \times \QS_\Gamma^{\sqcup}(\cH)
\ar[d]_-{\cong} \ar[rr]&    &\QS_\Gamma^{\sqcup}(\cH) \\
 \coprod_{i,j} \interval \times \QS_\Gamma(\cH^{\otimes i}) \times   \QS_\Gamma(\cH^{\otimes j})\ar[rr]^-{\coprod h_{i,j}} & & \coprod_{k} \QS_\Gamma(\cH^{\otimes k})\ar[u]_-{\cong} }\]
 so that the diagram commutes gives the required homotopy. 
\end{proof}

\bigskip

To discuss the higher structure of $\QS_\Gamma(\cH^{\otimes \infty})$, we will need to introduce more theory.  
The reader who is not familiar with the notion of  $\sE_\infty$-spaces and of operads may be satisfied with knowing that the result implies that $\QS_\Gamma(\cH^{\otimes \infty})$ has a multiplication that is unital, associative and commutative up to homotopy. That is, $\QS_\Gamma(\cH^{\otimes \infty})$ is a homotopy commutative and associative $H$-space.  In particular, this implies that $\pi_0\QS_\Gamma(\cH^{\otimes \infty})$ is a monoid. Furthermore, if $\pi_0\QS_\Gamma(\cH^{\otimes \infty})$ happens to also be a group, then $\QS_\Gamma(\cH^{\otimes \infty})$ has inverses up to homotopy and so is an $H$-group. Such quantum state types are called \emph{invertible}.  With this in mind, such a reader can skip ahead to \cref{sec:groupcompletion}. Alternatively, we have provided some background on $\sE_\infty$-spaces in \cref{sec:opandeinfty}.

In the remainder of this section, we will show the following result.
\begin{thm}\label{thm:jmathHspace}
The space  $ \QS_\Gamma(\cH^{\otimes \infty})$ is an $\sE_\infty$-space. As such, it is a homotopy commutative and associative $H$-space. If $\QS_\Gamma$ is invertible, then $ \QS_\Gamma(\cH^{\otimes \infty})$ is an $H$-group.

For any quantum state type modeled on $(\cH,\psi)$, the map $\jmath \colon \QS_\Gamma^{\sqcup}(\cH) \to \QS_\Gamma(\cH^{\otimes \infty})$  is a morphism of $H$-spaces.
\end{thm}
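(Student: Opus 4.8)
The plan is to realize $\QS_\Gamma(\cH^{\otimes\infty})$ as an algebra over an $\sE_\infty$-operad built from linear isometries, and then read off all the claimed structure formally. Set $U := \cH^{\otimes\infty} = \colim_i \cH^{\otimes i} \in \sJ$, a countably-infinite-dimensional inner product space. I would use the \emph{multiplicative linear isometry operad} $\cL^{\otimes}$ whose $n$-th space is $\cL^{\otimes}(n) = \I(U^{\otimes n}, U)$, with unit $\id_U \in \cL^{\otimes}(1)$, with $\Sigma_n$ acting by precomposition with the permutation isometries of $U^{\otimes n}$, and with structure maps $\gamma(g; h_1,\dots,h_n) = g \circ (h_1 \otimes \cdots \otimes h_n)$. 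Since $U^{\otimes n}$ and $U$ both have countable dimension, the first thing to establish (or cite from \cref{sec:opandeinfty}) is that each $\cL^{\otimes}(n)$ is contractible with free $\Sigma_n$-action, so that $\cL^{\otimes}$ is genuinely an $\sE_\infty$-operad.

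Next I would define the action
\[
\theta_n \colon \cL^{\otimes}(n) \times \QS_\Gamma(U)^n \to \QS_\Gamma(U), \qquad \theta_n(g;\omega_1,\dots,\omega_n) = \QS_\Gamma(g)\big(\omega_1\otimes\cdots\otimes\omega_n\big),
\]
where $\omega_1\otimes\cdots\otimes\omega_n \in \QS_\Gamma(U^{\otimes n})$ is the iterated application of the extended transformation $\etagQ$ from \cref{prop:extend}, and $\QS_\Gamma(g)\colon \QS_\Gamma(U^{\otimes n}) \to \QS_\Gamma(U)$ is functoriality applied to $g$. Continuity of $\theta_n$ is exactly the topological enrichment of $\QS_\Gamma$ on $\sJ$ (\cref{prop:extend}) together with continuity of $\etagQ$. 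The operad-algebra axioms are then diagram chases: the unit axiom is $\QS_\Gamma(\id_U)=\id$; equivariance follows by matching the $\Sigma_n$-action on $\cL^{\otimes}(n)$ with the symmetry of $\etagQ$; and the composition axiom follows from $\QS_\Gamma(g\circ h)=\QS_\Gamma(g)\circ\QS_\Gamma(h)$ together with the associativity and naturality of $\etagQ$. This exhibits $\QS_\Gamma(\cH^{\otimes\infty})$ as an $\sE_\infty$-space, and hence, by the standard theory recalled in \cref{sec:opandeinfty}, as a homotopy commutative and associative $H$-space with product $\mu = \theta_2(g_2; -, -)$ for any chosen $g_2\in\cL^{\otimes}(2)$; the homotopy class of $\mu$ is independent of $g_2$ because $\cL^{\otimes}(2)$ is connected. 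For the last structural claim, when $\QS_\Gamma$ is invertible $\pi_0\QS_\Gamma(\cH^{\otimes\infty})$ is a group, and a homotopy-associative $H$-space whose $\pi_0$ is a group is grouplike, i.e.\ an $H$-group.

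Finally I would check that $\jmath$ is a morphism of $H$-spaces, where the source carries the strict monoid structure of \cref{prop:QStopmonoid}. Writing $\kappa_m\colon\cH^{\otimes m}\hookrightarrow U$ for the colimit inclusions, so that $\jmath_m = \QS_\Gamma(\kappa_m)$, I would compare the two composites on each summand $\QS_\Gamma(\cH^{\otimes i})\times\QS_\Gamma(\cH^{\otimes j})$ of $\QS_\Gamma^{\sqcup}(\cH)\times\QS_\Gamma^{\sqcup}(\cH)$. On the one hand $\jmath$ of the monoid product is $\QS_\Gamma(\kappa_{i+j})(\omega_i\otimes\omega_j)$; on the other, using naturality of $\etagQ$, the $\mu$-product of $\jmath_i(\omega_i)$ and $\jmath_j(\omega_j)$ is $\QS_\Gamma\big(g_2\circ(\kappa_i\otimes\kappa_j)\big)(\omega_i\otimes\omega_j)$. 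Both are $\QS_\Gamma(-)$ applied to isometries $\cH^{\otimes(i+j)}\to U$, and since $\I(\cH^{\otimes(i+j)}, U)$ is path-connected (a finite-dimensional space mapping into an infinite-dimensional one), a path between these two isometries yields, via the enrichment, a homotopy between the two composites. These homotopies assemble over the coproduct into a single homotopy, and $\jmath$ preserves units since $\jmath(\psi_0)=\jmath_0(\psi_0)$ is the base point; hence $\jmath$ is a morphism of $H$-spaces.

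The main obstacle I anticipate is proving that $\cL^{\otimes}$ really is an $\sE_\infty$-operad in this topological setting — in particular the contractibility of $\I(U^{\otimes n}, U)$ for the colimit (strict LB) topology on $U^{\otimes n}$ and $U$ rather than a Hilbert-space norm topology — together with the joint continuity of $\theta_n$ once the weak$^*$ and colimit topologies are both in play. Both of these rest essentially on the enrichment established in \cref{prop:extend}, so I would organize the argument so that every continuity claim is reduced to that proposition and to the continuity of $\etagQ$.
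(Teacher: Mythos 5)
Your proposal follows essentially the same route as the paper: the paper also builds the multiplicative linear isometry operad $\sK(\cV)(j)=\I(\cV^{\otimes j},\cV)$ with structure maps $\gamma(g;f_1,\dots,f_k)=g\circ(f_1\otimes\cdots\otimes f_k)$, lets it act by $\theta_j(f;\omega_1,\dots,\omega_j)=\QS_\Gamma(f)(\omega_1\otimes\cdots\otimes\omega_j)$, and proves that $\jmath$ is an $H$-map by exactly your argument — reduce to each summand $\QS_\Gamma(\cH^{\otimes m})\times\QS_\Gamma(\cH^{\otimes n})$ via naturality of $\etagQ$, then use a path of isometries and the topological enrichment to produce the homotopy (see \cref{thm:QSEinfty}, \cref{lem:sKop}, \cref{defn:Ospacestructure}).

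The one point you should repair: your $\cL^{\otimes}(0)=\I(\C,U)$ is the unit sphere of $U$, not a point, so $\cL^{\otimes}$ is not a \emph{reduced} operad and hence not an $\sE_\infty$-operad in the sense used here; correspondingly $\theta_0$ does not single out a unit for the $H$-space structure. The paper fixes this by passing to the suboperad $\bsK(\cV)(j)=\I_u(\cV^{\otimes j},\cV)$ of isometries sending $u^{\otimes j}$ to $u$, which is reduced, still $\Sigma$-free with contractible spaces, and forces the unit to be the completely factorized state $\QS_\Gamma(u)$. With that adjustment (and the standing CW replacement, needed for the grouplike-implies-$H$-group step), your argument matches the paper's.
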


\bigskip

The key for proving \cref{thm:jmathHspace} will be to show the following much stronger result.
\begin{theorem}\label{thm:QSEinfty}
For any infinite dimensional object $\cV$ of $\sJ$ and linear isometry $u \colon \C \to \cV$, the space 
$\QS_\Gamma(\cV)$ based at 
\[\QS_\Gamma(u) \colon \QS_\Gamma(\C)\to \QS_\Gamma(\cV) \]
is an $\sE_\infty$-space. 
In particular, any choice of $\varphi \in \I(\cV^{\otimes 2}, \cV)$ with the property that $\varphi(u\otimes u)=u$ specifies the structure of homotopy commutative and associative $H$-space on   $\QS_\Gamma(\cV)$.

If $\pi_0\QS_\Gamma(\cV)$  is a group, then this $H$-space is an $H$-group (i.e., it has homotopy inverses), and there is an equivalence of $H$-groups
\[\QS_\Gamma(\cV) \simeq  \QS_\Gamma(\cV)_0 \times \pi_0\QS_\Gamma(\cV)  \]
where $ \pi_0\QS_\Gamma(\cV)$ is given the discrete topology and $\QS_\Gamma(\cV)_0$ is the path component of $\psi_0 = \QS_\Gamma(u)(\pt)$.

For any choice of $(u_1 \in \cV_1)$ and $(u_2\in \cV_2)$ where $\cV_1$ and $\cV_2$ are infinite dimensional gives isomorphic $\sE_\infty$-spaces $\QS_\Gamma(\cV_1)$ and $\QS_\Gamma(\cV_2)$.
\end{theorem}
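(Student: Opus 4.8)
The plan is to realize $\QS_\Gamma(\cV)$ as an algebra over the \emph{multiplicative linear isometry operad} $\sK_\cV$, the operad whose $n$-th space is the isometry space $\sK(n)=\I(\cV^{\otimes n},\cV)$ (an object assembled from $\sJ$), with operadic composition
\[\gamma(f;g_1,\dots,g_n)=f\circ(g_1\otimes\cdots\otimes g_n),\qquad f\in\sK(n),\ g_k\in\sK(i_k),\]
unit $\id_\cV\in\sK(1)$, and $\Sigma_n$-action permuting the tensor factors of the source $\cV^{\otimes n}$. That this is an operad is immediate from associativity of $\circ$ and $\otimes$. The action on $X:=\QS_\Gamma(\cV)$ is
\[\theta_n\colon \sK(n)\times X^{\times n}\to X,\qquad (f;\omega_1,\dots,\omega_n)\mapsto \QS_\Gamma(f)\bigl(\etagQ_n(\omega_1,\dots,\omega_n)\bigr),\]
where $\etagQ_n\colon X^{\times n}\to\QS_\Gamma(\cV^{\otimes n})$ is the $n$-fold iterate of the lax (symmetric) monoidal structure map. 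Continuity of $\theta_n$ is precisely continuity of $\Phi_{\cV^{\otimes n},\cV}$ from \cref{prop:extend} precomposed with the continuous $\etagQ_n$, and the operad-action axioms (unit, equivariance, associativity) reduce to functoriality of $\QS_\Gamma$, naturality of $\etagQ$, and lax \emph{symmetric} monoidal coherence — all diagram chases.

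The substantive point, and the step I expect to be the main obstacle, is that $\sK_\cV$ is an $\sE_\infty$-operad in the sense of \cref{sec:opandeinfty}, i.e. each $\sK(n)$ is $\Sigma_n$-free and contractible in the colimit topology of $\sJ$. Freeness is easy: if $f\circ\sigma=f$ with $f$ an isometric embedding and $\sigma$ permuting factors, injectivity of $f$ forces $\sigma$ to act as the identity on $\cV^{\otimes n}$, which, as $\dim\cV\geq 2$, forces $\sigma=e$. For contractibility, since $\cV^{\otimes n}$ is again a countably infinite dimensional object of $\sJ$, it suffices to show $\I(A,B)$ is contractible for $A,B$ infinite dimensional. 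Writing $A=\colim_i A_i$ with $\dim A_i=i$, one has $\I(A,B)=\lim_i\I(A_i,B)$ and $\I(A_i,B)=\colim_j\I(A_i,B_j)$, a colimit of complex Stiefel manifolds whose colimit $V_i(\C^\infty)$ is (weakly) contractible; the restriction maps $\I(A_{i+1},B)\to\I(A_i,B)$ are fibre bundles with fibre the contractible unit sphere of an infinite dimensional space, so the tower consists of acyclic fibrations and its limit $\I(A,B)$ is contractible. This exhibits $\QS_\Gamma(\cV)$ as an $\sE_\infty$-space, hence a homotopy-commutative, homotopy-associative $H$-space; choosing $\varphi\in\sK(2)$ gives the binary product $m=\theta_2(\varphi;-,-)$, with commutativity and associativity up to homotopy following from connectivity of $\sK(2)$ and $\sK(3)$. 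The hypothesis $\varphi(u\otimes u)=u$ guarantees $m(\psi_0,\psi_0)=\psi_0$, so the basepoint is preserved; and since $\theta_2(\varphi;\psi_0,\omega)=\theta_1(\varphi\circ(u\otimes\id_\cV);\omega)$ by the action axioms, with $\varphi\circ(u\otimes\id_\cV)\in\sK(1)$ path-connected to $\id_\cV$, the enrichment gives $m(\psi_0,-)\simeq\id$, so $\psi_0$ is a homotopy unit.

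For the grouplike case, a homotopy-associative $H$-space of CW type (we are working after the CW-replacement of \cref{sec:CWreplacements}) whose $\pi_0$ is a group is an $H$-group: this is the standard consequence of the shearing map being a homotopy equivalence. The product decomposition then follows because left translation by any point is a homotopy equivalence permuting path components; choosing a set-section $s\colon\pi_0\QS_\Gamma(\cV)\to\QS_\Gamma(\cV)$ of representatives, the map $(\omega,c)\mapsto m(\omega,s(c))$ realizes the equivalence of $H$-groups $\QS_\Gamma(\cV)_0\times\pi_0\QS_\Gamma(\cV)\xrightarrow{\simeq}\QS_\Gamma(\cV)$, where $\QS_\Gamma(\cV)_0$ is the component of $\psi_0$.

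Finally, independence of $(\cV,u)$ is where infinite-dimensionality is used decisively. Any two infinite dimensional objects of $\sJ$ carry orthonormal bases indexed by $\N$, so there is an isometric \emph{isomorphism} $\phi\colon\cV_1\xrightarrow{\cong}\cV_2$, and since $u_1,u_2$ are unit vectors we may take the bases to begin with them, arranging $\phi(u_1)=u_2$. Being invertible, $\phi$ induces an isomorphism of operads $\sK_{\cV_1}\xrightarrow{\cong}\sK_{\cV_2}$ via $f\mapsto\phi\circ f\circ(\phi^{-1})^{\otimes n}$, while $\QS_\Gamma(\phi)$ is a homeomorphism with inverse $\QS_\Gamma(\phi^{-1})$ sending $\psi_0$ to $\psi_0$. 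By functoriality of $\QS_\Gamma$ and naturality of $\etagQ$, this pair intertwines the two operad actions, so $\QS_\Gamma(\cV_1)$ and $\QS_\Gamma(\cV_2)$ are isomorphic as $\sE_\infty$-spaces.
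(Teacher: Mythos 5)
Your proposal is correct and follows essentially the same route as the paper: the paper also introduces the multiplicative linear isometry operad $\sK(\cV)(j)=\I(\cV^{\otimes j},\cV)$, acts on $\QS_\Gamma(\cV)$ by $\theta_j(f;\omega_1,\dots,\omega_j)=\QS_\Gamma(f)(\omega_1\otimes\cdots\otimes\omega_j)$, proves $\Sigma$-freeness by the same injectivity argument, and handles independence of $(\cV,u)$ by conjugating the operad through an isometric isomorphism carrying $u_1$ to $u_2$. Two small points where you diverge. First, under the paper's conventions an $\sE_\infty$-operad is \emph{reduced} ($\cO(0)$ a point), and $\sK(\cV)(0)=\I(\C,\cV)$ is only contractible; the paper therefore passes to the based suboperad $\bsK(\cV)(j)=\{\varphi:\varphi(u^{\otimes j})=u\}$, which is reduced and still $\sE_\infty$ (restriction to orthogonal complements identifies it with an isometry space of the same kind), and this is also what makes $\QS_\Gamma(u)$ a \emph{based} $\cO$-space rather than forcing the homotopy-unit argument you run via connectivity of $\sK(1)$. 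Second, for contractibility of $\I(\cW,\cV)$ the paper simply cites May's explicit contraction (\emph{$E_\infty$ ring spaces}, Lemma 1.3); your tower-of-Stiefel-manifolds sketch would need extra care (local triviality of the restriction maps in the colimit topology, and upgrading weak contractibility of the inverse limit to genuine contractibility), so the citation is the cleaner route. Neither point is a substantive gap.
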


\begin{defn}\label{defn:QSinvert}
A  quantum state type $\QS_\Gamma$ is \emph{invertible} if, given any choice of infinite dimensional object $\cV$ and unit vector  $u\in \cV$, for the $H$-space structure of \cref{thm:QSEinfty}, $\pi_0\QS_\Gamma(\cV)$ is a group.  Equivalently, $\QS_\Gamma$ is  invertible if $\pi_0\QS_\Gamma(\cV)$ is a group \emph{for some} choice of $(\cV,u)$.
\end{defn}

To prove \cref{thm:QSEinfty}, we will introduce an operad which is a multiplicative analogue of the linear isometry operad, so could be called the \emph{multiplicative linear isometry operad}. The necessary background on operads can be found in \cite[\S 1]{MayGILS} although, here, we do not require the zero space of an operad to be a single point. We use the terminology introduced in \cite{2020arXiv200310934M} and call operads whose zero space is a point \emph{reduced}.

\begin{defn}
Let $\cV$ be an infinite dimensional object in $\sJ$. Let
 \[\sK(\cV)(j) = \I(\cV^{\otimes j}, \cV)\] 
 and $\1 \in \sK(\cV)(1)$ be the identity. For $\sum  j_i =j$, let 
\[\gamma \colon \sK(\cV)(k) \times \sK(\cV)(j_1)\times \cdots \times\sK(\cV)(j_k) \to\sK(\cV)(j)     \]
be given by
\[\gamma(g ; f_1,\ldots, f_k) = g\circ (f_1\otimes \cdots \otimes f_k).\]
For $\Sigma_j$ the symmetric group on $j$ letters, define a right action by letting $\sigma \in \Sigma_j$ act on $\sK(V)(j)$ by
\[f\sigma := f\circ \sigma\]
for $f\in \I(\cV^{\otimes j}, \cV)$, 
where $\sigma$ acts on $\cV^{\otimes j}$ 
through the linear action determined by
\[\sigma(u_1\otimes \cdots \otimes u_j) = u_{\sigma^{-1}(1)}\otimes \cdots \otimes u_{\sigma^{-1}(j)}.\]

For any choice of $u\in \cV$, let
\[\bsK(\cV)(j) =  \I_u(\cV^{\otimes j}, \cV)=\{\varphi \in   \I(\cV^{\otimes j}, \cV) : \varphi(u^{\otimes j})=u\},  \]
topologized as a subspace of $ \I(\cV^{\otimes j}, \cV) $, with structure maps the restrictions of the $\gamma$s.
\end{defn}

\begin{prop}\label{lem:sKop}
Let $\cV$ be an infinite dimensional object in $\sJ$. Then $\sK(\cV)$ is an unreduced $\Sigma$-free operad in $\CGTop$ with the property that the spaces $\sK(\cV)(j)$ are contractible for all $j\geq 0$.  Furthermore, $\bsK(\cV)$ is a reduced $\sE_\infty$-operad.
\end{prop}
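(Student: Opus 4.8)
The plan is to verify the operad axioms formally and then concentrate on the two substantive points: freeness of the symmetric–group actions and contractibility of the constituent spaces, the latter being the real work.

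First I would check that $\sK(\cV)$ is an operad. Associativity of $\gamma$ and the unit axioms for $\1\in\sK(\cV)(1)$ are immediate from associativity of composition of linear maps and the associativity and unitality of $\otimes$ on $\sJ$; the equivariance axioms are exactly the usual ones for the linear isometry operad, with $\otimes$ in place of $\oplus$, and hold because $\Sigma_k$ and the $\Sigma_{j_i}$ permute tensor factors. Continuity of $\gamma$ reduces to continuity of composition and of the tensor product of isometries in the compact–open (equivalently, uniform-on-compacta) topology, which is part of the topological enrichment of $\sJ$ recorded earlier, in particular \eqref{eq:colimcommutesembeddings}. For $\Sigma$-freeness: if $f\sigma=f$ with $f\in\I(\cV^{\otimes j},\cV)$, then since $f$ is injective we get $\sigma=\id_{\cV^{\otimes j}}$; as $\dim\cV\geq 2$, the permutation representation of $\Sigma_j$ on $\cV^{\otimes j}$ is faithful, forcing $\sigma=e$ (the cases $j=0,1$ being trivial).

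The heart of the matter is that $\sK(\cV)(j)=\I(\cV^{\otimes j},\cV)$ is contractible for every $j\geq 0$. Since $\cV$ is countably-infinite-dimensional, so is each $\cV^{\otimes j}$ (for $j=0$ this is $\C$, and $\I(\C,\cV)=S(\cV)$ is the infinite-dimensional sphere). Thus it suffices to prove the general statement that $\I(E,F)$ is contractible whenever $E$ and $F$ are countably-infinite-dimensional objects of $\sJ$. I would give the explicit ``swindle'' contraction: fix an orthogonal decomposition $F\cong F_0\oplus F_1$ with both summands countably-infinite-dimensional, write $a,b\colon F\to F$ for the two resulting isometries (so $a(F)\perp b(F)$), and fix a basepoint $g_0\in\I(E,F)$. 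Because $a(F)\perp b(F)$, any two isometries $\phi_0,\phi_1$ with $\phi_0(E)\perp\phi_1(E)$ are joined by the rotation path $h_s=\cos(\tfrac{\pi}{2}s)\,\phi_0+\sin(\tfrac{\pi}{2}s)\,\phi_1$, each $h_s$ being an isometry precisely because the cross terms vanish; this connects $a\circ f$ to the fixed point $b\circ g_0$ naturally in $f$. It remains to connect $f$ to $a\circ f$, for which I would use a path from $\id_F$ to $a$ inside $\I(F,F)$, available because $\I(F,F)$ is path-connected. Concatenating yields a continuous contraction, continuity being clear since left-composition by a fixed isometry preserves uniform convergence on compacta and the rotation depends continuously on $(f,s)$.

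I expect the technical crux — and main obstacle — to be exactly this contractibility in the colimit topology of $\sJ$: the space $\I(E,F)$ is an inverse limit over the finite-dimensional subspaces of $E$ of the spaces $\I(E_i,F)$, each of which is in turn a colimit of complex Stiefel manifolds whose connectivity tends to infinity. One must be careful that the swindle produces a genuinely continuous self-map rather than merely showing all homotopy groups vanish; the alternative route of proving weak contractibility via the Stiefel connectivity estimates and then upgrading through the CW homotopy type (guaranteed since the spaces are colimits along closed embeddings, cf.\ \cref{lem:CGTOPFACTS}) is available as a backup and in particular supplies the path-connectedness of $\I(F,F)$ used above. Finally I would treat $\bsK(\cV)$. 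It is a suboperad of $\sK(\cV)$: if $g(u^{\otimes k})=u$ and each $f_i(u^{\otimes j_i})=u$, then $(f_1\otimes\cdots\otimes f_k)(u^{\otimes j})=u^{\otimes k}$, so $\gamma(g;f_1,\dots,f_k)(u^{\otimes j})=u$; the unit $\1$ and the $\Sigma_j$-action (which fixes the symmetric vector $u^{\otimes j}$) also preserve the defining condition, and restricting a free action to an invariant subspace keeps it free. It is reduced since $\bsK(\cV)(0)$ is the single isometry $\C\to\cV$ sending $1\mapsto u$. For contractibility I would observe that an isometry $\varphi$ with $\varphi(u^{\otimes j})=u$ is determined by its restriction to $(u^{\otimes j})^\perp$, which is an arbitrary isometry $(u^{\otimes j})^\perp\to u^\perp$; this identifies $\bsK(\cV)(j)\cong\I\big((u^{\otimes j})^\perp,u^\perp\big)$, a space of isometries between two countably-infinite-dimensional inner product spaces, hence contractible by the previous step. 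Together with $\Sigma$-freeness and reducedness, this shows $\bsK(\cV)$ is a reduced $\sE_\infty$-operad.
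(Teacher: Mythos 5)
Your proposal is correct and follows the same skeleton as the paper's proof: the operad axioms are dispatched formally (the paper points to the endomorphism operad of \cite[Def.~1.2]{MayGILS}), your $\Sigma$-freeness argument (injectivity of $f$ plus faithfulness of the permutation action on $\cV^{\otimes j}$ when $\dim\cV\geq 2$) is exactly the one given, and your identification $\bsK(\cV)(j)\cong \I\big((u^{\otimes j})^\perp, u^\perp\big)$ via restriction to orthogonal complements is precisely how the paper handles the reduced operad. The one genuine divergence is the contractibility of $\I(\cV^{\otimes j},\cV)$: the paper simply cites \cite[Lemma~1.3]{MayEinfty}, whereas you reprove it by an explicit Eilenberg-swindle contraction. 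Your swindle is sound except at the step connecting $f$ to $a\circ f$, which you ground in path-connectedness of $\I(F,F)$; as stated this risks circularity, since the natural way to see that $\id_F$ and $a$ lie in the same path component (their images are not orthogonal, so the rotation trick does not apply directly) is via the inverse-limit-of-Stiefel-colimits argument — i.e., essentially the proof of the lemma you are trying to establish. You do flag this and offer that route as a fallback, and it does work (the restriction maps are fibrations between contractible CW-type spaces, so the limit is contractible and in particular path-connected), so the gap is patched; but the net effect is that your ``explicit'' contraction ultimately leans on the same connectivity estimates the paper outsources to May, making the citation the more economical choice.
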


\begin{proof}
Checking that these are operads is quite formal as the structure maps of $\sK(\cV)$ are closely related to those of the endomorphism operads (\cite[Def. 1.2]{MayGILS}), so we do not give further details on that point.
For the $\Sigma$-free property, note that if $f = f\sigma$ where $f\in \sK(\cV)(j)$ and $\sigma \in \Sigma_j$, then for any simple tensor, 
\[f(u_1\otimes \cdots \otimes u_j) = f(u_{\sigma^{-1}(1)}\otimes \cdots \otimes u_{\sigma^{-1}(j)}).\]
But $f$ is injective, hence this implies that any simple tensor is fixed by $\sigma$. This happens only if $\sigma$ is the identity (since the dimension of $\cV$ is greater than one).
For the last claim, since  $\cV$ is infinite dimensional, then $\I(\cW,\cV)$ is contractible for any $\cW$. See \cite[Lemma 1.3]{MayEinfty}.

For $\bsK(\cV)$ we simply note that $\bsK(\cV)(j)$ is a subspace of $\sK(\cV)(j)$ and the structure maps are simply restrictions. Thus $\bsK(\cV)$ is an operad and it is reduced since
the condition $\varphi(1) = u$ completely determines a linear map $\varphi:\C\to\cV$. Furthermore, since the restriction map 
\[\bsK(\cV) = \I_u(\cV^{\otimes j},\cV) \to \I(\cV^{\otimes j} - \C u^{\otimes j},\cV - \C u)\] 
to the space of linear embeddings on the orthogonal complements is a homeomorphism, and the target is contractible, we have that $\bsK(\cV)$ is indeed an $\sE_\infty$-operad. 
\end{proof}

\begin{defn}\label{defn:Ospacestructure}
Let $(\QS_\Gamma,\etagQ)$ be a quantum state type and $\cV\in \sJ$ be infinite dimensional. Let 
\[ \theta_j \colon \sK(\cV)(j) \times  \QS_\Gamma(\cV)^j \to \QS_\Gamma(\cV)\]
be given by
\[ \theta_j(f; \omega_1, \ldots, \omega_j) = \QS_\Gamma(f)(\etagQ(\omega_1,\ldots,\omega_j)) = \QS_\Gamma(f)(\omega_1\otimes \cdots \otimes \omega_j)\]
for $f\in  \sK(\cV)(j)=\I(\cV^{\otimes j}, \cV)$ and $\omega_1, \ldots, \omega_j \in \QS_\Gamma(\cV)$. 

For a choice of $u\in \cV$, give  $\QS_\Gamma(\cV)$ the base point
\[\QS_\Gamma(u) \colon \QS_\Gamma(\C) \to \QS_\Gamma(\cV)\]
where we write $u\colon \C \to\cV$ for the linear map which sends $1$ to $u$. Let
\[ \theta_j^u \colon \bsK(\cV)(j) \times  \QS_\Gamma(\cV)^j \to \QS_\Gamma(\cV)\]
be the restrictions of $\theta_j$ along the inclusion $ \bsK(\cV)(j)  \subset  \sK(\cV)(j) $.
\end{defn}

The proof of the following result is as in \cite[\S 1]{MayEinfty} for the linear isometry operad so we do not repeat the arguments here.
\begin{prop}\label{prop:Kspaces}
The space $\QS_\Gamma(\cV)$ together with the structure maps of \cref{defn:Ospacestructure} is an $\sK(\cV)$-space.   For $u \in \cV$, 
$\QS_\Gamma(u) \colon \QS_\Gamma(\C) \to \QS_\Gamma(\cV)$ is a based  $\bsK(\cV)$-space.
\end{prop}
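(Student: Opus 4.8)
The plan is to check the defining axioms of an algebra over an operad directly (\cite[Def.~1.2]{MayGILS}), treating $\QS_\Gamma$ exactly as one treats the endomorphism functor in the construction of the endomorphism operad action. Concretely, I must verify that the maps $\theta_j$ of \cref{defn:Ospacestructure} are associative (compatible with the structure maps $\gamma$ of $\sK(\cV)$), unital, and $\Sigma_j$-equivariant, and then that the restricted maps $\theta_j^u$ make $\QS_\Gamma(\cV)$ a \emph{based} algebra over the reduced operad $\bsK(\cV)$. All three axioms reduce to two inputs established earlier: the functoriality of $\QS_\Gamma\colon \sJ \to \CGTop$, and the fact that $\etagQ$ is a lax symmetric monoidal structure, i.e.\ a natural transformation that is associative, unital, and symmetric; for the infinite-dimensional $\cV$ at hand these properties are precisely the content of \cref{prop:extend}.

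For associativity, fix $g \in \sK(\cV)(k)$ and $f_i \in \sK(\cV)(j_i)$ with $\sum_i j_i = j$, and write $\bigotimes_a \omega^{(i)}_a = \omega^{(i)}_1 \otimes \cdots \otimes \omega^{(i)}_{j_i}$ for the $i$-th block of arguments. Expanding the left-hand side, applying naturality of $\etagQ$ along the isometries $f_1, \ldots, f_k$ followed by its associativity, and finally functoriality of $\QS_\Gamma$, gives
\begin{align*}
\theta_k\bigl(g; \theta_{j_1}(f_1; \ldots), \ldots, \theta_{j_k}(f_k; \ldots)\bigr)
&= \QS_\Gamma(g)\Bigl(\QS_\Gamma(f_1)\bigl(\textstyle\bigotimes_a \omega^{(1)}_a\bigr) \otimes \cdots \otimes \QS_\Gamma(f_k)\bigl(\textstyle\bigotimes_a \omega^{(k)}_a\bigr)\Bigr) \\
&= \bigl(\QS_\Gamma(g) \circ \QS_\Gamma(f_1 \otimes \cdots \otimes f_k)\bigr)\bigl(\textstyle\bigotimes_{i,a} \omega^{(i)}_a\bigr) \\
&= \QS_\Gamma\bigl(g \circ (f_1 \otimes \cdots \otimes f_k)\bigr)\bigl(\textstyle\bigotimes_{i,a} \omega^{(i)}_a\bigr),
\end{align*}
which is exactly $\theta_j\bigl(\gamma(g; f_1, \ldots, f_k); \ldots\bigr)$ since $\gamma(g; f_1, \ldots, f_k) = g \circ (f_1 \otimes \cdots \otimes f_k)$. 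Unitality is immediate: $\1 \in \sK(\cV)(1)$ is $\id_\cV$, so $\theta_1(\1; \omega) = \QS_\Gamma(\id_\cV)(\omega) = \omega$. For equivariance, writing $\sigma$ for the isometry of $\cV^{\otimes j}$ permuting tensor factors, we have $\theta_j(f\sigma; \omega_1, \ldots, \omega_j) = \QS_\Gamma(f)\bigl(\QS_\Gamma(\sigma)(\omega_1 \otimes \cdots \otimes \omega_j)\bigr)$, and the \emph{symmetric} part of the lax monoidal structure gives $\QS_\Gamma(\sigma)(\omega_1 \otimes \cdots \otimes \omega_j) = \omega_{\sigma^{-1}(1)} \otimes \cdots \otimes \omega_{\sigma^{-1}(j)}$, so that $\theta_j(f\sigma; \omega_1, \ldots, \omega_j) = \theta_j(f; \omega_{\sigma^{-1}(1)}, \ldots, \omega_{\sigma^{-1}(j)})$, which is the required compatibility with the $\Sigma_j$-action on $\QS_\Gamma(\cV)^j$.

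For the based statement, $\bsK(\cV)$ is reduced by \cref{lem:sKop}, its unique $0$-ary operation being the isometry $u\colon \C \to \cV$, and the basepoint of $\QS_\Gamma(\cV)$ is $\psi_0 = \QS_\Gamma(u)(\pt)$; hence $\theta_0^u$ lands on the basepoint. The remaining degeneracy conditions are the special case of the associativity axiom in which one input is the $0$-ary operation $u$: inserting $\psi_0$ into the $i$-th slot of $\theta_j$ must agree with the $(j-1)$-ary operation $\gamma(f; \id, \ldots, u, \ldots, \id)$. This again follows from the same two inputs: naturality of $\etagQ$ along $u$ moves $\QS_\Gamma(u)(\pt)$ outside, unitality of $\etagQ$ deletes the resulting $\C$-factor, and functoriality identifies $\QS_\Gamma(f) \circ \QS_\Gamma(\id^{\otimes(i-1)} \otimes u \otimes \id^{\otimes(j-i)})$ with $\QS_\Gamma\bigl(\gamma(f; \id, \ldots, u, \ldots, \id)\bigr)$.

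The computations are formal; the only genuine care lies in the coherence bookkeeping—keeping the associator of $\otimes$, the canonical identifications of $\cV^{\otimes j}$, and, for infinite-dimensional $\cV$, the compatibility of $\etagQ$ with the defining colimits of \cref{prop:extend} mutually consistent. This is exactly where the argument parallels (rather than departs from) May's treatment of the linear isometry operad, so I would present the verifications tersely and cite \cite[\S 1]{MayEinfty} for the coherence details. I expect the main obstacle to be purely notational: organizing the iterated naturality/associativity of $\etagQ$ so that the blocks of arguments and the operad composition line up, not any substantive analytic difficulty.
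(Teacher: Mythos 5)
Your proposal is correct and follows essentially the same route as the paper, which simply states that the proof is as in May's treatment of the linear isometry operad in \cite[\S 1]{MayEinfty} and does not repeat the argument; your write-up is exactly that argument carried out explicitly (associativity from naturality of $\etagQ$ plus functoriality, unitality from $\QS_\Gamma(\id)=\id$, equivariance from the symmetry of the tensor product of states, and the based/degeneracy conditions as the $0$-ary case), with continuity of the $\theta_j$ supplied by the topological enrichment of \cref{prop:extend}.
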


\begin{proof}[Proof of \cref{thm:QSEinfty}.]
The first claim follows from \cref{prop:Kspaces} and the fact that $\bsK(\cV)$ are $\sE_\infty$-operads.

We give a bit more information on how the structure maps of \cref{defn:Ospacestructure} allow us to define the $H$-space structure. This is very standard so we do not give all the details.

Let $\cV$ be an infinite dimensional object in $\sJ$ and $u\colon \C \to \cV$ our choice of linear isometric embedding.  Any element 
\[\varphi \in \bsK(\cV)(2) = \I(\cV\otimes \cV, \cV)\] determines a binary operation
\[ \theta := \theta_2(\varphi;-) \colon \QS_\Gamma(\cV)^2 \to \QS_\Gamma(\cV)\]
which is the multiplication for the $H$-space structure. Concretely,
\[\theta(\omega, \tau) = \QS_\Gamma(\varphi)(\omega\otimes \tau), \]
so $\varphi$ is used to \emph{internalize} the tensor product. The key is that, because the space $\I_u(\cV\otimes \cV, \cV)$ is contractible, all choices give homotopic products, and the homotopies between those homotopies are homotopic, and so on. So,  the choice was ``unique'' up to all higher homotopies,

The unit for the $H$-space structure is the base point $\psi_0:=\QS_\Gamma(u)(1)$ of $\QS_\Gamma(\cV)$.
The  properties of an $ \bsK(\cV)$-space, and the fact that $\bsK(\cV)(j)$ are all contractible, then imply that $\QS_\Gamma(\cV)$ is an $H$-space. 
See, for example, \cite[Lemma 1.9]{MayGILS} for a detailed proof of homotopy associativity.

If $\QS_\Gamma(\cV)$ is group-like, i.e., $\pi_0\QS_\Gamma(\cV)$ is a group, then since $\QS_\Gamma(\cV)$ is a homotopy commutative and associative $H$-space, it is in fact an $H$-group.\footnote{Here, we use the fact that we have implicitly replaced $\QS_\Gamma(\cV)$ by the CW-complex $\rep \QS_\Gamma(\cV)$.} In fact, all of the path components of $\QS_{\Gamma}(\cV)$ are homotopy equivalent to the path component $\QS_\Gamma(\cV)_0$ of $\psi_0$. This is an $H$-group in the sense that there is a map 
\[i \colon \QS_\Gamma(\cV) \to \QS_\Gamma(\cV)\]
so that
\[ \theta(-, i(-)) \simeq  \id \simeq \theta( i(-),-). \]
Furthermore, there is an equivalence of $H$-spaces
\[\QS_\Gamma(\cV) \simeq  \QS_\Gamma(\cV)_0 \times \pi_0\QS_\Gamma(\cV)  \]
where $ \pi_0\QS_\Gamma(\cV)$ is given the discrete topology.
 These things are shown in \cite[Lemmas 9.2.2 \& 9.2.3]{MayPonto}.
 
Next, consider two choices $(u_1 \in \cV_1)$ and $(u_2\in \cV_2)$ where $\cV_1$ and $\cV_2$ are infinite dimensional. Let $f \in \I(\cV_1, \cV_2)$ be an isomorphism so that $f(u_1)=u_2$. Then $\QS_\Gamma(f)$ is a homeomorphism  of based spaces 
\[\QS_\Gamma(f) \colon (\QS_\Gamma(\cV_1), \QS_\Gamma(u_1)) \xrightarrow{\ \cong \ } \QS_\Gamma(\cV_2), \QS_\Gamma(u_2)) \] 
 meaning the diagram
\[ \xymatrix@C=3pc{ \QS_\Gamma(\C) \ar@{=}[r]\ar[d]_{\QS_\Gamma(u_1)} &\QS_\Gamma(\C) \ar[d]^{\QS_\Gamma(u_2)} \\
\QS_\Gamma(\cV_1) \ar[r]_-{\QS_\Gamma(f)}^-{\cong} & \QS_\Gamma(\cV_2)
}\] commutes. 
Furthermore, for any $j$, we get a commuting diagram
\[\xymatrix@C=3pc{
 \mathscr{L}_{u_1}^{\otimes}(\cV_1)(j) \ar[r]^{\cong} \ar[d]_-{\subset} &  \mathscr{L}_{u_2}^{\otimes}(\cV_2)(j) \ar[d]^-{\subset} \\
 \sK(\cV_1)(j) \ar[r]^{\cong} &  \sK(\cV_2)(j)  }
\]
where the horizontal isomorphisms are given by
\[\varphi \mapsto  f\varphi (f^{-1})^{\otimes j}.\] 
One can check that these isomorphisms commute with the strucutre maps so that we have an isomorphism of operads. From this, it is straightforward to check that we have an isomorphism of $\sE_\infty$-spaces.
 \end{proof}

It remains to prove \cref{thm:jmathHspace}.

\begin{proof}[Proof of \cref{thm:jmathHspace}]
Letting $\cV$ be $\cH^{\otimes \infty}$, then \cref{thm:jmathHspace} implies the first statement of \cref{thm:jmathHspace}. It remains to prove that $\jmath$ is a map of $H$-spaces.

The $H$-space structure is independent of the choice of 
\[\varphi \in \I(\cH^{\otimes\infty}\otimes\cH^{\otimes\infty}, \cH^{\otimes \infty} ),\] so for convenience, we choose $\varphi$ defined as follows. Write $v_1\otimes v_2\otimes \cdots$ with $v_i = v$ for $i> m$  to mean the equivalence class of a simple $v_1\otimes \cdots \otimes v_m$ in $\cH^{\otimes \infty}$.
Let
\[\varphi((v_1\otimes v_2 \otimes \cdots )  \otimes (w_1 \otimes w_2\otimes \cdots ))= v_1\otimes w_1\otimes v_2\otimes w_2\otimes \cdots \]
We need to prove that
\[\xymatrix@C=5pc{\QS_\Gamma^{\sqcup}(\cH)  \times \QS_\Gamma^{\sqcup}(\cH)  \ar[r]^-{\otimes}\ar[d]_{\jmath \times \jmath}  & \QS_\Gamma^{\sqcup}(\cH)  \ar[d]^-{\jmath}\\
 \QS_\Gamma(\cH^{\otimes \infty})   \times  \QS_\Gamma(\cH^{\otimes \infty})   \ar[r]_-{\QS_\Gamma(\varphi)\circ \etagQ} &   \QS_\Gamma(\cH^{\otimes \infty})
}\]
commutes up to homotopy. It suffices to 
find, for each $m,n\geq 0$, a homotopy between the two composites of
\[\xymatrix@C=5pc{\QS_\Gamma(\cH^{\otimes m})  \times \QS_\Gamma(\cH^{\otimes n})  \ar[r]^-{\eta^{\QS_\Gamma}}\ar[d]_{\jmath_m \times \jmath_n}  & \QS_\Gamma(\cH^{\otimes(m+n)}) \ar[d]^-{\jmath_{m+n}}\\
 \QS_\Gamma(\cH^{\otimes \infty})   \times  \QS_\Gamma(\cH^{\otimes \infty})   \ar[r]_-{\QS_\Gamma(\varphi) \circ \eta^{\QS_\Gamma}} &   \QS_\Gamma(\cH^{\otimes \infty})  .
}\]
First note that $\jmath_m = \QS_\Gamma(i_m)$, where
\[i_m:\cH^{\otimes m}\to\cH^{\otimes\infty}\] is the inclusion map.
The diagram above can be expanded into
\[
\xymatrix@C=3pc{
\QS_\Gamma(\cH^{\otimes m})\times\QS_\Gamma(\cH^{\otimes n})\ar[r]^-{\eta^{\QS_\Gamma} }\ar[d]_-{\jmath_m\times\jmath_n} &
\QS_\Gamma(\cH^{\otimes m}\otimes\cH^{\otimes n})\ar[d]_-{\QS_\Gamma(i_m\otimes i_n)} \ar[r]^-{\jmath_{m+n}} & \QS_\Gamma(\cH^{\otimes\infty})  \\
\QS_\Gamma(\cH^{\otimes\infty})\times\QS_\Gamma(\cH^{\otimes\infty})\ar[r]_-{\eta^{\QS_\Gamma}} & \QS_\Gamma(\cH^{\otimes\infty}\otimes\cH^{\otimes\infty})\ar[ur]_-{\QS_\Gamma(\varphi)} 
}
\]
\noindent where, by naturality of $\eta^{\QS_\Gamma}$, the lefthand square commutes.
It therefore, suffices to find a homotopy
\[\gamma_{m,n}:\QS_\Gamma(\varphi)\circ\QS_\Gamma(i_m\otimes i_n) \simeq\jmath_{m+n}.\]
To do this, choose a path 
\[\alpha_{m,n} \colon \interval \to\I(\cH^{\otimes\infty}\otimes\cH^{\otimes\infty}, \cH^{\otimes \infty} )\] 
from $\varphi$ to
\begin{align*}
\varphi_{m,n}((v_1\otimes v_2 \otimes \cdots ) & \otimes (w_1 \otimes w_2\otimes \cdots )) = \\
 & v_1\otimes \cdots\otimes  v_m \otimes w_1\otimes \cdots \otimes w_n \otimes v_{m+1}\otimes w_{n+1} \otimes \cdots .
 \end{align*}
 Since $\QS_\Gamma \colon \sJ \to \CGTop$ is a topologically enriched functor, the map
\[\QS_\Gamma \colon \I((\cH^{\otimes \infty})^{\otimes 2}, \cH^{\otimes \infty} ) \to \Map(\QS_\Gamma ((\cH^{\otimes \infty})^{\otimes 2}),\QS_\Gamma( \cH^{\otimes \infty} )) \]
which sends $\varphi$ to $\QS_\Gamma (\varphi)$ is continuous. Hence, so is the restriction along $\alpha_{m,n}$. This is the map
\[ \QS_\Gamma \circ \alpha_{m,n} \colon \interval \to \Map(\QS_\Gamma (\cH^{\otimes \infty}\otimes \cH^{\otimes \infty} ),\QS_\Gamma( \cH^{\otimes \infty} ))\]
which sends $t$ to $\QS_\Gamma(\alpha_{m,n}(t))$.  Since $\CGTop$ has an exponential law, the adjoint  
\[\beta_{m,n} \colon  \interval \times \QS_\Gamma (\cH^{\otimes \infty}\otimes \cH^{\otimes \infty} ) \to \QS_\Gamma( \cH^{\otimes \infty} )  \]
given by
\[\beta_{m,n}(t,\omega) = \QS_\Gamma(\alpha_{m,n}(t))(\omega) \]
 is also continuous. The homotopy is then
\[\gamma_{m,n} = \beta_{m,n} \circ ( \id_\interval \times \QS_\Gamma(i_m\otimes i_n))\colon I\times\QS_\Gamma(\cH^{\otimes m}\otimes\cH^{\otimes n}) \to   \QS_\Gamma( \cH^{\otimes \infty} ) .\qedhere \]
\end{proof}

\subsection{Parametrized phases and group completion}\label{sec:groupcompletion}

In this section, we fix a quantum state type $\QS_\Gamma$ and a pair $(\cH,\psi)$ consisting of an object $\cH$ of $\fHilb$ and a pure state $\psi$ of $\cH$. We also let $X$ denote a parameter space in the sense of \cref{defn:parameterspace}.

Since $\QS_\Gamma(\cH)$ is a strictly associative and unital topological monoid  in $\CGTop$, then so is $\Map(X, \QS_\Gamma(\cH))$ with respect to the monoidal product 
\[(f\otimes g)(x) = f(x)\otimes g(x).\]
Similarly,  the space $\Map(X, \QS_\Gamma(\cH^{\otimes \infty}))$ is an $\sE_\infty$-space since $\QS_\Gamma(\cH^{\otimes \infty})$ has that structure. Furthermore, post-composition with $\jmath$ gives a map of $H$-spaces
\[ \Map(X,\jmath) \colon \Map(X, \QS_\Gamma(\cH)) \to \Map(X, \QS_\Gamma(\cH^{\otimes \infty})) .\]
In this section, we will study this map and its relationship with group completion.

First, note that applying $\pi_0$ to an $H$-space produces a monoid. Furthermore, for any space $Y$, 
\[  [X,Y] = \pi_0\Map(X,Y).\] Hence, we get a morphism of commutative monoids
\[ [X,\jmath] \colon [X,  \QS_\Gamma(\cH)] \to [X, \QS_\Gamma(\cH^{\otimes \infty})] .\]

\begin{rem}Applying $\pi_0$ to $\Map(X, \QS_\Gamma(\cH))$ is the step of taking deformation classes in the passage from quantum systems to phases, as described in \cref{steps} \eqref{def}. We next explain how  $[X,\jmath] $ is precisely the quotient by stacking stabilization, as described in \cref{steps} \eqref{stab}.   This justifies our terminology of calling $[X, \QS_\Gamma(\cH^{\otimes \infty})]$ the monoid of quantum phases parametrized by $X$.
\end{rem}

\begin{thm}\label{thm:jquotientiso}
For a parameter space $X$, the map $\jmath$ induces a map of monoids
\[\pi_0\Map(X,\jmath) \colon \Map(X,  \QS_\Gamma^{\sqcup}(\cH)) \to [X,  \QS_\Gamma(\cH^{\otimes \infty})], \]
which identifies $[X,  \QS_\Gamma(\cH^{\otimes \infty})]$ with the quotient of the space of quantum systems
\[ \Map(X,  \QS_\Gamma^{\sqcup}(\cH))\] 
by the relation $\omega_1 \approx \omega_2$ if and only if there are $i,j \geq 0$ such that
\[ \omega_1 \otimes \psi_i \simeq \omega_2 \otimes \psi_j.\]
\end{thm}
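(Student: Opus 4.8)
The plan is to realize $[X,\QS_\Gamma(\cH^{\otimes\infty})]$ as the image of $\pi_0\Map(X,\jmath)$ and then identify its fibers with the $\approx$-classes. First I would record that the map is a monoid homomorphism: by \cref{thm:jmathHspace} the map $\jmath$ is a morphism of $H$-spaces, so $\Map(X,\jmath)$ is as well, and passing to $\pi_0=[X,-]$ gives a homomorphism of commutative monoids. The easy half of the identification is that $\approx$-equivalent systems have the same image. Writing $Y_k=\QS_\Gamma(\cH^{\otimes k})$, the defining property of the colimit $\QS_\Gamma(\cH^{\otimes\infty})=\colim_k Y_k$ along the vacuum-stabilization maps $\QS_\Gamma(\ff_k)$ gives $\jmath_{k+i}\circ\QS_\Gamma(\ff_{k+i-1})\circ\cdots\circ\QS_\Gamma(\ff_k)=\jmath_k$, whence $\jmath\circ(\omega\otimes\psi_i)=\jmath\circ\omega$ (checked on each subset of $X$ on which $\omega$ has constant index). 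Since $\jmath$ is continuous this shows $\omega_1\otimes\psi_i\simeq\omega_2\otimes\psi_j$ implies $\jmath\circ\omega_1\simeq\jmath\circ\omega_2$, so the homomorphism factors through $\Map(X,\QS_\Gamma^{\sqcup}(\cH))/\approx$.

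The engine for the two remaining points is the comparison isomorphism established in \cref{sec:CWreplacements}, namely $\colim_k[X,Y_k]\xrightarrow{\cong}[X,\QS_\Gamma(\cH^{\otimes\infty})]$, valid because $X$ is a parameter space and the $\QS_\Gamma(\ff_k)$ are closed embeddings. For surjectivity, given $\varphi\colon X\to\QS_\Gamma(\cH^{\otimes\infty})$, compactness of $X$ together with part \eqref{secondclaim} of \cref{lem:CGTOPFACTS} gives $\varphi(X)\subset\jmath_k(Y_k)$ for some $k$; as $\jmath_k$ is a homeomorphism onto its image, $\varphi$ lifts to $g\colon X\to Y_k$, and $\omega=\mu_k\circ g$ is a parametrized system with $\jmath\circ\omega=\varphi$. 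Hence the homomorphism is onto.

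Injectivity of the induced map on $\approx$-classes is the crux. Suppose $\jmath\circ\omega_1\simeq\jmath\circ\omega_2$. When $X$ is connected each $\omega_r$ has a single index $k_r$, so $[\omega_1]\in[X,Y_{k_1}]$ and $[\omega_2]\in[X,Y_{k_2}]$ have equal image in $\colim_k[X,Y_k]$; the elementary description of a filtered colimit of sets then furnishes an $m\ge k_1,k_2$ with equal images already in $[X,Y_m]$, i.e. $\omega_1\otimes\psi_{m-k_1}\simeq\omega_2\otimes\psi_{m-k_2}$, which is exactly $\omega_1\approx\omega_2$ with $i=m-k_1$ and $j=m-k_2$. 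The step I expect to be the genuine obstacle is the passage to a disconnected parameter space. There $\omega$ carries a locally constant index function $x\mapsto d(x)$ with $\omega(x)\in Y_{d(x)}$; a homotopy in $\QS_\Gamma^{\sqcup}(\cH)$ preserves $d$, whereas $\otimes\psi_i$ shifts $d$ only by a single global constant $i$, while lifting a map into the colimit pads each component by independently many copies of the vacuum. Comparing these two forms of stabilization therefore requires a careful component-by-component analysis (the number of components being finite by compactness and \eqref{secondclaim} of \cref{lem:CGTOPFACTS}), which I would isolate as a separate lemma that reduces to the connected case on each component and then reassembles. This index bookkeeping is exactly the mechanism that produces the residual $[X,\Z]$ summand appearing in \cref{thm:locisgroup}, and it is the most delicate part of the argument.
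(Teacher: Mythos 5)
Your route---well-definedness on $\approx$-classes via the colimit identities $\jmath_{k+i}\circ\QS_\Gamma(\ff_{k+i-1})\circ\cdots\circ\QS_\Gamma(\ff_k)=\jmath_k$, surjectivity via compactness and part \eqref{secondclaim} of \cref{lem:CGTOPFACTS}, and injectivity for connected $X$ via the filtered-colimit-of-sets description of $\colim_k[X,\QS_\Gamma(\cH^{\otimes k})]$---is exactly the paper's argument, and those parts are complete. The step you leave open, reassembling the connected case over a disconnected parameter space, is where the paper's proof also does its remaining work, and your instinct that it is the delicate point is correct; but it is not merely delicate, and the ``separate lemma'' you defer to cannot be proved in the form you describe. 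The obstruction you yourself name is genuine: a homotopy in $\QS_\Gamma^{\sqcup}(\cH)$ preserves the locally constant index function $d$, the relation $\approx$ only shifts $d$ by a single global constant, whereas agreement of $\jmath\circ\omega_1$ and $\jmath\circ\omega_2$ in $[X,\QS_\Gamma(\cH^{\otimes\infty})]$ only forces stabilization on each component by independently chosen amounts. Concretely, on a component $X_\alpha$ the existence of $n_\alpha,m_\alpha$ with $\omega_{1,\alpha}\otimes\psi_{n_\alpha}\simeq\omega_{2,\alpha}\otimes\psi_{m_\alpha}$ forces $m_\alpha-n_\alpha=d_{1,\alpha}-d_{2,\alpha}$, so a single global pair $(i,j)$ exists only when this difference is independent of $\alpha$. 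It need not be: let $X$ be two points, $\omega_1$ the map sending the first point to $\psi_0\in\QS_\Gamma(\cH^{\otimes 0})$ and the second to $\psi_1\in\QS_\Gamma(\cH)$, and $\omega_2$ the map with these values interchanged. Both compose with $\jmath$ to the constant basepoint map of $\QS_\Gamma(\cH^{\otimes\infty})$, yet $\omega_1\otimes\psi_i\simeq\omega_2\otimes\psi_j$ would require $i=j+1$ and $i+1=j$ simultaneously.

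Thus the component-by-component analysis identifies the fibers of $\pi_0\Map(X,\jmath)$ with the relation ``$\omega_{1,\alpha}\approx\omega_{2,\alpha}$ for every $\alpha$,'' which is strictly coarser than $\approx$ as literally defined whenever the index discrepancy varies over components. To close your argument you must either restrict to connected $X$, or allow the stabilizing trivial systems to be locally constant (an independent $\psi_{n_\alpha}$ on each component), in which case your reduction does reassemble. Do not plan to quote the paper's reassembly step as written: it takes $n=\sum_\alpha n_\alpha$ and $m=\sum_\alpha m_\alpha$, which does not resolve the index mismatch above. Your closing observation that this bookkeeping is the source of the $[X,\Z]$ factor in \cref{thm:locisgroup} is exactly right---that factor records precisely the index function that obstructs the naive reassembly.
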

\begin{proof}
That $\approx$ is an equivalence relation that preserves the monoidal operation is straightforward to verify.

Since $X$ is compact the canonical map
\[  \colim_{i,\Map(X,\QS_\Gamma(\ff_i))} \Map(X, \QS_\Gamma(\cH^{\otimes i})) \to \Map(X, \QS_\Gamma(\cH^{\otimes \infty}))\]
is a homeomorphism. But $\pi_0$ also commutes with directed colimits along closed embeddings since both $\pt$ and $\interval$ are compact, so applying $\pi_0$ to both sides gives an isomorphism
\[  \colim_{i,[X,\QS_\Gamma(\ff_i)]} [X, \QS_\Gamma(\cH^{\otimes i})] \xrightarrow{ \ \cong \ } [X, \QS_\Gamma(\cH^{\otimes \infty})].\]

The morphism
\[[X,\jmath] \colon[X, \QS_\Gamma^{\sqcup}(\cH)] \to \colim_{i,\QS_\Gamma(\ff_i)} [X,\QS_\Gamma(\cH^{\otimes i}) ] \cong [X, \QS_\Gamma(\cH^{\otimes \infty})]\]
is clearly surjective. 
So, it remains to show that $\jmath(f)=\jmath(g)$ if and only if $f\approx g$. 

First observe that, since a parameter space $X$ is compact, it has a finite number of path components $X_\alpha$. For $f \colon X \to \QS_\Gamma^{\sqcup}(\cH)$, let $f_\alpha$ be the restriction of $f$ to $X_\alpha$. Then $f\approx g$ implies that $f_\alpha \approx g_\alpha$ for each $\alpha$. But the converse also holds. Indeed, if $f_\alpha \otimes \psi_{n_\alpha}  \simeq  g_\alpha \otimes\psi_{m_\alpha}$, then $f \otimes \psi_n \simeq g\otimes \psi_m$ for  $n = \sum n_\alpha$ and $m = \sum m_\alpha$.
 
Therefore, it suffices to justify the claim when $X$ is connected.
In this case,
\[[X,\QS_\Gamma^{\sqcup}(\cH)] \cong \coprod_{i\geq 0} [X,\QS_\Gamma(\cH^{\otimes i})].\]
Any element in the colimit is equal to $\jmath_k  f$ for some $f \colon X \to \QS_\Gamma(\cH^{\otimes k})$. Furthermore, for $g\colon X \to \QS_\Gamma(\cH^{\otimes \ell})$,  $\jmath_k  f  \simeq \jmath_\ell  g $ if and only if there are $n$ and $m$ so that $f\otimes \psi_n \simeq g\otimes \psi_m$. This is because
\[\QS_\Gamma(\ff_{i+j-1})\cdots \QS_\Gamma(\ff_{i+1})\QS_\Gamma(\ff_i)(f)= f\otimes \psi_j.\]
Hence, for $X$ connected, the claim holds.
\end{proof}

\bigskip

The remainder of this section focuses on relating  the passage from quantum systems to phases, to the process of Grothendieck group completion (a.k.a., $K$-theory). This material is not necessary for the rest of the paper, so the reader who is not interested in this relationship can skip to \cref{sec:loopspectra}. 

\bigskip
To relate our framework to group completion, we will need to introduce another space. 
\begin{defn}\label{defn:QSC}
Let $\QS_\Gamma$ be a quantum state type. Let $\cH$ be an object of $\fHilbo$ of dimension at least two and $\psi$ be a pure state of $\cH$.
Let
\[ \QSC (\cH) = \colim_{j, \coprod \QS_\Gamma(\ff_i)}  \QS_{\Gamma}^{\sqcup} (\cH),\]
where $ \coprod \QS_\Gamma(\ff_i)(\omega) = \omega\otimes \psi$.
That is, $ \QSC (\cH) $ is the colimit of the diagram
\[ \xymatrix@C=3pc{\QS_{\Gamma}^{\sqcup} (\cH)  \ar[r]^-{\coprod \QS_\Gamma(\ff_i)} \ar[r] &   \QS_{\Gamma}^{\sqcup} (\cH)   \ar[r]^-{\coprod \QS_\Gamma(\ff_i) }\ar[r]  &   \QS_{\Gamma}^{\sqcup} (\cH)   \ar[r]^-{\coprod \QS_\Gamma(\ff_i) }\ar[r]  & \cdots 
 }
\]
Let
\[\iota_k \colon  \QS_{\Gamma}^{\sqcup} (\cH)  \to  \QSC (\cH)  \]
be the inclusion in the colimit through the $k$th term of the diagram. We base $\QSC (\cH)$ at $\iota_0 \circ \mu_0$.

For any $k\in \Z$, let
\[\kappa_k \colon \QS_{\Gamma}(\cH^{\otimes \infty}) \to  \QSC (\cH) \]
be the unique map satisfying
\[  \kappa_k\circ \jmath_{k+\ell} = \iota_\ell \circ \mu_{k+\ell}.\]
\end{defn}

\begin{rem}
If  $  \QS_{\Gamma}^{\sqcup} (\cH)_\ell$ denotes the $\ell$th copy of $ \QS_{\Gamma}^{\sqcup} (\cH)$ in the colimit diagram defining $\QSC(\cH)$, then $\kappa_k$ is the map on colimits induced by the map of diagrams
\[ \xymatrix@C=3pc{ \QS_\Gamma(\cH^{\otimes (k+\ell)}) \ar[d]^-{\subset}_-{\mu_{k+\ell}} \ar[r]^-{\QS_\Gamma(\ff_{k+\ell})}  & \QS_{\Gamma} (\cH^{\otimes (k+\ell+1)})  \ar[d]^-{\subset}_-{\mu_{k+\ell+1}}\ar[r]^-{\QS_\Gamma(\ff_{k+\ell+1})}  &   \QS_{\Gamma} (\cH^{\otimes (k+\ell+2)}) \ar[d]^-{\subset}_-{\mu_{k+\ell+2}}  \ar[r]^-{\QS_\Gamma(\ff_{k+\ell+2})}  & \cdots  \\
  \QS_{\Gamma}^{\sqcup} (\cH)_\ell   \ar[r]^-{\coprod \QS_\Gamma(\ff_i)} & \QS_{\Gamma}^{\sqcup} (\cH)_{\ell+1}  \ar[r]^-{\coprod \QS_\Gamma(\ff_i)} &   \QS_{\Gamma}^{\sqcup} (\cH)_{\ell+2}   \ar[r]^-{\coprod \QS_\Gamma(\ff_i)}  & \cdots  
 }\]
where it is understood that we only begin the diagram when $k+\ell, k, \ell\geq 0$. 
\end{rem}

\begin{ex}\label{rem:invert}
The construction of $ \QSC(\cH)$ is a generalization of the following standard algebraic construction.
Let $M$ be a commutative monoid (in the algebraic rather than topological sense) and choose an element $m \in M$. Define
\[m^{-1}M :=
\colim \left( \xymatrix{ M \ar[r]^-{ m} & M \ar[r]^-{ m} & M \ar[r]^-{ m} & \cdots} \right) \]
and let
\[\iota_0 \colon M \to m^{-1}M\]
be the map coming from the inclusion of the first factor of $M$ in the colimit diagram. 
Then $\iota_0$ has the following universal property. For any morphism of monoids $\phi \colon M\to N$ such that $\phi(m)$ is invertible in $N$,  there is a factorization in the category of monoids:
\[\xymatrix{M \ar[dr]_-{\iota_0} \ar[rr]^-{\phi} &  &  N \\
 & m^{-1}M  \ar[ur]_-{m^{-1}\phi} & }.  \]
 For example, for $\N_0$  the non-negative integers under addition, the inclusion $\N_0 \to \Z$ induces an isomorphism
 \[1^{-1}\N_0 = \colim \left( \xymatrix{ \N_0 \ar[r]^-{+1} & \N_0 \ar[r]^-{+1} & \cdots}\right)  \cong \Z.\]
\end{ex}

\begin{lem}
For any parameter space $X$, 
\[[X,\iota_0]\colon [X,\QS_{\Gamma}^{\sqcup} (\cH) ] \to [X,\QSC (\cH) ] \]
 is the localization of the monoid $[X,\QS_{\Gamma}^{\sqcup} (\cH) ]$ at the constant map at $\psi$.
\end{lem}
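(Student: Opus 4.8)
The plan is to show that applying the set-valued functor $\pi_0\Map(X,-)$ to the telescope defining $\QSC(\cH)$ reproduces verbatim the algebraic localization of \cref{rem:invert}, with $M = [X, \QS_\Gamma^{\sqcup}(\cH)]$ and $m$ the class of the constant map at $\psi$. First I would record that each transition map $\coprod \QS_\Gamma(\ff_i) \colon \QS_\Gamma^{\sqcup}(\cH) \to \QS_\Gamma^{\sqcup}(\cH)$ is a closed embedding: the map $\QS_\Gamma(\ff_i)$ is a closed embedding by the defining properties of a quantum state type, and on the coproduct it carries the $i$-th summand onto a closed subset of the (clopen) $(i+1)$-st summand. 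The spaces $\QS_\Gamma(\cH^{\otimes i})$ are compactly generated weak Hausdorff, and a countable coproduct of such is again in $\CGTop$, so the hypotheses of \cref{lem:CGTOPFACTS} are met.

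Since $X$ is a parameter space, hence compact Hausdorff, part \eqref{fourthclaim} of \cref{lem:CGTOPFACTS} gives a homeomorphism
\[\Map(X, \QSC(\cH)) \cong \colim_{j} \Map\bigl(X, \QS_\Gamma^{\sqcup}(\cH)\bigr),\]
a directed colimit whose transition maps $\Map(X, \coprod\QS_\Gamma(\ff_i))$ are themselves closed embeddings. Because $\pt$ and $\interval$ are compact, $\pi_0$ commutes with directed colimits along closed embeddings (by the same argument as in the proof of \cref{thm:jquotientiso}, invoking part \eqref{secondclaim} of \cref{lem:CGTOPFACTS}), so I obtain an identification of underlying sets
\[[X, \QSC(\cH)] \cong \colim_{j} [X, \QS_\Gamma^{\sqcup}(\cH)].\]

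Next I would identify the transition maps of this colimit. Postcomposition with $\coprod \QS_\Gamma(\ff_i)$ sends a parametrized system $f$ to $f \otimes \psi$, so on $[X, \QS_\Gamma^{\sqcup}(\cH)]$ it acts as multiplication by the class $m = [\psi]$ of the constant map at $\psi$. Thus the colimit above is exactly the telescope $\colim(M \xrightarrow{m} M \xrightarrow{m} \cdots)$ of \cref{rem:invert}, whose underlying set carries the localization monoid structure $m^{-1}M$; under this identification $[X,\iota_0]$ becomes the canonical inclusion $\iota_0 \colon M \to m^{-1}M$ of the first factor, which \cref{rem:invert} exhibits as the localization of $M$ at $m$. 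The only nonformal point is that the product on the colimit, formed by tensoring representatives and adding telescope levels, is well defined under the transition maps: this requires the relation $\omega \otimes \psi \otimes \tau \simeq \omega \otimes \tau \otimes \psi$, which holds because $\QS_\Gamma^{\sqcup}(\cH)$ is homotopy commutative (\cref{prop:QStopmonoid}) and we have already passed to $\pi_0$. I expect the main obstacle to be precisely this bookkeeping across the two colimit-interchange steps, and in particular confirming that the transition maps remain closed embeddings after applying $\Map(X,-)$ so that $\pi_0$ may be pulled inside the colimit; once these formal points are settled, the comparison with the purely algebraic localization of \cref{rem:invert} is immediate.
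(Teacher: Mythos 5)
Your proposal is correct and follows essentially the same route as the paper's (much terser) proof: reduce to the identification $[X,\QSC(\cH)] \cong \colim_{i,\psi}[X,\QS_\Gamma^{\sqcup}(\cH)]$ via compactness of $X$ and \cref{lem:CGTOPFACTS}, then invoke \cref{rem:invert}. The extra care you take with the closed-embedding hypotheses and the well-definedness of the monoid structure on the colimit is exactly the bookkeeping the paper leaves implicit.
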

\begin{proof}
Based on \cref{rem:invert}, it suffices to show that
\[  [X,\QSC (\cH) ]   \cong  \colim_{i,  \psi} [X, \QS_{\Gamma}^{\sqcup} (\cH) ] =  \psi^{-1} [X,\QS_{\Gamma}^{\sqcup} (\cH) ]. \]
As in  the proof of \cref{thm:jquotientiso}, this holds because $X$ is compact.
\end{proof}

\begin{rem}
Since the localization of a monoid is again a monoid, we get a monoid structure on $ [X,\QSC (\cH) ]$. Henceforth, this is the structure we mean when discussing algebraic properties of this set.
\end{rem}

\begin{defn}\label{defn:prodisomaps}
Let $\QS_\Gamma$ be a quantum state type. 
 Let
\[\psi^{-1}\jmath \colon \QSC(\cH) \to \QS_{\Gamma}(\cH^{\otimes \infty})   \]
be
the map induced by the diagram 
\[\xymatrix{
\QS_{\Gamma}(\cH)_0 \ar[r]^-{\psi} \ar[d]^-{\jmath} & \QS_{\Gamma}(\cH)_1  \ar[r]^-{\psi} \ar[d]^-{\jmath}& \QS_{\Gamma}(\cH)_2  \ar[r]^-{\psi}   \ar[d]^-{\jmath} & \cdots \\
 \QS_{\Gamma}(\cH^{\otimes \infty}) \ar[r]^-{=} &  \QS_{\Gamma}(\cH^{\otimes \infty})  \ar[r]^-{=}  &  \QS_{\Gamma}(\cH^{\otimes \infty})  \ar[r]^-{=}  & \cdots .
}\]
That is, $\psi^{-1}\jmath$ is the unique map so that for $k,\ell\geq 0$,
\[\psi^{-1}\jmath\circ\iota_\ell\circ\mu_k = \jmath_k.\] 

 Let $\N_0$ have the discrete topology. Let
\[\cc \colon \QS_\Gamma^{\sqcup}(\cH) \to \N_0\]
be the map of topological monoids which takes $\QS_\Gamma(\cH^{\otimes i})$ to $i$.
 Let
\[\psi^{-1}\cc \colon \QSC \to \cc(\psi)^{-1} \N_0 \cong \Z\]
be the induced map on localizations, where $\Z$ has the discrete topology.
This is the map induced by the diagram
\[\xymatrix{
  \QS_\Gamma^{\sqcup}(\cH)_0  \ar[r]^-{\psi} \ar[d]^-{\cc} &  \QS_\Gamma^{\sqcup}(\cH)_1  \ar[r]^-{\psi}  \ar[d]^-{\cc}  &  \QS_\Gamma^{\sqcup}(\cH)_2   \ar[r]  \ar[d]^-{\cc} & \cdots \\
 \N_0 \ar[r]^-{+1} & \N_0 \ar[r]^-{+1} & \N_0  \ar[r]^{+1} & \cdots .
}\]
That is, $\psi^{-1}\cc$ is the unique map so that for $k,\ell\geq 0$,
\[\psi^{-1}\cc\circ\iota_\ell\circ\mu_{k} = k - \ell.\]
\end{defn}

\begin{lem}\label{lem:isoprod}
Let $\QS_\Gamma$ be a quantum state type. Then
\[ \psi^{-1}\jmath \times \psi^{-1}\cc \colon \QSC (\cH) \to  \QS_\Gamma(\cH^{\otimes \infty}) \times \Z  \]
is a homeomorphism.
\end{lem}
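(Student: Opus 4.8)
The plan is to construct an explicit continuous two-sided inverse to $\Psi := \psi^{-1}\jmath \times \psi^{-1}\cc$ out of the maps $\kappa_k$ from \cref{defn:QSC}. Concretely, I would set
\[
  \Theta \colon \QS_\Gamma(\cH^{\otimes \infty}) \times \Z \to \QSC(\cH), \qquad \Theta(\omega, k) = \kappa_k(\omega),
\]
and show that $\Theta$ and $\Psi$ are mutually inverse. Continuity of $\Psi$ is immediate: it is the product of the maps $\psi^{-1}\jmath$ and $\psi^{-1}\cc$ of \cref{defn:prodisomaps}, both continuous by construction. For $\Theta$, since $\Z$ is discrete the space $\QS_\Gamma(\cH^{\otimes \infty}) \times \Z$ is the coproduct $\coprod_{k\in\Z} \QS_\Gamma(\cH^{\otimes \infty}) \times \{k\}$ in $\CGTop$, so a map out of it is continuous exactly when each restriction is; on the $k$-th summand $\Theta$ agrees with the continuous map $\kappa_k$, so $\Theta$ is continuous.

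Next I would verify that the two composites are the identity, working directly with the structure maps. For $\Psi\circ\Theta$, given $(\omega,k)$ choose $\ell$ large and $\xi \in \QS_\Gamma(\cH^{\otimes(k+\ell)})$ with $\omega = \jmath_{k+\ell}(\xi)$; then the defining relation for $\kappa_k$ gives $\Theta(\omega,k) = \kappa_k(\jmath_{k+\ell}(\xi)) = \iota_\ell(\mu_{k+\ell}(\xi))$, whence $\psi^{-1}\jmath(\Theta(\omega,k)) = \jmath_{k+\ell}(\xi) = \omega$ and $\psi^{-1}\cc(\Theta(\omega,k)) = (k+\ell)-\ell = k$, so $\Psi(\Theta(\omega,k)) = (\omega,k)$. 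For $\Theta\circ\Psi$, every point of $\QSC(\cH)$ can be written $\iota_\ell(\mu_n(\xi))$ with $\ell,n\geq 0$ and $\xi \in \QS_\Gamma(\cH^{\otimes n})$, since $\QSC(\cH)$ is the colimit of copies of $\QS_\Gamma^{\sqcup}(\cH) = \coprod_n \QS_\Gamma(\cH^{\otimes n})$. Then $\Psi(\iota_\ell(\mu_n(\xi))) = (\jmath_n(\xi), n-\ell)$, and applying the defining relation $\kappa_{n-\ell}\circ\jmath_n = \iota_\ell\circ\mu_n$ yields $\Theta(\jmath_n(\xi), n-\ell) = \kappa_{n-\ell}(\jmath_n(\xi)) = \iota_\ell(\mu_n(\xi))$. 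Because $\Psi$ and $\Theta$ are well-defined maps, agreement on these representatives suffices, and independence of the chosen representative is automatic.

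These verifications are pure bookkeeping with the colimit structure maps, so I expect the only point demanding genuine care to be topological: ensuring $\Theta$ is continuous. This rests on (i) treating $\QS_\Gamma(\cH^{\otimes \infty}) \times \Z$ as a coproduct in $\CGTop$ --- valid because $\Z$ is discrete and coproducts of compactly generated spaces are compactly generated --- and (ii) the continuity of each $\kappa_k$, $k \in \Z$, which follows from its description in \cref{defn:QSC} as the map induced on colimits by a map of diagrams along closed embeddings. With continuity of $\Theta$ in hand, $\Psi$ is a continuous bijection with continuous inverse, hence a homeomorphism, as claimed.
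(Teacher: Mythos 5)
Your proposal is correct and follows essentially the same route as the paper: the inverse you build from the maps $\kappa_k$ is exactly the paper's map $\kappa = \coprod_{k\in\Z}\kappa_k$, and both composites are checked using the same defining relation $\kappa_k\circ\jmath_{k+\ell}=\iota_\ell\circ\mu_{k+\ell}$. Your extra care on the continuity of $\Theta$ via the coproduct decomposition over the discrete $\Z$ is a fine (if slightly more explicit) version of the paper's ``continuous by construction.''
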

\begin{proof}
Let $\kappa$ denote the map
\[ \QS_{\Gamma}(\cH^{\otimes \infty})  \times \Z \cong \coprod_\Z  \QS_{\Gamma}(\cH^{\otimes \infty})\xrightarrow{\coprod_{k\in\Z}\kappa_k}\QSC (\cH) .\]
Both $\kappa$ and $\psi^{-1}\jmath\times\psi^{-1}\cc$ are continuous by construction. We will show that they are inverses.

Take $(\jmath_j(\omega), k)\in\QS_\Gamma(\cH^{\otimes\infty})\times\Z$ for
some $\omega\in\QS_\Gamma(\cH^{\otimes j})$. We can and will choose $j$ so that
$i = j-k \geq 0$. Now
\[\kappa (\jmath_j(\omega),k) = \kappa_k\circ\jmath_j(\omega) =
\kappa_k\circ\jmath_{i+k}(\omega) = \iota_i\circ \mu_{i+k}(\omega),\]
and
\[(\psi^{-1}\jmath\times\psi^{-1}\cc) (\iota_i\circ\mu_{i+k}(\omega))
 = (\jmath_{i+k}(\omega),k). \]
 
Now take $\iota_i\circ\mu_j(\omega)\in\QSC (\cH)$, for some $\omega\in
\QS_\Gamma(\cH^{\otimes j})$. Then
\[\psi^{-1}\jmath\times\psi^{-1}\cc(\iota_i\circ\mu_j(\omega)) =
(\jmath_j(\omega), j-i),\] and
\[\kappa(\jmath_j(\omega),j-i) = \kappa_{j-i}\circ\jmath_j(\omega) =
\iota_i\circ\mu_j(\omega). \qedhere\]
\end{proof}

\begin{rem}\label{lem:kappakim}
The image of $\kappa_k \colon \QS_{\Gamma}(\cH^{\otimes \infty}) \to  \QSC (\cH) $ as in  \cref{defn:classifyingspaces},
 consists of the elements that can be expressed as  $\iota_j(\mu_{k+j}(\omega))$ for
$ \omega\in \QS_\Gamma(\cH^{\otimes (k+j)})$.
\end{rem}

\begin{prop}\label{prop:K0exact}
Let $X$ be a parameter space.
For any quantum state type $\QS_\Gamma$, there is an isomorphism of monoids
\[[X, \psi^{-1}\jmath] \  \times [X, \psi^{-1}\cc ] \ \colon[X, \QSC(\cH)] \xrightarrow{ \ \cong \ } [X, \QS_\Gamma(\cH^{\otimes \infty})] \times [X, \Z]  .\]
\end{prop}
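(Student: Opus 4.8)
The plan is to deduce everything from \cref{lem:isoprod} together with the universal property of localization. The bijectivity will be essentially formal, while the content lies in matching up the monoid structures.

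First I would establish the bijection. Since $\psi^{-1}\jmath \times \psi^{-1}\cc$ is a homeomorphism by \cref{lem:isoprod}, applying $\Map(X,-)$ and then $\pi_0 = [X,-]$ produces a bijection
\[ [X, \psi^{-1}\jmath \times \psi^{-1}\cc] \colon [X, \QSC(\cH)] \xrightarrow{\ \cong\ } [X, \QS_\Gamma(\cH^{\otimes \infty}) \times \Z]. \]
Composing with the natural bijection $[X, A\times B]\cong [X,A]\times[X,B]$ (which holds because $\Map(X,-)$ sends products to products and $\pi_0$ preserves finite products), and observing via the projections that this identifies the composite with $[X,\psi^{-1}\jmath]\times[X,\psi^{-1}\cc]$, I obtain that the map in the statement is a bijection of sets.

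It then remains to check that it is a homomorphism of monoids, and here I would use that a map into a product monoid is a homomorphism precisely when each of its two components is. Thus it suffices to show that $[X,\psi^{-1}\jmath]$ and $[X,\psi^{-1}\cc]$ are monoid maps. By the localization lemma preceding this proposition, $[X,\iota_0]$ exhibits $[X,\QSC(\cH)]$ as the localization of $[X,\QS_\Gamma^{\sqcup}(\cH)]$ at the class $[\psi]$ of the constant map, so by the universal property recorded in \cref{rem:invert} a monoid homomorphism out of $[X,\QSC(\cH)]$ is the same datum as a monoid homomorphism out of $[X,\QS_\Gamma^{\sqcup}(\cH)]$ sending $[\psi]$ to an invertible element. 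Now $[X,\jmath]$ is a monoid homomorphism because $\jmath$ is a map of $H$-spaces by \cref{thm:jmathHspace}, and it sends $[\psi]$ to the class of the constant map at the base point, which is the unit; likewise $[X,\cc]$ followed by the inclusion $[X,\N_0]\hookrightarrow[X,\Z]$ is a monoid homomorphism sending $[\psi]$ to the constant map at $1$, which is invertible in the group $[X,\Z]$. The induced localization maps are exactly $[X,\psi^{-1}\jmath]$ and $[X,\psi^{-1}\cc]$: using compactness of $X$, every class in $[X,\QSC(\cH)]$ has the form $[\iota_\ell\circ h]=[\psi]^{-\ell}\cdot[X,\iota_0]([h])$ for some $h\colon X\to\QS_\Gamma^{\sqcup}(\cH)$, and the relations $\psi^{-1}\jmath\circ\iota_\ell=\jmath$ and $\psi^{-1}\cc\circ\iota_\ell\circ\mu_k=k-\ell$ from \cref{defn:prodisomaps} show that $[X,\psi^{-1}\jmath]$ and $[X,\psi^{-1}\cc]$ agree on such representatives with the unique homomorphisms furnished by the universal property. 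Hence both components, and therefore the product map, are monoid homomorphisms.

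Putting the two parts together gives the desired isomorphism of monoids. The main obstacle is this second step: bijectivity is immediate once \cref{lem:isoprod} is in hand, but the homomorphism property requires correctly matching the localization monoid structure on $[X,\QSC(\cH)]$ with the product structure on the target and pinning down the induced maps as $[X,\psi^{-1}\jmath]$ and $[X,\psi^{-1}\cc]$. I would stress that invertibility of $\QS_\Gamma$ is nowhere required: the class $[\psi]$ always maps to the unit under $\jmath$ and to an invertible element of the group $[X,\Z]$, so both localizations exist unconditionally.
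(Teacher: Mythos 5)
Your proposal is correct and follows essentially the same route as the paper: bijectivity is read off from \cref{lem:isoprod} after applying $\Map(X,-)$ and $\pi_0$, and the monoid-homomorphism property is reduced, via the universal property of localization, to the facts that $\jmath$ and $\cc$ are maps of $H$-spaces and that $[X,\psi^{-1}\jmath]=\psi^{-1}[X,\jmath]$ and $[X,\psi^{-1}\cc]=\psi^{-1}[X,\cc]$. Your expansion of the latter identification (checking the induced localization maps on representatives $[\iota_\ell\circ h]$) simply spells out what the paper leaves implicit, and your observation that invertibility of $\QS_\Gamma$ is not needed is consistent with the paper's hypotheses.
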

\begin{proof}
This is a bijection by \cref{lem:isoprod}. We just need to justify why these are maps of monoids. But $\cc$ and $\jmath$ are maps of commutative and associative $H$-spaces, so this follows from the fact that 
\[[X,\psi^{-1}\cc] = \psi^{-1}[X,\cc], \quad \quad [X,\psi^{-1}\jmath] =\psi^{-1} [X,\jmath] . \qedhere\] 
\end{proof}

We are almost ready to state our group completion results. This result only applies to invertible quantum state types.

\begin{lem}\label{lem:pi1abelian}
Let $\QS_\Gamma$ be an invertible quantum state type. Let $\cV$ be an infinite dimensional object in $\sJ$. Let  $u\in \cV$, and give  $\QS_\Gamma(\cV)$ the base point $\QS_\Gamma(u)$.
\begin{enumerate}
\item  For any space $X$, the monoid $[X, \QS_\Gamma(\cV)]$ obtained from the $H$-space structure of $\QS_\Gamma(\cV)$ is an abelian group. 
\item The fundamental group of $\QS_\Gamma(\cV)$ at any choice of base point is abelian.  
\end{enumerate}
\end{lem}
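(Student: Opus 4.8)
The plan is to read off both statements from the $H$-group structure on $\QS_\Gamma(\cV)$ produced in \cref{thm:QSEinfty}, so that essentially all of the content has already been established and only formal homotopy-theoretic bookkeeping remains.

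For part (1), the relevant input is that, since $\QS_\Gamma$ is invertible, $\QS_\Gamma(\cV)$ is a homotopy commutative and associative $H$-group with multiplication $\theta$ and homotopy inverse $i$ (\cref{thm:QSEinfty}). I would invoke the standard fact that mapping into an $H$-group yields group-valued homotopy classes: $[X,\QS_\Gamma(\cV)]$ is a commutative monoid under $[f][g] = [\theta\circ(f,g)]$, with unit the class of the constant map at $\psi_0$, and $[i\circ f]$ a two-sided inverse of $[f]$. The verification is a one-liner: post-composing the $H$-group homotopies with $f$ shows that $\theta\circ(f,i\circ f)$ and $\theta\circ(i\circ f,f)$ are null-homotopic, so $[f][i\circ f]$ and $[i\circ f][f]$ equal the unit. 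Homotopy commutativity of $\theta$ gives $[f][g]=[g][f]$, whence $[X,\QS_\Gamma(\cV)]$ is an abelian group.

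For part (2), I would first reduce to the identity component. The splitting $\QS_\Gamma(\cV)\simeq \QS_\Gamma(\cV)_0\times \pi_0\QS_\Gamma(\cV)$ of \cref{thm:QSEinfty}, with $\pi_0\QS_\Gamma(\cV)$ discrete, shows every path component of $\QS_\Gamma(\cV)$ is homotopy equivalent to $\QS_\Gamma(\cV)_0$; hence for any base point $x_0$ one has $\pi_1(\QS_\Gamma(\cV),x_0)\cong \pi_1(\QS_\Gamma(\cV)_0,\psi_0)$, the latter being independent of its base point since $\QS_\Gamma(\cV)_0$ is path connected. Next I would note that $\QS_\Gamma(\cV)_0$ is itself a path-connected $H$-space with homotopy unit $\psi_0$ --- the product of two elements of the identity component lands again in the identity component because $[\psi_0]$ is the identity of the monoid $\pi_0\QS_\Gamma(\cV)$. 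The Eckmann--Hilton argument then applies at $\psi_0$: the loop-concatenation product and the $H$-space product on $\pi_1(\QS_\Gamma(\cV)_0,\psi_0)$ share this unit and satisfy the interchange law, forcing the two products to coincide and to be commutative.

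Since the substantive work is already contained in \cref{thm:QSEinfty}, no step is genuinely hard; the only point demanding care is the base-point independence in part (2). The worry is that $\QS_\Gamma(\cV)$ need not be connected, so a naive Eckmann--Hilton argument controls only $\pi_1$ at the unit $\psi_0$. The product splitting with discrete $\pi_0$ is precisely what removes this worry, identifying $\pi_1$ at every base point with $\pi_1$ at $\psi_0$; if one preferred to avoid the splitting, the homotopy-inverse translations of the $H$-group could instead be used to exhibit the needed homotopy equivalences between components by hand.
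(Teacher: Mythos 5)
Your proof is correct and follows essentially the same route as the paper: part (1) is read off from the homotopy commutative $H$-group structure of \cref{thm:QSEinfty}, and part (2) uses the splitting $\QS_\Gamma(\cV)\simeq \QS_\Gamma(\cV)_0\times\pi_0\QS_\Gamma(\cV)$ to reduce to the identity component and then applies the Eckmann--Hilton argument. Your write-up is simply more explicit than the paper's about the base-point independence in part (2), which is a welcome addition rather than a deviation.
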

\begin{proof}
The first statement is an easy consequence of the fact, for any choice of $u\in \cV$, an invertible quantum state type gives rise to homotopy commutative $H$-group  $\QS_\Gamma(\cV)$ with unit $\QS_\Gamma(u)$. The second follows from the fact that $\QS_\Gamma(\cV) \simeq \QS_\Gamma(\cV)_0 \times \pi_0\QS_\Gamma(\cV)$. Since $ \QS_\Gamma(\cV)_0 $ is an $H$-space, its fundamental group is abelian by the Eckmann--Hilton argument.
\end{proof}

Invertibility implies that for any $\omega \in \QS_\Gamma(\cH^{\otimes i})$, there exists a state $\tau \in \QS_\Gamma(\cH^{\otimes j})$ with a path from $\omega \otimes \tau$  to $\psi_{i+j}$. But we have a established the following stronger result, which we rephrase for clarity.
\begin{prop}\label{prop:pointwiseinvertible}
Let $X$ be a parameter space. If $\QS_\Gamma$ is an invertible quantum state type, then for any $f \in [X, \QS_\Gamma(\cH^{\otimes i})]$, there exists $g \in [X, \QS_\Gamma(\cH^{\otimes j})]$ such that $f\otimes g$ is homotopic to the constant map at $\psi_{i+j} \in [X, \QS_\Gamma(\cH^{\otimes i+j})]$. In particular, pointwise invertibility implies parametrized invertibility.
\end{prop}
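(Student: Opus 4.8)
The plan is to leverage the abelian group structure on phases provided by invertibility, and then transport the resulting abstract inverse back to a concrete stacking, using that $\jmath$ is a map of $H$-spaces together with the fact that the colimit computing phases is already detected at finite stages. Once these inputs are assembled the argument is purely formal bookkeeping.

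First I would apply \cref{lem:pi1abelian} with $\cV = \cH^{\otimes\infty}$ to conclude that $[X, \QS_\Gamma(\cH^{\otimes\infty})]$ is an abelian group, which I write additively with neutral element the class of the constant map at the base point $\jmath_0$. Given $f \in [X, \QS_\Gamma(\cH^{\otimes i})]$, I would consider its image $[\jmath_i \circ f]$ and take its additive inverse $a$. By the colimit description established in the proof of \cref{thm:jquotientiso}, valid precisely because $X$ is compact, one has $[X, \QS_\Gamma(\cH^{\otimes\infty})] \cong \colim_m [X, \QS_\Gamma(\cH^{\otimes m})]$; since this is a sequential colimit, $a$ is represented at a finite stage, i.e.\ $a = [\jmath_k \circ h]$ for some $h \in [X, \QS_\Gamma(\cH^{\otimes k})]$.

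Next I would use that $\jmath$ is a morphism of $H$-spaces (\cref{thm:jmathHspace}), so that $[X,\jmath]$ is a monoid homomorphism out of the stacking monoid $[X, \QS_\Gamma^{\sqcup}(\cH)]$. Since the product of $f$ and $h$ in $\QS_\Gamma^{\sqcup}(\cH)$ is the pointwise stacking $f \otimes h \in \QS_\Gamma(\cH^{\otimes(i+k)})$, this yields
\[ [\jmath_{i+k}\circ(f\otimes h)] = [\jmath_i\circ f] + [\jmath_k\circ h] = 0 \]
in $[X, \QS_\Gamma(\cH^{\otimes\infty})]$. Hence the classes $[f\otimes h]$ and the constant class $\psi_{i+k}$, both living in $[X, \QS_\Gamma(\cH^{\otimes(i+k)})]$, have equal image in the colimit. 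Because the transition maps of this sequential colimit are stacking with $\psi$, equality in the colimit is witnessed at a finite stage: there is $\ell \geq i+k$ with $f\otimes h\otimes\psi_{\ell-i-k}\simeq\psi_\ell$ in $\QS_\Gamma(\cH^{\otimes\ell})$. Setting $g = h\otimes\psi_{\ell-i-k}$, which lands in $\QS_\Gamma(\cH^{\otimes j})$ for $j = \ell - i \geq 0$, and invoking strict associativity of stacking (\cref{prop:QStopmonoid}), I would obtain $f\otimes g = f\otimes h\otimes\psi_{\ell-i-k}\simeq\psi_\ell = \psi_{i+j}$, which is the assertion.

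The only step carrying genuine content beyond bookkeeping is the passage from an abstract inverse in the group $[X,\QS_\Gamma(\cH^{\otimes\infty})]$ to a representative at a finite level $k$, together with the fact that the homotopy $f\otimes h\otimes\psi_{\ell-i-k}\simeq\psi_\ell$ materializes after only finitely many stackings. Both rest on compactness of $X$, which is exactly what makes $[X,-]$ commute with the relevant sequential colimits along closed embeddings, as recorded in the proof of \cref{thm:jquotientiso}; this is where I expect the main (if mild) obstacle to lie.
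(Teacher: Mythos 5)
Your proposal is correct and follows essentially the same route as the paper's own proof: invoke \cref{lem:pi1abelian} for the abelian group structure, represent the inverse of $[\jmath_i\circ f]$ at a finite stage, use that $\jmath$ is a map of $H$-spaces, and extract the finite-stage homotopy $f\otimes h\otimes\psi_k\simeq\psi_{i+\ell+k}$ from equality in the sequential colimit. You are merely more explicit than the paper about why the inverse and the witnessing homotopy are detected at finite stages (compactness of $X$ and the colimit description from \cref{thm:jquotientiso}), which the paper leaves implicit.
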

\begin{proof}
By \cref{lem:pi1abelian},  $[X, \QS_\Gamma(\cH^{\otimes \infty})]$ is an abelian group. So $\jmath_i(f)$ has an inverse, say $\jmath_\ell(h)$. That is, 
\[\jmath_i(f)\otimes  \jmath_\ell(h) \simeq \jmath_0(\psi_0).\]
But $\jmath$ is a map of $H$-spaces, hence
\[\jmath_i(f)\otimes  \jmath_\ell(h)  \simeq \jmath_{i+\ell}(f\otimes g). \]
It follows that for some $k$, 
\[ f\otimes h \otimes \psi_k \simeq \psi_{i+\ell+k}.\]
So, we take $g=h \otimes \psi_k$ and $j=\ell+k$.
\end{proof}
\begin{rem}\label{rem:invertiblepointwisediscussion}
In \cite[\S III]{qpump}, we studied a gapped Hamiltonian $H$ parametrized by $X$. We pointed out that while an invertible parametrized system $H$ over $X$ has the property that for each point of $x$, $H(x)$ is an invertible system, the converse is not clear.  \cref{prop:pointwiseinvertible} gives an answer to this question for invertible quantum state types. As explained in the introduction, assuming the weak equivalence between a gapped Hamiltonian model and a state model for invertible gapped quantum systems,  \cref{prop:pointwiseinvertible}  should also settle this question for Hamiltonians. 
\end{rem}

We are finally ready to compare the passage from quantum systems to phases with the process of group completion. We first recall the definition of Grothendieck group completion.
\begin{defn}
The \emph{Grothendieck group completion} of a commutative monoid $M$ consists of an abelian group $\Kzero(M)$ together with a morphism of monoids
\[g \colon M \to \Kzero(M)\] 
(where we can always think of an abelian group as a monoid by forgetting the inverses) satisfying the following universal property. 
If $A$ is an abelian group
and $f \colon M \to A$ is a morphism of monoids,  then there is a unique group homomorphism $\Kzero(\phi) \colon \Kzero(M) \to A$ making the following diagram commute
\[\xymatrix{ M \ar[r]^-{f} \ar[d]_-g & A \\
\Kzero(M) \ar@{.>}[ur]_{\Kzero(f) }}\]
In other words, $\Kzero(-)$ is the left adjoint to the forgetful functor from abelian groups to commutative monoids.
\end{defn}

\begin{theorem}\label{thm:locisgroup}
Let $X$ be a parameter space and $\QS_\Gamma$ be an invertible quantum state type. Then
$ [X,\QSC (\cH) ]$ is an abelian group and 
\[[X,\iota_0] \colon     [X,\QS_{\Gamma}^{\sqcup} (\cH) ]  \to [X,\QSC (\cH) ] \]
is  the Grothendieck group completion. Therefore, the maps of \cref{prop:K0exact} induce an isomorphism
\[  K_0([X,  \QS_\Gamma^{\sqcup}(\cH)])  \cong [X,\QSC (\cH) ]    \xrightarrow{\cong} [X, \QS_\Gamma(\cH^{\otimes \infty}) ]\times [X,\Z]   .\]
\end{theorem}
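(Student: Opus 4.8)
The plan is to reduce the statement to two facts already established: the identification of $[X,\iota_0]$ as a monoid \emph{localization} (the lemma immediately preceding \cref{defn:prodisomaps}) and the monoid isomorphism of \cref{prop:K0exact}, combined with the general principle that a monoid localization which happens to be a group is automatically the Grothendieck group completion. So the work splits into: (i) showing $[X,\QS_\Gamma^{\sqcup}(\cH)]$ localized at $[\psi]$ is a group, and (ii) upgrading ``localization'' to ``group completion'' by a universal-property argument.

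For step (i), I would show directly that $[X,\QSC(\cH)]$ is an abelian group. By \cref{prop:K0exact} there is a monoid isomorphism
\[[X,\psi^{-1}\jmath]\times[X,\psi^{-1}\cc]\colon [X,\QSC(\cH)] \xrightarrow{\cong} [X,\QS_\Gamma(\cH^{\otimes\infty})]\times[X,\Z].\]
The factor $[X,\Z]$ is a group, being the locally constant $\Z$-valued functions on $X$ under pointwise addition. The factor $[X,\QS_\Gamma(\cH^{\otimes\infty})]$ is an abelian group by \cref{lem:pi1abelian}, since $\QS_\Gamma$ is invertible and $\cH^{\otimes\infty}$ is an infinite dimensional object of $\sJ$ with base point $\QS_\Gamma(u)=\psi_0$. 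Hence the right-hand side is an abelian group, and being isomorphic to it as a monoid, so is $[X,\QSC(\cH)]$.

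For step (ii), set $M:=[X,\QS_\Gamma^{\sqcup}(\cH)]$ and let $m=[\psi]$ be the class of the constant map at $\psi$. The localization lemma preceding \cref{defn:prodisomaps} identifies $[X,\iota_0]\colon M\to [X,\QSC(\cH)]=m^{-1}M$ as the localization of $M$ at $m$, with universal property as recalled in \cref{rem:invert}. The key claim is that a localization which is a group is the group completion. To prove it, let $g\colon M\to K_0(M)$ be the group completion. Since $K_0(M)$ is a group, $g(m)$ is invertible, so the universal property of localization yields a unique monoid map $a\colon m^{-1}M\to K_0(M)$ with $a\circ[X,\iota_0]=g$. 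Conversely, since $m^{-1}M$ is a group by step (i), the map $[X,\iota_0]\colon M\to m^{-1}M$ targets a group, so the universal property of $K_0$ yields a unique $b\colon K_0(M)\to m^{-1}M$ with $b\circ g=[X,\iota_0]$. Then $(a\circ b)\circ g=a\circ[X,\iota_0]=g$, so $a\circ b=\id$ by uniqueness of factorizations through $g$; likewise $(b\circ a)\circ[X,\iota_0]=b\circ g=[X,\iota_0]$, so $b\circ a=\id$. Thus $a,b$ are mutually inverse and $[X,\iota_0]=b\circ g$ exhibits $[X,\iota_0]$ as the group completion up to the canonical isomorphism $b$.

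Composing the isomorphism $K_0(M)\cong m^{-1}M=[X,\QSC(\cH)]$ from step (ii) with \cref{prop:K0exact} then yields
\[K_0\bigl([X,\QS_\Gamma^{\sqcup}(\cH)]\bigr) \cong [X,\QSC(\cH)] \xrightarrow{\cong} [X,\QS_\Gamma(\cH^{\otimes\infty})]\times[X,\Z],\]
as claimed. The only non-formal inputs are \cref{lem:pi1abelian} and \cref{prop:K0exact}, both already proved; the main (though routine) obstacle is the universal-property bookkeeping in step (ii), where one must invoke the correct uniqueness clause of each universal property to conclude $a$ and $b$ are mutually inverse.
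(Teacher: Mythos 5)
Your proof is correct and takes essentially the same route as the paper: both deduce that $[X,\QSC(\cH)]$ is an abelian group from \cref{prop:K0exact} and \cref{lem:pi1abelian}, and both then identify $[X,\iota_0]$ with the group completion by playing the universal property of the localization against that of $K_0$. The only difference is that you spell out the mutual-inverse verification for the two induced maps, which the paper compresses into the remark that $g$ is initial among group homomorphisms out of $[X,\QS_\Gamma^{\sqcup}(\cH)]$.
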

\begin{proof}
That $ [X,\QS_{\Gamma}^{\sqcup} (\cH) ] $ is an abelian group follows from the isomorphism of monoids of \cref{prop:K0exact} and the fact that $[X, \QS_\Gamma(\cH^{\otimes \infty})]$ is a group when $\QS_\Gamma$ is invertible (\cref{lem:pi1abelian}). Since $\psi$ maps to an invertible element in the Grothendieck group $K_0 [X,\QS_{\Gamma}^{\sqcup} (\cH) ] $, the group completion factors as
\[\xymatrix{ [X,\QS_{\Gamma}^{\sqcup} (\cH) ] \ar[r]^-{g}\ar[d]_-{[X,\iota_0]} & K_0 [X,\QS_{\Gamma}^{\sqcup} (\cH) ]  \\
 [X,\QSC (\cH) ] \ar[ur]_-{\psi^{-1}g} & 
}\]
But, $g$ is the initial group homomorphism from $[X,\QS_{\Gamma}^{\sqcup} (\cH) ]$, hence $\psi^{-1}g$ is an isomorphism.
\end{proof}

We actually have the following a stronger result which concerns the \emph{topological} group completion of $\QS_\Gamma^\sqcup(\cH)$, and we finish this section by stating it. It will not be used later.

 \begin{theorem}\label{thm:quillengroupcompletion}
Let $\QS_\Gamma$ be an invertible quantum state type. The localization $\QSC(\cH)$ is weakly equivalent to Quillen's topological group completion $\Omega B  \QS_\Gamma^{\sqcup}(\cH)$ and $\QS_\Gamma(\cH^{\otimes \infty})$ is weakly equivalent to the identity path component of this loop space. 
\end{theorem}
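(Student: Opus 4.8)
The plan is to deduce the result from the \emph{group completion theorem} for homotopy-commutative topological monoids, applied to $M := \QS_\Gamma^{\sqcup}(\cH)$, together with the explicit identifications already recorded in \cref{thm:locisgroup} and \cref{lem:isoprod}. By \cref{prop:QStopmonoid}, $M$ is a strictly associative, homotopy-commutative topological monoid, so the canonical map $M \to \Omega BM$ to the loop space of its classifying space is available; the group completion theorem (see, e.g., \cite{MayPonto}) then identifies its effect on Pontryagin rings as the localization
\[ H_*(\Omega BM) \cong H_*(M)\big[\pi_0 M^{-1}\big], \]
where the multiplicative set $\pi_0 M$ is central because $M$ is homotopy commutative.

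First I would recognize $\QSC(\cH)$ as a partial localization of $M$ at the single class $[\psi_1]$. By \cref{defn:QSC}, $\QSC(\cH)$ is the mapping telescope of the self-map $-\otimes\psi$ of $M$, that is, of multiplication by $[\psi_1] \in \pi_0 M$. Since integral homology commutes with this sequential colimit along closed embeddings, we obtain
\[ H_*\big(\QSC(\cH)\big) \cong H_*(M)\big[[\psi_1]^{-1}\big], \]
the localization inverting only the central element $[\psi_1]$.

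The key step, where invertibility enters, is to show that this partial localization already agrees with the full one. Applying \cref{prop:pointwiseinvertible} with $X = \pt$, for every $\omega \in \QS_\Gamma(\cH^{\otimes i})$ there exists $\tau \in \QS_\Gamma(\cH^{\otimes j})$ with $[\omega][\tau] = [\psi_{i+j}] = [\psi_1]^{\,i+j}$ in $\pi_0 M$; hence once $[\psi_1]$ is inverted every class of $\pi_0 M$ becomes invertible, and $H_*(M)[[\psi_1]^{-1}] = H_*(M)[\pi_0 M^{-1}]$. Because $[\psi_1]$ is grouplike in $\Omega BM$, right multiplication by it is a homotopy equivalence there, so the group completion map extends over the mapping telescope to a map $\Theta \colon \QSC(\cH) \to \Omega BM$, and the two displays above show that $\Theta$ is an isomorphism on homology. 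To upgrade $\Theta$ to a weak equivalence I would observe that both spaces are grouplike $H$-spaces, hence simple (nilpotent) spaces: the target is a loop space, and the source is grouplike since \cref{lem:isoprod} gives $\QSC(\cH) \cong \QS_\Gamma(\cH^{\otimes\infty}) \times \Z$, a product of the $H$-group $\QS_\Gamma(\cH^{\otimes\infty})$ (invertibility together with \cref{thm:QSEinfty}) with the discrete group $\Z$. On $\pi_0$, $\Theta$ realizes the isomorphism $\pi_0\QSC(\cH) \cong K_0(\pi_0 M) \cong \pi_0(\Omega BM)$ of \cref{thm:locisgroup}; since translation by a grouplike element is a homotopy equivalence, every path component of either space is homotopy equivalent to its identity component, so the homology isomorphism furnished by group completion, read off on identity components, forces $\Theta$ to be a weak equivalence by the homology Whitehead theorem for simple spaces. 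For the second assertion, \cref{lem:kappakim} identifies $\kappa_0$ as a homeomorphism of $\QS_\Gamma(\cH^{\otimes\infty})$ onto the grading-zero locus $(\psi^{-1}\cc)^{-1}(0) \subset \QSC(\cH)$, which is precisely the preimage under $\Theta$ of the degree-zero part of $\Omega BM$; restricting the weak equivalence $\Theta$ accordingly exhibits $\QS_\Gamma(\cH^{\otimes\infty})$ as weakly equivalent to the identity path component of $\Omega BM$.

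The step I expect to be the main obstacle is this last upgrade from a homology isomorphism to a weak equivalence. One must check that the group-completion homology isomorphism is genuinely induced by the space-level map $\Theta$ (and not merely an abstract isomorphism of localized Pontryagin rings), and that $\Theta$ is sufficiently compatible with the $H$-structures that the nilpotent homology Whitehead theorem applies component by component; the monoid-localization bookkeeping of the earlier steps is routine by comparison.
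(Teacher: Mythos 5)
Your argument is correct and is essentially the paper's own proof written out in full: the paper disposes of the theorem in one line by citing McDuff--Segal together with the observations that $\QS_\Gamma^{\sqcup}(\cH)$ is a homotopy commutative topological monoid, that $\pi_0\QSC(\cH)$ is already a group, and that $\pi_1\QSC(\cH)$ is abelian at every basepoint --- which is exactly the chain you spell out (telescope localization of the Pontryagin ring, invertibility forcing the partial localization at $[\psi_1]$ to agree with the full one, and the homology Whitehead theorem for grouplike $H$-spaces). The one caveat concerns your last sentence: the degree-zero locus $(\psi^{-1}\cc)^{-1}(0)\cong\QS_\Gamma(\cH^{\otimes\infty})$ need not be path connected, so it is the union of the components of $\Omega B\QS_\Gamma^{\sqcup}(\cH)$ lying over $0\in K_0(\pi_0\QS_\Gamma^{\sqcup}(\cH))\to\Z$ rather than literally the identity component; that imprecision is already present in the statement of the theorem, not in your argument.
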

\begin{proof}
This follows from \cite{McDuffSegal} using the fact $ \QS_\Gamma^{\sqcup}(\cH)$ is a homotopy commutative monoids, that $\pi_0\QSC(\cH)$ is already a group, and $\pi_1\QSC(\cH)$ is abelian for any choice of base point.
\end{proof}

\begin{rem}\label{rem:quillengroupcompletion}
We think that, for invertible quantum state types, \cref{thm:locisgroup} should be explained by a more topological statement. In fact, it seems likely that the canonical map
\[\Map(X, \QSC(\cH) )  \to \Omega B\Map(X, \QS_\Gamma^{\sqcup}(\cH))\]
is a weak equivalence, although we were not able to prove that. If $\QS_\Gamma^{\sqcup}(\cH)$ was an $\sE_\infty$-space, then $\Map(X, \QSC(\cH) )$ would be a grouplike $\sE_\infty$-space and so would already group complete. However,  $\QS_\Gamma^{\sqcup}(\cH)$ is not an $\sE_\infty$-space and we were not able to settle this question.
\end{rem}

\begin{rem}
There is a very nice history of the results on topological group completion given in \cite[Ch. 3]{AdamsLoop}. Early versions of topological group completion are due to Quillen in unpublished private communications among the key players, Barratt \cite{Barratt}, Barratt-Priddy \cite{BarrattPriddy}, May \cite{MayPermutative, MayClass}, McDuff--Segal \cite{McDuffSegal} and Segal \cite{SegalCoh}. More recent treatments include Randal-Williams \cite{RandalWilliams} and Nikolaus \cite{Nikolaus}. 
\end{rem}

\subsection{Loop-spectra and May's Recognition Principle}\label{sec:loopspectra}
Because of the importance of loop-spectra in this story, we dedicate this section to reviewing some key definitions. We also give more details about May's Recognition Principle.

\bigskip

We start with recalling the definition of a loop-spectrum.\footnote{In the homotopy theory literature, these are often called $\Omega$-spectra. However, the name ``loop-spectrum'' is the more popular term in condensed matter theory.}

\begin{defn}
A \emph{loop-spectrum} $\bY$ is a sequence of based spaces in $\bCGTop$
\[Y_0, Y_1, Y_2, \ldots \]
together with weak equivalences
\[ \omega_\dd \colon Y_\dd \xrightarrow{\ \simeq \ }\Omega Y_{\dd+1} .\]

A map of loop-spectra $\boldsymbol{f} \colon \boldsymbol{Y} \to \boldsymbol{Z}$ is a sequence of based continuous maps $f_\dd \colon Y_\dd \to Z_\dd$ so that $\omega_\dd\circ f_\dd = \Omega f_{\dd+1} \circ \omega_\dd$. Loop-spectra form a category which we denote by $\Omega\Sp$.
\end{defn}

\begin{defn}
For any integer $\dd$, the $\dd$th homotopy group of a loop-spectrum $\boldsymbol{Y}$ is defined to be
\[ \pi_\dd \boldsymbol{Y} := \begin{cases}
\pi_{\dd} Y_0 & \dd\geq 0 \\
\pi_{0}Y_{-\dd} & \dd<0.
\end{cases}\]
A loop-spectrum $\boldsymbol{Y}$ is \emph{connective} if $\pi_\dd \boldsymbol{Y}=0$ for all $\dd<0$.
\end{defn}

\begin{defn}
A map $\boldsymbol{f} \colon \boldsymbol{Y} \to \boldsymbol{Z}$ is a \emph{weak equivalence} of loop-spectra if $\pi_\dd\boldsymbol{f}$ is an isomorphism for all $\dd\in \Z$.
\end{defn}

For any spectrum $\bY$ and $n\geq 0$, let $\Sigma^n\bY$ be the spectrum 
\[(\Sigma^n\bY)_{\dd}= Y_{\dd+n} \]
with the same structure maps as $\bY_\dd$. Let 
$\Sigma^{-n}\bY$ be the spectrum
\[(\Sigma^{-n}\bY)_\dd= \Omega^nY_\dd,\]
with structure maps $\Omega^n\omega_\dd$. Then 
\[( \Sigma^{-n}\circ \Sigma^n)( \bY) = (\Sigma^n \circ \Sigma^{-n})(\bY)  = (\Omega^n Y_n, \Omega^{n}Y_{n+1}, \Omega^n Y_{n+2}, \ldots)\] 
and so the maps $  \omega_{\dd+n-1}\circ \cdots \circ \omega_{\dd}$ assemble to give a natural weak equivalence
\[ \bY  \xrightarrow{  \ \simeq \ } \Omega^n\Sigma^n \bY.\]
It follows that, for any $n\in \Z$, 
\[ \pi_{\dd} \bY \cong \pi_{\dd+n}\Sigma^n \bY.\]

For any spectrum $\bY$ and any integer $n$, we can construct a \emph{Postnikov stage} $\bY_{\tau {< n}}$ and a \emph{Whitehead cover} $\bY_{\tau {\geq n}}$. For a detailed review of these constructions see \cite[\S 4]{Rudyak}. But, roughly, these are loop-spectra equipped with maps
\[ \bY_{\tau {\geq n}} \to \bY \to \bY_{\tau {< n}} \]
so that 
\begin{itemize}
\item $ \bY_{\tau {\geq n}} \to \bY$ is induces an isomorphism on $\pi_\dd$ for $\dd\geq n$, and 
\item $\bY \to \bY_{\tau {< n}}$ induces an isomorphism on $\pi_\dd$ for $\dd< n$. 
\end{itemize}
The loop-spectrum $\bY_{\tau {\geq 0}} $ is called the \emph{connective cover} $\bY$. We also note that, for any $m\in \Z$, there is a weak equivalence
\[\Sigma^m(  \bY_{\tau \geq n}) \simeq (\Sigma^m \bY)_{\tau \geq n+m} ,\]
and a similar identity holds for Postnikov stages.

\begin{defn}
If $\boldsymbol{Y}$ is a loop-spectrum, we let
\[\Omega^{\infty} \boldsymbol{Y} = Y_0,\]
the zeroth space of the spectrum $\boldsymbol{Y}$. Given a map $\boldsymbol{f}$ of spectra, we let $\Omega^{\infty}\boldsymbol{f} =f_0$. 
This defines a functor
\[\Omega^{\infty} \colon \Omega\Sp \to \bCGTop.\]
We say that a space is an \emph{infinite loop space} if it is the image of $\Omega^{\infty}$.
\end{defn}

Finally, recall that a loop-spectrum determines a generalized cohomology theory. If $X$ is a space, which we always assume has the homotopy type of a CW complex, then
\[\boldsymbol{Y}^\dd(X) = [X, Y_\dd].  \]
If $X$ is a based space, we also let $\widetilde{\boldsymbol{Y}}^\dd(X) = [X, Y_\dd]_*$. Then the structure maps $\omega_\dd$ can be used to construct an isomorphism 
\[\widetilde{\boldsymbol{Y}}^\dd(X) \cong \widetilde{\boldsymbol{Y}}^{\dd+1}(\Sigma X).\] 
This is also reviewed in \cite[\S 2]{BeaudryCampbell}.

\bigskip
We are now ready to discuss May's Recognition Principle.

\begin{rem}
In general, an $\sE_\infty$-space $X$ is \emph{grouplike} if $\pi_0X$ is a group with respect to the induced $H$-space structure on $X$. So a quantum state type is invertible if the associated $\sE_\infty$-spaces $\QS_\Gamma(\cV)$ for $(\cV,u)$ is a grouplike $\sE_\infty$-space.
\end{rem}

\begin{thm}[{May's Recognition Principle \cite[\S 14]{MayGILS}}]\label{thm:recognition} 
The functor $\Omega^{\infty}$ refines to a functor
\[\Omega^{\infty} \colon \Omega \Sp \to \sE_\infty \text{-}\bCGTop,\]
where $\sE_\infty \text{-}\bCGTop$ is the category of $\sE_\infty$-spaces.
As such, it  has a left adjoint 
\[B^{\infty} \colon  \sE_\infty\text{-}\bCGTop \to   \Omega \Sp \] 
which takes an $\sE_\infty$-space $X$ and produces a loop-spectrum 
\[B^{\infty}X = (X, B^1X, B^2X, \cdots).\] 
The unit of the adjunction $X \to \Omega^{\infty}B^{\infty}X$ is weakly equivalent to the group completion $X \to \Omega BX$. This is a weak equivalence if and only if $X$ is grouplike.  
 The co-unit $B^{\infty}\Omega^{\infty}\boldsymbol{Y} \to \boldsymbol{Y}$ is weakly equivalent to a Whitehead cover $ \boldsymbol{Y}_{\tau \geq 0} \to \boldsymbol{Y}$, which is a weak equivalence if and only if $\boldsymbol{Y}$ is connective. 
\end{thm}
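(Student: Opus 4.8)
This statement is a repackaging of results of May \cite{MayGILS}, so the plan is to assemble his operadic delooping machine together with the approximation theorem and the group completion theorem, rather than to prove anything from scratch. First I would refine $\Omega^\infty$: for a loop-spectrum $\bY$ the adjoints of the structure maps give $Y_0 \cong \Omega^n Y_n$ for all $n$, and the standard action of the little $n$-cubes operad $\cC_n$ on an $n$-fold loop space is compatible as $n$ grows, so in the limit $Y_0 = \Omega^\infty \bY$ acquires an action of the $\sE_\infty$-operad $\cC_\infty$. This is natural in $\bY$ and yields the refinement $\Omega^\infty \colon \Omega\Sp \to \sE_\infty\text{-}\bCGTop$.

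Next I would construct the left adjoint. Given an $\sE_\infty$-space $X$ acted on by an $\sE_\infty$-operad $\cC$, replace $\cC$ by $\cC \times \cC_\infty$, which still acts on $X$ and is still $\sE_\infty$; thus we may assume $X$ is a $\cC_n$-space for every $n$. May's two-sided bar construction produces a prespectrum with spaces $E_n X = B(\Sigma^n, \cC_n, X)$, and $B^\infty X$ is the loop-spectrum obtained from it by the usual spectrification $Y_n = \colim_k \Omega^k E_{n+k} X$. The adjunction isomorphism between maps $B^\infty X \to \bY$ of loop-spectra and maps $X \to \Omega^\infty \bY$ of $\sE_\infty$-spaces then follows from the universal properties of the bar construction and of spectrification.

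Finally I would identify the unit and co-unit. By construction the zeroth space of $B^\infty X$ is $\colim_k \Omega^k E_k X \simeq \Omega B X$, and the unit $X \to \Omega^\infty B^\infty X$ is the canonical group completion map; the approximation theorem of \cite{MayGILS} identifies $\cC_n X \to \Omega^n \Sigma^n X$ as a group completion, and the McDuff--Segal group completion theorem \cite{McDuffSegal} then shows the unit is a weak equivalence exactly when $\pi_0 X$ is a group, i.e. when $X$ is grouplike. For the co-unit $B^\infty \Omega^\infty \bY \to \bY$, the source is built entirely from the already grouplike space $Y_0$, so it induces an isomorphism on $\pi_\dd$ for $\dd \geq 0$ and has vanishing homotopy in negative degrees; this exhibits it as the connective cover $\bY_{\tau \geq 0} \to \bY$, which is an equivalence precisely when $\bY$ is connective.

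The hard part is not any single computation but the two substantive black boxes, the approximation theorem and the group completion theorem, neither of which needs reproof here. The only genuine work in our setting is bookkeeping: verifying that our conventions for $\Omega\Sp$ and $\sE_\infty\text{-}\bCGTop$, the functorial CW-replacement of \cref{sec:CWreplacements}, and the $\Sigma$-freeness established in \cref{lem:sKop} meet the cofibration and CW-homotopy-type hypotheses under which \cite{MayGILS} operates, so that the comparison maps really are the stated group completion and Whitehead cover.
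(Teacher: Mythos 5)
The paper does not prove this statement: it is imported verbatim as a citation of May's recognition principle \cite[\S 14]{MayGILS}, so there is no in-paper argument to compare against. Your outline is the standard proof from the cited source — little cubes action on iterated loop spaces for the refinement of $\Omega^{\infty}$, the two-sided bar construction $B(\Sigma^n,\cC_n,X)$ followed by spectrification for $B^{\infty}$, and the approximation theorem plus the group completion theorem of \cite{McDuffSegal} to identify the unit and co-unit — and as a reconstruction of May's argument it is essentially correct.

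Two of the points you defer to ``bookkeeping'' deserve more than that label. First, with the paper's definition of a loop-spectrum the structure maps $Y_\dd \to \Omega Y_{\dd+1}$ are only weak equivalences, not homeomorphisms, so $Y_0$ is not literally $\Omega^n Y_n$ and the $\cC_n$-action does not restrict to $Y_0$ on the nose; one must either rectify to an honest $\Omega$-spectrum first or transport the $\sE_\infty$-structure along the weak equivalences, and the latter is itself a nontrivial (though standard) step. Second, your claim that ``the adjunction isomorphism \ldots follows from the universal properties of the bar construction and of spectrification'' overstates what is available: the bar construction has no strict universal property of that kind, and May's machine produces an adjunction only up to homotopy (natural maps $\eta$ and $\varepsilon$ with triangle identities holding up to weak equivalence), which is also the honest reading of the theorem as stated in the paper. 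Neither point invalidates your sketch, but if you were actually writing this proof out rather than citing it, these are the places where the real work sits.
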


Combining this with what we have already established gives the following result.

\begin{theorem}\label{thm:QSinfiniteloop}
Let $\QS_\Gamma$ be a quantum state type, $\cV\in \sJ$ an infinite dimensional object, and choice of unit vector $u \in \cV$. There is a loop-spectrum $B^{\infty}\QS_\Gamma(\cV)$ whose zero space is weakly equivalent to $\QS_\Gamma(\cV)$ if and only if $\QS_\Gamma$ is an invertible state type (i.e., $\pi_0\QS_\Gamma(\cV)$ is a group). In particular, if $\QS_\Gamma$ is invertible, then $\QS_\Gamma(\cV)$ is an infinite loop-space.
\end{theorem}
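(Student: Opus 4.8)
The plan is to deduce this as an essentially formal consequence of May's Recognition Principle (\cref{thm:recognition}) applied to the $\sE_\infty$-structure already produced in \cref{thm:QSEinfty}. First I would recall that, for the fixed choice of unit vector $u\in\cV$, \cref{thm:QSEinfty} makes $\QS_\Gamma(\cV)$ into a $\bsK(\cV)$-space based at $\QS_\Gamma(u)$, and since $\bsK(\cV)$ is a reduced $\sE_\infty$-operad by \cref{lem:sKop}, the space $\QS_\Gamma(\cV)$ is an $\sE_\infty$-space. Because we are in \cref{sec:loopspectra} we have already implicitly passed to the functorial CW-replacement, so $\QS_\Gamma(\cV)$ has the homotopy type of a CW complex and is a genuine object of $\sE_\infty\text{-}\bCGTop$, as required to apply the recognition principle.

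Next I would apply the left adjoint $B^{\infty}$ of \cref{thm:recognition} to obtain the loop-spectrum $B^{\infty}\QS_\Gamma(\cV)$, whose zero space is $\Omega^{\infty}B^{\infty}\QS_\Gamma(\cV)$. The unit of the adjunction is a map $\QS_\Gamma(\cV)\to\Omega^{\infty}B^{\infty}\QS_\Gamma(\cV)$ which, by the recognition principle, is weakly equivalent to the group-completion map $\QS_\Gamma(\cV)\to\Omega B\QS_\Gamma(\cV)$; and this unit is a weak equivalence if and only if $\QS_\Gamma(\cV)$ is grouplike, i.e.\ exactly when $\pi_0\QS_\Gamma(\cV)$ is a group.

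For the backward implication, invertibility of $\QS_\Gamma$ means precisely that $\pi_0\QS_\Gamma(\cV)$ is a group (\cref{defn:QSinvert}), so the unit is a weak equivalence and the zero space of $B^{\infty}\QS_\Gamma(\cV)$ is weakly equivalent to $\QS_\Gamma(\cV)$; in particular $\QS_\Gamma(\cV)$ is then an infinite loop space. For the forward implication I would observe that the zero space $Y_0$ of any loop-spectrum admits a weak equivalence $Y_0\xrightarrow{\simeq}\Omega Y_1$, so that $\pi_0 Y_0\cong\pi_1 Y_1$ is automatically a group; hence if $\QS_\Gamma(\cV)$ is weakly equivalent to such a zero space, then $\pi_0\QS_\Gamma(\cV)$ is a group and $\QS_\Gamma$ is invertible.

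The main obstacle here is not a deep one: the theorem is really a repackaging of the recognition principle, so the substance lies in checking its hypotheses, principally that $\QS_\Gamma(\cV)$ is a (CW) $\sE_\infty$-space and that our notion of \emph{invertible} matches \emph{grouplike}. The one point that I would treat with a little care is the $\pi_0$-compatibility used in the forward direction: I must ensure that the monoid structure induced on $\pi_0\QS_\Gamma(\cV)$ by its $H$-space multiplication is the one transported across the weak equivalence from the loop-concatenation structure on $\pi_0 Y_0$, so that being weakly equivalent to a loop-spectrum zero space genuinely forces the $H$-monoid $\pi_0\QS_\Gamma(\cV)$ to be a group. This follows from the compatibility of $\Omega^{\infty}$ with the $\sE_\infty$-/$H$-space structures recorded in \cref{thm:recognition}.
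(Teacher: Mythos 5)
Your proposal is correct and follows essentially the same route as the paper, which derives the theorem directly by combining May's Recognition Principle (\cref{thm:recognition}) with the $\sE_\infty$-structure on $\QS_\Gamma(\cV)$ from \cref{thm:QSEinfty}; the paper in fact gives no further argument beyond this combination, so your write-up merely makes explicit what the paper leaves implicit. Your care about the forward direction is well placed: the "weak equivalence" in the statement should be read as the unit map of the adjunction, which is a map of $\sE_\infty$-spaces and hence identifies the monoid structures on $\pi_0$, exactly as you note.
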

\begin{rem}
Since we implicitly assume that we have replaced $\QS_\Gamma$ by its cellular approximation $\rep \QS_\Gamma$, for any $u\in \cV$, the replacement of $ \QS_\Gamma(u)$ is the inclusion of a zero simplex so is a non-degenerate base point.  Therefore, so long as we replace $\QS_\Gamma$, we can assume that we are working in $\bCGTop$.
\end{rem}


\subsection{Berry phase and free fermions}\label{sec:examples}

In the preceding sections we developed an abstract framework for modeling certain parametrized phases of matter. This framework was guided by some examples which we describe here.

\subsubsection{Gapped Quantum Systems in spacetime dimension $0\pone$}\label{sec:0d}
 It is well understood that the unique phase invariant of a quantum system in spacetime dimension $0\pone$ is the line bundle of ground states over $X$. We review this fact in the context of our framework.

Consider a $0\pone$ dimensional quantum system described by a finite dimensional Hilbert space $\cH$. Let $\bP(\cH)$ be its pure state space, and $\mathscr{H}(\cH)$ be the space of Hermitian operators on $\cH$ with a unique ground state. Then $\mathscr{H}(\cH)$ and $\bP(\cH)$ are homotopy equivalent. This means that any state in $\bP(\cH)$ is the unique ground state of some Hamiltonian, and we can model gapped parametrized systems over $X$ with a map 
\[ X \to \bP(\cH). \]
\begin{rem}
    This approach fails to model gapped parametrized systems in spacetime dimensions greater than $0\pone$ because there exist pure states which are not ground states of any gapped and local Hamiltonian. For example, it has been proven that  for local Hamiltonians, a spectral gap above the ground state implies exponential decay of correlations in the ground state \cite{Hastings_2004, Nachtergaele_2006, Hastings_2006}. It follows that a state $\omega \in \sP_\Gamma(\cH)$ (where $\Gamma = \mathbb{Z}^d$ for $d \geq 1$) which has a power-law decay of correlations is not the ground state of any local Hamiltonian.  
\end{rem}
Armed with a model for gapped parametrized systems over $X$ in spacetime dimension $0\pone$, the framework developed in the previous sections provides a prescription to pass to parametrized phases over $X$.

To connect with the results of the previous section, we can let $\Gamma$ be the zero lattice. Then
\[\bP \colon \fHilbo  \to \CGTop \]
maps $\cH$ to $\bP(\cH)$, which can be canonically identified with the space of pure states on the $C^*$-algebra $\fB(\cH)$, i.e., with $\sP(\fB(\cH))$.

Then
\[\eta^{\bP} \colon \bP(\cH)\times \bP(\cK) \to \bP(\cH \otimes \cK)\]
is the map which sends a pair of lines $\psi$ and $\phi$ to their tensor product $\psi\otimes\phi$.

If we fix a Hilbert space $\cH$ and a vacuum state $\psi \in \bP(\cH)$, then 
\[ \bP^{\sqcup}(\cH) = \coprod_{i\geq 0} \bP(\cH^{\otimes i})\]
and
\[\bP(\cH^{\otimes \infty}) = \colim \left(\xymatrix{\bP(\cH^{\otimes 0}) \ar[r]^-{\otimes \psi}  & \bP(\cH^{\otimes 1})  \ar[r]^-{\otimes \psi}  &  \bP(\cH^{\otimes 2})  \ar[r]^-{\otimes \psi}  & \cdots}
\right)\]
is an $\sE_\infty$-space.
In fact, this gives the standard  $\sE_\infty$-structure on infinite complex projective space coming from the tensor product of line bundles. Since $\bP(\cH^{\otimes \infty}) $ is connected, this is an invertible state type.  
In fact, we have reproduced a space with many names,
\[ \bP(\cH^{\otimes \infty})  \cong \C P^\infty \simeq K(\Z,2) \simeq BGL(1) \simeq BU(1).\]
Therefore, a $0+1$ dimensional phase of quantum systems parametrized by $X$ is an element of
\[[X, \bP(\cH^{\otimes \infty})] \cong \mathrm{Line}_{\C}(X) \cong H^2_{\mathrm{sing}}(X,\Z) \cong \breve{H}^1(X, U(1)).\]
Here, the second group is the isomorphism classes of complex line bundles, the third is the singular cohomology group, and the fourth is \u{C}ech cohomology.

\begin{rem}
One way to obtain the  $\sE_\infty$-structure  on $BU(1)$ is using the fact that $U(1)$ is an abelian group. Indeed, for any topological abelian group, we can iterated the bar construction and so
\[ (BU(1), B^2U(1), B^3U(1) \cdots)\]
is a de-looping of $BU(1)$. See, for example, Segal's $\Gamma$-space theory for generalizations for this construction.
It could be that our construction of the infinite loop space structure on $ \bP(\cH^{\otimes \infty}) $ using the operad $\sK(\cH^{\otimes \infty})$ appears somewhere in the literature, but we did not find it.
\end{rem}

\subsubsection{K-theory and free fermions}\label{sec:Ktheory}
Here, we review the classification of certain free-fermion phases following \cite{Kitaev2009} and explain how the
classification is related to our framework. Although this classification is not via quantum state types, the framework for free fermions is intimately related to the one we presented above, as will become apparent in this section.  Physically, the vector bundle discussed below gives the occupied single-particle states; the ground state of the system is obtained by filling these states with fermions.

The physical situation is the following. We wish to describe families of free fermion 
systems with particle number symmetry in spacetime dimension $0\pone$. 
Such systems are characterized by a Hamiltonian of the form
\[H_A = \sum_{1\leq j,k\leq n}A_{jk}a_j^\dagger a_k,\] where $A$ 
is a Hermitian matrix of size $n$ equal to the size of the system. We consider only gapped systems, i.e. systems for which there 
exists $\Delta>0$ with $\Delta < |\varepsilon|$ for each eigenvalue $\varepsilon$ of $A$.
The set of such matrices $A$ defines a subspace of $M(\C) = \colim_n M_n(\C)$
which we will call the space of \textit{admissible matrices}. The colimit
is taken with respect to the inclusions \[M_n(\C)\to M_{n+1}(\C)\] given
by \[A\mapsto \begin{pmatrix} A & 0 \\ 0 & 1\end{pmatrix}.\]   The goal 
is then to study the homotopy type of this space. 

First, let's make a reduction to a simpler space of the same homotopy type.   
\noindent The function 
\[f_t(x) = \frac{x}{|x|^t}\] can be applied 
using functional calculus to an admissible matrix $A$ to obtain a matrix 
$\widetilde{A}= f_1(A)$ with eigenvalues $\pm 1$. This defines a deformation
retraction from the space of admissible matrices onto the space $C$ of hermitian matrices with eigenvalues $\pm 1$.
Note that the characteristic polynomial map 
\begin{align*}
\colim_n M_n(\C)&\to \C[t]\\
A&\mapsto \mathrm{det}(A-tI)
\end{align*}
is continuous. The restriction of this map to $C$ is then a continuous map into
a discrete space giving a decomposition
\[C=\coprod_{k\in\N} C_k,\] where $C_k$ is the subspace of $C$ consisting of
matrices with $(-1)$-eigenspace of dimension $k$. 

We next analyze $C_k$ further.
First note that $C_k = \colim_n C_k(n)$, where $C_k(n)$ is the subspace of $C_k$
consisting of matrices of size $n$. We then have maps into grassmanians 
\[g_{k,n}:C_k(n)\to \mathrm{Gr}_k(\C^n)\]  sending a matrix to its $(-1)$-eigenspace.
We view these as maps into 
\[BGL(k) \simeq \mathrm{Gr}_k(\C^\infty) = \colim_n\mathrm{Gr}_k(\C^n)\] 
by postcomposition
with the inclusion into the colimit. The result is a map 
\[g_k:C_k\to BGL(k) \] which by a straightforward analysis of the fiber can be shown to be a weak equivalence.

For the purposes of phase-classification we can now identify 
\[ \mathscr{F}^{\sqcup} := \coprod_{k\geq 0}BGL(k)  \simeq C\] 
as a model of the  classifying
space of $0\pone$d free fermion systems with particle number symmetry in the following sense. 
A family of such systems parametrized by a space $X$ is the same as a map $A\colon X\to \mathscr{F}^{\sqcup}$. 
The phase of this family of systems depends only on the deformation (a.k.a. homotopy) class in $[X,\mathscr{F}^{\sqcup}]$, which by this identification can be viewed as an element in 
\[[X,\mathscr{F}^{\sqcup}]\cong \mathrm{Vect}_\C(X),\] 
i.e. as an isomorphism class of a vector bundle over $X$. 
Explicitly, $A$ defines a vector bundle over $X$ whose fiber over $x\in X$ is the $(-1)$-eigenspace of $A(x)$. The deformation invariant is then the isomorphism class of this vector bundle.

However, the phase of a family of systems is not determined completely by its deformation
class, but its deformation class after stacking with arbitrary trivial systems. We must, therefore, study the operation on $\mathscr{F}^{\sqcup}$
corresponding to stacking of systems. 

In the free fermion setup, stacking two Hamiltonians determined by admissible matrices $A_0$ and $A_1$ corresponds to taking the direct sum $A_0\oplus A_1$ of the two matrices.
Accordingly, the stacking operation is modeled on $M$ by the map 
\[\mathscr{F}^{\sqcup}\times \mathscr{F}^{\sqcup}\to \mathscr{F}^{\sqcup}\] 
assembled from the maps
\[BGL(n)\times BGL(m)\to BGL(n+m)\] 
coming from the group homomorphism 
\[GL(n)\times GL(m)\to GL(n+m)\] 
which is defined by the inclusion of $(X,Y)$ as a block diagonal matrix with blocks $X$ and $Y$.
This operation is unital with respect to the point $BGL(0)$. So, $\mathscr{F}^{\sqcup}$ is a topological monoid.

Finally, we can form a directed system
\[BGL(1) \to BGL(2) \to BGL(3) \to \cdots\]
using the inclusions 
\[BGL(n)  \to BGL(n+1)\]
given by
\begin{equation}\label{eq:blockwithone}
X\mapsto \begin{bmatrix} X & 0\\
0 & 1 
\end{bmatrix}
\end{equation}
Then
\[\sF^\infty  := \colim_n BGL(n) = BGL. \]
This is the classifying space for \emph{reduced} complex $K$-theory.
On the other hand, the self map
\[\psi \colon \mathscr{F}^{\sqcup} \to \mathscr{F}^{\sqcup}\]
which does \eqref{eq:blockwithone} on each point can be ``inverted'' by forming the colimit of
\[ \psi^{-1}\mathscr{F}^{\sqcup}:= \colim \left(\xymatrix{ \mathscr{F}^{\sqcup} \ar[r]^-{\psi} & \mathscr{F}^{\sqcup} \ar[r]^-{\psi} &  \mathscr{F}^{\sqcup} \ar[r]^-\psi & \cdots}\right).\]
The resulting space is
\[ \psi^{-1}\mathscr{F}^{\sqcup} \cong BGL \times \Z \]
the classifying space for topological complex $K$-theory.

\begin{rem}
The topological monoid $\mathscr{F}^{\sqcup}$ is, famously, the monoid that gives rise to $K$-theory using Quillen's topological group completion. It appears ubiquitously in the literature discussing these topics. See \cite[Ch. 2 \& 3]{AdamsLoop} for a very nice discussion and further references.
\end{rem}

The connection with our framework is already apparent, but there are even more similarities than we have pointed out so far. In the remainder of this section, we want to make this clear. This will be more technical, as again we will discuss operads and $\sE_\infty$-spaces.

What we do next was first introduced by Boardman--Vogt \cite{BoardmanVogt} in their introduction of ``homotopy everything'' spaces. It was framed into the context of operads by May\cite{MayEinfty}.

We need a category akin to $\sJ$.
\begin{defn}[\cite{BoardmanVogt}]\label{defn:liniso}
Let $\sI$ be the category whose objects are real inner product spaces of countable algebraic dimension over $\R$. The morphisms are real linear isometries $\I( \mathcal{V},\mathcal{W})$.  This is a topologically enriched category. The finite dimensional objects have the standard metric topology. The infinite dimensional objects are topologized so that they are the topological union of their finite dimensional subspaces. Then $\I( \mathcal{V},\mathcal{W})$ is given the compactly generated compact-open topology.
 The category $\sI$ is a symmetric monoidal category with monoidal operation given by the direct sum.
\end{defn}

\begin{defn}[\cite{MayEinfty}]
Let $\cV$ be an infinite dimensional object in $\sI$.
The  operad $\sL(\cV)$ is defined as
\[\sL(\cV)(j) =\I(\cV^{\oplus j}, \cV) \]
with the identity as $1 \in \sL(\cV)(1)$ and structure map
\[\gamma \colon \sL(\cV)(k) \times \sL(\cV)(j_1)\times \cdots \times \sL(\cV)(j_k) \to \sL(\cV)(j) \]
given by
\[\gamma(g ; f_1,\ldots, f_k) = g\circ (f_1\oplus \cdots \oplus f_k).\]
The action of $\Sigma_j$ is given by $(f\sigma)(x) = f(\sigma(x))$ where $\sigma$ permutes the summands in $\cV^{\oplus j}$.

The operad $\sL(\R^{\infty})$ is called the \emph{linear isometry operad}.
\end{defn}

The operads $\sL(\cV)$ are $\sE_\infty$-operads. This is shown in \cite[\S 1]{MayEinfty}. 
\begin{defn}[\cite{BoardmanVogt}]
An  $\sI$-space $(\sT,\eta)$ consists of a topologically enriched functor 
\[\sT \colon  \sI \to \CGTop,\] and
 a natural transformation $\eta \colon \sT(\cV) \times \sT(\cW) \to \sT(\cV\oplus \cW)$
such that, $\sT(0)$ is a single point, and the $\eta$s are unital, associative and commutative (in an appropriate categorical sense) and the value of $\sT$ on infinite dimensional objects is the colimit of the value of $\sT$ its finite dimensional subspaces.
\end{defn}

\begin{theorem}[{\cite{BoardmanVogt, MayEinfty}}]
If $(\cT,\eta)$ is an $\sI$-space, and $\cV$ is infinite dimensional, then $\cT(\cV)$ is an $\sL(\cV)$-space where
\[ \theta_j \colon \sL(\cV)(j) \times \cT(\cV)^j \to \cT(\cV)\]
is given by
\[ \theta_j(f; x_1, \ldots, x_j) = (Tf)(\eta(x_1,\ldots,x_j)). \]
In particular, $\cT(\cV)$ is an $\mathscr{E}_\infty$-space.
\end{theorem}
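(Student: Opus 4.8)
The plan is to verify directly that the maps $\theta_j$ make $\cT(\cV)$ into an algebra over the operad $\sL(\cV)$, using nothing more than the functoriality and topological enrichment of $\cT$ together with the unitality, associativity, and commutativity of the lax monoidal structure $\eta$. Once this is done, the final clause is immediate: $\sL(\cV)$ is an $\sE_\infty$-operad by \cite[\S 1]{MayEinfty}, and an algebra over an $\sE_\infty$-operad is by definition an $\sE_\infty$-space. This is the $\oplus$-analogue of \cref{prop:Kspaces}, so in the writeup I would again cite \cite[\S 1]{MayEinfty} rather than reproduce every diagram, but I record the steps here.

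First I would package $\eta$ into its $j$-fold iterate $\eta_j \colon \cT(\cV)^j \to \cT(\cV^{\oplus j})$. Associativity of $\eta$ makes $\eta_j$ well defined independently of bracketing, unitality gives $\eta_1 = \id$, and commutativity gives the key intertwining relation
\[
  \eta_j\bigl(x_{\sigma^{-1}(1)}, \ldots, x_{\sigma^{-1}(j)}\bigr) = \cT(\sigma)\bigl(\eta_j(x_1,\ldots,x_j)\bigr),
\]
where $\sigma \in \Sigma_j$ acts on the left by permuting factors and on the right through the permutation of summands of $\cV^{\oplus j}$. In this notation $\theta_j(f; x_1,\ldots,x_j) = \cT(f)\bigl(\eta_j(x_1,\ldots,x_j)\bigr)$.

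Next I would check the three algebra axioms in turn. For the unit, $\theta_1(1; x) = \cT(\id_\cV)(\eta_1(x)) = x$, since $\cT(\id)=\id$. For equivariance, combining the intertwining relation above with $\cT(f\circ\sigma) = \cT(f)\circ\cT(\sigma)$ yields $\theta_j(f\sigma; \vec{x}) = \theta_j(f; \sigma\cdot\vec{x})$. The crux is associativity: for $\gamma(g; f_1,\ldots,f_k) = g\circ(f_1\oplus\cdots\oplus f_k)$ functoriality gives $\cT(\gamma(g;\vec{f})) = \cT(g)\circ\cT(f_1\oplus\cdots\oplus f_k)$, and splitting $\vec{x}$ into blocks $\vec{x}_1,\ldots,\vec{x}_k$ of sizes $j_1,\ldots,j_k$, associativity of $\eta$ identifies $\eta_j(\vec{x}) = \eta_k\bigl(\eta_{j_1}(\vec{x}_1),\ldots,\eta_{j_k}(\vec{x}_k)\bigr)$ under the canonical isomorphism $\cV^{\oplus j}\cong (\cV^{\oplus j_1})\oplus\cdots\oplus(\cV^{\oplus j_k})$. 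Applying naturality of $\eta$ to the isometries $f_i\colon \cV^{\oplus j_i}\to\cV$ then gives
\[
  \cT(f_1\oplus\cdots\oplus f_k)\bigl(\eta_j(\vec{x})\bigr) = \eta_k\bigl(\cT(f_1)\eta_{j_1}(\vec{x}_1), \ldots, \cT(f_k)\eta_{j_k}(\vec{x}_k)\bigr),
\]
and composing with $\cT(g)$ matches $\theta_j(\gamma(g;\vec{f}); \vec{x})$ with $\theta_k\bigl(g; \theta_{j_1}(f_1;\vec{x}_1),\ldots,\theta_{j_k}(f_k;\vec{x}_k)\bigr)$. Continuity of each $\theta_j$ follows by factoring it as $\sL(\cV)(j)\times\cT(\cV)^j \xrightarrow{\id\times\eta_j} \sL(\cV)(j)\times\cT(\cV^{\oplus j}) \to \cT(\cV)$, where the second map uses the enrichment $f\mapsto\cT(f)$ into $\Map(\cT(\cV^{\oplus j}),\cT(\cV))$ followed by evaluation, all of which are continuous in $\CGTop$.

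The main obstacle is not conceptual but bookkeeping: one must track the canonical identifications of $\cV^{\oplus j}$ as an iterated direct sum carefully enough that the operadic composite $\gamma$ lines up exactly with the combined use of naturality and associativity of $\eta$, and verify that the $\Sigma_j$-action conventions on $\sL(\cV)(j)$ and on $\cT(\cV)^j$ are compatible. This is precisely the verification carried out in \cite[\S 1]{MayEinfty} for the linear isometry operad, and since $\cT$ being an $\sI$-space is exactly the hypothesis used there, the argument transfers verbatim.
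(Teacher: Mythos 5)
Your proposal is correct and matches the paper's treatment: the paper gives no proof of this theorem beyond citing \cite{BoardmanVogt, MayEinfty} (exactly as it does for the $\otimes$-analogue in \cref{prop:Kspaces}), and your verification of the unit, equivariance, and associativity axioms via the iterated $\eta_j$, naturality of $\eta$, and the topological enrichment is precisely the standard argument from those sources. The only point worth adding in a writeup is that $\sL(\cV)(0)=\I(0,\cV)$ is a single point and $\cT(0)=\pt$ supplies the base point, so that $\cT(\cV)$ is a \emph{based} algebra over a \emph{reduced} $\sE_\infty$-operad as the paper's definition of $\sE_\infty$-space requires.
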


\begin{ex}
For each finite dimensional $\cV$, let
\[\sT(\cV) = BGL(\cV_\C),\]
the classifying space of the general linear group of its complexification $\cV_\C$. Let $\eta$ be the map induced by the homomorphism
\[ GL(\cV_\C)\times GL(\cW_\C) \to GL(\cV_\C \oplus \cW_\C) \]
given by the inclusion of $GL(\cV_\C) \oplus GL(\cW_\C)$ in $GL(\cV_\C \oplus \cW_\C)$.

If $\cV$ is infintie dimensional, let $\sT(\cV) = \colim_i \sT(\cV_i)$ where $\cV_i$ runs over the finite dimensional subspaces of $\cV$. We can also extend $\cT$ to morphisms, and extend $\eta$. 
We then have a $\sI$-space and
\[\sT(\R^{\infty})  = BGL,\]
the classifying for reduced $K$-theory. This equips $BGL$ with an infinite loop space structure.
\end{ex}

\begin{rem}
There are many ways to produce $BGL\times \Z$ as an infinite loop space. Producing this space seems like a right-of-passage for any theory. One is by showing that it is the classifying space of a permutative category \cite{MayPermutative}, another is using Segal's $\Gamma$-spaces \cite{SegalCoh}.
\end{rem}


\section{The fundamental group of the pure state space}\label{sec:fundamentalgroup}

Now that we have established the \emph{universal quantum state type} $\sP_\Gamma$ as a basis for a theory of quantum state types, we turn to the question of better understanding this universal example. For example, we do not know the weak homotopy type of $\sP_\Gamma(\cH)$ for an infinite lattice $\Gamma$ (and $\cH$ of dimension at least $2$). In this section, we will compute its fundamental group.

As a matter of fact, all we need to know to do this computation is that it is the pure state space of a UHF algebra. 
So in the following we step back from lattices and state types and
prove that the pure state space of a UHF algebra is simply connected in the
weak$^*$ topology. 

We begin by recalling the natural action of a $C^*$-algebra on its
state space and then show some crucial properties we later need. This is done in
\cref{sec:prelimfund}.
Using this action we introduce in \cref{subsec:A-homotopies} a new notion of  homotopy in the state space which
essentially is
a kind of homotopy \emph{lifted} to the $C^*$-algebra. We then examine the space of density matrices
of a matrix algebra in \cref{subsec:fddensitymatrices} and finally use the results obtained
there to prove our main claim that the space of pure states of a UHF algebra is simply connected.

\subsection{Action of $\mathfrak{A}$ on its state space}\label{sec:prelimfund}
In this section, we study an action of $\mathfrak{A}$ on its state space.
\begin{definition}
Let $\fA$ be a $C^*$-algebra, let $\omega \in \sS(\fA)$, and let 
\[
\fN_\omega = \qty{A \in \fA: \omega(A^*A) = 0}
\]
be the Gelfand ideal of $\omega$. Recall that if $A \notin \fN_\omega$, then we may define 
\begin{equation}\label{eq:action_on_state_def}
A \cdot \omega \colon \fA \to  \bbC, \quad (A \cdot \omega)(B) = \frac{\omega(A^*BA)}{\omega(A^*A)}.
\end{equation}
This is a state of $\fA$. In the GNS representation $(\cH_\omega, \pi_\omega, \Omega_\omega)$ of $\omega$, it is represented by the unit vector $\pi_\omega(A)\Omega_\omega/\sqrt{\omega(A^*A)}$. Thus, if $\omega$ is pure, then so is $A \cdot \omega$. Following the definition, we see that given $A, B \in \fA$, we have $AB\notin \fN_\omega$ if and only if $B \notin \fN_\omega$ and $A \notin \fN_{B \cdot \omega}$, in which case $(AB) \cdot \omega = A \cdot (B \cdot \omega)$. Of course, if $\fA$ is unital, then we have $\1 \cdot \omega = \omega$.
\end{definition}

Let us now prove  a few elementary facts of this action.

\begin{proposition}\label{prop:invariant_state}
Let $\fA$ be a $C^*$-algebra and let $\omega \in \sS(\fA)$. If $A \in \fA$ is nonzero and satisfies $\abs{\omega(A)} = \norm{A}$, then $A \cdot \omega = \omega$. 
\end{proposition}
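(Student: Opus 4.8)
The plan is to reduce to a convenient normalization and then show that the hypothesis $\abs{\omega(A)}=\norm{A}$ forces $A$ to differ from a unit by an element of the Gelfand ideal $\fN_\omega$. First I would normalize: since $(cA)\cdot\omega = A\cdot\omega$ for every nonzero scalar $c$ (the factor $\abs{c}^2$ cancels between numerator and denominator in \eqref{eq:action_on_state_def}), and since multiplying $A$ by a unimodular phase does not change $\norm{A}$ while rotating $\omega(A)$, I may rescale and rotate $A$ so that $\norm{A}=1$ and $\omega(A)=\abs{\omega(A)}=1$. If $\fA$ is non-unital I would pass to its unitization $\widetilde{\fA}$, extending $\omega$ to the state $\widetilde\omega$ with $\widetilde\omega(\1)=1$; because $\fA$ is an ideal in $\widetilde{\fA}$, the element $A^*BA$ lies in $\fA$ for $B\in\fA$, so proving $A\cdot\widetilde\omega=\widetilde\omega$ on $\widetilde{\fA}$ would yield $A\cdot\omega=\omega$ on $\fA$. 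Thus I assume $\fA$ is unital with $\omega(\1)=1$.

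The key step is that saturating Cauchy--Schwarz makes $A-\1$ a null vector for $\omega$. Cauchy--Schwarz for the positive semidefinite form $(B,C)\mapsto\omega(B^*C)$ gives $1=\abs{\omega(A)}^2=\abs{\omega(\1^*A)}^2\le\omega(\1)\,\omega(A^*A)=\omega(A^*A)\le\norm{A^*A}=1$, so $\omega(A^*A)=1$; in particular $A\notin\fN_\omega$ and $A\cdot\omega$ is defined. Expanding $\omega\big((A-\1)^*(A-\1)\big)=\omega(A^*A)-\omega(A^*)-\omega(A)+\omega(\1)=1-1-1+1=0$ then shows $A-\1\in\fN_\omega$. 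Since $\fN_\omega$ is a closed left ideal, a further application of Cauchy--Schwarz (namely $\abs{\omega(CN)}^2\le\omega(CC^*)\,\omega(N^*N)=0$ for $N\in\fN_\omega$) gives $\omega(B(A-\1))=0$ and $\omega((A^*-\1)C)=0$ for all $B,C$, i.e. $\omega(BA)=\omega(B)$ and $\omega(A^*C)=\omega(C)$. Taking $C=BA$ yields $\omega(A^*BA)=\omega(BA)=\omega(B)$, whence $(A\cdot\omega)(B)=\omega(A^*BA)/\omega(A^*A)=\omega(B)$ for every $B$, which is the claim.

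I expect the only real subtlety — rather than a genuine obstacle — to be the bookkeeping in the non-unital case and making sure the Cauchy--Schwarz equality is invoked with the correct pair of elements; once $A-\1\in\fN_\omega$ is identified, the left-ideal property of the Gelfand ideal finishes everything. An alternative and perhaps more transparent route is via the GNS triple $(\cH_\omega,\pi_\omega,\Omega_\omega)$: the chain $1=\abs{\langle\Omega_\omega,\pi_\omega(A)\Omega_\omega\rangle}\le\norm{\pi_\omega(A)\Omega_\omega}\le\norm{A}=1$ forces equality in Cauchy--Schwarz and hence $\pi_\omega(A)\Omega_\omega=\Omega_\omega$, from which $\omega(A^*A)=\norm{\pi_\omega(A)\Omega_\omega}^2=1$ and $(A\cdot\omega)(B)=\langle\pi_\omega(A)\Omega_\omega,\pi_\omega(B)\pi_\omega(A)\Omega_\omega\rangle=\omega(B)$ follow immediately.
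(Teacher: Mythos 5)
Your proof is correct. The paper's own argument is exactly the GNS route you sketch at the end: Cauchy--Schwarz applied to $\langle\Omega_\omega,\pi_\omega(A)\Omega_\omega\rangle$ is saturated, so $\pi_\omega(A)\Omega_\omega=\lambda\Omega_\omega$ with $\lambda\neq 0$, and the vector representing $A\cdot\omega$ then differs from $\Omega_\omega$ only by a phase. Your primary argument is the same idea unrolled at the level of the state itself: after normalizing so that $\|A\|=1$ and $\omega(A)=1$ (legitimate, since $(cA)\cdot\omega=A\cdot\omega$ for $c\neq 0$ and $A\neq 0$ forces $\omega(A)\neq 0$), the identity $\omega\big((A-\1)^*(A-\1)\big)=0$ says precisely that $A$ and $\1$ have the same image in the GNS space, and the left-ideal property of $\fN_\omega$ substitutes for the computation with representing vectors. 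What your version buys is that it never constructs $(\cH_\omega,\pi_\omega,\Omega_\omega)$ and it treats the non-unital case explicitly via the unitization, which the paper elides by writing $\omega(A)=\langle\Omega_\omega,\pi_\omega(A)\Omega_\omega\rangle$ directly; the cost is the normalization and unitization bookkeeping. The individual steps all check out: $1=|\omega(A)|^2\le\omega(\1)\,\omega(A^*A)\le\|A\|^2=1$ gives $\omega(A^*A)=1$ and in particular $A\notin\fN_\omega$, and $\omega(BN)=0$ for $N\in\fN_\omega$ is the correct Cauchy--Schwarz estimate on the closed left ideal, from which $(A\cdot\omega)(B)=\omega(A^*BA)=\omega(B)$ follows.
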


\begin{proof}
Let $(\cH_\omega, \pi_\omega, \Omega_\omega)$ be the GNS representation of $\omega$. By the Cauchy-Schwarz inequality, we have
\[
\abs{\omega(A)} = \abs{\ev{\Omega_\omega, \pi_\omega(A)\Omega_\omega}} \leq \norm{\Omega_\omega}\norm{\pi_\omega(A)\Omega_\omega} \leq \norm{A}.
\]
Since $\abs{\omega(A)} = \norm{A}$, we see that the Cauchy-Schwarz inequality is saturated, so $\pi_\omega(A)\Omega_\omega = \lambda \Omega_\omega$ for some $\lambda \in \bbC$. Observe that $\lambda \neq 0$, otherwise we find that $\abs{\omega(A)} = 0$, which contradicts the hypotheses. Note that $\sqrt{\omega(A^*A)} = \abs{\lambda}$, so that $A \cdot \omega$ is represented by the vector $\lambda \abs{\lambda}^{-1}\Omega_\omega$. Vectors that differ by a phase represent the same state, so we conclude that $A \cdot \omega = \omega$.
\end{proof}

\begin{proposition}\label{prop:linear_combo_action}
Let $\fA$ be a $C^*$-algebra and let $\omega \in \sP(\fA)$. Suppose $A, B \in \fA \setminus \fN_\omega$ and $A \cdot \omega = B \cdot \omega$. If $\alpha, \beta \in \bbC$ and $\alpha A + \beta B \notin \fN_\omega$, then $(\alpha A + \beta B) \cdot \omega = A \cdot \omega = B \cdot \omega$. 
\end{proposition}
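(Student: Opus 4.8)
The plan is to work entirely in the GNS representation $(\cH_\omega, \pi_\omega, \Omega_\omega)$ of $\omega$ and to exploit that purity of $\omega$ makes this representation irreducible. Write $\xi_A = \pi_\omega(A)\Omega_\omega$ and $\xi_B = \pi_\omega(B)\Omega_\omega$. Since $A, B \notin \fN_\omega$ we have $\|\xi_A\|^2 = \omega(A^*A) > 0$ and $\|\xi_B\|^2 = \omega(B^*B) > 0$, so both vectors are nonzero, and as recorded in the Definition the states $A\cdot\omega$ and $B\cdot\omega$ are the vector states of the unit vectors $\eta_A = \xi_A/\|\xi_A\|$ and $\eta_B = \xi_B/\|\xi_B\|$. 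The hypothesis $A\cdot\omega = B\cdot\omega$ thus says precisely that $\langle \eta_A, \pi_\omega(C)\eta_A\rangle = \langle \eta_B, \pi_\omega(C)\eta_B\rangle$ for every $C \in \fA$.

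First I would upgrade this equality of vector states to an honest proportionality $\xi_B = \lambda\, \xi_A$ for some nonzero $\lambda \in \bbC$; this is the one non-formal step and the main obstacle. Here purity is essential: because $\omega$ is pure, $\pi_\omega$ is irreducible, so $\pi_\omega(\fA)' = \bbC\,\id$ and, by the von Neumann density theorem, $\pi_\omega(\fA)$ is strongly dense in $\cB(\cH_\omega)$. Consequently the two (normal) vector states, which agree on $\pi_\omega(\fA)$, agree on all of $\cB(\cH_\omega)$; testing against rank-one projections $|\zeta\rangle\langle\zeta|$ then gives $|\langle\zeta,\eta_A\rangle| = |\langle\zeta,\eta_B\rangle|$ for all $\zeta$, which forces $\eta_B = \mu\,\eta_A$ with $|\mu| = 1$. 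Alternatively, since every nonzero vector of an irreducible representation is cyclic, $(\cH_\omega, \pi_\omega, \eta_A)$ and $(\cH_\omega, \pi_\omega, \eta_B)$ are both GNS triples for the common state $A\cdot\omega = B\cdot\omega$, so uniqueness of the GNS construction yields a unitary $U \in \pi_\omega(\fA)' = \bbC\,\id$ with $U\eta_A = \eta_B$, giving the same conclusion. Either way $\xi_B = \lambda\,\xi_A$ with $\lambda \neq 0$.

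With proportionality in hand the rest is a direct computation. Applying $\pi_\omega(-)\Omega_\omega$ to $\alpha A + \beta B$ gives $\pi_\omega(\alpha A + \beta B)\Omega_\omega = \alpha\xi_A + \beta\xi_B = (\alpha + \beta\lambda)\,\xi_A$. The assumption $\alpha A + \beta B \notin \fN_\omega$ means this vector is nonzero, so $c := \alpha + \beta\lambda \neq 0$, and the state $(\alpha A + \beta B)\cdot\omega$ is the vector state of the unit vector $c\,\xi_A/\|c\,\xi_A\| = (c/|c|)\,\eta_A$. This differs from $\eta_A$ only by the phase $c/|c|$, and vectors differing by a phase represent the same state, so $(\alpha A + \beta B)\cdot\omega = A\cdot\omega = B\cdot\omega$, as claimed.
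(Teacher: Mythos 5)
Your proof is correct and follows essentially the same route as the paper's: pass to the GNS representation, use purity to conclude $\pi_\omega(B)\Omega_\omega = \lambda\,\pi_\omega(A)\Omega_\omega$ for some $\lambda\neq 0$, and then observe that the vector representing $(\alpha A+\beta B)\cdot\omega$ differs from the one representing $A\cdot\omega$ only by a phase. The sole difference is that you justify (in two ways, both valid) the proportionality step that the paper asserts without proof.
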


\begin{proof}
Let $(\cH_\omega, \pi_\omega, \Omega_\omega)$ be the GNS representation of $\omega$. Since $A \cdot \omega = B \cdot \omega$ and $\omega$ is pure, we know $\pi_\omega(A)\Omega_\omega$ and $\pi_\omega(B)\Omega_\omega$ are linearly dependent, i.e., there exists $\lambda \in \bbC \setminus \qty{0}$ such that $\pi_\omega(B)\Omega_\omega = \lambda \pi_\omega(A)\Omega_\omega$. Then $(\alpha A + \beta B) \cdot \omega$ is represented by
\[
\frac{\pi_\omega(\alpha A + \beta B)\Omega_\omega}{\norm{\pi_\omega(\alpha A + \beta B)\Omega_\omega}} = \frac{(\alpha + \lambda \beta)\cdot \pi_\omega(A)\Omega_\omega}{\norm{\pi_\omega(\alpha A + \beta B)\Omega_\omega}} .
\]
This differs from $\pi_\omega(A)\Omega_\omega/\sqrt{\omega(A^*A)}$ by a phase, so $(\alpha A + \beta B) \cdot \omega = A \cdot \omega$.
\end{proof}

Next, let us examine the continuity properties of the action of $\fA$ on $\sS(\fA)$.

\begin{proposition}\label{prop:evaluation_jointly_continuous}
Let $\fA$ be a $C^*$-algebra and equip $\sS(\fA)$ with the weak* topology. The map
\begin{equation}\label{eq:evaluation_jointly_continuous}
\fA \times \sS(\fA) \to  \bbC, \quad (A, \omega) \mapsto \omega(A)
\end{equation}
is continuous.
\end{proposition}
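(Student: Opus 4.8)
The plan is to prove joint continuity directly by an $\varepsilon$-estimate at an arbitrary base point, splitting the increment of the evaluation map into a part controlled by the norm topology on $\fA$ and a part controlled by the weak* topology on $\sS(\fA)$. Fix $(A_0,\omega_0) \in \fA \times \sS(\fA)$ and let $\varepsilon > 0$. For any $(A,\omega)$ I would write
\[
\omega(A) - \omega_0(A_0) = \omega(A - A_0) + \big(\omega(A_0) - \omega_0(A_0)\big),
\]
and estimate the two summands separately.

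For the first summand the crucial point is the \emph{uniform} bound $\norm{\omega} \leq 1$, valid for every state $\omega \in \sS(\fA)$ simultaneously, which gives $\abs{\omega(A - A_0)} \leq \norm{A - A_0}$ independently of $\omega$. Thus restricting $A$ to the norm ball $\{A : \norm{A - A_0} < \varepsilon/2\}$ controls this term uniformly in the state. For the second summand, $\abs{\omega(A_0) - \omega_0(A_0)} < \varepsilon/2$ holds precisely on the basic weak* neighborhood $\{\omega : \abs{(\omega - \omega_0)(A_0)} < \varepsilon/2\}$ of $\omega_0$, by the definition of the weak* topology. On the product of these two neighborhoods the triangle inequality yields $\abs{\omega(A) - \omega_0(A_0)} < \varepsilon$, establishing continuity for the ordinary product topology.

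Finally I would address the topology convention: in $\CGTop$ the relevant product is the $k$-ification of the ordinary product, whose topology is finer. Since the identity $k(\fA \times \sS(\fA)) \to \fA \times \sS(\fA)$ is continuous, the continuity of the evaluation map for the ordinary product topology established above immediately gives continuity for the $k$-ified product as well. Alternatively, when $\fA$ is separable---as for all the quasi-local algebras of interest in this paper---the space $\sS(\fA)$ is weak* metrizable and the two products agree, so this point is moot.

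The argument has no genuine obstacle; the only conceptual point worth flagging is that joint continuity, as opposed to mere separate continuity in each variable, is exactly what the uniform norm bound $\norm{\omega}\le 1$ over all states buys us, since it lets the norm-topology estimate on $A$ be taken independently of the varying state $\omega$.
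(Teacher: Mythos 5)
Your proof is correct and follows essentially the same route as the paper's: the same decomposition of $\omega(A)-\omega_0(A_0)$ via the triangle inequality, the same use of the uniform bound $\norm{\omega}\le 1$ to control the norm increment in $A$, and the same basic weak* neighborhood of $\omega_0$ determined by $A_0$. The extra remarks about $k$-ification are sound but not needed, since continuity for the ordinary product topology implies it for the finer $k$-ified product.
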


\begin{proof}
Fix $A_0 \in \fA$, $\omega_0 \in \sS(\fA)$, and $\varepsilon > 0$. Given $A \in \fA$ such that $\norm{A - A_0} < \varepsilon/2$ and $\omega \in \sS(\fA)$ such that $\abs{\omega(A_0) - \omega_0(A_0)} < \varepsilon/2$, we have
\begin{align*}
\abs{\omega(A) - \omega_0(A_0)} &\leq \abs{\omega(A) - \omega(A_0)} + \abs{\omega(A_0) - \omega_0(A_0)}  \\
&\leq \norm{A - A_0} + \abs{\omega(A_0) - \omega_0(A_0)} < \varepsilon.
\end{align*}
Thus, the map \eqref{eq:evaluation_jointly_continuous} is continuous at the arbitrary point $(A_0, \omega_0)$.
\end{proof}

\begin{proposition}\label{prop:action_on_state_continuous}
Let $\fA$ be a $C^*$-algebra and equip $\sS(\fA)$ with the weak* topology. The map
\begin{equation}\label{eq:action_on_state}
\qty{(A, \omega) \in \fA \times \sS(\fA)\colon  A \notin \fN_\omega} \to  \sS(\fA), \quad (A, \omega) \to  A \cdot \omega
\end{equation}
is continuous.
\end{proposition}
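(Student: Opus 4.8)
The plan is to exploit the universal property of the weak* topology on the target. Since $\sS(\fA)$ carries the subspace topology inherited from $\fA^*$ equipped with the weak* topology, and the latter is by definition the initial topology with respect to the evaluation functionals $\phi \mapsto \phi(B)$ for $B \in \fA$, a map $g$ into $\sS(\fA)$ is continuous if and only if, for every fixed $B \in \fA$, the scalar-valued map $g(-)(B)$ is continuous. Applying this to $g(A,\omega) = A\cdot\omega$, it suffices to show that for each fixed $B \in \fA$ the map
\[
(A,\omega) \longmapsto (A\cdot\omega)(B) = \frac{\omega(A^*BA)}{\omega(A^*A)}
\]
is continuous on the domain $D = \qty{(A,\omega) \in \fA \times \sS(\fA) : A \notin \fN_\omega}$, where $\fA$ carries the norm topology and $\sS(\fA)$ the weak* topology.

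First I would handle the numerator and denominator separately. For fixed $B$, the assignment $A \mapsto A^*BA$ is norm-continuous (indeed locally Lipschitz, via $\norm{A^*BA - A_0^*BA_0} \leq \norm{B}(\norm{A}+\norm{A_0})\norm{A-A_0}$), so $(A,\omega) \mapsto (A^*BA,\omega)$ is continuous from $\fA \times \sS(\fA)$ to itself. Composing with the jointly continuous evaluation map of \cref{prop:evaluation_jointly_continuous}, namely $(C,\omega) \mapsto \omega(C)$, shows that $(A,\omega) \mapsto \omega(A^*BA)$ is continuous on all of $\fA \times \sS(\fA)$. Specializing to $B = \1$ (equivalently, using $A \mapsto A^*A$ directly) gives continuity of the denominator $(A,\omega) \mapsto \omega(A^*A)$ as well.

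Finally I would assemble these. On $D$ the denominator $\omega(A^*A)$ is nonzero precisely by the definition of the Gelfand ideal $\fN_\omega$, so $(A,\omega) \mapsto \big(\omega(A^*BA),\, \omega(A^*A)\big)$ is a continuous map from $D$ into $\bbC \times (\bbC \setminus \qty{0})$; postcomposing with the continuous division map $(z,w) \mapsto z/w$ yields continuity of $(A,\omega) \mapsto (A\cdot\omega)(B)$ on $D$. Since this holds for every $B \in \fA$, the map $(A,\omega) \mapsto A\cdot\omega$ is continuous into the weak* topology, as claimed. I do not expect a genuine obstacle here once \cref{prop:evaluation_jointly_continuous} is available; the only points deserving a moment of care are the reduction to scalar continuity through the initial-topology characterization of the weak* topology, and the observation that $D$ is exactly the locus where the denominator is nonzero, so that the quotient is both well defined and continuous there.
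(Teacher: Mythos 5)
Your proof is correct and follows essentially the same route as the paper: reduce to continuity of $(A,\omega) \mapsto (A\cdot\omega)(B)$ for each fixed $B$, then combine the norm-continuity of $A \mapsto A^*BA$ and $A\mapsto A^*A$ with the joint continuity of evaluation from \cref{prop:evaluation_jointly_continuous}, and divide using that the denominator is nonzero exactly on the stated domain. You have merely spelled out the details the paper leaves as "easy to see."
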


\begin{proof}
Fix an arbitrary $B \in \fA$. Continuity of \eqref{eq:action_on_state} will follow from continuity of $(A, \omega) \mapsto (A \cdot \omega)(B)$ since $B$ was arbitrary. But since $A \mapsto A^*A$ and $A \mapsto A^*BA$ are continuous maps from $\fA$ to itself, it is easy to see from \eqref{eq:action_on_state_def} and \cref{prop:evaluation_jointly_continuous} that $(A, \omega) \mapsto (A \cdot \omega)(B)$ is continuous.
\end{proof}

\subsection{$\mathfrak{A}$-homotopies and the fundamental group}\label{subsec:A-homotopies}
A key idea in our computation of the fundamental group is to study homotopies that can be factored through a special kind of lift from the state space to the $C^*$-algebra. We define what we mean here.

\begin{definition}\label{defn:Ahomotopy}
  Let $\fA$ be a unital $C^*$-algebra, equip $\sS(\fA)$ with the weak* topology, and let $X$ be a topological space.
  We then say that a continuous map $\psi\colon X \to  \sS(\fA)$ is $\fA$-\emph{homotopic} to a continuous map
  $\omega\colon X \to  \sS(\fA)$ if there exists a continuous map  $A\colon X \times \interval \to  \fA$
  such that  
\begin{enumerate}
	\item\label{ite:notinideal} $A_{x,s} \notin \fN_{\psi_x}$ for all $(x,s) \in X \times I$,
	\item $A_{x,0} = \1$ for all $x \in X$,
	\item\label{ite:homotopy} $A_{x,1} \cdot \psi_x = \omega_x$ for all $x \in X$.
\end{enumerate}

In this case $(x,s) \mapsto A_{x,s} \cdot \psi_x$ is a homotopy from $\psi$ to $\omega$ in the standard sense.
We call the map $A\colon X \times \interval \to  \fA$ fulfilling conditions \eqref{ite:notinideal} to \eqref{ite:homotopy}
an $\fA$-\emph{homotopy} from $\psi$ to $\omega$. It can be understood as a lift of the homotopy from $\psi$ to $\omega$.
Let us write $\psi \htpy{\fA} \omega$ if $\psi$ is $\fA$-homotopic to $\omega$.

When $X$ is the unit interval $\interval$, then we say $\psi$ is $\fA$-\emph{homotopic} to $\omega$ \emph{relative endpoints}
if there exists an $\fA$-homotopy $A\colon \interval \times \interval \to  \fA$ from $\psi$ to $\omega$ such that
\begin{enumerate}
\setcounter{enumi}{3}
\item\label{ite:startpoint}
  $A_{0,s} \cdot \psi_0 = \omega_0$ for all $s\in \interval$, and 
\item\label{ite:endpoint}
   $A_{1,s} \cdot \psi_1 = \omega_1$ for all $s\in \interval$. 
\end{enumerate}
An $\fA$-homotopy fulfilling these properies  will be called an $\fA$-\emph{homotopy relative endpoints}. 
Let us write $\psi \pathhtpy{\fA} \omega$ if $\psi$ is $\fA$-homotopic to $\omega$ relative endpoints.
In case both $\psi$ and $\omega$ are loops we sometimes say that  $\psi$ is $\fA$-\emph{homotopic} to $\omega$
\emph{relative the basepoint} whenever  $\psi \pathhtpy{\fA} \omega$. 
\end{definition}

\begin{proposition}
The relations $\htpy{\fA}$ and $\pathhtpy{\fA}$ are transitive.
\end{proposition}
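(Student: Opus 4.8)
The plan is to prove transitivity for both relations by a single concatenation construction, where the crucial point is that one must \emph{multiply} the two lifts in $\fA$ rather than merely juxtapose the underlying homotopies. Suppose $A\colon X\times\interval\to\fA$ is an $\fA$-homotopy from $\psi$ to $\omega$ and $B\colon X\times\interval\to\fA$ is an $\fA$-homotopy from $\omega$ to $\chi$. Motivated by the cocycle identity $(PQ)\cdot\nu = P\cdot(Q\cdot\nu)$ recorded after \eqref{eq:action_on_state_def}, I would define $C\colon X\times\interval\to\fA$ by
\[
C_{x,s} = \begin{cases} A_{x,2s}, & 0\le s\le \tfrac12,\\ B_{x,2s-1}\,A_{x,1}, & \tfrac12\le s\le 1,\end{cases}
\]
and claim that $C$ is an $\fA$-homotopy from $\psi$ to $\chi$.

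The routine checks come first. Continuity holds by the pasting lemma: both branches are continuous (using continuity of multiplication in $\fA$ together with continuity of $A$ and $B$), and they agree on $X\times\{\tfrac12\}$ since $B_{x,0}=\1$ forces $B_{x,0}A_{x,1}=A_{x,1}$. Condition \eqref{ite:notinideal}'s companions are easy: $C_{x,0}=A_{x,0}=\1$, and
\[
C_{x,1}\cdot\psi_x = (B_{x,1}A_{x,1})\cdot\psi_x = B_{x,1}\cdot(A_{x,1}\cdot\psi_x) = B_{x,1}\cdot\omega_x = \chi_x .
\]
The main obstacle is condition \eqref{ite:notinideal} itself, that $C_{x,s}\notin\fN_{\psi_x}$ throughout. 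For $s\le\tfrac12$ this is immediate from the corresponding property of $A$. For $s\ge\tfrac12$ I would invoke the characterization (also recorded after \eqref{eq:action_on_state_def}) that $PQ\notin\fN_\nu$ if and only if $Q\notin\fN_\nu$ and $P\notin\fN_{Q\cdot\nu}$: applying it with $P=B_{x,2s-1}$, $Q=A_{x,1}$, and $\nu=\psi_x$, we have $A_{x,1}\notin\fN_{\psi_x}$ from $A$ and $B_{x,2s-1}\notin\fN_{A_{x,1}\cdot\psi_x}=\fN_{\omega_x}$ from $B$, so indeed $C_{x,s}=B_{x,2s-1}A_{x,1}\notin\fN_{\psi_x}$. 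This establishes transitivity of $\htpy{\fA}$.

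For $\pathhtpy{\fA}$ (so $X=\interval$ and $A,B$ additionally satisfy \eqref{ite:startpoint}--\eqref{ite:endpoint}), I would use the same $C$ and only verify the two endpoint conditions; note first that \eqref{ite:startpoint}--\eqref{ite:endpoint} at $s=0$ force $\psi_0=\omega_0=\chi_0$ and $\psi_1=\omega_1=\chi_1$. For the left endpoint with $s\le\tfrac12$ we get $C_{0,s}\cdot\psi_0 = A_{0,2s}\cdot\psi_0 = \omega_0 = \chi_0$, and for $s\ge\tfrac12$,
\[
C_{0,s}\cdot\psi_0 = B_{0,2s-1}\cdot(A_{0,1}\cdot\psi_0) = B_{0,2s-1}\cdot\omega_0 = \chi_0 ,
\]
using \eqref{ite:startpoint} for $A$ and then for $B$. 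The right endpoint is identical with $1$ in place of $0$. Hence $C$ is an $\fA$-homotopy relative endpoints from $\psi$ to $\chi$, proving transitivity of $\pathhtpy{\fA}$. The whole argument is elementary once the two algebraic facts about the action are in hand; the only genuinely non-formal point is the ideal computation verifying \eqref{ite:notinideal} on the second half of the concatenation.
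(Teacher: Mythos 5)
Your proof is correct and follows essentially the same route as the paper: the same concatenation $C_{x,s}=A_{x,2s}$ for $s\le\tfrac12$ and $C_{x,s}=B_{x,2s-1}A_{x,1}$ for $s\ge\tfrac12$, glued using $B_{x,0}=\1$. The only difference is that you spell out the verification of condition \eqref{ite:notinideal} on the second half via the cocycle identity and the endpoint checks, which the paper leaves as ``by construction'' and ``easy to see.''
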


\begin{proof}
Suppose $\chi \htpy{\fA} \psi$ and $\psi \htpy{\fA} \omega$. Let $A\colon X \times \interval \to  \fA$ and $B\colon X \times \interval \to  \fA$ be respective $\fA$-homotopies. Define $C\colon X \times \interval \to  \fA$ by 
\[
  C_{x,s} = \begin{cases} A_{x,2s} &\text{for } s \in [0,1/2] \ , \\
  B_{x,2s-1}A_{x,1}  &\text{for } s \in [1/2,1] \ .\end{cases}
\]
This is a well-defined continuous map since $B_{x,0} = \1$ for all $x \in X$. By construction, $C$ then is 
an $\fA$-homotopy from $\chi$ to $\omega$. If $X = \interval$ and both $A$ and $B$ are $\fA$-homotopies relative
endpoints, then it is  easy to see that $C$ is also an $\fA$-homotopy relative endpoints. 
\end{proof}

We note that being $\fA$-homotopic is \textit{not} a symmetric relation. For example, in what follows, we will act on
non-pure states with projections to obtain pure states, but we cannot undo this operation by acting with an element of
$\fA$ because acting on a pure state with an element of $\fA$ always returns a pure state.

In our applications, $\fA$ will be unital (with unit denoted $\1$), we will have $X = \interval$, and the function
$A\colon \interval \times \interval \to  \fA$ will be of the form 
\[A_{t,s} = s\tilde A_t + (1 - s)\1\] for some continuous map
$\tilde A\colon  \interval \to  \fA$. Given a state $\psi$ and $A \in \fA \setminus \fN_\psi$, it is therefore important to examine when the linear interpolation $sA + (1 - s)\1$ falls outside the Gelfand ideal of $\psi$ for all $s$. We examine a few special cases of interest.

\begin{proposition}\label{prop:projection_linear_interp}
Let $\fA$ be a unital $C^*$-algebra and let $\omega \in \sS(\fA)$. If $P \in \fA$ is a projection and $\omega(P) > 0$, then $sP + (1 - s)\1 \notin \fN_\omega$ for all $s \in \interval$.
\end{proposition}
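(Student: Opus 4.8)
The plan is a direct computation showing that $\omega\bigl((sP+(1-s)\1)^*(sP+(1-s)\1)\bigr)>0$ for every $s\in\interval$, which is precisely the assertion that $sP+(1-s)\1\notin\fN_\omega$ for all $s$. Everything reduces to evaluating a quadratic in $s$ whose coefficients are controlled by the hypothesis $\omega(P)>0$.

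First I would record that, since $P$ is a projection and the scalars $s,1-s$ are real, the element $A_s:=sP+(1-s)\1$ is self-adjoint, so $A_s^*A_s=A_s^2$. Expanding the square using $P^2=P$ together with $P\1=\1 P=P$ gives
\[ A_s^2 = (2s-s^2)\,P + (1-s)^2\,\1. \]
Applying the state $\omega$, and using linearity together with $\omega(\1)=1$, then yields
\[ \omega(A_s^*A_s)=\omega(A_s^2) = s(2-s)\,\omega(P) + (1-s)^2. \]

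The final step is to observe that both summands are nonnegative on $\interval$: for $s\in\interval$ one has $s(2-s)\ge 0$ (as $s\ge 0$ and $2-s\ge 1$), while $\omega(P)>0$ by hypothesis and $(1-s)^2\ge 0$. For $s\in[0,1)$ the term $(1-s)^2$ is strictly positive, so the whole expression is positive; at the remaining value $s=1$ the second term vanishes but the first equals $s(2-s)\,\omega(P)=\omega(P)>0$. In either case $\omega(A_s^2)>0$, hence $A_s\notin\fN_\omega$, as required.

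There is no genuine obstacle here; the computation is elementary. The only point requiring a moment's care is the endpoint $s=1$, where the interpolation term $(1-s)^2$ degenerates to zero and positivity must instead be furnished by the hypothesis $\omega(P)>0$ through the coefficient $s(2-s)=1$. It is worth flagging this explicitly so the reader sees why the strict inequality $\omega(P)>0$ (rather than merely $\omega(P)\ge 0$) is needed.
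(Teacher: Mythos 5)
Your proof is correct and follows essentially the same route as the paper's: expand $\omega\bigl((sP+(1-s)\1)^*(sP+(1-s)\1)\bigr)$ as a quadratic in $s$ using $P^2=P$ and observe it is strictly positive on $\interval$. The paper leaves the terms as $s^2\omega(P)+(1-s)^2+2s(1-s)\omega(P)$ while you combine them into $s(2-s)\omega(P)+(1-s)^2$ and flag the endpoint $s=1$ explicitly, but the argument is identical in substance.
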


\begin{proof}
For ease of notation, let $P_s = sP + (1-s)\1$. Then
\[
\omega\qty(P_s^*P_s) = s^2 \omega(P) + (1-s)^2 + 2s(1-s)\omega(P) > 0. \qedhere
\]
\end{proof}

\begin{proposition}\label{prop:unitary_linear_interp}
Let $\fA$ be a unital $C^*$-algebra and let $\omega \in \sS(\fA)$. If $U \in \Unitary(\fA)$ and $s \in \interval$, then $sU + (1-s)\1 \in \fN_\omega$ if and only if $\omega(U) = -1$ and $s = 1/2$.
\end{proposition}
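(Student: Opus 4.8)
The plan is to reduce the Gelfand-ideal condition $\omega(U_s^*U_s) = 0$, for $U_s := sU + (1-s)\1$, to an explicit real quadratic in $s$ and then read off its zero set. First I would expand $U_s^*U_s$ using $U^*U = \1$, together with the identity $\omega(U^*) = \overline{\omega(U)}$ valid for any state, to get
\[
U_s^*U_s = \big(s^2 + (1-s)^2\big)\1 + s(1-s)\,(U + U^*),
\]
so that, writing $r := \Re\omega(U)$,
\[
\omega(U_s^*U_s) = s^2 + (1-s)^2 + 2\,s(1-s)\,r.
\]

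The key algebraic step is to rewrite this as $f(s) := 1 + 2(1-r)\,s(s-1)$. Since $\omega$ is a state and $\norm{U} = 1$, contractivity gives $\abs{\omega(U)} \le 1$, hence $r \in [-1,1]$ and $1 - r \ge 0$. For $s \in \interval$ the product $s(1-s)$ ranges over $[0,\tfrac14]$, attaining its maximum $\tfrac14$ only at $s = 1/2$. Because $s(s-1) = -s(1-s)$, the equation $f(s) = 0$ is equivalent to $2(1-r)\,s(1-s) = 1$. The elementary bound $2(1-r)\,s(1-s) \le 2\cdot 2\cdot \tfrac14 = 1$, using $1 - r \le 2$ and $s(1-s) \le \tfrac14$, then forces both inequalities to be saturated, i.e. $1 - r = 2$ and $s(1-s) = \tfrac14$. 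This is precisely $r = -1$ and $s = 1/2$, which establishes the ``only if'' direction.

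Finally I would upgrade $r = -1$ to $\omega(U) = -1$: since $\abs{\omega(U)} \le 1$ and $\Re\omega(U) = -1$, the value $\omega(U)$ lies in the closed unit disk and has real part $-1$, so it must equal $-1$. For the converse, substituting $\omega(U) = -1$ (hence $r = -1$) and $s = 1/2$ into $f$ yields $f(1/2) = 0$, so indeed $U_{1/2} \in \fN_\omega$. I do not expect a real obstacle in this argument; the only point demanding slight care is this last step, the passage from $\Re\omega(U) = -1$ to $\omega(U) = -1$, which relies on the contractivity of states and the geometry of the unit disk. Everything else is a direct computation.
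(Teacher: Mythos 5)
Your proof is correct and follows essentially the same route as the paper's: both expand $\omega(U_s^*U_s)$ into the real quadratic $s^2+(1-s)^2+2s(1-s)\Re\omega(U)$ using $U^*U=\1$ and then exploit $\abs{\omega(U)}\leq 1$ to determine exactly when it vanishes. Your saturation-of-inequalities packaging (and your explicit note that $\Re\omega(U)=-1$ together with $\abs{\omega(U)}\leq 1$ forces $\omega(U)=-1$, a point the paper leaves implicit) is a harmless reorganization of the same computation.
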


\begin{proof}
For ease of notation, let $U_s = sU + (1 - s)\1$. Observe that
\begin{align*}
\omega(U_s^*U_s) &= s^2 + (1-s)^2 +2s (1-s)\Re \omega(U)\\
&\geq s^2 + (1-s)^2 - 2s(1-s) = (1 - 2s)^2
\end{align*}
If $\omega(U) = -1$ and $s = 1/2$, then the inequality is an equality and we get $\omega(U_s^*U_s) = (1 - 2s)^2 = 0$. If $s \neq 1/2$, then we  have $\omega(U_s^*U_s) \geq (1 - 2s)^2 > 0$. If $s = 1/2$ and $\omega(U) \neq -1$, then the inequality is strict, so $\omega(U_s^*U_s) > (1-2s)^2 = 0$.
\end{proof}

Suppose we have a path $\omega\colon \interval \to  \sS(\fA)$ of states and a path of unitaries $U\colon\interval \to  \Unitary(\fA)$, and we want to perform a homotopy of $\omega$ via a linear interpolation $(t,s) \mapsto sU_t + (1-s)\1$. We are in trouble if there exists $t \in \interval$ such that $\omega_t(U_t) = -1$. As the next lemma implies, we can avoid this difficulty by multiplying the path $U$ by a continuous phase $\lambda\colon \interval \to  S^1$.

\begin{lemma}\label{lem:phases_avoid_-1}
Let $\gamma\colon \interval \to  D^2$ be a path in the closed unit disk $D^2 \subset \bbC$. There exists a continuous map $\lambda\colon\interval \to  S^1 \subset \bbC$ such that for all $t \in \interval$ either $\lambda(t)\gamma(t) = 1$ or $\abs{\gamma(t)} < 1$. 
\end{lemma}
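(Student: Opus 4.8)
The plan is to recast the statement as an extension problem for circle-valued maps on the interval. First I would set $Z = \gamma^{-1}(S^1) = \{t \in \interval : |\gamma(t)| = 1\}$, which is closed in $\interval$ since $\gamma$ is continuous and $S^1$ is closed in $D^2$. On $Z$ the value $\gamma(t)$ is a unit complex number, so I can define a continuous map $\mu \colon Z \to S^1$ by $\mu(t) = \overline{\gamma(t)} = \gamma(t)^{-1}$; by construction $\mu(t)\gamma(t) = 1$ for every $t \in Z$. The whole problem then reduces to extending $\mu$ to a continuous map $\lambda \colon \interval \to S^1$: once such an extension exists we are done, because for $t \in Z$ we get $\lambda(t)\gamma(t) = \mu(t)\gamma(t) = 1$, while for $t \notin Z$ we have $|\gamma(t)| < 1$ by the very definition of $Z$. (If $Z = \emptyset$ the constant $\lambda \equiv 1$ works, with the second alternative holding everywhere.)

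Next I would perform the extension explicitly over the complement. Since $Z$ is closed in $\interval$, the open set $\interval \setminus Z$ is a countable disjoint union of relatively open intervals. Over a boundary interval of the form $[0,b)$ or $(a,1]$ (one endpoint being an endpoint of $\interval$, the other lying in $Z$) I would define $\lambda$ to be the constant $\mu(b)$, respectively $\mu(a)$. Over an interior gap $(a,b)$ with $a,b \in Z$ I would connect $\mu(a)$ to $\mu(b)$ by the shortest geodesic arc in $S^1$: writing $\mu(b)/\mu(a) = e^{i\theta}$ with $\theta \in (-\pi,\pi]$ the principal argument, set
\[
  \lambda(t) = \mu(a)\,\exp\!\left(i\theta\,\tfrac{t-a}{b-a}\right), \qquad t \in [a,b].
\]
This lands in $S^1$ and matches $\mu$ at the endpoints, so $\lambda$ is well defined, agrees with $\mu$ on $Z$, and is visibly continuous at every point of $\interval \setminus Z$.

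The main obstacle, and the only genuinely non-formal point, is continuity of $\lambda$ at the points of $Z$ that are accumulated by these gaps. Here I would argue as follows: if $t_k \to c$ with $c \in Z$ and each $t_k$ lies in a distinct interior gap $(a_k,b_k)$, then since $a_k \le t_k \le b_k$ a short compactness observation forces $a_k, b_k \to c$ — otherwise a subsequence of the gaps would have lengths bounded below while remaining disjoint inside the bounded set $\interval$, which is impossible. Continuity of $\mu$ on $Z$ then gives $\mu(a_k),\mu(b_k) \to \mu(c)$, hence $\mu(b_k)/\mu(a_k) \to 1$ and the principal angles $\theta_k \to 0$; since $\lambda(t_k)$ lies on the geodesic arc from $\mu(a_k)$ to $\mu(b_k)$, whose diameter is controlled by $|\theta_k|$, we conclude $\lambda(t_k) \to \mu(c) = \lambda(c)$. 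The remaining approaches to $c$ (through points eventually in $Z$, or from within a single fixed gap) are immediate from continuity of $\mu$ and of the explicit formula. This establishes that $\lambda \colon \interval \to S^1$ is continuous, completing the proof.
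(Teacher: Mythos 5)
Your proof is correct, but it takes a genuinely different route from the one in the paper. The paper runs a path-lifting argument on the target: it covers $D^2$ by the open ball $B_1(0)$ and the sets $B_1(z)\cap D^2$ for $z\in S^1$, constructs on each piece a local choice of phase (a constant on the interior ball, essentially $w\mapsto \abs{w}/w$ near the boundary, corrected so as to pass through a prescribed value at a prescribed point), and then glues these along a Lebesgue-number subdivision $[k/N,(k+1)/N]$ of $\interval$. You instead work on the source: you observe that on the closed set $Z=\gamma^{-1}(S^1)$ the phase is forced to be $\mu=\overline{\gamma}$, and you reduce the lemma to extending $\mu$ continuously from $Z$ to $\interval$, which you do by geodesic interpolation across the countably many complementary gaps. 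The only non-formal point in your argument -- continuity of the extension at points of $Z$ accumulated by infinitely many distinct gaps -- is handled correctly: disjointness of the gaps in $[0,1]$ forces their lengths to tend to $0$ along any such sequence, so the endpoints $a_k,b_k$ converge to $c$, the ratios $\mu(b_k)/\mu(a_k)$ tend to $1$, the principal angles $\theta_k$ tend to $0$ (continuity of the principal argument at $1$ is all that is used, so the discontinuity at $-1$ is harmless), and the interpolated values are trapped on arcs of angular length $\abs{\theta_k}$ based at $\mu(a_k)$. Your approach is more elementary and arguably cleaner for this particular statement -- no open cover, no Lebesgue number lemma, and the restriction of $\lambda$ to $Z$ is canonical -- while the paper's covering argument is the one that generalizes to genuine lifting problems where no global section over the ``boundary set'' is available. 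Both are complete proofs; yours could be tightened only by stating explicitly the routine subsequence splitting (eventually in $Z$, eventually in one fixed gap, or in pairwise distinct gaps) that reduces a general sequence $t_k\to c$ to the cases you treat.
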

Intuitively, the lemma can be thought of as follows. Imagine $\gamma$ is the path of an ant walking on a round table and you are seated at the table. Any $\lambda = e^{i\theta} \in S^1$ corresponds to a rotation of the table by an angle $\theta$. The lemma says that as the ant is walking along the table, you can continuously rotate the table so that whenever the ant gets to the edge of the table, the ant will be directly in front of you.

\begin{proof}[Proof of \cref{lem:phases_avoid_-1}.]
The proof more or less follows the standard path-lifting argument for fiber bundles, although we are not looking at a fiber bundle here. 

Begin by covering $D^2$ with the open balls $B_1(0)$ and $B_1(z) \cap D^2$ for all $z \in S^1$. For any point $w_0 \in B_1(0)$ and any $\mu_0 \in S^1$, the constant function $\mu \colon  B_1(0) \to  S^1$ with value $\mu_0$ satisfies $\mu(w_0) = \mu_0$ and $\abs{w} < 1$ for all $w \in B_1(0)$. 

Now consider any $z \in S^1$, any point $w_0 \in B_1(z) \cap D^2$, and any $\mu_0 \in S^1$ such that either $\mu_0 w_0 = 1$ or $\abs{w_0} < 1$. If $\mu_0 w_0 = 1$, then the map 
\[
\mu\colon B_1(z) \cap D^2 \to  S^1, \quad \mu(w) = \frac{\abs{w}}{w}
\]
satisfies $\mu(w_0) = \mu_0$ and for all $w \in B_1(z) \cap D^2$ either $\mu(w)w  = 1$ or $\abs{w} < 1$. On the other hand, suppose $\abs{w_0} < 1$ and let $\theta \in [0,2\pi)$ such that $e^{i\theta} = \mu_0w_0/\abs{w_0}$. Then the function
\[
\mu\colon B_1(z) \cap D^2 \to  S^1, \quad \mu(w) = \mu_0 \cdot \frac{w_0}{\abs{w_0}} \cdot \frac{\abs{w}}{w} \cdot e^{-i\theta\qty(\abs{w} - \abs{w_0})/\qty(1 - \abs{w_0})}
\]
satisfies $\mu(w_0) = \mu_0$ and for all $w \in B_1(z) \cap D^2$, either $\mu(w) w = 1$ or $\abs{w} < 1$.

Now consider the path $\gamma$. Cover $\interval$ by the preimages $\gamma^{-1}(B_1(0))$ and $\gamma^{-1}(B_1(z) \cap D^2)$ for all $z \in S^1$. By the Lebesgue number lemma, there exists $N \in \bbN$ such that for all $k \in \qty{0,\ldots, N-1}$, the interval $[k/N, (k+1)/N]$ is contained in one of these preimages. 

Assume that for some $k \in \qty{0,\ldots, N-1}$ we have constructed a path $\lambda\colon [0,k/N] \to  S^1$ such that for all $t \in [0,k/N]$ either $\lambda(t) \gamma(t) = 1$ or $\abs{\gamma(t)} <1$. Choose one of the preimages above that contains $[k/N,(k+1)/N]$ and denote it $\gamma^{-1}(O)$. Setting $w_0 = \gamma(k/N)$ and $\mu_0 = \lambda(k/N)$, we may compose the restriction $\gamma \colon [k/N, (k+1)/N] \to  O$ with the appropriate map $\mu\colon O \to  S^1$ defined previously to obtain a map $\tilde \lambda\colon [k/N, (k+1)/N] \to  S^1$ such that $\tilde \lambda(k/N) = \mu_0 = \lambda(k/N)$ and for all $t \in [k/N, (k+1)/N]$ either $\tilde\lambda(t)\gamma(t) = 1$ or $\abs{\gamma(t)} < 1$. We may now glue $\lambda$ to $\tilde \lambda$ to get a continuous map $\lambda\colon [0,(k+1)/N] \to  S^1$ such that for all $t \in [0,(k+1)/N]$ either $\lambda(t)\gamma(t) = 1$ or $\abs{\gamma(t)} < 1$. Continuing in this manner constructs the desired path $\lambda \colon \interval \to  S^1$.
\end{proof}

\begin{cor}
Let $\fA$ be a unital $C^*$-algebra and let $\omega \colon \interval \to  \sS(\fA)$ be weak* continuous. Given a continuous map $U\colon \interval \to  \Unitary(\fA)$,  there exists a continuous $\lambda\colon \interval \to  S^1$ such that $s \lambda_tU_t + (1-s)\1 \notin \fN_{\omega_t}$ for all $t,s \in \interval$.
\end{cor}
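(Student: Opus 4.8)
The plan is to reduce the Gelfand-ideal condition to a single pointwise non-vanishing condition and then invoke \cref{lem:phases_avoid_-1}. First I would observe that for each fixed $t$ the element $\lambda_t U_t$ is again a unitary, since $\abs{\lambda_t} = 1$ forces $(\lambda_t U_t)^*(\lambda_t U_t) = \overline{\lambda_t}\lambda_t\, U_t^* U_t = \1$ and similarly on the other side. Applying \cref{prop:unitary_linear_interp} to the state $\omega_t$ and the unitary $\lambda_t U_t$, the interpolation $s\lambda_t U_t + (1-s)\1$ lies in $\fN_{\omega_t}$ for some $s \in \interval$ if and only if $\omega_t(\lambda_t U_t) = -1$ and $s = 1/2$. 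Since $\omega_t(\lambda_t U_t) = \lambda_t\,\omega_t(U_t)$ by linearity, requiring $s\lambda_t U_t + (1-s)\1 \notin \fN_{\omega_t}$ for \emph{all} $s$ is equivalent to the single condition $\lambda_t\,\omega_t(U_t) \neq -1$. Thus it suffices to produce a continuous $\lambda\colon \interval \to S^1$ with $\lambda_t\,\omega_t(U_t) \ne -1$ for every $t$.

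Next I would set $\gamma(t) = \omega_t(U_t)$ and verify that $\gamma$ is a path in the closed unit disk $D^2$. Since $\omega_t$ is a state and $U_t$ a unitary, $\abs{\omega_t(U_t)} \le \norm{U_t} = 1$, so $\gamma(t) \in D^2$. Continuity of $\gamma$ follows by composing the continuous map $t \mapsto (U_t, \omega_t) \in \fA \times \sS(\fA)$ with the jointly continuous evaluation map of \cref{prop:evaluation_jointly_continuous}.

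Then I would apply \cref{lem:phases_avoid_-1} to $\gamma$, obtaining a continuous $\lambda\colon \interval \to S^1$ such that for every $t$ either $\lambda_t\gamma_t = 1$ or $\abs{\gamma_t} < 1$. In either case $\lambda_t\gamma_t \neq -1$: in the first case because $1 \neq -1$, and in the second because $\abs{\lambda_t\gamma_t} = \abs{\gamma_t} < 1$ while $\abs{-1} = 1$. Since $\gamma_t = \omega_t(U_t)$, this yields exactly $\lambda_t\,\omega_t(U_t) \neq -1$ for all $t$, which by the first step gives the desired conclusion.

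There is no substantial obstacle here beyond bookkeeping: all of the analytic content is already packaged into \cref{lem:phases_avoid_-1}, and this corollary is essentially its translation into the language of Gelfand ideals via \cref{prop:unitary_linear_interp}. The only point requiring a moment's care is matching the normalization of the lemma (which avoids the value $-1$ by guaranteeing $\lambda\gamma = 1$ or $\abs{\gamma} < 1$) to the condition $\lambda_t\,\omega_t(U_t) \neq -1$; this is why one feeds $\gamma(t) = \omega_t(U_t)$ itself, rather than $-\omega_t(U_t)$, into the lemma.
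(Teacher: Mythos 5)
Your proof is correct and follows exactly the paper's route: set $\gamma(t)=\omega_t(U_t)$, apply \cref{lem:phases_avoid_-1} to get $\lambda$, and conclude via \cref{prop:unitary_linear_interp} that $\lambda_t\omega_t(U_t)\neq -1$ rules out membership in $\fN_{\omega_t}$ for all $s$. The paper states this in one line; you have merely spelled out the same bookkeeping (continuity of $\gamma$ via \cref{prop:evaluation_jointly_continuous}, unitarity of $\lambda_t U_t$) in full.
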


\begin{proof}
This is immediate from \cref{prop:unitary_linear_interp} and \cref{lem:phases_avoid_-1} upon setting $\gamma(t) = \omega_t(U_t)$.
\end{proof}

\subsection{Finite dimensional density matrices and rectification of paths}\label{subsec:fddensitymatrices}
Let us now fix $n \in \bbN$ and consider the specific example
$\fA = M_n(\bbC)$. In several steps we will show in this section that
every loop in the state space $\sS( M_n(\bbC))$ is $M_n(\bbC)$-homotopic
relative the basepoint to the constant loop. In particular this implies that
the state space of a matrix algebra $M_n(\bbC)$ is simply connected. 
We start by identifying $\sS (M_n(\bbC))$ with the set
$\denmat{n}$ of $n \times n$ density matrices, topologized as a subspace of
$M_n(\bbC)$. Recall that if $\varrho \in \denmat{n}$, then
$A \mapsto \tr(\varrho A)$ is a state on $M_n(\bbC)$.

\begin{proposition}
The map $\denmat{n} \to  \sS(M_n(\bbC))$ that associates to a density matrix the corresponding state is a homeomorphism.
\end{proposition}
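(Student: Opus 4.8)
The plan is to verify that the natural map $\Phi\colon \denmat{n} \to \sS(M_n(\bbC))$, $\Phi(\varrho) = \big(A \mapsto \tr(\varrho A)\big)$, is a continuous bijection and then upgrade this to a homeomorphism by a compactness argument, rather than by constructing and checking continuity of the inverse directly. First I would record that $\Phi$ is well-defined: for $\varrho \in \denmat{n}$ positivity of $\varrho$ gives $\tr(\varrho B^*B) \geq 0$ for all $B$, so $\omega_\varrho \defeq \Phi(\varrho)$ is positive, and $\omega_\varrho(\1) = \tr(\varrho) = 1$, so $\omega_\varrho$ is a state.

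Next I would establish bijectivity using the non-degeneracy of the trace pairing $(\varrho, A) \mapsto \tr(\varrho A)$ on $M_n(\bbC)$. For injectivity, $\tr(\varrho_1 A) = \tr(\varrho_2 A)$ for all $A$ forces $\tr\big((\varrho_1 - \varrho_2)A\big) = 0$ for all $A$, hence $\varrho_1 = \varrho_2$. For surjectivity, given a state $\omega$, non-degeneracy of the trace pairing produces a unique matrix $\varrho \in M_n(\bbC)$ with $\omega(A) = \tr(\varrho A)$ for all $A$; it remains to see that $\varrho$ is a density matrix. Normalization $\tr(\varrho) = \omega(\1) = 1$ is immediate, and for positivity I would test against rank-one operators: for any $v \in \bbC^n$, the operator $|v\rangle\langle v|$ is positive, so
\[
\langle v, \varrho v\rangle = \tr\big(\varrho\, |v\rangle\langle v|\big) = \omega\big(|v\rangle\langle v|\big) \geq 0,
\]
whence $\varrho \geq 0$. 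Thus $\varrho \in \denmat{n}$ and $\Phi$ is a bijection.

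Finally I would address the topology. Continuity of $\Phi$ is essentially free: the weak$^*$ topology on $\sS(M_n(\bbC))$ is initial with respect to the evaluation maps $\omega \mapsto \omega(A)$, and each composite $\varrho \mapsto \tr(\varrho A)$ is linear on the finite-dimensional space $M_n(\bbC)$, hence continuous. The step I expect to carry the most weight is promoting the continuous bijection to a homeomorphism, which I would do by observing that $\denmat{n}$ is compact — it is closed (defined by the closed conditions $\varrho = \varrho^*$, $\varrho \geq 0$, $\tr \varrho = 1$) and bounded (a positive matrix of trace one has eigenvalues in $[0,1]$, so operator norm at most $1$), hence compact in $M_n(\bbC)$ — while $\sS(M_n(\bbC))$ with the weak$^*$ topology is Hausdorff, since distinct functionals are separated by some evaluation. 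A continuous bijection from a compact space to a Hausdorff space is automatically a homeomorphism, which completes the argument. The only genuinely content-bearing point is the positivity half of surjectivity; everything else is either the standard trace-duality of $M_n(\bbC)$ or the compact-to-Hausdorff principle.
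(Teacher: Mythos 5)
Your proof is correct. The overall skeleton matches the paper's: both establish a continuous bijection and then upgrade it to a homeomorphism via the fact that a continuous bijection from a compact space to a Hausdorff space is a homeomorphism (the paper phrases this as ``since $\denmat{n}$ and $\sS(M_n(\bbC))$ are compact Hausdorff, all that remains is continuity''). The genuine difference is in surjectivity. You use trace duality on $M_n(\bbC)$: every linear functional is $\tr(\varrho\,\cdot)$ for a unique $\varrho$, and you then verify $\varrho \in \denmat{n}$ by testing positivity against the rank-one operators $|v\rangle\langle v|$ (which, over $\bbC$, also yields self-adjointness for free). The paper instead argues through the extreme-point structure of the state space: $\sP(M_n(\bbC)) \cong \bbP(\bbC^n)$ is compact, $\sS(M_n(\bbC))$ is its convex hull (closed automatically in finite dimensions), pure states are given by rank-one projections, and convex combinations of rank-one projections are density matrices. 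Your route is more elementary and self-contained --- it needs only non-degeneracy of the trace pairing --- whereas the paper's route leans on representation-theoretic facts about $M_n(\bbC)$ but makes visible the correspondence between pure states and rank-one projections, which is used elsewhere in that section. Your continuity argument (initial-topology characterization of weak$^*$ plus linearity in finite dimensions) replaces the paper's explicit estimate $\abs{\tr((\varrho_1-\varrho_2)A)} \leq n\norm{\varrho_1-\varrho_2}\norm{A}$; both are fine.
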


\begin{proof}
Suppose $\varrho_1, \varrho_2 \in \denmat{n}$ map to the same state. Then 
\[
\tr((\varrho_1 - \varrho_2)A) = 0
\] 
for all $A \in M_n(\bbC)$. Since $\varrho_1 - \varrho_2$ is self-adjoint, we can find an orthonormal basis of eigenvectors. Letting $A$ be the projection onto any eigenspace, the equation $\tr((\varrho_1 - \varrho_2)A) = 0$ implies that the corresponding eigenvalue is zero. Therefore $\varrho_1 - \varrho_2 = 0$. 

We now show surjectivity. Since $M_n(\bbC)$ has only one superselection sector, we know $\sP(M_n(\bbC)) \cong \bbP(\bbC^n)$, so $\sP(M_n(\bbC))$ is compact. We know $\sS(M_n(\bbC))$ is the closed convex hull of $\sP(M_n(\bbC))$, but since $\sP(M_n(\bbC))$ is a compact subset of a finite-dimensional vector space, the convex hull of $\sP(M_n(\bbC))$ is already closed, so $\sS(M_n(\bbC))$ is just the convex hull of $\sP(M_n(\bbC))$. Any pure state can be written as $A \mapsto \tr(PA)$ for some rank-one projection $P$, and any convex combination of rank-one projections is a density matrix. Thus, any state can be represented by a density matrix.

Since $\denmat{n}$ and $\sS(M_n(\bbC))$ are compact Hausdorff, all that remains to do is show continuity. Letting $\varrho_1, \varrho_2 \in \denmat{n}$ and letting $e_1,\ldots, e_n$ be an orthonormal basis of $\bbC^n$, we have
\[
\abs{\tr((\varrho_1 - \varrho_2)A)} \leq \sum_{i=1}^n \abs{\ev{e_i,(\varrho_1 - \varrho_2)A e_i}} \leq n \norm{\varrho_1 - \varrho_2}\norm{A}.
\]
Continuity follows.
\end{proof}

Note that if $\omega \in \sS(M_n(\bbC))$ is represented by the density matrix $\varrho$ and $A \in M_n(\bbC) \setminus \fN_\omega$, then $A \cdot \omega$ is represented by the density matrix $A\varrho A^*/\tr(A\varrho A^*)$. We will use this correspondence between states and density matrices freely.

Our first goal is to show that any loop in $\sS(M_n(\bbC))$ based at a pure state is $M_n(\bbC)$-homotopic to the constant map
relative the basepoint (in the sense of \cref{defn:Ahomotopy}). Since all pure states on $M_n(\bbC)$ are unitarily equivalent, it suffices to take our loop to be based at the state represented by the first standard basis vector. Denote this state by $\omega_0^n$. Explicitly, $\omega_0^n$ is the state
\[
  \omega_0^n \colon M_n(\bbC) \to \C,\quad A \mapsto \langle e_0 ,Ae_0\rangle ,
\]
where $e_0 =(1,0,\ldots,0) \in \C^n$. 
The following notation will also be convenient. Given $n \in \bbN$ and $k \in \qty{0,\ldots, n-1}$, let 
\[
P^n_k = \diag(1,\ldots, 1, 0, \ldots, 0) \in M_n(\bbC)
\]
where there are $k$ zeros on the diagonal.

\begin{lemma}\label{lem:omega(P^n_1)=0}
  If $n > 1$ and $\omega \in \sS(M_n(\bbC))$, then $\omega(P^n_1) = 0$ if and only if
  $\omega$ is represented by the last standard basis vector, that is, if and only if
  $\omega (A) =\langle e_n,A e_n \rangle$ for all $A\in M_n(\bbC)$, where
   $e_n =(0,\ldots,0,1) \in \C^n$.
\end{lemma}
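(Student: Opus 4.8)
The plan is to pass to the density-matrix description of states supplied by the preceding proposition and thereby reduce the claim to an elementary statement about the diagonal of a positive semidefinite matrix. Writing $\omega$ as $A \mapsto \tr(\varrho A)$ for the unique density matrix $\varrho \in \denmat{n}$ corresponding to $\omega$, I would first record the key algebraic identity: $P^n_1 = \1 - E$, where $E$ is the rank-one orthogonal projection onto $\C e_n$ (the diagonal matrix with a single $1$ in the last slot). Then $\omega(P^n_1) = \tr(\varrho) - \tr(\varrho E) = 1 - \langle e_n, \varrho\, e_n\rangle$, using $\tr\varrho = 1$, so that the condition $\omega(P^n_1) = 0$ is exactly equivalent to $\langle e_n, \varrho\, e_n\rangle = 1$.

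The next step is to upgrade this single diagonal condition to the full identification $\varrho = E$, using positivity. Since $\varrho$ is positive semidefinite with unit trace, each diagonal entry $\varrho_{ii} = \langle e_i, \varrho\, e_i\rangle$ is non-negative and these entries sum to $1$; hence $\varrho_{nn} = 1$ forces $\varrho_{ii} = 0$ for every $i \neq n$. I would then invoke the standard fact that a vanishing diagonal entry of a positive semidefinite matrix forces the whole corresponding row and column to vanish: for $i \neq n$ and any $j$, the $2\times 2$ principal submatrix of $\varrho$ indexed by $\{i,j\}$ is positive semidefinite with a zero $(i,i)$ entry, which forces its determinant-type condition $|\varrho_{ij}|^2 \leq \varrho_{ii}\varrho_{jj} = 0$, so $\varrho_{ij} = \varrho_{ji} = 0$. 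Applying this for all $i \neq n$ leaves only the $(n,n)$ entry nonzero, giving $\varrho = E$, which is precisely the density matrix of the state $A \mapsto \langle e_n, A e_n\rangle$. This establishes the forward implication.

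The converse is immediate and I would dispatch it in a line: if $\omega(A) = \langle e_n, A e_n\rangle$ then $\varrho = E$, and $\omega(P^n_1) = \langle e_n, P^n_1 e_n\rangle = 0$ because $P^n_1 e_n = 0$. There is no real obstacle in this argument; the only point that deserves explicit care is the positive-semidefiniteness step that promotes the scalar condition $\varrho_{nn} = 1$ to the rank-one conclusion $\varrho = E$, and this is handled cleanly by the $2\times 2$ minor observation above. The hypothesis $n > 1$ is used only to ensure $P^n_1 \neq \1$, so that the statement is non-vacuous.
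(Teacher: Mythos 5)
Your proof is correct and follows essentially the same route as the paper's: both pass to the density matrix $\varrho$, use $\tr\varrho=1$ and positivity to force $\varrho_{nn}=1$ and kill the remaining diagonal entries, and then eliminate the off-diagonal terms (you via the $2\times 2$ principal-minor inequality $|\varrho_{ij}|^2\leq\varrho_{ii}\varrho_{jj}$, the paper via $P^n_1\varrho P^n_1=0$ together with $\norm{\varrho}\leq 1$ applied to $e_n$). The two variants of the final positivity step are interchangeable, so there is nothing substantive to add.
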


\begin{proof}
If $\omega$ is represented by the last standard basis vector, then it is clear that $\omega(P^n_1) = 0$. Suppose $\omega(P^n_1) = 0$. Let $\varrho$ be the density matrix representing $\omega$. Then $\tr(P^n_1 \varrho P^n_1) = 0$, but $P^n_1 \varrho P^n_1$ is positive, so this implies that $P^n_1 \varrho P^n_1 = 0$. Note that $P^n_1 \varrho P^n_1$ is the matrix with the same entries as $\varrho$ in the upper left $(n-1) \times (n-1)$ submatrix, and with zeros everywhere else. Since this $(n-1) \times (n-1)$ submatrix is zero, it follows from the fact that $\tr(\varrho) = 1$ that the bottom right entry of $\varrho$ is $1$. Acting $\varrho$ on the last standard basis vector and using the fact that $\norm{\varrho} \leq 1$, we see that all other entries in the last column of $\varrho$ are zero. Since $\varrho$ is self-adjoint, all other entries in the last row of $\varrho$ are zero as well. Thus, the only nonzero entry of $\varrho$ is a one in the bottom right corner, and this is the density matrix corresponding to the last standard basis vector.
\end{proof}

Our next two lemmas follow the same train of thought as  \cref{lem:phases_avoid_-1}. In  \cref{lem:phases_avoid_-1}, the situation was analogous to an ant walking on a round table $D^2$ while an observer rotated the table by multiplying with elements of $S^1$. In \cref{lem:state_neighborhood} and \cref{lem:unitary_avoid_0}, our aim is to make a similar statement for paths in $\sS(M_n(\bbC))$, where we act on states with unitaries. 

For example, when $n = 2$ the state space $\sS(M_n(\bbC))$ is the Bloch ball and we can think of our unitaries acting on the Bloch ball by rotations. Analogously, if an ant is eating through an apple, I can rotate the apple continuously so that whenever the ant emerges, it is facing straight up.

\begin{lemma}\label{lem:state_neighborhood}
For every $\psi \in \sS(M_n(\bbC))$, there exists an open neighborhood 
\[\psi \in O \subset \sS(M_n(\bbC))\] 
with the following property: 

For every $\psi_0 \in O$ and $U_0 \in \Unitary(n)$ such that either $U_0 \cdot \psi_0 = \omega_0^n$ or $U_0 \cdot \psi_0$ is not pure, there exists a continuous map $U\colon O \to  \Unitary(n)$ such that
\begin{enumerate}[(a)]
\item $U_{\psi_0} = U_0$, and 
\item for every $\phi \in O$, either $U_\phi \cdot \phi = \omega_0^n$ or $U_\phi \cdot \phi$ is not pure.
\end{enumerate}
\end{lemma}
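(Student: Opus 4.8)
The plan is to exploit that the unitary action preserves purity, reducing condition~(b) to a constraint only on the pure states in $O$, and then to build the required family of unitaries from three ingredients: a continuous ``principal eigenline'' retraction near the pure locus, a local section of $\Unitary(n)$ over $\bbP(\bbC^n)$, and a stabilizer-valued correction term that pins down the value at $\psi_0$.

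First I would record that for $\varrho\in\denmat{n}$ one has $\tr\bigl((U\varrho U^*)^2\bigr)=\tr(\varrho^2)$, so $U\cdot\phi$ is pure if and only if $\phi$ is pure. Hence condition~(b) is equivalent to demanding $U_\phi\cdot\phi=\omega_0^n$ for every \emph{pure} $\phi\in O$, with $U_\phi$ unconstrained at non-pure $\phi$. This immediately settles the case where the center $\psi$ is non-pure: since the pure states $\bbP(\bbC^n)$ are closed in $\denmat{n}$, one may take $O$ to be a neighborhood of $\psi$ containing no pure states, and then $U\equiv U_0$ works because (b) is vacuous.

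The substance is the case where $\psi$ is pure. Let $H=\{U\in\Unitary(n):U\cdot\omega_0^n=\omega_0^n\}$ be the stabilizer of $\omega_0^n$ (a block $\Unitary(1)\times\Unitary(n-1)$), so that the orbit map identifies $\bbP(\bbC^n)\cong\Unitary(n)/H$ and the map $p\colon\Unitary(n)\to\bbP(\bbC^n)$, $p(U)=U^{-1}\cdot\omega_0^n$, is a locally trivial bundle; over a small open $W\ni\psi$ I would choose a continuous section $s\colon W\to\Unitary(n)$, so that $s(\ell)\cdot\ell=\omega_0^n$. Since a pure state has largest eigenvalue $1$ with multiplicity one, every density matrix in a sufficiently small neighborhood of $\psi$ has a simple largest eigenvalue, and the assignment $r$ sending such a matrix to the line spanned by the corresponding eigenvector is continuous and satisfies $r(\phi)=\phi$ on pure states; shrinking, I fix the open set $O\ni\psi$ so that $r$ is defined and continuous on $O$ with $r(O)\subseteq W$. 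For given $(\psi_0,U_0)$ I then set $W_0=U_0\,s\bigl(r(\psi_0)\bigr)^{-1}$ and look for $U_\phi=g(\phi)\,s\bigl(r(\phi)\bigr)$ with $g\colon O\to\Unitary(n)$ continuous, $g(\psi_0)=W_0$, and $g(\phi)\in H$ whenever $\phi$ is pure. Because $p(hU)=p(U)$ for $h\in H$, any such $g$ forces $p(U_\phi)=r(\phi)=\phi$ and hence $U_\phi\cdot\phi=\omega_0^n$ on the pure locus, while $U_{\psi_0}=W_0\,s(r(\psi_0))=U_0$ gives~(a).

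The hard part is producing $g$, and here the two sub-cases diverge. If $\psi_0$ is pure then $r(\psi_0)=\psi_0$ and the hypothesis $U_0\cdot\psi_0=\omega_0^n$ forces $W_0\in H$, so the constant choice $g\equiv W_0$ suffices. If $\psi_0$ is non-pure then $W_0$ need not lie in $H$, but now $\{\psi_0\}$ and the closed set $O\cap\bbP(\bbC^n)$ are disjoint closed subsets of the metrizable space $O$; using Urysohn I choose a continuous $d\colon O\to\interval$ with $d(\psi_0)=1$ and $d\equiv0$ on $O\cap\bbP(\bbC^n)$, and a path $\beta\colon\interval\to\Unitary(n)$ from $\1$ to $W_0$ (available since $\Unitary(n)$ is connected), and set $g=\beta\circ d$, which equals $\1\in H$ on the pure locus and $W_0$ at $\psi_0$. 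The main obstacle to anticipate is exactly this reconciliation of the prescribed value at $\psi_0$ with the pure-state constraint without breaking continuity; it is what forces the stabilizer correction and the case split, just as the continuity of $r$ (valid only where the largest eigenvalue stays simple) is what forces the separate, trivial treatment of a non-pure center.
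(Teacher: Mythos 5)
Your proof is correct, and in the crucial sub-case it takes a genuinely different route from the paper's. Both arguments dispose of a non-pure center $\psi$ by avoiding the (closed) pure locus, and both use the simple-top-eigenvalue retraction $r$ (the paper's $\phi\mapsto\omega_\phi$, made quantitative there via the neighborhood $\{\phi:\phi(P)>7/8\}$). The divergence is in how the prescribed value at $\psi_0$ is reconciled with the pure-locus constraint. The paper first transports pure states to $\omega_{\psi_0}$ by a continuous family $U_\omega$ defined on the metric ball $B_2(\omega_{\psi_0})\cap\sP(M_n(\bbC))$ with $U_{\omega_{\psi_0}}=I$, and then, when $\psi_0$ is non-pure, corrects by the explicit product $e^{iB[S(\varrho_{\psi_0})-S(\varrho_\phi)]/S(\varrho_{\psi_0})}e^{iAS(\varrho_\phi)/S(\varrho_{\psi_0})}$, using the von Neumann entropy as a canonical function that vanishes exactly on the pure states and equals $S(\varrho_{\psi_0})$ at $\psi_0$ --- precisely the role your Urysohn function $d$ plays. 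You instead transport pure states directly to $\omega_0^n$ via a local section of the homogeneous-space bundle $\Unitary(n)\to\bbP(\bbC^n)$ and observe that the residual freedom is exactly a factor valued in the stabilizer $H$ on the pure locus, which cleanly explains both why the pure-$\psi_0$ sub-case is automatic ($W_0\in H$) and why the non-pure sub-case needs only a bump function times a path to $W_0$. Your version makes the structural reason for the construction more transparent and is marginally more elementary (no entropy, no matrix logarithms); the paper's is more explicit and self-contained in that it never invokes local triviality of $\Unitary(n)\to\bbP(\bbC^n)$ or Urysohn's lemma, though its family $U_\omega$ on the ball of radius $2$ is essentially the same local-section datum in disguise.
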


\begin{proof}
Suppose $\psi$ is not pure. Then we may take 
\[O = \sS(M_n(\bbC)) \setminus \sP(M_n(\bbC)).\] 
Indeed, observe that for any $\psi_0 \in O$ and $U_0 \in \Unitary(n)$, the constant map at $U_0$ satisfies the desired properties, since a unitary cannot take a non-pure state to a pure state.

Suppose $\psi$ is pure, let $\Psi$ be a representing vector for $\psi$, and let $P = \ketbra{\Psi}$. Let 
\[
O = \qty{\phi \in \sS(M_n(\bbC)): \phi(P) > 7/8}.
\]
Suppose $\phi \in O$ and let $\varrho_\phi \in \denmat{n}$ be the density matrix representing $\phi$. Write $\varrho_\phi$ using the spectral decomposition 
\[\varrho_\phi = \sum_{i=1}^n t_i \ketbra{\Phi_i},\] 
where $\Phi_1,\ldots, \Phi_n$ is an orthonormal basis of eigenvectors of $\varrho_\phi$. Observe that
\begin{align*}
\frac{7}{8} < \phi(P) = \tr(\varrho_\phi P) = \sum_{i=1}^n t_i \abs{\ev{\Phi_i, \Psi}}^2 \leq \qty(\max_i t_i) \sum_{i=1}^n \abs{\ev{\Phi_i, \Psi}}^2 = \max_i t_i.
\end{align*}
Let $k \in \qty{1,\ldots, n}$ be such that $t_k = \max_i t_i$. We have just shown that $t_k > 7/8$. Since $\sum_{i=1}^n t_i = 1$, we know that $t_k \leq 1$ and $t_i < 1/8$ for all $i \neq k$. 

Next, observe that
\begin{align*}
\frac{7}{8} < \phi(P) = t_k \abs{\ev{\Phi_k, \Psi}}^2 + \sum_{i \neq k} t_i \abs{\ev{\Phi_i, \Psi}}^2 \leq \abs{\ev{\Phi_k, \Psi}}^2 + \frac{1}{8}.
\end{align*}
Thus, $\abs{\ev{\Phi_k, \Psi}}^2 > 3/4$. 

Consider the map $O \ni \phi \mapsto \omega_\phi \in \sP(M_n(\bbC))$, where $\omega_\phi$ is represented by an eigenvector of $\varrho_\phi$ with maximum eigenvalue. This is a continuous map that leaves $O \cap \sP(M_n(\bbC))$ invariant. With $\Phi_k$ representing a maximum eigenvector of $\phi$ as above, we observe that
\[
\norm{\omega_\phi - \psi}^2 = 4 - 4\abs{\ev{\Phi_k, \Psi}}^2 < 4 - 4 \cdot \frac{3}{4} = 1.
\]
Thus,
\[
\norm{\omega_\phi - \omega_{\psi_0}} \leq \norm{\omega_\phi - \psi} + \norm{\psi - \omega_{\psi_0}} < 2. 
\]

There exists a continuous map $U\colon  B_2(\omega_{\psi_0}) \cap \sP(M_n(\bbC)) \to  \Unitary(n)$ such that $U_{\omega_{\psi_0}} = I$ and $U_\omega \cdot \omega = \omega_{\psi_0}$ for every $\omega \in B_2(\omega_{\psi_0}) \cap \sP(M_n(\bbC))$. If $\psi_0$ is pure, then $\psi_0 = \omega_{\psi_0}$, and the function
\[
\phi \mapsto U_0 U_{\omega_{\phi}}
\]
is the desired family of unitaries.

If $\psi_0$ is not pure, then choose a unitary $V$ such that $V \cdot \omega_{\psi_0} = \omega_0^n$. Write $U_0 = e^{iA}$ and $V = e^{iB}$ where $A,B \in M_n(\bbC)$ are self-adjoint. Then the function
\[
\phi \mapsto e^{iB\qty[S(\varrho_{\psi_0})-S(\varrho_\phi)]/S(\varrho_{\psi_0})} e^{iAS(\varrho_{\phi})/S(\varrho_{\psi_0})} U_{\omega_\phi}
\]
has the desired properties, where $S(\varrho)$ is the von Neumann entropy of $\varrho$.
\end{proof}

\begin{lemma}\label{lem:unitary_avoid_0}
Let $\omega\colon \interval \to  \sS(M_n(\bbC))$ be a loop based at $\omega_0^n$. There exists a continuous family of unitaries $U\colon  \interval \to  \Unitary(n)$ such that 
\[U_0 \cdot \omega_0^n = U_1 \cdot \omega_0^n = \omega_0^n\] 
and $(U_t \cdot \omega_t)(P^n_{1}) > 0$ for all $t \in \interval$.
\end{lemma}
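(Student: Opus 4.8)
The plan is to produce $U$ so that the steered path $t \mapsto U_t \cdot \omega_t$ never meets the unique state on which $P^n_1$ vanishes, namely the state represented by the last standard basis vector $e_n$. The first step is a reduction that turns the whole statement into a single cleaner property: it suffices to build a continuous $U\colon \interval \to \Unitary(n)$ such that for every $t$, \emph{either} $U_t \cdot \omega_t = \omega_0^n$ \emph{or} $U_t \cdot \omega_t$ is not pure. Indeed, in the first case $(U_t \cdot \omega_t)(P^n_1) = \omega_0^n(P^n_1) = \langle e_0, P^n_1 e_0\rangle = 1 > 0$, while in the second case $U_t \cdot \omega_t$ cannot equal the $e_n$-state (which is pure), so \cref{lem:omega(P^n_1)=0} (using $n > 1$) gives $(U_t \cdot \omega_t)(P^n_1) > 0$ since $P^n_1 \geq 0$ forces $(U_t\cdot\omega_t)(P^n_1)\geq 0$ with equality only at the $e_n$-state. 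This property also yields the endpoint conditions for free: since $\omega_0 = \omega_1 = \omega_0^n$ is pure and unitaries send pure states to pure states, both $U_0 \cdot \omega_0^n$ and $U_1 \cdot \omega_0^n$ are pure, so the property forces each to equal $\omega_0^n$.

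To construct such a $U$ I would run the path-lifting argument already used in the proof of \cref{lem:phases_avoid_-1}, now with \cref{lem:state_neighborhood} supplying the local steering data. For each $t \in \interval$, apply \cref{lem:state_neighborhood} to $\psi = \omega_t$ to obtain an open neighborhood $O_t \subset \sS(M_n(\bbC))$ of $\omega_t$. The preimages $\omega^{-1}(O_t)$ form an open cover of the compact interval $\interval$, so by the Lebesgue number lemma there is $N \in \bbN$ such that each $[k/N, (k+1)/N]$ is carried by $\omega$ into a single such neighborhood $O^{(k)}$.

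I would then define $U$ on $[0, k/N]$ by induction on $k$, maintaining the invariant that for every $t \in [0,k/N]$ either $U_t \cdot \omega_t = \omega_0^n$ or $U_t \cdot \omega_t$ is not pure. To start, set $U_0 = \1$, so that $U_0 \cdot \omega_0 = \omega_0^n$ and the invariant holds at $t = 0$; this also verifies the hypothesis of \cref{lem:state_neighborhood} at the initial point. For the inductive step, feed $\psi_0 = \omega_{k/N}$ and $U_0 = U_{k/N}$ (which satisfy the required hypothesis by the invariant) into \cref{lem:state_neighborhood} applied on $O^{(k)}$, obtaining a continuous $V\colon O^{(k)} \to \Unitary(n)$ with $V_{\omega_{k/N}} = U_{k/N}$ and the property that $V_\phi \cdot \phi$ is either $\omega_0^n$ or non-pure for all $\phi \in O^{(k)}$. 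Defining $U_t := V_{\omega_t}$ for $t \in [k/N, (k+1)/N]$ glues continuously at $t = k/N$ and extends the invariant. After $N$ steps this produces the desired continuous family $U$ on all of $\interval$.

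The genuinely delicate work — producing the local families of unitaries that steer states either into $\omega_0^n$ or out of the pure states, including the von Neumann entropy interpolation used to handle non-pure base points — is already carried out in \cref{lem:state_neighborhood}. Consequently the only thing to be careful about here is bookkeeping: checking continuity of the glued map at the breakpoints $k/N$ and confirming that the single invariant propagates correctly through the induction. I expect no serious obstacle beyond this, the argument being a direct analogue of the path-lifting proof of \cref{lem:phases_avoid_-1}.
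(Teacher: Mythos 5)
Your proposal is correct and follows essentially the same route as the paper: cover $\interval$ by preimages of the neighborhoods from \cref{lem:state_neighborhood}, run the standard path-lifting induction to maintain the invariant that $U_t\cdot\omega_t$ is either $\omega_0^n$ or non-pure, and then conclude via \cref{lem:omega(P^n_1)=0}. Your explicit unpacking of why the invariant yields both the positivity of $(U_t\cdot\omega_t)(P^n_1)$ and the endpoint conditions (purity is preserved by unitaries, so the non-pure alternative is excluded at $t=0,1$) is slightly more detailed than the paper's, but the argument is the same.
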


\begin{proof}
For each $\psi \in \sS(M_n(\bbC))$, let $O_\psi$ be an open neighborhood of $\psi$ as provided by \cref{lem:state_neighborhood}. The neighborhoods $O_\psi$ cover $\sS(M_n(\bbC))$, hence the neighborhoods $\omega^{-1}(O_\psi)$ cover $\interval$. Thus, there exists $N \in \bbN$ such that for each $k \in \qty{0,\ldots, N-1}$, the set $[k/N, (k+1)/N]$ lies in some $\omega^{-1}(O_\psi)$. 

We now follow the standard path-lifting argument. Assume that for some $k \in \qty{0,\ldots, N-1}$ we have defined a continuous map $U\colon [0,k/N] \to  \Unitary(n)$ such that $U_0 = I$ and for all $t \in [0,k/N]$ either $U_t \cdot \omega_t = \omega_0^n$ or $U_t \cdot \omega_t$ is not pure. Choose $\psi \in \sS(M_n(\bbC))$ such that $[k/N, (k+1)/N] \subset \omega^{-1}(O_\psi)$. Composing $\omega|_{[k/N, (k+1)/N]}$ with such a continuous family of unitaries as provided by  \cref{lem:state_neighborhood}, we obtain a continuous map $V\colon [k/N, (k+1)/N] \to  \Unitary(n)$ such that $V_{k/N} = U_{k/N}$ and for every $t \in [k/N, (k+1)/N]$, either $V_t \cdot \omega_t = \omega_0^n$ or $V_t \cdot \omega_t$ is not pure. We may glue the maps $U$ and $V$ together to obtain a continuous map $W\colon [0,(k+1)/N] \to  \Unitary(n)$ such that $W_0 = I$ and for all $t$, either $W_t \cdot \omega_t = \omega_0^n$ or $W_t \cdot \omega_t$ is not pure. 

Proceeding in this fashion, we define a continuous map $U\colon \interval \to  \Unitary(n)$ such that $U_0 \cdot \omega_0^n = U_1 \cdot \omega_0^n = \omega_0^n$ and for all $t$ either $U_t \cdot \omega_t = \omega_0^n$ or $U_t \cdot \omega_t$ is not pure. The fact that $(U_t \cdot \omega_t)(P^n_1) > 0$ for all $t$ follows from \cref{lem:omega(P^n_1)=0}.
\end{proof}

\begin{lemma}\label{lem:P1}
  If $\omega\colon \interval \to  \sS(M_n(\bbC))$ is a loop based at $\omega_0^n$, then $\omega$ is $M_n(\bbC)$-homotopic
  relative the basepoint to a loop $\psi\colon \interval \to  \sS(M_n(\bbC))$ such that $\psi_t(P^n_1) = 1$
  for all $t \in \interval$.
\end{lemma}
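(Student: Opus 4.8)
The plan is to factor the required $M_n(\bbC)$-homotopy relative the basepoint as a composite of two $\fA$-homotopies and then invoke transitivity of $\pathhtpy{M_n(\bbC)}$. The first $\fA$-homotopy rotates $\omega$ by the family of unitaries supplied by \cref{lem:unitary_avoid_0}, arranging that the $P^n_1$-expectation becomes strictly positive along the whole loop; the second then collapses this positive expectation to the value $1$ by linearly interpolating with the projection $P^n_1$. The genuine content has already been isolated in the preparatory lemmas, so this step is mostly bookkeeping, the subtle part being to check that both interpolations are honest \emph{lifts} (i.e.\ avoid the relevant Gelfand ideals) and that they act trivially at the endpoints $t=0,1$.

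First I would apply \cref{lem:unitary_avoid_0} to obtain a continuous family $U\colon\interval\to\Unitary(n)$ with $U_0\cdot\omega_0^n=U_1\cdot\omega_0^n=\omega_0^n$ and $(U_t\cdot\omega_t)(P^n_1)>0$ for all $t$. Since multiplying an element by a scalar phase does not change its action on a state, the corollary following \cref{lem:phases_avoid_-1} lets me choose a continuous $\lambda\colon\interval\to S^1$ so that $A_{t,s}\defeq s\lambda_tU_t+(1-s)\1\notin\fN_{\omega_t}$ for all $(t,s)\in\interval\times\interval$. This $A$ is an $\fA$-homotopy relative the basepoint from $\omega$ to $\widetilde\omega$, where $\widetilde\omega_t\defeq U_t\cdot\omega_t=(\lambda_tU_t)\cdot\omega_t$: conditions \eqref{ite:notinideal}--\eqref{ite:homotopy} of \cref{defn:Ahomotopy} are immediate, and the endpoint conditions \eqref{ite:startpoint}--\eqref{ite:endpoint} hold because $U_0e_0$ and $U_1e_0$ are parallel to $e_0$, so $A_{0,s}e_0$ and $A_{1,s}e_0$ remain scalar multiples of $e_0$ (nonzero, by condition \eqref{ite:notinideal}), whence $A_{0,s}\cdot\omega_0^n=A_{1,s}\cdot\omega_0^n=\omega_0^n=\widetilde\omega_0=\widetilde\omega_1$. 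Continuity of $\widetilde\omega$ follows from \cref{prop:action_on_state_continuous}.

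Next I would build the second stage. Because $\widetilde\omega_t(P^n_1)>0$, \cref{prop:projection_linear_interp} guarantees that $B_{t,s}\defeq sP^n_1+(1-s)\1\notin\fN_{\widetilde\omega_t}$ for all $(t,s)$, so $B$ is a well-defined candidate $\fA$-homotopy. Its terminal value is $\psi_t\defeq P^n_1\cdot\widetilde\omega_t$; representing $\widetilde\omega_t$ by a density matrix $\varrho_t$, the state $\psi_t$ is represented by $P^n_1\varrho_tP^n_1/\tr(P^n_1\varrho_tP^n_1)$, a density matrix supported in the range of $P^n_1$, whence $\psi_t(P^n_1)=1$ for every $t$, as desired. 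For the endpoint conditions, $P^n_1e_0=e_0$ gives $B_{0,s}e_0=B_{1,s}e_0=e_0$, so $B_{0,s}\cdot\omega_0^n=B_{1,s}\cdot\omega_0^n=\omega_0^n=\psi_0=\psi_1$; thus $B$ is an $\fA$-homotopy relative the basepoint from $\widetilde\omega$ to $\psi$, and $\psi$ is continuous by \cref{prop:action_on_state_continuous}.

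Composing the two stages using transitivity of $\pathhtpy{M_n(\bbC)}$ then yields $\omega\pathhtpy{M_n(\bbC)}\psi$, where $\psi$ is a loop based at $\omega_0^n$ satisfying $\psi_t(P^n_1)=1$ for all $t$. I do not expect a serious obstacle beyond this; the step most prone to error is verifying that the two linear interpolations never enter the relevant Gelfand ideals, which is precisely where the phase correction of \cref{lem:phases_avoid_-1} (for the unitary stage) and the positivity estimate of \cref{prop:projection_linear_interp} (for the projection stage) are indispensable, together with the check at $t=0,1$ that both interpolants fix $\omega_0^n$ so that the homotopy is genuinely relative the basepoint.
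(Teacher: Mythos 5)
Your proof is correct and follows essentially the same route as the paper: first the unitary stage via \cref{lem:unitary_avoid_0} with the phase correction from \cref{lem:phases_avoid_-1} and \cref{prop:unitary_linear_interp}, then the projection stage via \cref{prop:projection_linear_interp}, glued by transitivity. The only cosmetic difference is that you verify the endpoint conditions by a vector-level computation ($U_0e_0$, $U_1e_0$, and $P^n_1e_0$ proportional to $e_0$) where the paper cites \cref{prop:invariant_state} and \cref{prop:linear_combo_action}; both arguments are valid.
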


\begin{proof}
Using \cref{lem:unitary_avoid_0} we find a continuous family of unitaries $U\colon \interval \to  \Unitary(n)$ such that $U_0 \cdot \omega_0^n = U_1 \cdot \omega_0^n = \omega_0^n$ and $(U_t \cdot \omega_t)(P_1^n) > 0$ for all $t \in \interval$. Using \cref{lem:phases_avoid_-1} we find a continuous path $\lambda\colon \interval \to  S^1$ such that $\lambda_t \omega_t(U_t) \neq -1$ for all $t \in \interval$. Then $(s,t) \mapsto s\lambda_t U_t + (1 - s)\1$ is continuous, equals $\1$ when $s = 0$, and satisfies $s\lambda_t U_t + (1 - s)\1 \notin \fN_{\omega_t}$ for all $t, s \in \interval$ by \cref{prop:unitary_linear_interp}. Furthermore, for all $s \in \interval$,  $s \lambda_0 U_0 + (1 - s)\1$ and $s\lambda_1 U_1 + (1- s)\1$ leave $\omega_0^n$ invariant by \cref{prop:linear_combo_action}. Therefore, $\omega$ is homotopic relative the base point to $\chi_t = \lambda_t U_t \cdot \omega_t = U_t \cdot \omega_t$.

Note that $\chi_t(P_1^n) > 0$ for all $t \in \interval$ and $\chi_0(P^n_1) = \chi_1(P^n_1) = 1$. Then by \cref{prop:projection_linear_interp} we have the homotopy $(t,s) \mapsto \qty[sP_1^n + (1-s)\1] \cdot \chi_t$. By  \cref{prop:invariant_state} and  \cref{prop:linear_combo_action}, this is a homotopy relative the basepoint. Finally, we note that $(P_1^n \cdot \chi_t)(P_1^n) = 1$ for all $t \in \interval$.
\end{proof}

We would now like to prove the result of \cref{lem:P1} but with $P^n_1$ replaced by $P^n_k$ for arbitrary $k \in \qty{1,\ldots, n-1}$. We note that $M_{n-k}(\bbC) \cong P^n_k M_n(\bbC) P^n_k$, where the $*$-isomorphism maps $A \in M_{n-k}(\bbC)$ to the $n \times n$ matrix with $A$ as the upper left $(n-k) \times (n-k)$ submatrix, and zeros everywhere else. If we have a state $\psi \in \sS(M_n(\bbC))$ such that $\psi(P^n_k) = 1$, then $\psi$ is essentially a state on $M_{n-k}(\bbC) \cong P^n_k M_n(\bbC) P^n_k$. 

At this point it is worth taking a step back and considering this last statement through a more abstract lens. Note that if $\fA$ is a $C^*$-algebra and $P \in \fA$ is a projection, then $P \fA P$ is a unital $C^*$-algebra with unit $P$. In fact, it is a hereditary $C^*$-subalgebra.  \cref{prop:invariant_state} has the following consequence for states on $P \fA P$.

\begin{proposition}
Let $\fA$ be a $C^*$-algebra and let $P \in \fA$ be a projection.
\begin{enumerate}
	\item If $\psi \in \sS(\fA)$, then $\psi|_{P \fA P} \in \sS(P \fA P)$ if and only if $\psi(P) = 1$.
	\item Given a state $\omega \in \sS(P\fA P)$ the function $\omega \circ \Ad(P)$ is the unique extension of $\omega$ to a state on $\fA$. 
	\item The map 
	\[
	\sS(P\fA P) \ni \omega \mapsto \omega \circ \Ad(P) \in \sS(\fA)
	\]
	is a closed embedding with respect to the weak* topologies and has image $\qty{\psi \in \sS(\fA): \psi(P) = 1}$.
\end{enumerate}
\end{proposition}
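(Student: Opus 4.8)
The plan is to prove the three parts in order, with the single identity $\psi \circ \Ad(P) = \psi$ for states $\psi$ satisfying $\psi(P) = 1$ doing most of the work. First I would record this identity: a nonzero projection has $\norm{P} = 1$, so if $\psi \in \sS(\fA)$ has $\psi(P) = 1$, then $\abs{\psi(P)} = \norm{P}$ and \cref{prop:invariant_state} gives $P \cdot \psi = \psi$; unwinding \eqref{eq:action_on_state_def} with $\psi(P^*P) = \psi(P) = 1$ this reads $\psi(PBP) = \psi(B)$ for all $B \in \fA$, i.e.\ $\psi \circ \Ad(P) = \psi$. For part (1), recall $P\fA P$ is unital with unit $P$, so a state on it is exactly a positive linear functional with value $1$ at $P$. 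If $\psi(P) = 1$, then $\psi|_{P\fA P}$ is positive with $\psi|_{P\fA P}(P) = 1$, hence a state; conversely, if $\psi|_{P\fA P}$ is a state, normalization forces $\psi(P) = \psi|_{P\fA P}(P) = 1$.

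For part (2), given $\omega \in \sS(P\fA P)$ I would set $\tilde\omega := \omega \circ \Ad(P)$ and verify it is a state on $\fA$: it is positive because $\Ad(P)$ sends positive elements of $\fA$ into positive elements of the corner and $\omega$ is positive, and it is normalized since $\tilde\omega(\1) = \omega(P\1 P) = \omega(P) = 1$ (when $\fA$ is non-unital one instead checks $\norm{\tilde\omega} = \norm{\omega} = 1$, using that $\Ad(P)$ is contractive and that $\tilde\omega$ restricts to $\omega$ on the corner). It extends $\omega$ because $PAP = A$ for every $A \in P\fA P$, so $\tilde\omega(A) = \omega(A)$. Uniqueness is exactly where the identity is used: any $\phi \in \sS(\fA)$ restricting to $\omega$ has $\phi(P) = \omega(P) = 1$, hence $\phi = \phi \circ \Ad(P)$, and since $PBP \in P\fA P$ for every $B$ we get $\phi(B) = \phi(PBP) = \omega(PBP) = \tilde\omega(B)$.

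For part (3), write $E \colon \sS(P\fA P) \to \sS(\fA)$, $E(\omega) = \omega \circ \Ad(P)$. Weak* continuity is immediate because for each fixed $B \in \fA$ the map $\omega \mapsto E(\omega)(B) = \omega(PBP)$ is weak* continuous, $PBP$ being a fixed element of $P\fA P$; injectivity follows by restricting an element of the image back to $P\fA P$, since $E(\omega)|_{P\fA P} = \omega$. By part (2) the image lies in $\{\psi : \psi(P) = 1\}$, and conversely any such $\psi$ restricts to a state $\omega = \psi|_{P\fA P}$ with $E(\omega) = \psi \circ \Ad(P) = \psi$ by the identity, so the image is precisely $\{\psi \in \sS(\fA) : \psi(P) = 1\}$, which is weak* closed as the preimage of $1$ under the continuous evaluation $\psi \mapsto \psi(P)$. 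Finally, $P\fA P$ is unital, so $\sS(P\fA P)$ is weak* compact by Banach--Alaoglu, while $\sS(\fA)$ is Hausdorff; a continuous injection from a compact space into a Hausdorff space is a homeomorphism onto its (necessarily closed) image, so $E$ is a closed embedding. The only genuinely load-bearing step---the ``main obstacle''---is the identity $\psi \circ \Ad(P) = \psi$, which simultaneously yields uniqueness in (2) and surjectivity onto the claimed image in (3); the rest is formal bookkeeping with positivity, the unital structure of the corner, and compactness.
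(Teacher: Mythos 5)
Your proof is correct and follows essentially the same route as the paper: parts (1) and (3) are argued identically, with the key identity $\psi \circ \Ad(P) = \psi$ for states satisfying $\psi(P)=1$ obtained from \cref{prop:invariant_state} exactly as in the paper's proof of the image characterization. The only difference is in part (2), where the paper simply cites \cite[Thm.~3.3.9]{MurphyCAOT} for the unique extension property of states on the hereditary corner $P\fA P$, whereas you derive it directly from that same identity; your version is self-contained and equally valid.
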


\begin{proof}
(1) This is immediate from the fact that $\psi|_{P\fA P}$ is a positive linear functional and $P$ is the unit of $P \fA P$.

(2) Since $P\fA P$ is a hereditary $C^*$-subalgebra of $\fA$ with unit $P$, this is provided by \cite[Thm.~3.3.9]{MurphyCAOT}.

(3) Using the characteristic mapping property of the weak* topology, we see that $\omega \mapsto \omega \circ \Ad(P)$ is weak* continuous. It is clear that if $\omega_1, \omega_2 \in \sS(P\fA P)$ and $\omega_1 \circ \Ad(P) = \omega_2 \circ \Ad(P)$, then $\omega_1 = \omega_2$. Since $\sS(P\fA P)$ and $\sS(\fA)$ are compact Hausdorff, this implies that $\omega \mapsto \omega \circ \Ad(P)$ is an embedding. We observe that $[\omega \circ \Ad(P)](P) = \omega(P) = 1$ for all $\omega \in \sS(P\fA P)$. On the other hand, if $\psi(P) = 1$, then \cref{prop:invariant_state} implies that $P \cdot \psi = \psi$, so $\psi = \psi|_{P\fA P} \circ \Ad(P)$ and $\psi|_{P\fA P} \in \sS(P\fA P)$. Thus, the image of the map $\omega \mapsto \omega \circ \Ad(P)$ is the set $\qty{\psi \in \sS(\fA): \psi(P) = 1}$, and this set is manifestly weak* closed.
\end{proof}

\begin{lemma}\label{lem:path_homotopy_push_forward}
  Let $\fA$ and $\fB$ be unital $C^*$-algebras, let $P \in \fB$ be a projection, let $\pi\colon \fA \to  P\fB P$ be a $*$-isomorphism, and let $\psi\colon \interval \to  \sS(\fB)$ be a weak* continuous path such that $\psi_t(P) = 1$ for all $t \in \interval$. If $A\colon \interval \times \interval \to  \fA$ is an $\fA$-homotopy relative endpoints from
  the path $t \mapsto \psi_t|_{P\fB P} \circ \pi$ to $\omega \colon  \interval \to  \sS(\fA)$,
  then $(t,s) \mapsto (\1 - P) + \pi(A_{t,s})$ is an $\fA$-homotopy relative endpoints 
  from $\psi$ to the path $t \mapsto \omega_t \circ \pi^{-1} \circ \Ad(P)$.
\end{lemma}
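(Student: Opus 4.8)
The plan is to check directly that $B_{t,s}:=(\1-P)+\pi(A_{t,s})$ is a $\fB$-homotopy relative endpoints from $\psi$ to $\tilde\omega_t:=\omega_t\circ\pi^{-1}\circ\Ad(P)$, transferring each condition of \cref{defn:Ahomotopy} for $B$ back to the corresponding condition for the given $\fA$-homotopy $A$ by compressing with $P$. Write $\phi_t:=\psi_t|_{P\fB P}\circ\pi$ for the source path of $A$ and abbreviate $C:=\pi(A_{t,s})\in P\fB P$. Continuity of $B$ is immediate from continuity of $A$ and of $\pi$. For the base-point condition I would use that $\pi$ is a $*$-isomorphism onto the corner $P\fB P$, hence carries $\1_\fA$ to the unit $P$ of that corner, so that $B_{t,0}=(\1-P)+\pi(\1_\fA)=(\1-P)+P=\1_\fB$.

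The heart of the argument is the single identity
\[
 \psi_t\bigl(B_{t,s}^*\,D\,B_{t,s}\bigr)=\phi_t\bigl(A_{t,s}^*\,\pi^{-1}(\Ad(P)(D))\,A_{t,s}\bigr)\qquad(D\in\fB).
\]
To establish it, I would first observe that $\psi_t(P)=1$ forces $P\cdot\psi_t=\psi_t$ by \cref{prop:invariant_state}, so that $\psi_t(X)=\psi_t(PXP)$ for all $X\in\fB$. Expanding $B_{t,s}^*DB_{t,s}$ into four terms and compressing by $P$, the three terms carrying a factor $\1-P$ adjacent to an outer $P$ vanish because $P(\1-P)=0$, leaving $\psi_t(C^*DC)$; since $C=PCP$ one has $C^*DC=C^*\,\Ad(P)(D)\,C$. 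As $\Ad(P)(D)\in P\fB P$ and $\pi$ is a $*$-isomorphism, writing $E:=\pi^{-1}(\Ad(P)(D))$ gives $C^*\,\Ad(P)(D)\,C=\pi(A_{t,s}^*EA_{t,s})$, and the identity then follows from $\phi_t=\psi_t|_{P\fB P}\circ\pi$.

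With this in hand the remaining conditions fall out. Taking $D=\1$ yields $E=\1_\fA$ and $\psi_t(B_{t,s}^*B_{t,s})=\phi_t(A_{t,s}^*A_{t,s})\neq0$, since $A_{t,s}\notin\fN_{\phi_t}$, so $B_{t,s}\notin\fN_{\psi_t}$. Dividing then gives the action-transfer formula $B_{t,s}\cdot\psi_t=(A_{t,s}\cdot\phi_t)\circ\pi^{-1}\circ\Ad(P)$; specializing $s=1$ and using $A_{t,1}\cdot\phi_t=\omega_t$ produces $B_{t,1}\cdot\psi_t=\tilde\omega_t$, while specializing to $t=0$ and $t=1$ and using $A_{0,s}\cdot\phi_0=\omega_0$ and $A_{1,s}\cdot\phi_1=\omega_1$ yields the two relative-endpoint conditions. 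I do not expect a genuine obstacle here: the whole argument rests on the compression identity $\psi_t=\psi_t\circ\Ad(P)$ and on functoriality of $\pi$, and the only place demanding care is the bookkeeping that the $(\1-P)$ cross-terms die and that $\pi(\1_\fA)=P$, which is precisely what makes $B$ a genuine lift of the pushed-forward homotopy rather than merely a homotopy inside $\sS(\fB)$.
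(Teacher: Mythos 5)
Your argument is correct and follows essentially the same route as the paper's proof: verify $\pi(\1)=P$, check the Gelfand-ideal and unitality conditions, and compute the action of $(\1-P)+\pi(A_{t,s})$ on $\psi_t$ by compressing with $P$ to reduce everything to the corresponding properties of $A$. If anything, your explicit use of $\psi_t=\psi_t\circ\Ad(P)$ (via \cref{prop:invariant_state}) to kill the $(\1-P)$ cross-terms makes a step more transparent that the paper leaves implicit.
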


\begin{proof}
Note that $\pi(\1) = P$ since $\pi$ is a $*$-isomorphism. We check that $(\1 - P) + \pi(A_{t,s})$ has the desired properties. If $\pi(A_{t,s}) + \1 - P \in \fN_{\psi_t}$, then
\[
0 = \psi_t\qty[\qty(\qty(\1 - P) + \pi(A_{t,s}^*))\qty(\qty(\1 - P) + \pi(A_{t,s}))] = \psi_t(\pi(A_{t,s}^*A_{t,s})),
\] 
which implies $A_{t,s} \in \fN_{\psi_t|_{P\fB P} \circ \pi}$, which contradicts the definition of $A$. Thus, $(\1 - P) + \pi(A_{t,s}) \notin \fN_{\psi_t}$ for all $t$ and $s$.

Next, observe that for all $t$, we have 
\[
(\1 - P) + \pi(A_{t,0}) = (\1 - P) + \pi(\1) = \1 - P + P = \1.
\]

For arbitrary $t$ and $s$, we have
\begin{align*}
\qty[\qty[(\1 - P) + \pi(A_{t,s})] \cdot \psi_t](B) &= \frac{\psi_t(\pi(A_{t,s}^*)B\pi\qty(A_{t,s}))}{\psi_t(\pi(A_{t,s}^*A_{t,s}))} \\
&= \frac{\psi_t(\pi(A_{t,s}^* \pi^{-1}(PBP) A_{t,s}))}{\psi_t(\pi(A_{t,s}^*A_{t,s}))}\\
&= \qty[A_{t,s} \cdot \qty(\psi_t|_{P\fB P} \circ \pi)](\pi^{-1}(PBP))
\end{align*}
When $s = 1$, this yields
\[
\qty[\qty[(\1 - P) + \pi(A_{t,s})] \cdot \psi_t](B) = \omega_t(\pi^{-1}(PBP)),
\]
so $\qty[(\1 - P) + \pi(A_{t,s})] \cdot \psi_t = \omega_t \circ \pi^{-1} \circ \Ad(P)$. When $t = 0$ or $t = 1$, this yields
\[
\qty[\qty[(\1 - P) + \pi(A_{t,s})] \cdot \psi_t](B) = (\psi_t|_{P\fB P} \circ \pi)(\pi^{-1}(PBP)) =  \psi_t(B),
\]
so $\qty[(\1 - P) + \pi(A_{t,s})] \cdot \psi_t = \psi_t$ for all $s$ when $t = 0$ and $t = 1$. 
\end{proof}

\begin{theorem}\label{thm:M_n(C)_nulhomotopic}
  Fix $k \in \qty{1,\ldots, n-1}$. If $\omega\colon \interval \to  \sS(M_n(\bbC))$ is a loop based at $\omega_0^n$, then $\omega$ is $M_n(\bbC)$-homotopic relative the basepoint to a loop $\psi\colon \interval \to  \sS(M_n(\bbC))$ which satisfies
  $\psi_t(P^n_k) = 1$ for all $t \in \interval$.
\end{theorem}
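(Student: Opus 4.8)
The plan is to induct on $k$. The base case $k=1$ is precisely \cref{lem:P1}, which applies since $1 \leq k \leq n-1$ forces $n \geq 2$. For the inductive step I would assume the statement for $k$ (in every matrix algebra) and prove it for $k+1$, under the standing hypothesis $k+1 \leq n-1$, equivalently $n - k \geq 2$. Starting from a loop $\omega\colon \interval \to \sS(M_n(\bbC))$ based at $\omega_0^n$, the inductive hypothesis first supplies an $M_n(\bbC)$-homotopy relative the basepoint from $\omega$ to a loop $\psi$ with $\psi_t(P^n_k) = 1$ for all $t$. What remains is to homotope $\psi$, still relative the basepoint, to a loop whose value on $P^n_{k+1}$ is identically $1$.

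To do this I would pass to the corner algebra. Let $\pi\colon M_{n-k}(\bbC) \xrightarrow{\cong} P^n_k M_n(\bbC) P^n_k$ be the $*$-isomorphism sending a matrix to the upper-left $(n-k)\times(n-k)$ block. Since $\psi_t(P^n_k) = 1$, the formula $\chi_t := \psi_t|_{P^n_k M_n(\bbC) P^n_k}\circ \pi$ defines a loop in $\sS(M_{n-k}(\bbC))$, based at $\omega_0^{n-k}$ because $\omega_0^{n-k} \circ \pi^{-1}\circ\Ad(P^n_k) = \omega_0^n$ on the corner. As $n-k \geq 2$, \cref{lem:P1} then yields an $M_{n-k}(\bbC)$-homotopy relative endpoints $A\colon \interval\times\interval \to M_{n-k}(\bbC)$ from $\chi$ to a loop $\tilde\psi$ with $\tilde\psi_t(P^{n-k}_1) = 1$. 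Applying \cref{lem:path_homotopy_push_forward} with $\fB = M_n(\bbC)$, $P = P^n_k$, $\fA = M_{n-k}(\bbC)$ and this $\pi$ produces an $M_n(\bbC)$-homotopy relative endpoints from $\psi$ to the loop $t \mapsto \tilde\psi_t \circ \pi^{-1}\circ\Ad(P^n_k)$.

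The computation that makes this advance $k$ to $k+1$ is the compatibility of the projections: since $P^n_{k+1} \leq P^n_k$ one has $P^n_k P^n_{k+1} P^n_k = P^n_{k+1}$, and $\pi$ sends $P^{n-k}_1$ to $P^n_{k+1}$, so $\pi^{-1}(P^n_{k+1}) = P^{n-k}_1$. Consequently
\[
\big(\tilde\psi_t \circ \pi^{-1}\circ\Ad(P^n_k)\big)(P^n_{k+1}) = \tilde\psi_t\big(\pi^{-1}(P^n_{k+1})\big) = \tilde\psi_t(P^{n-k}_1) = 1
\]
for every $t$, which is the required property. For the endpoint bookkeeping, $A$ is relative endpoints with $\tilde\psi_0 = \tilde\psi_1 = \omega_0^{n-k}$, and again $\omega_0^{n-k}\circ\pi^{-1}\circ\Ad(P^n_k) = \omega_0^n$, so the pushed-forward homotopy is relative the basepoint $\omega_0^n$. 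Concatenating it with the homotopy from $\omega$ to $\psi$ and using transitivity of $\pathhtpy{M_n(\bbC)}$ closes the induction.

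I expect the only delicate points to be organizational rather than conceptual: tracking the two identifications $\pi$ and $\Ad(P^n_k)$ simultaneously, confirming that both $\chi$ and the output loop are based at the correct reference states, and checking $\pi^{-1}(P^n_{k+1}) = P^{n-k}_1$. All the genuine topological content is already contained in \cref{lem:P1} and \cref{lem:path_homotopy_push_forward}, so the theorem should follow as a clean bookkeeping induction on top of them.
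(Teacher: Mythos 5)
Your proposal is correct and follows essentially the same route as the paper's proof: the same induction on $k$, with \cref{lem:P1} applied to the corner algebra $P^n_k M_n(\bbC) P^n_k \cong M_{n-k}(\bbC)$ and the result transported back via \cref{lem:path_homotopy_push_forward}, finishing with the observation that $\pi(P^{n-k}_1) = P^n_{k+1}$. The bookkeeping details you flag (basepoints, the identification of projections, transitivity of $\pathhtpy{M_n(\bbC)}$) are exactly the ones the paper checks, so nothing is missing.
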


\begin{proof}
  We prove the theorem by induction on $k$. The base case is given by \cref{lem:P1}. Suppose the theorem is true for some $k < n-1$. Then $\omega$ is $M_n(\bbC)$-homotopic relative the basepoint to a loop $\psi:\interval \to  \sS(M_n(\bbC))$
  based at $\omega^n_0$ and which fulfills $\psi_t(P^n_{k}) = 1$ for all $t \in I$. 

Now consider the $*$-isomorphism $\pi\colon M_{n-k}(\bbC) \to  P^n_k M_n(\bbC) P^n_k$ that maps $A \in M_{n-k}(\bbC)$ to the matrix
\[
\pi(A) = \mqty(A & 0\\0 & 0).
\]
We have a loop $t \mapsto \psi_t|_{P^n_k M_n(\bbC) P^n_k} \circ \pi$ based at $\omega^{n-k}_0$. By \cref{lem:P1}, this loop is
$M_{n-k}(\bbC)$-homotopic relative the basepoint to a loop $\phi\colon \interval \to  \sS(M_{n-k}(\bbC))$ such that $\phi_t(P^{n-k}_1) = 1$ for all $t \in \interval$. By \cref{lem:path_homotopy_push_forward}, $\psi$ is $M_n(\bbC)$-homotopic
relative the basepoint to $t \mapsto \phi_t \circ \pi^{-1} \circ \Ad(P^n_k)$. We observe that
\begin{align*}
(\phi_t \circ \pi^{-1} \circ \Ad(P^n_k))(P^n_{k+1}) &= (\phi_t \circ \pi^{-1} \circ \Ad(P^n_k))(\pi(P^{n-k}_1))\\
&= \phi_t(P^{n-k}_1) = 1
\end{align*}
for all $t \in \interval$. The result now follows by induction.
\end{proof}

We note that given $\omega \in \sS(M_n(\bbC))$, the statement $\omega(P^n_{n-1}) = 1$ implies that $\omega = \omega^n_0$. Thus, setting $k = n-1$ in \cref{thm:M_n(C)_nulhomotopic}, the loop $\psi$ is the constant map at $\omega^n_0$.

\subsection{Triviality of the fundamental group for UHF algebras}

We finally come to our goal to show that $\sP(\fA)$ is simply connected for a  UHF algebra $\fA$.
So let us now consider the state space of a UHF algebra $\fA$. 

It is well-known (see \cite[\S12.1]{KadisonRingroseII})  that every UHF algebra is $*$-isomorphic to the directed colimit of a directed system of tensor products 
\[
\fA_n = \bigotimes_{i=1}^n M_{k_i}(\bbC)
\]
where we have assigned a dimension $k_i \in \bbN \setminus \qty{1}$ for each $i \in \bbN$. 
For $m < n$, the embedding $\fA_m \hookrightarrow \fA_n$ of the directed system is defined by tensoring on identity matrices in the factors $M_{k_i}(\bbC)$ for $i > m$. We may therefore assume $\fA$ is of this form.

 It is known that $\sP(\fA)$ is path-connected in the weak$^*$ topology \cite{EilCCCA,ContinuousKadison}, so it suffices to show that $\pi_1(\sP(\fA), \omega_0) = 0$ for some choice of $\omega_0$. For convenience, we will choose the basepoint $\omega_0$ to be the product state represented by the first standard basis vector in each tensor factor. 

The $C^*$-algebras $M_{k_i}(\bbC)$ embed into any $\fA_n$ for $n > i$, and then embed into $\fA$. Let $\fB_i$ denote this copy of $M_{k_i}(\bbC)$ inside $\fA$. Let $P_i \in \fB_i$ be the copy of the projection $P^{k_i}_{k_i - 1}$ inside $\fB_i$. The following lemma will be useful.

\begin{lemma}[{\cite[Lem.~3.4.2]{BrownOzawa}}]\label{lem:restricted_state_factorized}
Let $\fA$ be a $C^*$-algebra and let $\fB$ be a $C^*$-subalgebra of $\fA$. If $\omega \in \sS(\fA)$ and $\omega|_{\fB} \in \sP(\fB)$, then
\[
\omega(AB) = \omega(A)\omega(B)
\]
for all $A \in \fB'$ and $B \in \fB$, where $\fB'$ is the commutant of $\fB$.
\end{lemma}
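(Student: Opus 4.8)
The plan is to run the GNS machinery for $\omega$ and to convert purity of the restriction $\omega|_\fB$ into an application of Schur's lemma. Here $\fB'$ is understood as the relative commutant $\{A \in \fA : AB = BA \text{ for all } B \in \fB\}$, which is exactly what makes $\omega(AB)$ meaningful. First I would form the GNS representation $(\cH_\omega, \pi_\omega, \Omega_\omega)$ of the state $\omega$ on $\fA$, so that $\omega(C) = \langle \Omega_\omega, \pi_\omega(C)\Omega_\omega\rangle$ for all $C \in \fA$. For $A \in \fB'$ the operator $\pi_\omega(A)$ then commutes with every $\pi_\omega(B)$, $B \in \fB$; that is, $\pi_\omega(A) \in \pi_\omega(\fB)'$.

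Next I would isolate the cyclic subspace $\cK := \overline{\pi_\omega(\fB)\Omega_\omega} \subseteq \cH_\omega$ for the $\fB$-action and let $P$ be the orthogonal projection onto it. Because $\pi_\omega(\fB)$ is a $*$-algebra, $\cK$ is $\pi_\omega(\fB)$-invariant, whence $P \in \pi_\omega(\fB)'$. The triple $(\cK,\, B \mapsto \pi_\omega(B)|_\cK,\, \Omega_\omega)$ is, by uniqueness of the GNS construction, unitarily equivalent to the GNS representation of $\omega|_\fB$, since $\Omega_\omega$ is cyclic for $\pi_\omega(\fB)|_\cK$ and implements $\omega|_\fB$. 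As $\omega|_\fB$ is \emph{pure}, this representation is irreducible, so $(\pi_\omega(\fB)|_\cK)' = \bbC\,\mathrm{id}_\cK$.

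The key step is a compression-and-Schur argument. For $A \in \fB'$, both $P$ and $\pi_\omega(A)$ lie in $\pi_\omega(\fB)'$, and a short calculation (using $P\pi_\omega(B) = \pi_\omega(B)P$ together with $\pi_\omega(A)\pi_\omega(B) = \pi_\omega(B)\pi_\omega(A)$) shows that the compression $P\pi_\omega(A)P$ restricts on $\cK$ to an operator commuting with the irreducible action $\pi_\omega(\fB)|_\cK$. By Schur's lemma it is therefore a scalar, $P\pi_\omega(A)P|_\cK = \lambda(A)\,\mathrm{id}_\cK$, and evaluating on the cyclic vector $\Omega_\omega = P\Omega_\omega$ identifies it as $\lambda(A) = \langle\Omega_\omega, \pi_\omega(A)\Omega_\omega\rangle = \omega(A)$.

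Finally I would compute, for $A \in \fB'$ and $B \in \fB$,
\[
\omega(AB) = \langle\Omega_\omega, \pi_\omega(A)\pi_\omega(B)\Omega_\omega\rangle = \langle\Omega_\omega, P\pi_\omega(A)P\,\pi_\omega(B)\Omega_\omega\rangle = \omega(A)\,\langle\Omega_\omega, \pi_\omega(B)\Omega_\omega\rangle = \omega(A)\omega(B),
\]
where the second equality inserts $P$ using $\Omega_\omega = P\Omega_\omega$ and $\pi_\omega(B)\Omega_\omega \in \cK$, and the third applies the scalar identity to the vector $\pi_\omega(B)\Omega_\omega \in \cK$. The only genuinely nontrivial obstacle is verifying that the compression $P\pi_\omega(A)P$ really does commute with $\pi_\omega(\fB)|_\cK$; once that bracket-chasing is in place, everything else is the standard dictionary between pure states, irreducibility, and Schur's lemma.
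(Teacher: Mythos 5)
The paper does not prove this lemma; it is quoted verbatim from Brown--Ozawa \cite[Lem.~3.4.2]{BrownOzawa}, so there is no in-paper proof to compare against. Your argument is correct and is the standard one (and essentially the one in the cited reference): pass to the GNS representation of $\omega$, identify the cyclic subspace $\cK=\overline{\pi_\omega(\fB)\Omega_\omega}$ with the GNS space of $\omega|_\fB$, use purity to get irreducibility there, and apply Schur's lemma to the compression $P\pi_\omega(A)P$, which commutes with $\pi_\omega(\fB)|_\cK$ exactly by the bracket-chase you describe. The only point you leave implicit is that $\Omega_\omega\in\cK$, which is needed for $\Omega_\omega$ to serve as the cyclic vector of the sub-representation; this is automatic when $\fB$ contains the unit of $\fA$ (as in the paper's application to UHF algebras), and in general follows from $\|\omega|_\fB\|=1$ via an approximate-unit estimate $\|\pi_\omega(e_i)\Omega_\omega-\Omega_\omega\|^2=\omega(e_i^2)-2\omega(e_i)+1\to 0$. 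With that one sentence added, the proof is complete.
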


\begin{theorem}\label{thm:fundaUHFfinal}
  Let $\fA$ be a UHF algebra and let $\omega\colon \interval \to  \sS(\fA)$ be a weak* continuous loop based at the product state $\omega_0$. There exists a continuous map $A\colon \interval \times [0,1) \to  \fA$ and a homotopy $H\colon \interval \times \interval \to  \sS(\fA)$ from $\omega$ to the constant loop at $\omega_0$ such that $H(t,s) = A_{t,s} \cdot \omega_t$ for all $s < 1$. It follows that $\pi_1(\sP(\fA),\omega_0) = 0$.
\end{theorem}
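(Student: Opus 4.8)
The plan is to reduce the infinite-dimensional statement to the finite-dimensional result of \cref{thm:M_n(C)_nulhomotopic} by disentangling the tensor factors $\fB_i \cong M_{k_i}(\bbC)$ of $\fA$ one at a time. Write $\fA_n = \bigotimes_{i=1}^n \fB_i$ and recall that $P_i \in \fB_i$ is the copy of $P^{k_i}_{k_i-1}$, the rank-one projection onto the first basis vector, so that a state $\chi$ satisfies $\chi(P_i) = 1$ exactly when $\chi|_{\fB_i} = \omega_0|_{\fB_i}$. I would construct inductively a sequence of loops $\psi^{(0)} = \omega, \psi^{(1)}, \psi^{(2)}, \ldots$, each $\fA$-homotopic relative the basepoint to its predecessor, with the property that $\psi^{(i)}_t(P_j) = 1$ for all $t \in \interval$ and all $j \le i$, and such that $\psi^{(i)}_t$ factors as a product state across $\fA_i$ and its relative commutant.

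For the inductive step I would restrict $\psi^{(i-1)}$ to $\fB_i$, obtaining a loop $t \mapsto \psi^{(i-1)}_t|_{\fB_i}$ in $\sS(M_{k_i}(\bbC))$ based at $\omega_0^{k_i}$ (using $\psi^{(i-1)}_0 = \psi^{(i-1)}_1 = \omega_0$). By \cref{thm:M_n(C)_nulhomotopic} with $k = k_i - 1$ (legitimate since $k_i \ge 2$) this loop is $M_{k_i}(\bbC)$-homotopic relative the basepoint to the constant loop at $\omega_0^{k_i}$, via a norm-continuous lift $\tilde A^{(i)} \colon \interval \times \interval \to \fB_i$. Viewing $\tilde A^{(i)}$ inside $\fA$ through $\fB_i \hookrightarrow \fA$, I would use it as an $\fA$-homotopy of $\psi^{(i-1)}$: the condition $\tilde A^{(i)}_{t,s} \notin \fN_{\psi^{(i-1)}_t}$ holds because $\tilde A^{(i)}_{t,s} \in \fB_i$ and $\psi^{(i-1)}_t(B^*B) = \psi^{(i-1)}_t|_{\fB_i}(B^*B)$ for $B \in \fB_i$, while $(\tilde A^{(i)}_{t,1}\cdot \psi^{(i-1)}_t)|_{\fB_i} = \tilde A^{(i)}_{t,1}\cdot(\psi^{(i-1)}_t|_{\fB_i}) = \omega_0|_{\fB_i}$ shows the new loop $\psi^{(i)}$ satisfies $\psi^{(i)}_t(P_i)=1$. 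Since $\psi^{(i)}_t|_{\fB_i}$ is now pure, \cref{lem:restricted_state_factorized} forces $\psi^{(i)}_t$ to factor across $\fB_i$ and its commutant; because all later lifts lie in $\fB_j$ with $j>i$, which commute with $\fA_i$ and respect this factorization, the disentanglement of sites $1,\dots,i$ is preserved for all subsequent steps. This persistence is the crucial structural point. The relative-basepoint conditions survive because at $t=0,1$ one acts by an element of $\fB_i$ on the product state $\omega_0$, leaving $\omega_0$ invariant by the same factorization.

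Finally I would concatenate these infinitely many relative-basepoint $\fA$-homotopies into one, letting the $i$-th occupy the $s$-interval $[1-2^{-(i-1)},\,1-2^{-i}]$ and defining $A_{t,s}$ there to be the reparametrized lift $\tilde A^{(i)}$ multiplied on the right by the accumulated product $\tilde A^{(i-1)}_{t,1}\cdots\tilde A^{(1)}_{t,1}$; the composition law used in the transitivity proposition guarantees $A_{t,s}\notin\fN_{\omega_t}$ throughout, and continuity at the gluing points follows from $\tilde A^{(i)}_{t,0}=\1$. This yields a continuous $A\colon \interval\times[0,1)\to\fA$ and a homotopy $H(t,s)=A_{t,s}\cdot\omega_t$ on $\interval \times [0,1)$. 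The main obstacle is continuity of $H$ at $s=1$, where I would set $H(t,1)=\omega_0$: for a local observable $B\in\fA_{i_0}$, once site $i_0$ is disentangled the value $H(t,s)(B)=\omega_0(B)$ is frozen for all later $s$ (acting by $\fB_j$, $j>i_0$, does not change the state on $\fA_{i_0}$ once it factorizes), and since the completion times $1-2^{-i_0}$ tend to $1$, we get $H(t,s)(B)\to\omega_0(B)$ uniformly in $t$; density of the local algebra together with the uniform bound $\|H(t,s)\|=1$ upgrades this to weak* convergence $H(t,s)\to\omega_0$ uniform in $t$, giving joint continuity at $s=1$. The resulting $H$ is a based homotopy from $\omega$ to the constant loop, so $\pi_1(\sS(\fA),\omega_0)$ kills $[\omega]$; and when $\omega$ is valued in $\sP(\fA)$ each $A_{t,s}\cdot\omega_t$ is again pure, so $H$ stays in $\sP(\fA)$, which combined with the known path-connectedness of $\sP(\fA)$ gives $\pi_1(\sP(\fA),\omega_0)=0$.
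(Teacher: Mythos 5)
Your proposal is correct and follows essentially the same route as the paper's proof: the same inductive site-by-site disentanglement via \cref{thm:M_n(C)_nulhomotopic}, the same use of \cref{lem:restricted_state_factorized} to show purity of the restriction freezes the already-disentangled sites, the same concatenation of the lifts on the intervals $[1-2^{-i+1},1-2^{-i}]$, and the same density argument for weak$^*$ continuity at $s=1$. No gaps to report.
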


The strategy is as follows. Beginning with the arbitrary loop $\omega$, we first focus on its restriction to the first ``lattice site'' $M_{k_1}(\bbC)$. Acting with unitaries and projections localized on ``site one'' (and their linear interpolations with the identity), we deform $\omega$ so that its restriction to site one becomes constant and pure, as in \cref{thm:M_n(C)_nulhomotopic}. \cref{lem:restricted_state_factorized} implies that site one is now disentangled from the rest of the sites. We then proceed to do the same thing to site two, site three, and so on. We can fit this countable infinity of homotopies in the half-open interval $[0,1)$, and then define the homotopy to be constant $\omega_0$ when the homotopy parameter reaches $s = 1$. This turns out to be weak* continuous because for a fixed local operator $A$, the expectation values of $A$ become independent of the homotopy parameter after the support of $A$ has become disentangled from the rest of the sites.

\begin{proof}[Proof of \cref{thm:fundaUHFfinal}.]
  We will follow the notation in the discussion preceding the theorem statement.
  Moreover, for any $C^*$-subalgebra $\fB \subset \fA$ we will say that an $\fA$-homotopy
  $B\colon  \interval\times\interval \to\fA$ is \emph{supported} in $\fB$ if and only if it has
  image in $\fB$. As the last preparation we set $\psi^0 = \omega$ for convenience.
  Now we  construct recursively  for each positive integer $i$ a loop $\psi^i\colon \interval \to\sS(\fA)$
  whose restriction to $\fB_i$ is a constant loop and an $\fA$-homotopy $B^i$ supported in  $\fB_i$.
    
  The restriction $t \mapsto \omega_t|_{\fB_1}$ of $\omega$ to the first tensor factor is a loop of states based at $\omega^{k_1}_0$, so it follows from \cref{thm:M_n(C)_nulhomotopic} that $\omega|_{\fB_1}$ is $\fB_1$-homotopic relative the basepoint to the constant loop at $\omega^{k_1}_0$. 
  Thus, there exists an $\fA$-homotopy $B^1\colon \interval \times \interval \to  \fA$ 
  relative the basepoint and supported on $\fB_1$ from $\psi^0 =\omega $ to a loop $\psi^1$
  such that $\psi^1_t(P_1) = 1$ for all $t \in \interval$. 

  Suppose that for some $n \in \bbN$ and all $i \in \qty{1,\ldots, n}$ we have constructed loops
  $\psi^i\colon \interval \to  \sS(\fA)$ based at $\omega_0$ and continuous maps
  $B^i\colon \interval \times \interval \to  \fB_i \subset \fA$  such that 
  $B^i$ is   an $\fA$-homotopy relative basepoints from $\psi^{i-1}$ to $\psi^i$ and such that
  $\psi^i_t(P_j) = 1$ for all $j \leq i$ and $t \in \interval$.
  Restricting $\psi^n$ to $\fB_{n+1}$ and applying \cref{thm:M_n(C)_nulhomotopic}, we obtain
  an $\fA$-homotopy  $B^{n+1}\colon \interval \times \interval \to  \fB_{n+1} \subset \fA$ relative the basepoint
  from $\psi^n$ to a loop $\psi^{n+1}$ satisfying $\psi^{n+1}_t(P_{n+1}) = 1$ for all $t \in \interval$.
  Note that the $\fA$-homotopy $B^{n+1}$ is supported on $\fB_{n+1}$.
  We observe further that for $j < n+1$, the restricted state $\psi^n_t|_{\fB_j}$ is pure (and independent of $t$)
  since $\psi^n_t(P_j) = 1$, so \cref{lem:restricted_state_factorized} implies
\begin{align*}
\psi_t^{n+1}(P_j) &= (B_{t,1}^{n+1} \cdot \psi_t^n)(P_j) \\
&= \frac{\psi_t^n((B_{t,1}^{n+1})^*P_j B_{t,1}^{n+1})}{\psi_t^n((B_{t,1}^{n+1})^*B_{t,1}^{n+1})}\\
&= \frac{\psi_t^n((B_{t,1}^{n+1})^*B_{t,1}^{n+1}) \cdot \psi_t^n(P_j)}{\psi_t^n((B_{t,1}^{n+1})^*B_{t,1}^{n+1})} = 1.
\end{align*}

By recursion we therefore obtain for all positive integers $i$ a loop $\psi^i\colon \interval \to  \sS(\fA)$ based at $\omega_0$
and an $\fA$-homotopy $B^i\colon \interval \times \interval \to  \fB_i \subset \fA$
from $\psi^{i-1}$ to $\psi^i$ such that $\psi^i_t(P_j) = 1$ for all $j \leq i$ and $t \in \interval$. 

Now define $A\colon \interval \times [0,1) \to  \fA$ as follows. Given $i \in \bbN$, define $A$ on $\interval \times [1 - 2^{-i+1}, 1-2^{-i}]$ as
\[
A_{t,s} = B_{t,r}^i B_{t,1}^{i-1} B_{t,1}^{i-2} \cdots B_{t,1}^1, \quad \tn{ where $r = \frac{s - 1 + 2^{-i+1}}{2^{-i+1} - 2^{-i}}$}.
\]
Since $B^i_{t,0} = \1$ for all $t$ and $i$, we see that 
\begin{align*}
A_{t,1-2^{-i+1}} = B_{t,1}^{i-1}B_{t,1}^{i-2}\cdots B_{t,1}^1.
\end{align*}
Note also that
\[
A_{t,1-2^{-i}} = B_{t,1}^iB_{t,1}^{i-1} \cdots B_{t,1}^1.
\]
Thus, the functions $A\colon \interval \times [1-2^{-i+1}, 1-2^{-i}] \to  \fA$ glue together to a continuous function $A\colon I \times [0,1) \to  \fA$. Note that $A_{t,0} = \1$ for all $t \in \interval$.

Finally, we define $H\colon \interval \times \interval \to  \sS(\fA)$ as
\[
H(t,s) = \begin{cases} A_{t,s} \cdot \omega_t &: s < 1 \\ \omega_0 &: s = 1 \end{cases}.
\]
Since $A_{t,0} = \1$ for all $t$, we see that $H(t, 0) = \omega_t$. Since $B^i$ fixes the endpoints of $\psi^{i-1}$, we see that $H(0,s) = H(1,s) = \omega_0$ for all $s \in \interval$. Clearly $H(t,1)$ is constant at $\omega_0$. All that remains is to show weak* continuity of $H$.

It suffices to show continuity of $H(t,s)(C)$ for fixed but arbitrary $C \in \bigcup_{n \in \bbN} \fA_n \subset \fA$. On $\interval \times [0,1)$, we have $H(t,s)(C) = (A_{t,s} \cdot \omega_t)(C)$, which is continuous since $A$ is continuous. Now, there exists $n \in \bbN$ such that $C \in \fA_n$. For $i > n+1$ and $s \in [1 - 2^{-i+1}, 1-2^{-i}]$, we have
\begin{align*}
H(t,s)(C) &= (A_{t,s} \cdot \omega_t)(C)\\
&= \qty[\qty(B_{t,r}^i B_{t,1}^{i-1} \cdots B_{t,1}^{n+1}) \cdot \qty(B_{t,1}^{n} \cdots B_{t,1}^1) \cdot \omega_t](C)\\
&= \qty[\qty(B_{t,r}^i  B_{t,1}^{i-1} \cdots B_{t,1}^{n+1}) \cdot \psi^n_t](C) = \psi^n_t(C) = \omega_0(C).
\end{align*}
The fact that $\psi^n_t(C) = \omega_0(C)$ follows from the fact that $\psi^n_t(P_i) = 1$ for all $i \leq n$, which implies $\psi^n_t|_{\fA_n} = \omega_0|_{\fA_n}$. The second to last step follows from the fact that $\psi^n_t|_{\fA_n}$ is pure, the fact that $B^i_{t,r} B_{t,1}^{i-1} \cdots B_{t,1}^{n+1} \in \fA_n'$, and an application of \cref{lem:restricted_state_factorized}. Thus, on $\interval \times [1-2^{-n-1}, 1]$, we see that $H(t,s)(C)$ is constant at $\omega_0(C)$. This proves that $H(t,s)$ is weak* continuous on all of $\interval \times \interval$, completing the proof.
\end{proof}

\appendix
\section{Homotopical background}\label{sec:appendix}
In this appendix, we review some notions from homotopy theory that are used in the paper.
We refer the reader to \cite{AguilarGitlerPrieto,MayGILS,MayEinfty,MayConcise,GoerssJardine}
for further information.
Throughout the appendix, by ``topological spaces'', we mean compactly generated topological spaces
and follow \cref{conv:top}. 
\subsection{Topological monoids and $H$-spaces}\label{sec:topmonhspace}

We start with a space  $M$ together with
a binary operation
\[\otimes \colon M \times M \to M.\]
and a map
\[\varepsilon_0 \colon \mathrm{pt} \to M \]
selecting a base point in $M$.
We have three diagrams which ``categorify'' common algebraic properties:
\begin{enumerate}
\item[(u)] (unitality) 
$
\xymatrix{M \ar[r]^-{\varepsilon_0 \times \id} \ar[dr]_-{\id} & M\times M\ar[d]^-{\otimes}  & M  \ar[l]_-{\id \times \varepsilon_0} \ar[dl]^-{\id} \\
  &M &}
$

\item[(a)] (associativity) $\xymatrix{ M\times M \times M \ar[r]^-{\otimes \times \id}\ar[d]_-{\id \times \otimes}&  M\times M \ar[d]^-{\otimes} \\
M\times M \ar[r]^-{\otimes} & M
 }
$
\item[(c)] (commutativity)
$
\xymatrix@C=1pc{ M\times M \ar[dr]_-\otimes \ar[rr]^-s  &  & M\times M \ar[dl]^-{\otimes} \\
 & M & 
 } 
$
\end{enumerate}

We say that a diagram commutes \emph{up to homotopy} if there is a homotopy between any composite of arrows around the diagram which begin at the same object and have the same endpoint. To contrast this weaker condition, we say that the diagram commutes \emph{strictly} if it is a commutative diagram, so that any composite of arrows around the diagram that start at the same place and end at the same place are equal arrows. Clearly, if a diagram commutes strictly, then it commutes up to homotopy.

\begin{defn}\label{defn:Hspace}(see e.g.\ \cite[Sec.~2.7]{AguilarGitlerPrieto})
Let $M$ be a based topological space, with base point $\varepsilon_0$, and a continuous binary operation $\otimes$.
\begin{enumerate}
\item We say that $M$ is an \emph{$H$-space} if the diagram (u) commutes up to homotopy. It is called \emph{homotopy associative} if (a) commutes up to homotopy and \emph{homotopy commutative} if (c) commutes up to homotopy.
\item We say that $M$ is a \emph{topological monoid} if (u) and (a) commute stictly. We say that it is a
  \emph{commutative topological monoid} if, in addition, (c) commutes strictly.  
\item A \emph{homotopy commutative topological monoid} is a topological monoid for which (c) commutes up to homotopy.
\end{enumerate}

A morphism of $H$-spaces is a continuous function $\phi \colon M \ra N$ so that
\[ \xymatrix{ M \times M \ar[d]^-\otimes \ar[r]^{\phi \times \phi} & N\times N \ar[d]_\otimes \\
M\ar[r]^-\phi  & N
}  \quad \quad \xymatrix{ \pt \ar[r]^{\varepsilon_0}\ar[dr]_{\varepsilon_0} & M \ar[d]^-\phi \\ & N}  \]
commute up to homotopy. For a morphism of topological monoids, the diagrams are required to commute strictly.
\end{defn}

\begin{ex}
Here are a few examples:
\begin{enumerate}
\item
Any topological group $G$ is a topological monoid. 
\item A loop space $\Omega X$ is a homotopy associative $H$-space and $\Omega^2X$ is a homotopy associative and commutative $H$-space. 
\item Any monoid $M$ is a topological monoid with the discrete topology. 
\end{enumerate}
\end{ex}


If $M$ is a homotopy associative $H$-space or a topological monoid and $X$ is a topological space, the homotopy classes of maps $[X,M]$ form a monoid (in the sense of algebra) where for $f,g\in [X,M]$,
\[(f\otimes g) (x) = f(x)\otimes g(x).\]
If $M$ is homotopy commutative, this monoid is commutative. 

\begin{rem}\label{rem:pi0monoid}
The path components of a space are the homotopy classes of maps
\[\pi_0X = [\pt, X]\]
from the one point space $\pt$. Hence, if $M$ is a homotopy associative $H$-space, then $\pi_0M$ is a monoid. In fact, if $\pi_0M$ is given the discrete topology, then so long as $M$ is locally path connected,
\[ M \xrightarrow{ \ \pi_0 \ } \pi_0M\]
is a continuous map of $H$-spaces. 
\end{rem}

\begin{defn}
If $M$ is a homotopy associative $H$-space or a topological monoid, we say that $M$ is \emph{grouplike} if $\pi_0M$ is a group. That is, if the monoid $\pi_0M$ has inverses.
\end{defn}

\subsection{Operads and $\sE_\infty$-spaces}\label{sec:opandeinfty}

One of the first appearances of $\sE_\infty$-spaces is in Boardman-Vogt \cite{BoardmanVogt} under the guise of \emph{homotopy everything spaces}. The intention was to capture the notion of an $H$-space which is commutative up to all higher homotopies. Roughly, this means that the commutativity diagram (c)  commutes up to homotopy, and that given two homotopies, there is a homotopy between them, and so on. May in \cite{MayGILS} introduced the notion of an $\sE_\infty$-operad. This allowed for a re-packaging of the structure present in a ``homotopy everything space'' into a mathematical object called an $\sE_\infty$-space. The reformulation is discussed in \cite[\S 1]{MayEinfty}.


\bigskip

An $\sE_\infty$-operad parametrizes the higher homotopies that witness commutativity. We start with some definitions
and give references for further details.
In what follows, $\Sigma_j$ denotes the symmetric group on $j$ letters, and
a $\Sigma_j$-\emph{space} is a topological space with a continuous
right action of $\Sigma_j$.

\begin{defn}[{\cite[Def. 1.1]{MayGILS}}]
An \emph{operad} $\cO$ in topological spaces consists of the following data:
\begin{enumerate}[(1)]
\item for each integer $j\geq 0$, a $\Sigma_j$-space $\cO(j)$,
\item for each tuple $(k; j_1, j_2, \ldots, j_k)$ of non-negative integers
  such that the relation $\sum\limits_{i=1}^k j_i = j$ holds continuous operations
  \[\gamma \colon \cO(k) \times \cO(j_1)\times \cdots \times \cO(j_k) \to \cO(j)\]
  which satisfy 
\begin{enumerate}[(a)]
\item the associativity formula
  \[
    \gamma (\gamma (c; d_1,\ldots , d_k); e_1,\ldots , e_j) =
    \gamma (c; f_1,\ldots , f_k)
  \]
  for all $c \in \cO(k)$, $d_s \in \cO(j_s)$, and $e_t \in \cO(i_t)$,
  where
  \[ f_s= \gamma (d_s;e_{j_1+ \ldots + j_{s-1}+1},\ldots , e_{j_1+\ldots + j_s}) \  ,\]
and
\item  the equivariance conditions
  \begin{align*}
    \gamma (c\sigma ; d_1,\ldots ,d_k) & =
    \gamma \big( c; d_{\sigma^{-1} (1)},\ldots ,d_{\sigma^{-1} (k)} \big)
    \sigma (j_1,\ldots ,j_k) \ , \\
    \gamma (c; d_1\tau_1 , \ldots , d_k\tau_k ) & =
    \gamma (c; d_1 , \ldots , d_k )  (\tau_1\times \ldots \times \tau_k) 
  \end{align*}
  for all $c \in \cO(k)$, $d_s \in \cO(j_s)$, $\sigma \in \Sigma_{k}$,
  and $\tau_s\in  \Sigma_{j_s}$, where
  $\sigma (j_1,\ldots j_k)$ denotes the permutation of $j$ letters
  given by permutation of the $k$ blocks via $\sigma$, and $\tau_1 \times \ldots \times \tau_k$
  is the image of $(\tau_1,\ldots,\tau_k)$ under the
  canonical embedding
  $\Sigma_{j_1} \times \ldots \times \Sigma_{j_k} \hookrightarrow \Sigma_j$. 
\end{enumerate}

\item a distinguished element $1 \in \cO(1)$ which acts as a unit for the operations $\gamma$ in the sense that 
\begin{align*}
\gamma(1; -) &\colon  \cO(j) \to \cO(j) \quad \text{and } \\
\gamma(-; 1^k ) &\colon \cO(k) \to \cO(k)
\end{align*} 
are the identity morphisms.
\end{enumerate}
If $\cO(0)$ is a single point, the operad is called \emph{reduced}. If the $\cO(j)$ are free $\Sigma_j$-spaces, we say that $\cO$ is \emph{$\Sigma$-free}. 
\end{defn}

\begin{defn}
An  \emph{$\mathscr{E}_\infty$-operad} is an operad $\cO$ which is reduced, $\Sigma$-free and such that the spaces $\cO(j)$ are contractible for all $j$.
\end{defn}

\begin{ex}\label{rem:alleinftyopequal}
The Barratt--Eccles operad $\mathcal{E}$ is the operad given by 
\[\mathcal{E}(j) = E\Sigma_j\] 
where $E\Sigma_j$ is the universal free contractible $\Sigma_j$-space of the symmetric group on $j$ elements. The maps $\gamma$ are given by applying the functor $E(-)$ to the group homomorphism
\[ \Sigma_k \times \Sigma_{j_1} \times \cdots \times \Sigma_{j_k} \to \Sigma_{j}\]
which sends $(\tau_k ; \sigma_{j_1} , \ldots , \sigma_{j_k})$ to the obvious block permutation. 

Given any other $\sE_\infty$-operad $\cO$, then $\mathcal{E} \times \cO$ is again an $\sE_\infty$-operad. Each projection from $\mathcal{E} \times \cO$ to $\cE$ and $\cO$ are maps of operads which are also weak equivalences on each space of the operad. This witnesses that fact that any $\sE_\infty$-operad is weakly equivalent to the Barratt--Eccles operad. Later, we see another $\sE_\infty$-operad, the linear isometry operad.  
\end{ex}


\begin{ex}[{\cite[Def. 1.2]{MayGILS}}]\label{ex:endoop}\label{rem:action}
 If $X$ is a topological space, then $X^j$ is a left $\Sigma_j$-space  if we let a permutation $\sigma$ act as
\[\sigma(x_1,\cdots, x_j) = (x_{\sigma^{-1}(1)}, \cdots, x_{\sigma^{-1}(j)}). \]
Then for any space $Y$, the space $\Map(X^j,Y)$ is made into a right $\Sigma_j$-space by letting $(f\sigma) = f\circ \sigma$.
The \emph{endomorphism operad} of $X$, denoted $\mathcal{E}(X)$ is then defined by $\mathcal{E}(X)(j) =\Map(X^j,X)$. The unit element is the identity in $\mathcal{E}(X)(1)$. The structure maps are given by $\gamma(f;g_1, \cdots, g_k) =f(g_1\times \cdots \times g_k)$ and the symmetric group acts as  $f\sigma= f\circ \sigma$.

  More generally, if $\cC$ is a topologically enriched symmetric monoidal category and $A$ is an object of $\cC$,  the endomorphism operad $\cE_\cC(A)$ is the operad in $\cC$ with $\mathcal{E}_{\cC}(A)(j) = \cC(A^{\otimes j}, A)$. The unit is again the identity morphism, and the structure maps $\gamma$ satisfy the same formula where we replace cartesian product with the monoidal product $\otimes$.
\end{ex}

\begin{defn} 
Let $\cO$ be an operad.  An \emph{$\cO$-space} consists of a 
 space $X$ and continuous maps 
\[\theta_j = \theta_j^X \colon \cO(j) \times X^j \to X\]
so that
\begin{enumerate}
\item $\theta_1(1;-)$ is the identity on $X$,
\item $\theta_j(-\sigma ; -) = \theta_j(-;\sigma -)$ for $\sigma \in \Sigma_j$, where $\Sigma_j$ acts on $X^j$ as in \cref{rem:action}.  
\item the $\theta$'s and $\gamma$'s are compatible in the sense that 
  whenever $\sum\limits_{i=1}^k j_i = j$, the following diagram commutes:
\[\xymatrix@R=0.5pc{
\cO(k) \times \cO(j_1)\times \cdots \times \cO(j_k) \times X^j \ar[rr]^-{\gamma \times \id} \ar[dd]_-{\id \times \mathrm{shuffle}} && \cO(j)\times X^j \ar[dr]^-{\theta_j}  & \\
& &  & X \\
 \cO(k)\times \cO(j_1) \times X^{j_1} \times \cdots \times  \cO(j_k) \times X^{j_k} \ar[rr]^-{\id \times \theta_{j_1} \times \cdots \times \theta_{j_k}} & & \cO(k)\times X^k \ar[ur]_-{\theta_k} & }\]
\end{enumerate} 
A map of $\cO$-spaces is a continuous map $f\colon X \ra Y$ making the diagram
\[\xymatrix{
\cO(j) \times X^j \ar[r]^-{\theta_j^X} \ar[d]_-{\id\times f^j} & X \ar[d]^-f\\
\cO(j) \times Y^j \ar[r]_-{\theta_j^Y} &  Y
}\]
commute. 

If $\cO$ is a reduced operad, then a \emph{based $\cO$-space} is a based space $X$ which is an $\cO$-space and such that $\theta_0$ is the base point of $X$.
\end{defn}

\begin{rem}
One can define the notion of a morphism of operads $\cO \ra \cP$. These are continuous maps $\cO(j) \ra \cP(j)$ that respect all of the structure. Specifiying an $\cO$-space is then equivalent to giving a morphism of operads from $\cO$ to $\cE(X)$.
\end{rem}

\begin{defn}
A space $X$ is an \emph{$\sE_\infty$-space} if it is a based $\cO$-space for some $\sE_{\infty}$-operad $\cO$. 
\end{defn}

An $\sE_\infty$-space is automatically a homotopy commutative and associative $H$-space. Choose any element $c \in \cO(2)$. We get a binary operation
\[ \theta := \theta_2(c;-) \colon X^2 \to X\]
and 
\[\varepsilon  = \theta_0 \colon \cO(0)= \mathrm{pt} \to X.\]
The properties of an $\cO$-space then imply that $X$ is an $H$-space with these maps. 

\begin{lem}
The space $X$ together with $\theta$ and $\varepsilon$ as defined above is a homotopy commutative and associative $H$-space.
\end{lem}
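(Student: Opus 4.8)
The plan is to verify the three defining diagrams of \cref{defn:Hspace} up to homotopy by a single recurring device: rewrite each side of a given diagram as $\theta_j$ applied to a fixed element of $\cO(j)$, and then exploit the contractibility of $\cO(j)$ to connect the two relevant elements by a path. Since each $\theta_j \colon \cO(j)\times X^j \to X$ is continuous, applying it along such a path produces the desired homotopy. Throughout, $\theta = \theta_2(c;-)$ is the multiplication, $\varepsilon = \theta_0$ the proposed unit, and the fact that every $\cO(j)$ is contractible (part of being an $\sE_\infty$-operad) is precisely what renders all the choices irrelevant up to homotopy.

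For associativity (diagram (a)), I would first invoke the compatibility axiom relating $\theta$ and $\gamma$. Taking $k=2$ and $(j_1,j_2)=(2,1)$, the axiom yields
\[\theta_2\big(c;\theta_2(c;x,y),z\big) = \theta_3\big(\gamma(c;c,1);x,y,z\big),\]
so that $\theta(\theta(x,y),z)=\theta_3(\gamma(c;c,1);-)$. Symmetrically, with $(j_1,j_2)=(1,2)$ one obtains $\theta(x,\theta(y,z))=\theta_3(\gamma(c;1,c);-)$. Both $\gamma(c;c,1)$ and $\gamma(c;1,c)$ lie in the contractible space $\cO(3)$; choosing a path $p\colon \interval\to\cO(3)$ between them, the map $(t,(x,y,z))\mapsto \theta_3(p(t);x,y,z)$ is the required homotopy.

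Commutativity (diagram (c)) uses the equivariance axiom instead. Writing $\tau\in\Sigma_2$ for the transposition and recalling $\tau\cdot(x,y)=(y,x)$, equivariance gives $\theta_2(c;y,x)=\theta_2(c;\tau\cdot(x,y))=\theta_2(c\tau;x,y)$, so $\theta\circ s = \theta_2(c\tau;-)$ while $\theta = \theta_2(c;-)$. As $c$ and $c\tau$ both lie in the contractible $\cO(2)$, a path between them yields the homotopy. For unitality (diagram (u)), I would again use the compatibility axiom, now with a factor of $\cO(0)$: taking $k=2$ and $(j_1,j_2)=(0,1)$ gives $\theta_2(c;\varepsilon,x)=\theta_1(\gamma(c;*,1);x)$, where $*$ is the unique point of $\cO(0)$, and similarly $(j_1,j_2)=(1,0)$ handles the right unit. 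Since $\gamma(c;*,1)$ and $1$ both lie in the contractible $\cO(1)$, a path between them produces a homotopy from $\theta(\varepsilon,-)$ to $\theta_1(1;-)=\id_X$, and dually on the right.

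The routine content is the application of the compatibility and equivariance axioms; the only point requiring genuine care is the bookkeeping of the indices $(k;j_1,\ldots,j_k)$ and of the left/right symmetric-group conventions, so that one correctly identifies the elements $\gamma(c;c,1)$, $\gamma(c;1,c)$, $c\tau$, and $\gamma(c;*,1)$ of the appropriate $\cO(j)$. Once these identifications are pinned down, the contractibility of the $\cO(j)$ does the remaining work uniformly, so I expect no genuine obstacle beyond this combinatorial bookkeeping.
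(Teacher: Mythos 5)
Your proposal is correct and uses exactly the device the paper does: rewrite both sides of each diagram as $\theta_j$ applied to specific elements of the contractible space $\cO(j)$ (via the compatibility axiom for unitality and associativity, and equivariance for commutativity), then run $\theta_j$ along a path connecting them. The paper only carries this out explicitly for unitality, citing May for associativity and leaving commutativity as an exercise, so your write-up simply fills in the omitted cases by the same uniform argument.
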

\begin{proof}[Proof sketch.]
See \cite[Lemma 1.9]{MayGILS} for the homotopy associativity. We will check unitality and leave commutativity as an exercise. The diagram
\[\xymatrix@R=0.5pc{
\cO(2) \times \cO(0)\times  \cO(1) \times X \ar[rr]^-{\gamma \times \id} \ar[dd] && \cO(1)\times X \ar[dr]^-{\theta_1}  & \\
& &  & X \\
 \cO(2)\times \cO(0) \times X^{0} \times   \cO(1) \times X^{1} \ar[rr]^-{\id \times \theta_{0} \times \theta_{1}} & & \cO(2)\times X^2 \ar[ur]_-{\theta_2} & }\]
commutes. It follows that, as maps $X \rightarrow X$, 
\[\theta_1(\gamma(c; \ast ,1),x) =\theta_2(c; \ast,x) = \theta(\ast,x). \]
Since $\cO(1)$ is connected, we can choose a path $\alpha \colon [0,1] \rightarrow \cO(1)$ from $\gamma(c; \ast ,1)$ to $1$. Since $\theta_1$ is continuous
\[ h\colon X \times [0,1]  \to X, \quad h(x,t )=\theta_1(\alpha(t),x)\]
is a homotopy from $\theta_1(\gamma(c; \ast ,1),x)$ to $\theta_1(1,x)$. But $\theta_1(1,x)$ is the identity on $X$, so $\theta(\ast,x)$ is homotopic to the identity.
\end{proof}

We can fix an $\sE_\infty$-operad and define the category of $\sE_\infty$-spaces to be the category of $\cO$-spaces and morphisms. As discussed in \cref{rem:alleinftyopequal}, any choice of $\sE_\infty$-operad $\cO$ gives an equivalent category of $\sE_\infty$-spaces. For this reason, we allow ourselves to be imprecise about the operad and simply talk about the \emph{category of $\sE_\infty$-spaces}.

\begin{defn} We let $\sE_\infty\text{-}\bCGTop$ denote the category of $\sE_\infty$-spaces.  
We say that an $\sE_\infty$-space $X$ is \emph{grouplike} if $\pi_0X$ is a group with respect to the induced $H$-space structure on $X$.
\end{defn}


\subsection{Simplicial sets and geometric realization}\label{sec:simplicial}
  
We review the definition of a simplicial set and its geometric realization.
References on this include  \cite{Maysset, GoerssJardine}.

\begin{defn}
A \emph{simplicial set} $X_\bullet$ is a sequence of sets 
\[X_0, X_1, X_2, \ldots,\] together with \emph{face maps} $\partial_i \colon X_k \ra X_{k-1}$ defined for 
$0\leq i \leq k$ if $k\neq 0$ and \emph{degeneracy maps} $s_j \colon X_k \ra X_{k+1}$ defined for $0\leq j \leq k$.
These data must satisfy some coherency conditions called the \emph{simplicial identities}:
\begin{align*}
\hspace{30mm}\partial_i\partial_j&=\partial_{j-1}\partial_i &&\text{if }0 \leq i<j \leq k\ , \hspace{40mm}\\
s_is_j&=s_{j+1}s_i && \text{if } 0 \leq i\leq j \leq k\ , \hspace{40mm}\\
 \partial_i s_j&=s_{j-1}\partial_i &&\text{if }  0 \leq i< j \leq k\ , \hspace{40mm}\\
     \partial_i s_j&= s_j  \partial_{i-1}&& \text{if } 1 \leq j+1 < i \leq k+1 \ , \hspace{40mm} \\
  \partial_i s_j&=1 && \text{if } i = j \text{ or } i = j+1 \ . \hspace{40mm}
\end{align*}
An element of the set $X_j$ is called a \emph{$j$-simplex}.
\end{defn}

Simplicial sets form a category, denoted $\sSets$, where a morphism 
\[f_\bullet \colon X_\bullet \to Y_\bullet\] 
is a sequence of functions $f_k \colon X_k \to Y_k$ which commute with all face and degeneracy maps, i.e.,
$\partial_i f_k = f_{k-1} \partial_i$ and $s_i f_k = f_{k+1} s_i$  for all $k$ and $i$ with $0\leq i \leq k$.

\begin{ex}
For $Y$ a topological space, we associate a 
simplicial set  $\Sing_\bullet(Y)$ as follows. 
Let 
\[\Delta_k = \big\{(t_0, \ldots, t_k) \in \R^{k+1}\big|\: \sum_{i=1}^k t_i=1 \ \text{and} \ 0\leq t_i \leq 1\big\}\] 
be the standard $k$-dimensional simplex. For $0\leq i\leq k$, let $\delta_i \colon \Delta_{k-1} \rightarrow \Delta_k$
be the map which inserts a zero in the $i$th spot, and $\sigma_i \colon \Delta_{k+1} \rightarrow \Delta_k$ be the map
that sums the $i$th and $(i+1)$th coordinates.

To define  $\Sing_\bullet(Y)$, we let
\[\Sing_k(Y) = {\Map}(\Delta_k,Y) \ .\] 
The face and degeneracy maps are the maps $\partial_i$ and $s_i$ obtained by precomposition with
$\delta_i$ and $\sigma_i$, respectively. The simplicial identities precisely reflect the relationship
imposed by the relationships satisfied by $\delta_i$ and $\sigma_i$.
\end{ex}

\begin{defn}
The \emph{geometric realization}
of $X_{\bullet}$, denoted by $|X_{\bullet}|$, is the following topological space. The space underlying $|X_\bullet |$ is the quotient of
\[\coprod_{k\geq 0} X_k \times \Delta_k,\]
by the equivalence relation generated by
\[(\partial_ix, t) \sim_\partial (x, \delta_it)\quad \text{for } x \in X_k, \: t \in \Delta_{k-1}, \text{ and } 0\leq i \leq k \ ,  \]
and
\[(s_ix, t) \sim_s (x, \sigma_it) \quad \text{for } x \in X_k, \: t \in \Delta_{k+1}, \text{ and } 0\leq i \leq k\ .\]
\end{defn}

Roughly, the set $X_k$ labels the ``$k$-cells'' of a CW complex glued along the face maps $\partial_i$. 
This is the gluing recorded by the first relation $\sim_\partial$. However, not all elements of $X_k$ truly give $k$-cells.
Some are what we call ``degenerate''. These are the simplicies in the image of the $s_i$-maps. The second relation
$\sim_s$ is telling you to crush those $k$-cells, gluing them onto a lower dimensional skeleton. Therefore,
we could have written the geometric realization by first throwing out all of the degenerate elements in $X_k$, keeping
those simplicies $X_k^{\mathrm{non\text{-}degen}}$ not in the image of the degeneracy maps $s$, and then using only the first
relation to take the quotient,
\[ |X_\bullet| \cong \left( \coprod_{k\geq 0} X_k^{\mathrm{non\text{-}degen}} \times \Delta_k\right)/\sim_\partial.\]

The importance of these construction is captured in the following adjunction, which allows one to do homotopy theory in the nicer category of CW-complexes. Indeed, let $\CWTop$ be the full subcategory of $\CGTop$ whose objects are compactly generated spaces that are homotopy equivalent to CW-complexes. 
There is an adjunction
\[\xymatrix{   |-| : \sSets   \ar@<1ex>[r] & \CGTop:  \Sing_\bullet  \ar@<1ex>[l] }\]
where $ \Sing_\bullet$ is the right adjoint. The co-unit of the adjunction
\[\varepsilon_X \colon  |\Sing_\bullet(X)| \to X\]
gives a functorial replacement for any topological space $X$ by a weakly equivalent CW-complex $|\Sing_\bullet(X)|$.


\begin{thebibliography}{WQB{\etalchar{+}}23}

\bibitem[Ada78]{AdamsLoop}
John~Frank Adams, \emph{Infinite loop spaces}, Annals of Mathematics Studies,
  No. 90, Princeton University Press, Princeton, N.J.; University of Tokyo
  Press, Tokyo, 1978.

\bibitem[AGP02]{AguilarGitlerPrieto}
Marcelo Aguilar, Samuel Gitler, and Carlos Prieto, \emph{Algebraic topology
  from a homotopical viewpoint}, Universitext, Springer-Verlag, New York, 2002,
  Translated from the Spanish by Stephen Bruce Sontz.

\bibitem[AHOS80]{AlfsenHancheOlsenShultz}
Erik~M. Alfsen, Harald Hanche-Olsen, and Frederic~W. Schultz, \emph{State
  spaces of {$C^{\ast} $}-algebras}, Acta Math. \textbf{144} (1980), no.~3-4,
  267--305, \url{https://doi.org/10.1007/BF02392126}.

\bibitem[AKLT87]{affleck87rigorous}
Ian Affleck, Tom Kennedy, Elliott~H. Lieb, and Hal Tasaki, \emph{Rigorous
  results on valence-bond ground states in antiferromagnets}, Phys. Rev. Lett.
  \textbf{59} (1987), 799--802.

\bibitem[AS01]{AlfsenShultzStateSpacesOpAlgs}
Erik~M. Alfsen and Frederic~W. Shultz, \emph{State spaces of operator
  algebras}, Mathematics: Theory \& Applications, Birkh\"{a}user Boston, Inc.,
  Boston, MA, 2001.

\bibitem[AS03]{AlfsenShultzGeometryStateSpacesOpAlgs}
\bysame, \emph{Geometry of state spaces of operator algebras}, Mathematics:
  Theory \& Applications, Birkh\"{a}user Boston, Inc., Boston, MA, 2003.

\bibitem[Bar61]{Barratt}
M.~G. Barratt, \emph{A note on the cohomology of semigroups}, J. London Math.
  Soc. \textbf{36} (1961), 496--498.

\bibitem[BC18]{BeaudryCampbell}
Agn\`es Beaudry and Jonathan~A. Campbell, \emph{A guide for computing stable
  homotopy groups}, Topology and {Q}uantum {T}heory in {I}nteraction, Contemp.\
  Math., vol. 718, Amer.\ Math.\ Soc., Providence, RI, 2018, pp.~89--136.

\bibitem[Bla06]{BlackadarOperatorAlgebras}
B.~Blackadar, \emph{{O}perator {A}lgebras: {T}heory of {$C^*$}-{A}lgebras and
  von {N}eumann {A}lgebras}, Encyclopaedia of Mathematical Sciences, vol. 122,
  Operator Algebras and Non-commutative Geometry, III, Springer-Verlag, Berlin,
  2006.

\bibitem[BMNS12]{BachmannMichalakisNachtergaeleSimsAutomorphicEquivalence}
S.~Bachmann, S.~Michalakis, B.~Nachtergaele, and R.~Sims, \emph{Automorphic
  equivalence within gapped phases of quantum lattice systems}, Comm. Math.
  Phys. \textbf{309} (2012), no.~3, 835--871.

\bibitem[BO08]{BrownOzawa}
Nathanial~P. Brown and Narutaka Ozawa, \emph{{$C^*$}-{A}lgebras and
  {F}inite-{D}imensional {A}pproximations}, Graduate Studies in Mathematics,
  vol.~88, American Mathematical Society, Providence, RI, 2008.

\bibitem[BP72]{BarrattPriddy}
Michael Barratt and Stewart Priddy, \emph{On the homology of non-connected
  monoids and their associated groups}, Comment. Math. Helv. \textbf{47}
  (1972), 1--14.

\bibitem[BR87]{BratteliRobinsonOAQSMI}
Ola Bratteli and Derek~W. Robinson, \emph{Operator {A}lgebras and {Q}uantum
  {S}tatistical {M}echanics. 1}, second ed., Texts and Monographs in Physics,
  Springer-Verlag, New York, 1987.

\bibitem[BV68]{BoardmanVogt}
J.~M. Boardman and R.~M. Vogt, \emph{Homotopy-everything {$H$}-spaces}, Bull.
  Amer. Math. Soc. \textbf{74} (1968), 1117--1122.

\bibitem[CGLW13]{Chen2013}
Xie Chen, Zheng-Cheng Gu, Zheng-Xin Liu, and Xiao-Gang Wen, \emph{Symmetry
  protected topological orders and the group cohomology of their symmetry
  group}, Phys. Rev. B \textbf{87} (2013), 155114.

\bibitem[CGW11]{chen11a}
Xie Chen, Zheng-Cheng Gu, and Xiao-Gang Wen, \emph{Classification of gapped
  symmetric phases in one-dimensional spin systems}, Phys. Rev. B \textbf{83}
  (2011), 035107.

\bibitem[Dix77]{Dixmier}
Jacques Dixmier, \emph{{$C\sp*$}-{A}lgebras}, North-Holland Mathematical
  Library, Vol. 15, North-Holland Publishing Co., Amsterdam-New York-Oxford,
  1977, Translated from the French by Francis Jellett.

\bibitem[DS88]{DunfordSchwartz1}
Nelson Dunford and Jacob~T. Schwartz, \emph{Linear operators. {P}art {I}},
  Wiley Classics Library, John Wiley \& Sons, Inc., New York, 1988, General
  theory, With the assistance of William G. Bade and Robert G. Bartle, Reprint
  of the 1958 original, A Wiley-Interscience Publication.

\bibitem[Eil99]{EilCCCA}
S{\o}ren Eilers, \emph{Connectivity and components for {$C^*$}-algebras}, Math.
  Scand. \textbf{84} (1999), no.~1, 119--136.

\bibitem[FH20]{FreedHopkinsSym}
Daniel~S. Freed and Michael~J. Hopkins, \emph{Invertible phases of matter with
  spatial symmetry}, Adv. Theor. Math. Phys. \textbf{24} (2020), no.~7,
  1773--1788. \MR{4313235}

\bibitem[FH21]{freed_hopkins}
\bysame, \emph{Reflection positivity and invertible topological phases}, Geom.
  Topol. \textbf{25} (2021), no.~3, 1165--1330.

\bibitem[FK11]{fidkowski11}
Lukasz Fidkowski and Alexei Kitaev, \emph{Topological phases of fermions in one
  dimension}, Phys. Rev. B \textbf{83} (2011), 075103.

\bibitem[FKM07]{Fu2007}
Liang Fu, C.~L. Kane, and E.~J. Mele, \emph{Topological insulators in three
  dimensions}, Phys. Rev. Lett. \textbf{98} (2007), 106803.

\bibitem[GJ09]{GoerssJardine}
Paul~G. Goerss and John~F. Jardine, \emph{Simplicial {H}omotopy {T}heory},
  Modern Birkh\"{a}user Classics, Birkh\"{a}user Verlag, Basel, 2009, Reprint
  of the 1999 edition.

\bibitem[GJF19]{Gaiotto2019}
D.~Gaiotto and T.~Johnson-Freyd, \emph{Symmetry protected topological phases
  and generalized cohomology}, J.\ of High Energy Physics \textbf{2019} (2019),
  no.~5, 7.

\bibitem[Gro55]{GrothendieckTensorProducts}
A.~Grothendieck, \emph{Produits tensoriels topologiques et espaces
  nucl\'{e}aires}, Mem.\ Amer.\ Math.\ Soc. \textbf{16} (1955), 336 pp.

\bibitem[Gro95]{GrothendieckTensorProductsSMF}
\bysame, \emph{Produits tensoriels topologiques et espaces nucl\'{e}aires},
  S\'{e}minaire {B}ourbaki, {V}ol. 2, Soc. Math. France, Paris, 1995, Exp.\
  No.\ 69, pp.~193--200.

\bibitem[GW09]{gu09tensor}
Zheng-Cheng Gu and Xiao-Gang Wen, \emph{Tensor-entanglement-filtering
  renormalization approach and symmetry-protected topological order}, Phys.
  Rev. B \textbf{80} (2009), 155131.

\bibitem[Hal83a]{haldane83b}
F.~D.~M. Haldane, \emph{Nonlinear field theory of large-spin {H}eisenberg
  antiferromagnets: Semiclassically quantized solitons of the one-dimensional
  easy-axis n\'eel state}, Phys. Rev. Lett. \textbf{50} (1983), 1153--1156.

\bibitem[Hal83b]{haldane83a}
F.D.M. Haldane, \emph{Continuum dynamics of the 1-d {H}eisenberg
  antiferromagnet: Identification with the o(3) nonlinear sigma model}, Physics
  Letters A \textbf{93} (1983), no.~9, 464 -- 468.

\bibitem[Has04]{Hastings_2004}
M.~B. Hastings, \emph{Locality in quantum and {M}arkov dynamics on lattices and
  networks}, Phys. Rev. Lett. \textbf{93} (2004), 140402.

\bibitem[HK06]{Hastings_2006}
Matthew~B. Hastings and Tohru Koma, \emph{Spectral gap and exponential decay of
  correlations}, Communications in Mathematical Physics \textbf{265} (2006),
  no.~3, 781--804.

\bibitem[HSTH01]{HiraiEtAl}
Takeshi Hirai, Hiroaki Shimomura, Nobuhiko Tatsuuma, and Etsuko Hirai,
  \emph{Inductive limits of topologies, their direct products, and problems
  related to algebraic structures}, J. Math. Kyoto Univ. \textbf{41} (2001),
  no.~3, 475--505.

\bibitem[Kad62]{KadSR}
Richard~V. Kadison, \emph{States and representations}, Trans. Amer. Math. Soc.
  \textbf{103} (1962), 304--319.

\bibitem[{Kap}14]{Kapustin}
Anton {Kapustin}, \emph{{Symmetry Protected Topological Phases, Anomalies, and
  Cobordisms: Beyond Group Cohomology}}, arXiv e-prints (2014),
  \url{https://arxiv.org/abs/1403.1467}.

\bibitem[Kit09]{Kitaev2009}
Alexei Kitaev, \emph{{Periodic table for topological insulators and
  superconductors}}, AIP Conference Proceedings \textbf{1134} (2009), no.~1,
  22--30.

\bibitem[Kit11]{KitaevE8}
\bysame, \emph{Toward topological classification of phases with short-range
  entanglement}, Talk at KITP,
  \url{https://online.kitp.ucsb.edu/online/topomat11/kitaev/}, 2011.

\bibitem[Kit13]{kitaevSimonsCenter}
\bysame, \emph{On the classification of short-range entangled states}, Talk at
  Simons Center, \url{http://scgp.stonybrook.edu/archives/16180}, 2013.

\bibitem[Kit15]{kitaevIPAM}
\bysame, \emph{Homotopy-theoretic approach to spt phases in action:
  $\mathbb{Z}_{16}$ classification of three-dimensional superconductors}, Talk
  at Simons Center, \url{http://www.ipam.ucla.edu/abstract/?tid=12389}, 2015.

\bibitem[Kit19]{kitaev}
\bysame, \emph{Topological quantum phases}, Talk at IAS,
  \url{https://video.ias.edu/special/2019/1125-AlexeiKitaev}, 2019.

\bibitem[KM05]{Kane2005}
C.~L. Kane and E.~J. Mele, \emph{Quantum spin {H}all effect in graphene}, Phys.
  Rev. Lett. \textbf{95} (2005), 226801.

\bibitem[KOS03]{kishimoto_ozawa_sakai_2003}
Akitaka Kishimoto, Narutaka Ozawa, and Sh\^{o}ichir\^{o} Sakai,
  \emph{Homogeneity of the pure state space of a separable ${{C}^{*}}$
  -algebra}, Canadian Mathematical Bulletin \textbf{46} (2003), no.~3,
  365--372, \url{https://doi.org/10.4153/CMB-2003-038-3}.

\bibitem[K{\"o}t69]{KoetheTopVecSp1}
Gottfried K{\"o}the, \emph{Topological vector spaces. {I}}, Die Grundlehren der
  mathematischen Wissenschaften, vol. Band 159, Springer-Verlag New York, Inc.,
  New York, 1969, Translated from the German by D. J. H. Garling. \MR{248498}

\bibitem[KR97]{KadisonRingroseII}
Richard~V. Kadison and John~R. Ringrose, \emph{Fundamentals of the theory of
  operator algebras. {V}ol. {II}}, Graduate Studies in Mathematics, vol.~16,
  American Mathematical Society, Providence, RI, 1997, Advanced theory,
  Corrected reprint of the 1986 original.

\bibitem[KS20a]{KS1}
Anton Kapustin and Lev Spodyneiko, \emph{Higher-dimensional generalizations of
  {B}erry curvature}, Phys. Rev. B \textbf{101} (2020), 235130.

\bibitem[KS20b]{KS2}
\bysame, \emph{{Higher-dimensional generalizations of the Thouless charge
  pump}}, arXiv e-prints (2020), \url{https://arxiv.org/abs/2003.09519}.

\bibitem[KS22]{KSo_local_Noether}
Anton Kapustin and Nikita Sopenko, \emph{{Local Noether theorem for quantum
  lattice systems and topological invariants of gapped states}}, Journal of
  Mathematical Physics \textbf{63} (2022), no.~9, 091903.

\bibitem[KSY21]{KapustinSopenkoYang}
Anton Kapustin, Nikita Sopenko, and Bowen Yang, \emph{A classification of
  invertible phases of bosonic quantum lattice systems in one dimension},
  Journal of Mathematical Physics \textbf{62} (2021), no.~8, 081901.

\bibitem[Lau81]{Laughlin1981}
R.~B. Laughlin, \emph{Quantized {H}all conductivity in two dimensions}, Phys.
  Rev. B \textbf{23} (1981), 5632--5633.

\bibitem[Lew78]{LewisCGTOP}
Lemoine Gaunce~Jr Lewis, \emph{The {S}table {C}ategory and {G}eneralized {T}hom
  {S}pectra}, ProQuest LLC, Ann Arbor, MI, 1978, Thesis (Ph.D.)--The University
  of Chicago.

\bibitem[LG12]{levin12braiding}
Michael Levin and Zheng-Cheng Gu, \emph{Braiding statistics approach to
  symmetry-protected topological phases}, Phys. Rev. B \textbf{86} (2012),
  115109.

\bibitem[Mar19]{Marcolli}
Matilde Marcolli, \emph{Gamma spaces and information}, J. Geom. Phys.
  \textbf{140} (2019), 26--55.

\bibitem[May72]{MayGILS}
J.~Peter May, \emph{The geometry of iterated loop spaces}, Lecture Notes in
  Mathematics, Vol. 271, Springer-Verlag, Berlin-New York, 1972.

\bibitem[May74]{MayPermutative}
\bysame, \emph{{$E_{\infty }$} spaces, group completions, and permutative
  categories}, New developments in topology ({P}roc. {S}ympos. {A}lgebraic
  {T}opology, {O}xford, 1972), London Math. Soc. Lecture Note Ser., No. 11,
  Cambridge Univ. Press, London, 1974, pp.~61--93.

\bibitem[May75]{MayClass}
\bysame, \emph{Classifying {S}paces and {F}ibrations}, Mem. Amer. Math. Soc.
  \textbf{1} (1975), no.~1, 155, xiii+98.

\bibitem[May77]{MayEinfty}
\bysame, \emph{{$E_{\infty }$} ring spaces and {$E_{\infty }$} ring spectra},
  Lecture Notes in Mathematics, Vol. 577, Springer-Verlag, Berlin-New York,
  1977, With contributions by Frank Quinn, Nigel Ray, and J\o rgen Tornehave.

\bibitem[May92]{Maysset}
\bysame, \emph{Simplicial objects in algebraic topology}, Chicago Lectures in
  Mathematics, University of Chicago Press, Chicago, IL, 1992, Reprint of the
  1967 original.

\bibitem[May99]{MayConcise}
\bysame, \emph{A concise course in algebraic topology}, Chicago Lectures in
  Mathematics, University of Chicago Press, Chicago, IL, 1999.

\bibitem[MB07]{Moore2007}
J.~E. Moore and L.~Balents, \emph{Topological invariants of
  time-reversal-invariant band structures}, Phys. Rev. B \textbf{75} (2007),
  121306.

\bibitem[MP12]{MayPonto}
J.~Peter May and K.~Ponto, \emph{More concise algebraic topology}, Chicago
  Lectures in Mathematics, University of Chicago Press, Chicago, IL, 2012.

\bibitem[MS76]{McDuffSegal}
D.~McDuff and G.~Segal, \emph{Homology fibrations and the ``group-completion''
  theorem}, Invent. Math. \textbf{31} (1975/76), no.~3, 279--284.

\bibitem[Mur90]{MurphyCAOT}
Gerard~J. Murphy, \emph{{$C^*$}-algebras and operator theory}, Academic Press,
  Inc., Boston, MA, 1990.

\bibitem[MZZ20]{2020arXiv200310934M}
J.~Peter {May}, Ruoqi {Zhang}, and Foling {Zou}, \emph{{Operads, monoids,
  monads, and {B}ar constructions}}, arXiv e-prints (2020),
  \url{https://arxiv.org/abs/2003.10934}.

\bibitem[Nik17]{Nikolaus}
Thomas Nikolaus, \emph{{The group completion theorem via localizations of ring
  spectra}},
  \url{https://www.uni-muenster.de/IVV5WS/WebHop/user/nikolaus/Papers/Group_completion.pdf},
  2017.

\bibitem[NS06]{Nachtergaele_2006}
Bruno Nachtergaele and Robert Sims, \emph{Lieb-{R}obinson bounds and the
  exponential clustering theorem}, Communications in Mathematical Physics
  \textbf{265} (2006), no.~1, 119--130.

\bibitem[Ped18]{PedersenCAlgAutomorphisms}
Gert Pedersen, \emph{C*-algebras and their automorphism groups}, second ed.,
  Academic Press, 2018.

\bibitem[Pie72]{Pietsch}
Albrecht Pietsch, \emph{Nuclear {L}ocally {C}onvex {S}paces}, {G}erman ed.,
  Ergebnisse der Mathematik und ihrer Grenzgebiete, vol.~66, Springer-Verlag,
  New York-Heidelberg, 1972.

\bibitem[PTBO10]{pollmann10}
Frank Pollmann, Ari~M. Turner, Erez Berg, and Masaki Oshikawa,
  \emph{Entanglement spectrum of a topological phase in one dimension}, Phys.
  Rev. B \textbf{81} (2010), 064439.

\bibitem[RG00]{Read2000}
N.~Read and Dmitry Green, \emph{Paired states of fermions in two dimensions
  with breaking of parity and time-reversal symmetries and the fractional
  quantum {H}all effect}, Phys. Rev. B \textbf{61} (2000), 10267--10297.

\bibitem[Roy09]{Roy2009}
Rahul Roy, \emph{Topological phases and the quantum spin {H}all effect in three
  dimensions}, Phys. Rev. B \textbf{79} (2009), 195322.

\bibitem[Rud98]{Rudyak}
Yuli~B. Rudyak, \emph{On {T}hom spectra, orientability, and cobordism},
  Springer Monographs in Mathematics, Springer-Verlag, Berlin, 1998, With a
  foreword by Haynes Miller.

\bibitem[RW13]{RandalWilliams}
Oscar Randal-Williams, \emph{`{G}roup-completion', local coefficient systems
  and perfection}, Q. J. Math. \textbf{64} (2013), no.~3, 795--803.

\bibitem[Seg74]{SegalCoh}
Graeme Segal, \emph{Categories and cohomology theories}, Topology \textbf{13}
  (1974), 293--312.

\bibitem[Shi22]{Shiozaki2022}
Ken Shiozaki, \emph{Adiabatic cycles of quantum spin systems}, Phys. Rev. B
  \textbf{106} (2022), 125108.

\bibitem[SMQ{\etalchar{+}}22]{ContinuousKadison}
Daniel Spiegel, Juan Moreno, Marvin Qi, Michael Hermele, Agn\`es Beaudry, and
  Markus~J. Pflaum, \emph{Continuous dependence on the initial data in the
  {K}adison transitivity theorem and {GNS} construction}, Rev. Math. Phys.
  \textbf{34} (2022), no.~9, Paper No. 2250031, 84.

\bibitem[Sop21]{Sopenko}
Nikita Sopenko, \emph{An index for two-dimensional spt states}, Journal of
  Mathematical Physics \textbf{62} (2021), no.~11, 111901.

\bibitem[SPGC11]{schuch11}
Norbert Schuch, David P{\'e}rez-Garc{\'\i}a, and Ignacio Cirac,
  \emph{Classifying quantum phases using matrix product states and projected
  entangled pair states}, Phys. Rev. B \textbf{84} (2011), 165139.

\bibitem[SRFL09]{Ryu2009}
Andreas~P. Schnyder, Shinsei Ryu, Akira Furusaki, and Andreas W.~W. Ludwig,
  \emph{Classification of topological insulators and superconductors}, AIP
  Conference Proceedings \textbf{1134} (2009), no.~1, 10--21.

\bibitem[St{\o}13]{StormerBook}
Erling St{\o}rmer, \emph{Positive linear maps of operator algebras}, Springer
  Monographs in Mathematics, Springer, Heidelberg, 2013.

\bibitem[Tak55]{TakedaInductiveLimits}
Ziro Takeda, \emph{Inductive limit and infinite direct product of operator
  algebras}, Tohoku Math. J. (2) \textbf{7} (1955), no.~1-2, 67--86.

\bibitem[Tak64]{Takesaki1964}
Masamichi Takesaki, \emph{On the cross-norm of the direct product of
  {$C^{\ast}$}-algebras}, Tohoku Math. J. (2) \textbf{16} (1964), 111--122.

\bibitem[Tho83]{Thouless1983}
D.~J. Thouless, \emph{Quantization of particle transport}, Phys. Rev. B
  \textbf{27} (1983), 6083--6087.

\bibitem[TPB11]{turner11}
Ari~M. Turner, Frank Pollmann, and Erez Berg, \emph{Topological phases of
  one-dimensional fermions: An entanglement point of view}, Phys. Rev. B
  \textbf{83} (2011), 075102.

\bibitem[Tr{\`e}67]{Treves}
Fran\c{c}ois Tr{\`e}ves, \emph{Topological {V}ector {S}paces, {D}istributions
  and {K}ernels}, Academic Press, New York-London, 1967.

\bibitem[WQB{\etalchar{+}}23]{qpump}
Xueda Wen, Marvin Qi, Agn\`es Beaudry, Juan Moreno, Markus~J. Pflaum, Daniel
  Spiegel, Ashvin Vishwanath, and Michael Hermele, \emph{Flow of higher {B}erry
  curvature and bulk-boundary correspondence in parametrized quantum systems},
  Phys. Rev. B \textbf{108} (2023), 125147.

\bibitem[WW19]{wan_wang_2019}
Zheyan Wan and Juven Wang, \emph{Higher anomalies, higher symmetries, and
  cobordisms {I}: classification of higher-symmetry-protected topological
  states and their boundary fermionic/bosonic anomalies via a generalized
  cobordism theory}, Annals of Mathematical Sciences and Applications
  \textbf{4} (2019), no.~2, 107--311.

\bibitem[WW20]{WAN2020115016}
\bysame, \emph{Higher anomalies, higher symmetries, and cobordisms {III}: {QCD}
  matter phases anew}, Nuclear Physics B \textbf{957} (2020), 115016.

\bibitem[Xio19]{Xiong}
Charles~Zhaoxi Xiong, \emph{{Classification and Construction of Topological
  Phases of Quantum Matter}}, arXiv e-prints (2019),
  \url{https://arxiv.org/abs/1906.02892}.

\end{thebibliography}

\newcommand{\etalchar}[1]{$^{#1}$}
\providecommand{\bysame}{\leavevmode\hbox to3em{\hrulefill}\thinspace}
\providecommand{\MR}{\relax\ifhmode\unskip\space\fi MR }
\providecommand{\MRhref}[2]{%
  \href{http://www.ams.org/mathscinet-getitem?mr=#1}{#2}
}
\providecommand{\href}[2]{#2}

\end{document}